\documentclass[a4paper,10pt,UKenglish]{article}

\usepackage{ifthen}
\usepackage{comment}
\makeatletter
\pdfminorversion=5

\usepackage[sort,comma,square,numbers]{natbib}
\renewcommand*{\NAT@spacechar}{~} %
\renewcommand\bibsection %
{
  \section*{\refname
    \@mkboth{\MakeUppercase{\refname}}{\MakeUppercase{\refname}}}
}

\usepackage{color,tikz,environ}
\usetikzlibrary{decorations.markings}

\usepackage{etoolbox}

\makeatletter
\makeatother

\newcommand{\boundellipse}[3]%
{(#1) ellipse (#2 and #3)
}

\usepackage{fullpage}

\ifthenelse{\isundefined{\lipics}}
{
\usepackage{graphicx}
\usepackage[colorlinks=true,bookmarks=false]{hyperref}
\usepackage[small,bf]{caption}
\usepackage{subfigure}
\usepackage[british]{babel}
}
{}
\usepackage{xcolor}
\definecolor{darkblue}{rgb}{0,0,0.45}
\definecolor{darkred}{rgb}{0.6,0,0}
\definecolor{darkgreen}{rgb}{0.13,0.5,0}
\hypersetup{colorlinks, linkcolor=darkblue, citecolor=darkgreen,
urlcolor=darkblue}

\ifthenelse{\isundefined{\llncs}}{
  \usepackage{amsthm}
  \usepackage{authblk}
  
}

\usepackage{amsmath,amsfonts,amssymb}
\usepackage{mathtools}
\usepackage{mathrsfs}
\usepackage{wrapfig}
\usepackage{xspace}
\usepackage[shortlabels]{enumitem}
\setlist[enumerate]{nosep} %
\setlist[itemize]{nosep} %

\usepackage{multirow}

\usepackage[protrusion=true]{microtype}

\usepackage{aliascnt}
\ifthenelse{\isundefined{\llncs}}{
  \theoremstyle{plain}
}{}
\newtheorem{thm}{Theorem}[section]
\newaliascnt{lem}{thm}
\newaliascnt{crl}{thm}
\newaliascnt{dfn}{thm}
\newaliascnt{claim}{thm}
\newaliascnt{prop}{thm}
\newaliascnt{remark}{thm}
\newaliascnt{hyp}{thm}
\newtheorem{lem}[lem]{Lemma}
\newtheorem{claim}[claim]{Claim}

\newtheorem{crl}[crl]{Corollary}
\newtheorem{remark}[remark]{Remark}
\newtheorem{hyp}[hyp]{Hypothesis}
\ifthenelse{\isundefined{\llncs}}{
  \theoremstyle{definition}
}{}
\newtheorem{dfn}[dfn]{Definition}
\aliascntresetthe{lem}
\aliascntresetthe{remark}
\aliascntresetthe{crl}
\aliascntresetthe{prop}
\aliascntresetthe{dfn}
\aliascntresetthe{claim}
\aliascntresetthe{hyp}

\newcommand{\mc}[1]{{\mathcal{#1}}}

\newcommand{\eps}{{\varepsilon}}
\newcommand{\cqed}{\renewcommand{\qed}{\hfill$\lrcorner$}}
\DeclareMathOperator{\cost}{cost}
\DeclareMathOperator*{\val}{val}

\newcommand{\hy}{\hbox{-}\nobreak\hskip0pt}

\newcommand{\ignore}[1]{}
\usepackage{environ,xstring}
\newif\iflabel
\newif\ifdbs
\newif\ifamp
\NewEnviron{doitall}{%
  \noexpandarg
  \expandafter\IfSubStr\expandafter{\BODY}{\label}{\labeltrue}{\labelfalse}%
  \expandafter\IfSubStr\expandafter{\BODY}{\\}{\dbstrue}{\dbsfalse}%
  \expandafter\IfSubStr\expandafter{\BODY}{&}{\amptrue}{\ampfalse}%
  \iflabel\def\doitallstar{}\else\def\doitallstar{*}\fi
  \ifdbs
    \ifamp
      \def\doitallname{align}%
    \else
      \def\doitallname{multline}%
    \fi
  \else
    \def\doitallname{equation}
  \fi
  \begingroup\edef\x{\endgroup
    \noexpand\begin{\doitallname\doitallstar}%
    \noexpand\BODY
    \noexpand\end{\doitallname\doitallstar}%
  }\x
}
\def\[#1\]{\begin{doitall}#1\end{doitall}}

\newcommand{\pname}[1]{\textsc{#1}}

\usepackage{thm-restate}

\usepackage{framed}

\newcommand{\ud}[1]{\overline{#1}}
\newcommand{\polyn}{\cdot n^{O(1)}}
\newcommand{\dsn}{\textsc{DSN}\xspace}
\newcommand{\dsnK}{\textsc{DSN$_\mc{K}$}\xspace}
\newcommand{\dsnP}{\textsc{DSN$_\textsc{Planar}$}\xspace}
\newcommand{\bidsn}{\textsc{bi\hy{}DSN}\xspace}
\newcommand{\altbidsn}{\mdseries{\bidsn}\xspace}

\newcommand{\bidsnP}{\textsc{bi\hy{}DSN$_\textsc{Planar}$}\xspace}
\newcommand{\altbidsnP}{\mdseries{\bidsnP}\xspace}
\newcommand{\scss}{\textsc{SCSS}\xspace}
\newcommand{\altscss}{\mdseries{\scss}\xspace}

\newcommand{\scssP}{\textsc{SCSS$_\textsc{Planar}$}\xspace}
\newcommand{\biscss}{\textsc{bi\hy{}SCSS}\xspace}

\newcommand{\altbiscss}{\mdseries{\biscss}\xspace}
\newcommand{\biscssP}{\textsc{bi\hy{}SCSS$_\textsc{Planar}$}\xspace}
\newcommand{\altbiscssP}{\mdseries{\biscssP}\xspace}
\newcommand{\biscssK}{\textsc{bi\hy{}SCSS$_\mc{K}$}\xspace}
\newcommand{\csi}{\textsc{Colored Subgraph Isomorphism}\xspace}
\newcommand{\mcsi}{\textsc{Maximum Colored Subgraph Isomorphism}\xspace}

\newcommand{\gt}{\textsc{Grid Tiling}\xspace}

\newcommand{\poly}{\text{poly}\xspace}

\newcommand{\hamc}{\textsc{Hamiltonian Cycle}\xspace}
\newcommand{\nextone}{\textbf{next}\xspace}
\newcommand{\prevone}{\textbf{prev}\xspace}
\newcommand{\HS}{\text{VS}\xspace}
\newcommand{\VS}{\text{HS}\xspace}
\newcommand{\M}{\text{M}\xspace}
\newcommand{\mcc}{\textsc{Multicolored Clique}\xspace}

\date{}
\title{Parameterized Approximation Algorithms for\\
Bidirected Steiner Network Problems}

\author[1]{Rajesh~Chitnis\thanks{Supported by ERC grant 2014-CoG 647557. Part of 
this work was done while at Weizmann Institute of Science, Israel (and supported 
by Israel Science Foundation grant
\#897/13) and visiting Charles University in
Prague, Czechia}}
\author[2]{Andreas~Emil~Feldmann\thanks{Supported by the Czech Science
Foundation GA{\v C}R (grant \#19-27871X), and by the Center for Foundations of 
Modern Computer Science (Charles Univ.\ project UNCE/SCI/004).}}
\author[3]{Pasin~Manurangsi\thanks{This work was done while the author was
visiting Weizmann Institute of Science. Currently at Google Research.}}
\affil[1]{University of Birmingham, UK.
\texttt{rajeshchitnis@gmail.com}}
\affil[2]{Charles University, Czechia. 
\texttt{feldmann.a.e@gmail.com}}
\affil[3]{University of California, Berkeley, USA.
\texttt{pasin@berkeley.edu.}}

\begin{document}
\renewcommand*{\sectionautorefname}{Section}
\renewcommand*{\subsectionautorefname}{Section}
\renewcommand*{\subsubsectionautorefname}{Section}

\maketitle

\begin{abstract}
The \pname{Directed Steiner Network (DSN)} problem takes as input a directed
graph $G=(V,E)$ with non-negative edge-weights and a set $\mc{D}\subseteq
V\times V$ of $k$ demand pairs. The aim is to compute the cheapest network
$N\subseteq G$ for which there is an $s\to t$ path for each $(s,t)\in\mc{D}$. It
is known that this problem is notoriously hard as there is no
$k^{1/4-o(1)}$\hy{}approximation algorithm under Gap-ETH, even when
parametrizing the runtime by $k$ [Dinur \& Manurangsi, ITCS 2018]. In light of
this, we systematically study several special cases of \dsn and determine their
parameterized approximability for the parameter $k$.

For the \bidsnP problem, the aim is to compute a solution $N\subseteq G$ whose 
cost is at most that of an optimum planar solution in a bidirected graph $G$, 
i.e., for every edge $uv$ of $G$ the reverse edge $vu$ exists and has the same 
weight. This problem is a generalization of several well-studied special cases. 
Our main result is that this problem admits a parameterized approximation scheme 
(PAS) for~$k$. We also prove that our result is tight in the sense that (a) the 
runtime of our PAS cannot be significantly improved, and (b) no PAS exists for 
any generalization of \bidsnP, under standard complexity assumptions. The 
techniques we use also imply a polynomial-sized approximate kernelization scheme 
(PSAKS). Additionally, we study several generalizations of \bidsnP and obtain 
upper and lower bounds on obtainable runtimes parameterized by~$k$.

One important special case of \dsn is the \pname{Strongly Connected Steiner
Subgraph (SCSS)} problem, for which the solution network $N\subseteq G$ needs to
strongly connect a given set of $k$ terminals. It has been observed before that
for \scss a parameterized $2$-approximation exists for parameter~$k$
[Chitnis et~al., IPEC 2013]. We give a tight inapproximability result by showing
that for $k$ no parameterized $(2-\eps)$\hy{}approximation algorithm exists
under Gap-ETH. Additionally, we show that when restricting the input of \scss
to bidirected graphs, the problem remains NP-hard but becomes FPT for $k$.
\end{abstract}

\newpage

\section{Introduction}

In this work we study the \pname{Directed Steiner Network (DSN)}
problem,\footnote{Also sometimes called \pname{Directed Steiner Forest}. Note
however that in contrast to the undirected \pname{Steiner Forest} problem, an
optimum solution to \dsn is not necessarily a forest.}
in which a directed graph~$G=(V,E)$ with non-negative edge weights is given
together with a set of $k$ \emph{demands} $\mc{D}=\{(s_i,t_i)\}_{i=1}^k\subseteq
V\times V$. The aim is to compute a minimum cost (in terms of edge weights)
network $N\subseteq G$ containing a directed~$s_i\to t_i$ path for each
$i\in\{1,\ldots,k\}$. This problem has applications in network
design~\cite{kerivin2005design}, and for instance models the setting where nodes
in a radio or ad-hoc wireless network connect to each other
unidirectionally~\cite{chen1989strongly,ramanathan2000topology}.

The \dsn problem is notoriously hard. First of all, it is NP-hard, and one
popular way to handle NP-hard problems is to efficiently compute an
\emph{$\alpha$-approximation}, i.e., a solution that is guaranteed to be at most
a factor~$\alpha$ worse than the optimum. For this paradigm we typically demand
that the algorithm computing such a solution runs in polynomial time in the
input size~$n=|V|$. However for \dsn it is known that even computing  an
$O(2^{\log^{1-\eps}n})$\hy{}approximation is not possible~\cite{dodis1999design}
in polynomial time, unless NP~$\subseteq$~DTIME$(n^{\text{polylog}(n)})$. It is
possible to obtain approximation factors $O(n^{2/3+\eps})$ and $O(k^{1/2+\eps})$
though~\cite{berman2013approximation,DBLP:journals/talg/ChekuriEGS11,FKN12}. For
settings where the number~$k$ of demands is fairly small, one may aim for
algorithms that only have a mild exponential runtime blow-up in $k$, i.e., a
runtime of the form $f(k)\polyn$, where~$f(k)$ is some function independent
of~$n$. If an algorithm computing the optimum solution with such a runtime
exists for a computable function~$f(k)$, then the problem is called
\emph{fixed-parameter tractable (FPT)} for parameter~$k$. However it is unlikely
that \dsn is FPT for this well-studied parameter, as it is known to be
W[1]-hard~\cite{DBLP:journals/siamdm/GuoNS11} when parameterized by $k$. In fact
one can show~\cite{DBLP:conf/soda/ChitnisHM14} that under the \emph{Exponential
Time Hypothesis (ETH)} there is no algorithm computing the optimum in time
$f(k)\cdot n^{o(k)}$ for any function $f(k)$ independent of $n$. ETH assumes
that there is no $2^{o(n)}$ time algorithm to solve
\pname{3SAT}~\cite{ImpagliazzoP01,ImpagliazzoPZ01}. The best we can hope for is
therefore a so-called \emph{XP-algorithm} computing the optimum in time
$n^{O(k)}$, and this was also shown to exist
by~\citet{DBLP:journals/siamcomp/FeldmanR06}.

None of the above algorithms for \dsn seem satisfying though, either due to
slow runtimes or large approximation factors. To circumvent the hardness
of the problem, one may aim for \emph{parameterized approximations}, which have
recently received increased attention for various problems (cf.\ the recent
survey in~\cite{DBLP:journals/algorithms/FeldmannSLM20}).
In this paradigm an $\alpha$\hy{}approximation is computed in time $f(k)\polyn$
for parameter~$k$, where $f(k)$ again is a computable function independent
of~$n$. Unfortunately, a recent result by~\citet{DM18} excludes significant
improvements over the known polynomial time approximation
algorithms~\cite{berman2013approximation,DBLP:journals/talg/ChekuriEGS11,FKN12},
even if allowing a runtime parameterized in~$k$. More specifically, no
$k^{1/4-o(1)}$-approximation is possible\footnote{In a previous version~\cite{DBLP:conf/esa/ChitnisFM18} of this
work, we showed that no $k^{o(1)}$-approximation is possible for DSN in time
$f(k)\polyn$. This result in now subsumed by~\cite{DM18}; see \autoref{app:dsn}
for more details.} in time $f(k)\polyn$ for any function $f(k)$ under the
\emph{Gap Exponential Time Hypothesis (Gap-ETH)}, which postulates that there
exists a constant $\eps>0$ such that no (possibly randomized) algorithm running
in $2^{o(n)}$ time can distinguish whether all or at most a $(1-\eps)$-fraction
of clauses of any given \pname{3SAT} formula can be satisfied\footnote{Gap-ETH
follows from ETH given other standard conjectures, such as the existence of
linear sized PCPs or exponentially-hard locally-computable one-way functions.
See~\cite{param-inapprox,Applebaum17} for more details.}~\cite{Dinur16,MR17}.

Given these hardness results, the main question we explore is: what
approximation factors and runtimes are possible for special cases of \dsn when
parameterizing by $k$? There are two types of standard special cases that are
considered in the literature:
\begin{itemize}
\item Restricting the input graph $G$ to some special graph class. A typical
assumption for instance is that $G$ is planar (where a directed graph is
planar if the underlying undirected graph is).
\item Restricting the pattern of the demands in $\mc{D}$. For example, one
standard restriction is to have a set $R\subseteq V$ of \emph{terminals}, a
fixed \emph{root} $r\in R$, and demand set $\mc{D}=\{(r,t)\mid t\in R\}$, which
is the well-known \pname{Directed Steiner Tree (DST)} problem.
\end{itemize}

In fact, an optimum solution to the \pname{DST} problem is an arborescence
(hence the name), i.e., it is planar. Thus if an algorithm is able to compute a
solution that costs at most as much as the cheapest planar \dsn solution in an
otherwise unrestricted graph, it can be used for both the above types of
restrictions: it can of course be used if the input graph is planar as well, and
it can also be used if the demand pattern implies that the optimum must be
planar. Taking the structure of the optimum solution into account has been a
fruitful approach leading to several results on related problems, both for
approximation and fixed-parameter tractability, from which we also draw some of
the inspiration for our results (cf.~\autoref{sec:techniques}). A main focus of
our work is to systematically explore the influence of the structure of
solutions on the complexity of the \dsn problem. Formally, fixing a class
$\mc{K}$ of graphs, we define the \dsnK problem, which asks for a solution
network $N\subseteq G$ for $k$ given demands such that the cost of $N$ is at
most that of an optimum solution in $G$ belonging to the class $\mc{K}$, i.e.,
we compare against a feasible solution from $\mc{K}$ of minimum cost. Note that the
solution $N$ does not have to belong to the class $\mc{K}$. As explained for the
class of planar graphs above, \dsnK can be thought of as the special case that
lies between restricting the input to the class $\mc{K}$ and the general
unrestricted case.

The \dsnK problem has been implicitly studied in several results before for
various classes~$\mc{K}$ (cf.~\autoref{table:results}), in particular when
$\mc{K}$ contains either planar graphs, or graphs of bounded treewidth (here the
\emph{undirected} treewidth is meant, i.e., the treewidth of the underlying
undirected graph). For these results, typically an algorithm is given that
computes a solution for an input of a class~$\mc{K}$, but the algorithm is in
fact more general and can also be applied to the corresponding \dsnK problem.
Our algorithms presented in this paper for the class $\mc{K}$ of planar graphs
are also of this type. The reader may therefore want to think of the case when
the input is planar for our algorithms. On the other hand, our corresponding
hardness results are for the more general \dsnK problem, which means that they
rule out algorithms of this general type. In particular, they can be interpreted
as saying that if there are algorithms for input graphs from $\mc{K}$ that beat
our lower bounds for the more general \dsnK problem, then they cannot be of the
general type that seems prevalent in the study of the \dsn problem for special
input graphs.

Another special case we consider is the \bidsn problem, where the input graph
$G$ is \emph{bidirected}, i.e., for every edge $uv$ of $G$ the reverse edge $vu$
exists in $G$ as well and has the same weight as $uv$. This in turn can be
understood as the case lying between undirected and directed graphs, since
bidirected graphs are directed, but, similar to undirected graphs, a path can be
traversed in either direction at the same cost. Bidirected graphs model the
realistic setting~\cite{chen1989strongly, ramanathan2000topology,
wang2008approximate, lam2015dual} when the cost of transmitting from a node $u$
to a node $v$ in a wireless network is the same in both directions, which for
instance happens if the nodes all have the same transmitter model.

We systematically study several special cases of \dsn resulting from the above
restrictions, and prove several matching upper and lower bounds on runtimes
parameterized by $k$. We now give a brief overview of the studied problems, and
refer to \autoref{sec:results} for a detailed exposition of our results.

\begin{description}
\item[\altbidsnP,] i.e., the \dsnK problem on bidirected inputs, where $\mc{K}$
is the class of planar graphs: For this problem we present our main result,
which is that \bidsnP admits a \emph{parameterized approximation scheme (PAS)},
i.e., an algorithm that for any $\eps>0$ computes a $(1+\eps)$-approximation in
$f(\eps,k)\cdot n^{g(\eps)}$ time for some computable functions $f$ and $g$. We
also prove that, unless FPT=W[1], no \emph{efficient parameterized approximation
scheme (EPAS)} exists, i.e., there is no algorithm computing a
$(1+\eps)$-approximation in $f(\eps,k)\polyn$ time for any computable function
$f$. Thus the degree of the polynomial runtime dependence on $n$ has to depend
on $\eps$.

\item[\altbidsn,] i.e., the \dsn problem on bidirected inputs: The above PAS for
the rather restricted \bidsnP problem begs the question of whether a PAS also
exists for any more general problems, such as \bidsn. In particular, one may at
first think that~\bidsn closely resembles the undirected variant of \dsn, i.e.,
the well-known \pname{Steiner Forest~(SF)} problem, which is
FPT~\cite{DBLP:conf/icalp/FeldmannM16,DBLP:journals/networks/DreyfusW71} for
parameter~$k$. Surprisingly however, we can show that \bidsn is almost as hard
as \dsn (with almost-matching runtime lower bound under ETH), and moreover, no
PAS exists under Gap-ETH.
\end{description}
\vspace{-1mm}

Apart from the \pname{DST} problem, another well-studied special case of \dsn
with restricted demands is when the demand pairs form a cycle, i.e., we are
given a set $R=\{t_1,\ldots,t_{k}\}$ of $k$ terminals and the set of demands is
$\mc{D}=\{(t_i,t_{i+1})\}_{i=1}^{k}$ where $t_{k+1}=t_1$. Since this implies
that any optimum solution is strongly connected, this problem is accordingly
known as the \pname{Strongly Connected Steiner Subgraph (SCSS)} problem. In
contrast to \pname{DST}, it is implicit from~\cite{DBLP:journals/siamdm/GuoNS11}
(by a reduction from the \pname{Clique} problem) that optimum solutions to \scss
do not belong to any minor-closed graph class. Thus \scss is not easily captured
by some \dsnK problem for a restricted class~$\mc{K}$. Nevertheless it is still
possible to exploit the structure of the optimum solution to \scss, which
results in the following findings.

\begin{description}
\item[\altscss:] It is known that a $2$-approximation is
obtainable~\cite{DBLP:conf/iwpec/ChitnisHK13} when parameterizing by~$k$. We
prove that the factor of $2$ is best possible under Gap-ETH. To the best of our
knowledge, this is the first example of a problem with a tight parameterized
approximation result with non-trivial approximation factor (in this case
$2$), which also beats any approximation computable in polynomial time.

\item[\altbiscss,] i.e., the \scss problem on bidirected inputs: As for \bidsn,
one might think that \biscss is easily solvable via its undirected version,
i.e., the well-known \pname{Steiner Tree~(ST)} problem. In particular, the
\pname{ST} problem is FPT~\cite{DBLP:journals/networks/DreyfusW71} for
parameter~$k$. However, it is not the case that simply taking an optimum
undirected solution twice in a bidirected graph will produce a (near-)optimum
solution to \biscss (see \autoref{fig:savings}). Nevertheless we prove that
\biscss is FPT for parameter $k$ as well, while also being NP-hard. Our
algorithm is non-trivial and does not apply any methods used for undirected
graphs. To the best of our knowledge, bidirected inputs are the first
example where \scss remains NP-hard but turns out to be FPT parameterized by
$k$.
\end{description}

\begin{figure}
\centering
\includegraphics[scale=0.7]{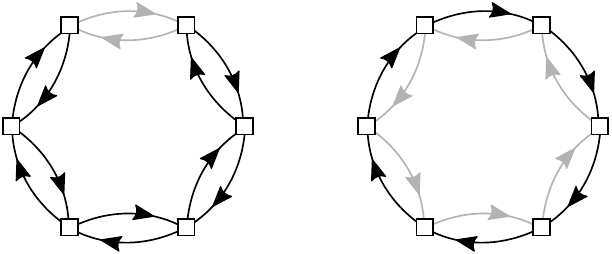}
\caption{A \biscss instance where all vertices are terminals.
Left: Black edges show a solution which takes an undirected optimum twice.
Right: The actual optimum solution is shown in black.}
\label{fig:savings}
\end{figure}

\subsection{Our results}\label{sec:results}

\paragraph*{Bidirected inputs with planar solutions.}
Our main theorem implies the existence of a PAS for \bidsnP, where the
parameter is the number $k$ of demands.

\begin{restatable}{thm}{thmscheme}
\label{thm:scheme}
There is a $2^{O(k^2)}n^{2^{O(1/\eps)}}$ time algorithm for \bidsnP, that for
any $\eps>0$ computes a \mbox{$(1+\eps)$-approx}\-imation.
\end{restatable}

This result begs the question of whether the considered special case is not too
restrictive. Should it not be possible to obtain better runtimes and/or should
it not be possible to even compute the optimum solution when parameterizing by
$k$ for this very restricted problem? And could it not be that a similar result
is true in more general settings, when for instance the input is bidirected but
the optimum is not restricted to a planar graph? We prove that both questions
can be answered in the negative.

First off, it is not hard to prove that a \emph{polynomial time approximation
scheme (PTAS)} is not possible for \bidsnP, i.e., it is necessary to
parameterize by $k$ in \autoref{thm:scheme}. This is implied by the following
result, since (as mentioned before) a PTAS for \bidsnP would also imply a PTAS
for \pname{bi-DST}, i.e., the \pname{DST} problem on bidirected input graphs.

\begin{restatable}{thm}{thmlbbiDST}
\label{thm:lb-biDST}
The \pname{bi-DST} problem (and by extension also the \bidsnP problem) is
\textup{APX-hard}.
\end{restatable}

One may wonder however, whether parameterizing by $k$ does not make the \bidsnP
problem FPT, so that approximating the planar optimum as in \autoref{thm:scheme}
would in fact be unnecessary. Furthermore, even if it is necessary to
approximate, one may ask whether the runtime given in \autoref{thm:scheme} can
be improved. In particular, note that the runtime we obtain in
\autoref{thm:scheme} is similar to that of a PTAS, i.e., the exponent of $n$ in
the running time depends on~$\eps$. Ideally we would like an EPAS, which has a
runtime of the form~$f(k,\eps)\polyn$, i.e., we would like to treat $\eps$ as a
parameter as well. The following theorem\footnote{We note that the W[1]-hardness
in \autoref{thm:lb-scheme-biDSN} for \bidsnP and also in \autoref{thm:lb-biDSN}
for \bidsn carries over to the parameterization by the solution size and also to
the solution cost when restricting to integer edge weights.} shows that both
approximation and runtime dependence on $\eps$ are in fact necessary in
\autoref{thm:scheme}.

\begin{restatable}{thm}{thmlbschemebiDSN}
\label{thm:lb-scheme-biDSN}
The \bidsnP problem is \textup{W[1]}-hard parameterized by $k$. Moreover,
under ETH, for any computable functions $f(k)$ and $f(k,\eps)$, the \bidsnP
problem
\begin{itemize}
 \item has no $f(k)\cdot n^{o(\sqrt k)}$ time algorithm to compute an optimum
solution, i.e., a solution with cost at most that of the cheapest planar one,
and
 \item has no $f(k,\eps)\cdot n^{o(\sqrt k)}$ time algorithm to compute a
solution with cost at most $(1+\eps)$ times that of the cheapest planar
one, if $\eps>0$ is part of the input.
\end{itemize}
\end{restatable}

It stands out that to compute optimum solutions, this theorem rules out
runtimes for which the dependence of the exponent of $n$ is substantially
better than $\sqrt{k}$, while for the general \dsn problem, as mentioned above,
the both necessary and sufficient dependence of the exponent is linear
in~$k$~\cite{DBLP:journals/siamcomp/FeldmanR06, DBLP:conf/soda/ChitnisHM14}.
Could it be that \bidsnP is just as hard as \dsn when computing optimum
solutions? The answer is no, as the next theorem shows.

\begin{restatable}{thm}{thmalgplanarbiDSN}
\label{thm:alg-planar-biDSN}
There is a $2^{O(k^{3/2}\log k)}\cdot n^{O(\sqrt k)}$ time algorithm to compute
the optimum solution for \bidsnP, i.e., a solution with cost at most that of
the cheapest planar one.
\end{restatable}

This result is an example of the so-called ``square-root phenomenon'': planarity
often allows runtimes that improve the exponent by a square root factor in terms
of the parameter when compared to the general
case~\cite{planar-1,planar-2,planar-3,planar-4,planar-5,planar-6,planar-7,
planar-8,new-planar}. Interestingly though, \citet{DBLP:conf/soda/ChitnisHM14}
show that under ETH, no $f(k)\cdot n^{o(k)}$ time algorithm can compute the
optimum solution to~\textsc{DSN}$_{\textsc{planar}}$. Thus assuming a bidirected
input graph in \autoref{thm:alg-planar-biDSN} is necessary (under ETH) to obtain
a factor of $O(\sqrt{k})$ in the exponent of $n$.

\paragraph*{Bidirected inputs.}
Since in contrast to \bidsnP, the \bidsn problem does not restrict the optimum
solutions, one may wonder whether a parameterized approximation scheme
as in \autoref{thm:scheme} is possible for this more general case as well. We
answer this in the negative by proving the following result, which implies
that restricting the optima to planar graphs was necessary for
\autoref{thm:scheme}.

\begin{restatable}{thm}{thmlbapproxbiDSN}
\label{thm:lb-approx-biDSN}
Under Gap-ETH, there exists a constant $\alpha>1$ such that for any computable
function~$f(k)$ there is no $f(k)\cdot n^{O(1)}$ time algorithm that computes
an $\alpha$-approximation for~\bidsn.
\end{restatable}

Also for the other obvious generalization of \bidsnP, in which the input graph
is unrestricted but we need to compute the planar optimum (i.e., the \dsnP
problem), no parameterized approximation scheme exists. This follows from a
recent result~\cite{Chitnis19-directed}, which shows that no
$(2-\eps)$-approximation can be computed for \dsnP in $f(k)\cdot n^{O(1)}$ time
for any $\eps>0$ and computable function $f$, under Gap-ETH.

What approximation factors can be obtained for \bidsn when parameterizing by
$k$, given the lower bound of \autoref{thm:lb-approx-biDSN} on one hand, and the
before-mentioned result~\cite{DM18} that rules out a
$k^{1/4-o(1)}$-approximation for \dsn in time parameterized by $k$ on the other?
It turns out that it is not too hard to obtain a constant approximation for
\bidsn, given the similarity of bidirected graphs to undirected graphs. In
particular, relying on the fact that for the undirected version of \dsn, i.e.\
the \pname{SF} problem, there is a polynomial time $2$-approximation algorithm
by \citet{steiner-forest}, and an FPT algorithm based on
\citet{DBLP:journals/networks/DreyfusW71}, we obtain the following theorem,
which is also in contrast to \autoref{thm:lb-biDST}.

\begin{restatable}{thm}{thmalgapproxbiDSN}
\label{thm:alg-approx-biDSN}
The \bidsn problem admits a $4$-approximation in polynomial time, and a
$2$\hy{}approximation in $2^{O(k)}\polyn$ time.
\end{restatable}

Even if \autoref{thm:lb-approx-biDSN} in particular shows that \bidsn cannot be
FPT under Gap-ETH, it does not give a strong lower bound on the runtime
dependence in the exponent of $n$. However using the weaker ETH assumption we
can obtain such a lower bound, as the next theorem shows. Interestingly, the
obtained lower bound implies that when aiming for optimum solutions, the
restriction to bidirected inputs does not make \dsn easier than the general
case, as also for \bidsn the $n^{O(k)}$ time algorithm by
\citet{DBLP:journals/siamcomp/FeldmanR06} is essentially best possible. This is
in contrast to the \bidsnP problem where the square-root phenomenon takes effect
as shown by \autoref{thm:alg-planar-biDSN}.

\begin{restatable}{thm}{thmlbbiDSN}
\label{thm:lb-biDSN}
The \bidsn problem is \textup{W[1]}-hard parameterized by $k$. Moreover, under
ETH there is no $f(k)\cdot n^{o(k/\log k)}$ time algorithm for \bidsn, for any
computable function~$f(k)$.
\end{restatable}

\paragraph*{Strongly connected solutions.}
Just like the more general \dsn problem, the \scss problem is
W[1]-hard~\cite{DBLP:journals/siamdm/GuoNS11} parameterized by $k$, and
is also hard to approximate as no polynomial time
$O(\log^{2-\eps}n)$\hy{}approximation is possible~\cite{approx-hardness},
unless NP~$\subseteq$~ZTIME$(n^{\text{polylog}(n)})$. However it is possible to
exploit the structure of the optimum to \scss to obtain a $2$-approximation
algorithm parameterized by $k$, as observed by
\citet{DBLP:conf/iwpec/ChitnisHK13}. This is because any strongly connected
graph is the union of two arborescences, and these form solutions to
\pname{DST}. The $2$-approximation follows, since \pname{DST} is FPT by the
classic result of \citet{DBLP:journals/networks/DreyfusW71}. Thus in contrast to
\dsn, for \scss it is possible to beat any approximation factor obtainable in
polynomial time when parameterizing by~$k$.

\begin{thm}[\cite{DBLP:conf/iwpec/ChitnisHK13}] The \scss problem admits a
$2$-approximation in $3^k\polyn$ time.
\end{thm}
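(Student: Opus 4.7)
The plan is to reduce the \scss problem to two instances of \pname{Directed Steiner Tree (DST)}, using the structural observation already highlighted in the excerpt: any strongly connected subgraph spanning the terminals $R=\{t_1,\ldots,t_k\}$ decomposes, with respect to any chosen root $r\in R$, into an out-arborescence rooted at $r$ that reaches every terminal and an in-arborescence rooted at $r$ that is reached by every terminal.

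Concretely, I would fix an arbitrary terminal $r\in R$ and compute two solutions. First, run a \pname{DST} algorithm on $G$ with root $r$ and terminal set $R$ to obtain an out-arborescence $T_{\text{out}}$ of minimum cost. Second, reverse all edges of $G$ to obtain $G^{\mathrm{rev}}$ (edge weights preserved) and run the same \pname{DST} algorithm on $G^{\mathrm{rev}}$ with root $r$ and terminals $R$; mapping the resulting arborescence back to $G$ yields a minimum-cost in-arborescence $T_{\text{in}}$ directed towards $r$. The output is the subgraph $N=T_{\text{out}}\cup T_{\text{in}}$. Strong connectivity on $R$ is immediate: any terminal can reach $r$ through $T_{\text{in}}$ and $r$ can reach any terminal through~$T_{\text{out}}$.

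For the approximation guarantee, let $\mathrm{OPT}$ denote an optimum \scss solution. Since $\mathrm{OPT}$ is strongly connected on $R$, a BFS/DFS from $r$ inside $\mathrm{OPT}$ produces an out-arborescence rooted at $r$ spanning $R$ of cost at most $\cost(\mathrm{OPT})$, and symmetrically an in-arborescence to $r$ of cost at most $\cost(\mathrm{OPT})$. Optimality of $T_{\text{out}}$ and $T_{\text{in}}$ as \pname{DST} solutions then gives $\cost(T_{\text{out}}),\cost(T_{\text{in}})\le\cost(\mathrm{OPT})$, and hence $\cost(N)\le 2\cdot\cost(\mathrm{OPT})$.

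For the runtime, I would invoke the standard Dreyfus--Wagner-style dynamic programming for \pname{DST}, which computes an optimal out-arborescence spanning a set of $k$ terminals in time $3^k\polyn$ by iterating over subsets of terminals and, for each subset, either splitting it into two non-empty parts (contributing the $3^k$ factor through the subset-sum convolution) or extending via a shortest path to a Steiner vertex. Running this DP twice---once on $G$ and once on $G^{\mathrm{rev}}$---stays within $3^k\polyn$ time, completing the proof. There is no real obstacle: the only subtlety is checking that the reversed-graph \pname{DST} call genuinely produces the required in-arborescence, but this is immediate from the duality between in- and out-arborescences under edge reversal.
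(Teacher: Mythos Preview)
Your proposal is correct and follows essentially the same approach the paper describes: decompose any strongly connected solution into an out- and an in-arborescence rooted at a fixed terminal, solve each as a \pname{DST} instance via the Dreyfus--Wagner algorithm in $3^k\polyn$ time, and take the union. The only cosmetic addition in your write-up is the explicit reversal of $G$ to reduce the in-arborescence computation to an out-arborescence one, which is exactly the standard manoeuvre.
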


An obvious question now is whether the approximation ratio of this rather simple
algorithm can be improved. Interestingly we are able to show that this is not
the case. To the best of our knowledge, this is the first example of a problem
with a tight parameterized approximation result with non-trivial approximation
factor (in this case $2$), which also beats any approximation computable in
polynomial time.

\begin{restatable}{thm}{thmlbapproxSCSS}
\label{thm:lb-approx-SCSS}
Under Gap-ETH, for any $\eps > 0$ and any computable function $f(k)$, there is
no $f(k)\cdot n^{O(1)}$ time algorithm that computes a $(2-\eps)$-approximation
for \scss.
\end{restatable}

We remark that our reduction for \autoref{thm:lb-approx-SCSS} uses edge weights,
which however can be polynomially bounded. As a consequence an instance can be
further reduced in polynomial time by first scaling the edge weights to
polynomially bounded integers, and then subdividing each edge $w$ times if its
weight is $w$. This results in an equivalent unweighted instance, and thus the
lower bound of \autoref{thm:lb-approx-SCSS} is also valid for unweighted
instances of \scss.

\paragraph*{Bidirected inputs with strongly connected solutions.}
In light of the above results for restricted cases of \dsn, what can be said
about restricted cases of \scss? It is implicit in the work of
\citet{DBLP:conf/soda/ChitnisHM14} that \scssP, i.e., the problem of computing a
solution of cost at most that of the cheapest strongly connected planar
solution, can be solved in $2^{O(k\log k)}\cdot n^{O(\sqrt{k})}$ time, while
under ETH no $f(k)\cdot n^{o(\sqrt{k})}$ time algorithm is possible. Hence
\scssP is slightly easier than \textsc{DSN}$_{\textsc{planar}}$ where the
exponent of $n$ needs to be linear in $k$, as mentioned before. On the other
hand, the \biscss problem turns out to be a lot easier to solve than \bidsn.
This is implied by the next theorem, which stands in contrast to
\autoref{thm:lb-approx-biDSN} and \autoref{thm:lb-biDSN}.

\begin{restatable}{thm}{thmalgbiSCSS}
\label{thm:alg-biSCSS}
There is a $4^{k^2+O(k)}\polyn$ time algorithm for \biscss, i.e., it is FPT
for parameter~$k$.
\end{restatable}

Could it be that \biscss is even solvable in polynomial time? We prove that this
is not the case, unless P~=~NP. To the best of our knowledge, the class of
bidirected graphs is the first example where \scss remains NP-hard but
turns out to be FPT parameterized by~$k$. Moreover, note that the above
algorithm has an exponential runtime in~$k^2$. We conjecture that a single
exponential runtime should suffice, and we also obtain a lower bound result of
this form.

\begin{restatable}{thm}{thmlbbiSCSS}
\label{thm:lb-biSCSS}
The \biscss problem is NP-hard. Moreover, under ETH there is no
$2^{o(k)}\polyn$ time algorithm for \biscss.
\end{restatable}

\paragraph*{Remark.}
For ease of notation, throughout this paper we chose to use the number of
demands~$k$ uniformly as the parameter. Alternatively one might also consider
the smaller parameter $|R|$, where $R=\bigcup_{i=1}^{k} \{s_i, t_i\}$ is the set
of terminals (as also done in~\cite{DBLP:conf/stacs/EibenKPS19}). Note for
instance that in case of the \scss problem, $k=|R|$, while for \dsn, $k$ can be
as large as $\Theta(|R|^2)$. However we always have $k\geq |R|/2$, since the
demands can form a matching in the worst case. It is interesting to note that
all our algorithms for \dsn have the same running time for parameter~$|R|$ as
for parameter~$k$. That is, we may set $k=|R|$ in \autoref{thm:scheme},
\autoref{thm:alg-planar-biDSN}, and~\autoref{thm:alg-approx-biDSN}.

\begin{table}%
\centering
\begin{tabular}{ l || c | c | c | c | c | c}
      & \multicolumn{3}{c|}{algorithms} & \multicolumn{3}{c}{lower bounds}\\
      problem & approx. & runtime & ref. & approx. & runtime & ref. \\
      \hline\hline &&&&&\\[-1em]

      \dsn
        & --
        & $n^{O(k)}$
        & \cite{DBLP:journals/siamcomp/FeldmanR06}
        & --
	& $f(k)\cdot n^{o(k)}$
	& \cite{DBLP:conf/stacs/EibenKPS19,DBLP:journals/siamdm/GuoNS11} \\
      \hline &&&&&\\[-1em]

      \dsn
        & $O(k^{\frac{1}{2}+\eps})$ %
        & $n^{O(1)}$
        & \cite{DBLP:journals/talg/ChekuriEGS11}
	& $k^{\frac{1}{4}-o(1)}$
	& $f(k)\polyn$
	& \cite{DM18}\\
      \hline &&&&&\\[-1em]

      \textsc{DSN}$_{\textsc{TW: }\omega}$
        & --
	& $2^{O(k\omega\log\omega)}\cdot n^{O(\omega)}$
	& \cite{DBLP:conf/icalp/FeldmannM16}
	& --
	& $f(k,\omega)\cdot n^{o(\omega)}$
	& \cite{DBLP:conf/icalp/FeldmannM16}\\
      \hline &&&&&\\[-1em]

      \bidsnP
        & --
        & $2^{O(k^{3/2}\log k)}\cdot n^{O(\sqrt{k})}$
        & Thm~\ref{thm:alg-planar-biDSN}
        & --
	& $f(k)\cdot n^{o(\sqrt{k})}$
	& Thm~\ref{thm:lb-scheme-biDSN}\\
      \hline &&&&&\\[-1em]

      \bidsnP
        & $1+\eps$
        & $2^{O(k^2)}n^{2^{O(1/\eps)}}$
	& Thm~\ref{thm:scheme}
        & $1+\eps$
        & $f(\varepsilon,k)\cdot n^{o(\sqrt{k})}$
        & Thm~\ref{thm:lb-scheme-biDSN}\\
      \hline &&&&&\\[-1em]

      \dsnP
        & --
	& $n^{O(k)}$
	& \cite{DBLP:conf/stacs/EibenKPS19,DBLP:journals/siamcomp/FeldmanR06}
	& --
	& $f(k)\cdot n^{o(k)}$
	& \cite{DBLP:conf/soda/ChitnisHM14}\\
      \hline &&&&&\\[-1em]

      \dsnP
        & $\alpha\geq 2$
	& (open)
	&
	& $(2-\eps)$
	& $f(k)\polyn$
	& \cite{Chitnis19-directed}\\
      \hline &&&&&\\[-1em]

      \bidsn
        & --
        & $n^{O(k)}$
        & \cite{DBLP:journals/siamcomp/FeldmanR06}
	& --
	& $f(k)\cdot n^{o(k/\log k)}$
	& Thm~\ref{thm:lb-biDSN}\\
      \hline &&&&&\\[-1em]

      \bidsn
        & 2
        & $2^{O(k)}\polyn$
        & Thm~\ref{thm:alg-approx-biDSN}
	& $\alpha\in \Theta(1)$
	& $f(k)\polyn$
	& Thm~\ref{thm:lb-approx-biDSN}\\
      \hline &&&&&\\[-1em]

      \bidsn
        & $4$
        & $n^{O(1)}$
        & Thm~\ref{thm:alg-approx-biDSN}
        & $\alpha\in\Theta(1)$
        & $n^{O(1)}$
        & Thm~\ref{thm:lb-biDST}\\
      \hline &&&&&\\[-1em]

      \scss
        & --
        & $n^{O(k)}$
	& \cite{DBLP:journals/siamcomp/FeldmanR06}
	& --
	& $f(k)\cdot n^{o(k/\log k)}$
	& \cite{DBLP:journals/siamdm/GuoNS11,DBLP:conf/soda/ChitnisHM14} \\
      \hline &&&&&\\[-1em]

      \scss
        & $2$
        & $3^k\polyn$
        & \cite{DBLP:conf/iwpec/ChitnisHK13}
        & $2-\eps$
        & $f(k)\polyn$
        & Thm~\ref{thm:lb-approx-SCSS}\\
      \hline &&&&&\\[-1em]

      \scssP
        & --
	& $2^{O(k\log k)}\cdot n^{O(\sqrt{k})}$
	& \cite{DBLP:conf/soda/ChitnisHM14}
	& --
	& $f(k)\cdot n^{o(\sqrt{k})}$
	& \cite{DBLP:conf/soda/ChitnisHM14}\\
      \hline &&&&&\\[-1em]

      \biscss
        & --
        & $4^{k^2+O(k)}\polyn$
        & Thm~\ref{thm:alg-biSCSS}
        & --
        & $2^{o(k)}\polyn$
        & Thm~\ref{thm:lb-biSCSS}
\end{tabular}
\caption{Summary of achievable runtimes for \dsn and \scss when parameterizing
by $k$. A dash (--) refers to computing optimum solutions. Some of the previous
results are implicit and in the papers are rather stated for the case when the
input graphs are restricted to the same class as the optimum solutions.
}
\label{table:results}
\end{table}

\subsection{Our techniques}\label{sec:techniques}

It is already apparent from the above exposition of our results, that
understanding the structure of the optimum solution is a powerful tool when
studying \dsn and its related problems (cf.~\autoref{table:results}). This is
also apparent when reading the literature on these problems, and we draw some
of our inspiration from these known results, as described below.

\paragraph*{Approximation scheme for {\altbidsnP}.}
We generalize the insights on the structure of optimum solutions to the
classical \pname{Steiner Tree~(ST)} problem for our main result in
\autoref{thm:scheme}. For the \pname{ST}  problem, an \emph{undirected}
edge-weighted graph is given together with a terminal set $R$, and the task is
to compute the cheapest tree connecting all $k$ terminals. For this
problem only polynomial-time $2$-approximations were known~\cite{GP68,V01},
until it was taken into account~\cite{KZ97, PS00, zelikovsky_1993_11_over_6_apx,
RZ05} that any optimum Steiner tree can be decomposed into so-called \emph{full
components}, i.e., subtrees for which exactly the leaves are terminals. If a
full component contains only a small subset of size $k'$ of the terminals, it is
the solution to an \pname{ST} instance, for which the optimum can be computed
efficiently in time $3^{k'}\polyn$ using the algorithm of
\citet{DBLP:journals/networks/DreyfusW71}. A fundamental observation proved by
\citet{borchers-du} is that for any~$k'$ there exists a solution to \pname{ST}
of cost at most $1+\frac{1}{\lfloor \log_2 k'\rfloor}$ times the optimum, in
which every full component contains at most $k'$ terminals. Thus setting
$k'=2^{1/\eps}$ for some constant~$\eps>0$, all full-components with at most
\smash{$2^{1/\eps}$} terminals can be computed in polynomial time, and among
them exists a collection forming a $(1+\eps)$-approximation. The key to obtain
approximation ratios smaller than $2$ for \pname{ST} is to cleverly select a
good subset of all computed full-components. This is for instance done
in~\cite{DBLP:journals/jacm/ByrkaGRS13} via an iterative rounding procedure,
resulting in an approximation ratio of $\ln(4)+\eps< 1.39$, which currently is
the best one known.

Our main technical contribution is to generalize the \citeauthor{borchers-du}
Theorem to \bidsnP. In particular, to obtain our approximation scheme of
\autoref{thm:scheme}, we employ a similar approach by decomposing a \bidsnP
solution into sub-instances, each containing a small number of terminals. As
\bidsnP is \mbox{W[1]-hard} by \autoref{thm:lb-scheme-biDSN}, we cannot hope to
compute optimum solutions to each sub-instance as efficiently as for \pname{ST}.
However, we provide an XP-algorithm with runtime $2^{O(k^{3/2}\log k)}\cdot
n^{O(\sqrt{k})}$ for \bidsnP in \autoref{thm:alg-planar-biDSN}. Thus if every
sub-instance contains at most $2^{1/\eps}$ terminals, each can be solved in
\smash{$n^{2^{O(1/\eps)}}$} time, and this accounts for the ``non-efficient''
runtime of our approximation scheme. Since we allow runtimes parameterized
by~$k$, we can then search for a good subset of precomputed small optimum
solutions to obtain a solution to the given demand set $\mc{D}$. For the latter
solution to be a $(1+\eps)$-approximation however, we need to generalize the
\citeauthor{borchers-du} Theorem for \pname{ST} to \bidsnP (see
\autoref{lem:full-comp} for the formal statement). This constitutes the bulk of
the work to prove \autoref{thm:scheme}.

\paragraph*{Exact algorithms for {\altbidsnP} and {\altbiscss}.}
Also from a parameterized point of view, understanding the structure of the
optimum solution to \dsn has lead to useful insights in the past. We will
leverage one such recent result by \citet{DBLP:conf/icalp/FeldmannM16}, where
the above mentioned standard special case of restricting the patterns of the
demands in $\mc{D}$ is studied in depth. The result is a complete dichotomy over
which classes of restricted patterns define special cases of \dsn that are FPT
and which are W[1]-hard for parameter~$k$. The high-level idea is that whenever
the demand patterns imply optimum solutions of constant treewidth, there is an
FPT algorithm computing such an optimum. In contrast, the problem is W[1]-hard
whenever the demand patterns imply the existence of optimum solutions of
arbitrarily large treewidth. The FPT algorithm
from~\cite{DBLP:conf/icalp/FeldmannM16} lies at the heart of all our positive
results, and therefore shows that the techniques developed
in~\cite{DBLP:conf/icalp/FeldmannM16} to optimally solve special cases of \dsn
can be extended to find (near-)optimum solutions for other W[1]-hard special
cases as well. It is important to note that the algorithm of
\cite{DBLP:conf/icalp/FeldmannM16} can also be used to compute the cheapest
solution of treewidth at most $\omega$, even if there is an even better solution
of treewidth larger than~$\omega$ (which might be hard to compute). Formally,
the result leveraged in this paper is the following.

\begin{thm}[implicit in Theorem~5 of \cite{DBLP:conf/icalp/FeldmannM16}]
\label{thm:alg-FMarx}
If $\mc{K}$ is the class of graphs with treewidth at most~$\omega$, then the
\dsnK problem can be solved in $2^{O(k\omega\log\omega)}\cdot n^{O(\omega)}$
time.
\end{thm}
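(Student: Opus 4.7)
The plan is to apply the recursive divide-and-conquer framework of Feldmann and Marx~\cite{DBLP:conf/icalp/FeldmannM16} to the (unknown) optimum network $N\subseteq G$, which by assumption has treewidth at most~$\omega$. The starting observation is that any graph of treewidth at most $\omega$ admits a balanced vertex separator $S$ of size at most $\omega+1$. By a standard weighting argument (assigning each demand pair a unit weight and balancing the total weight across the components of $N-S$), we can further insist that $S$ is balanced with respect to the demand set $\mc{D}$, i.e., that after removing $S$ each side of $N$ contains at most a constant fraction (say $2/3$) of the demand pairs.

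The algorithm then guesses $S\subseteq V(G)$ in $n^{O(\omega)}$ time. For each demand pair $(s_i,t_i)\in\mc{D}$, the portion of $N$ realising the $s_i\to t_i$ walk visits an \emph{ordered sequence} of vertices from $S$, of length at most $|S|\le\omega+1$. We guess this ordered sequence independently for every demand pair, which costs $(\omega+1)^{O(\omega)}$ per pair and hence $2^{O(k\omega\log\omega)}$ overall. Each guess replaces an original demand $(s_i,t_i)$ by a chain of sub-demands whose endpoints lie entirely on a single side of $S$ (or inside $S$); we then recurse on the two sub-instances on either side of $S$, each carrying at most $2k/3$ of the original demands. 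Returning the union of the recursively computed networks together with the routing through $S$ (and taking the minimum over all guesses) yields a candidate solution of treewidth at most $\omega$, from which the true optimum is selected.

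The runtime recurrence $T(k)\le n^{O(\omega)}\cdot 2^{O(k\omega\log\omega)}+2\,T(2k/3)$ solves to $T(k)=2^{O(k\omega\log\omega)}\cdot n^{O(\omega)}$, since the root-level cost dominates the geometric sum. The main obstacle is the correctness and cost-accounting argument: one must show that the demand-balanced separator of size $O(\omega)$ really exists in $N$, that the sub-instances produced after ``breaking'' the demands remain solvable in the recursive subgraphs, and crucially that edges shared between subproblems are not double-counted when the pieces are reassembled into a minimum-cost network of treewidth at most~$\omega$. All of these technicalities have already been worked out in the proof of Theorem~5 of~\cite{DBLP:conf/icalp/FeldmannM16}, from which the statement follows by reading off the runtime dependence on $k$ and $\omega$ for the special case when the target class~$\mc{K}$ consists of all graphs of treewidth at most~$\omega$.
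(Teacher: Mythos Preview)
The paper does not prove this statement: it is explicitly imported from Theorem~5 of~\cite{DBLP:conf/icalp/FeldmannM16} and used as a black box throughout (for \autoref{thm:alg-planar-biDSN}, \autoref{thm:alg-biSCSS}, and the approximation scheme). There is therefore no proof in the paper to compare your sketch against.

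Your separator-based divide-and-conquer outline is a reasonable heuristic picture, and you rightly defer the details to the cited reference. One gap worth flagging in your sketch: you speak of recursing ``on the two sub-instances on either side of $S$,'' but the algorithm does not know $N$ and hence cannot identify which vertices of $G$ lie on which side of the separator; any correct formulation has to recurse on the full graph $G$ with a smaller demand set (the sub-demands obtained by breaking each $(s_i,t_i)$ at the guessed $S$-sequence), and then the non-double-counting argument you mention becomes the real crux rather than an afterthought. Since you ultimately point back to~\cite{DBLP:conf/icalp/FeldmannM16} for exactly these technicalities, your proposal is adequate as a citation-with-intuition, which is all the present paper offers as well.
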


We exploit the algorithm given by \autoref{thm:alg-FMarx} to prove our
algorithmic
results of \autoref{thm:alg-planar-biDSN} and \autoref{thm:alg-biSCSS}. In
particular, we prove that any \bidsnP solution has treewidth $O(\sqrt k)$, from
which \autoref{thm:alg-planar-biDSN} follows immediately. For \biscss however,
we give an example of an optimum solution of treewidth~$\Omega(k)$. Hence we
cannot exploit the algorithm of \autoref{thm:alg-FMarx} directly to obtain
\autoref{thm:alg-biSCSS}. In fact on general input graphs, a treewidth of
$\Omega(k)$ would imply that the problem is W[1]-hard by the hardness results
in~\cite{DBLP:conf/icalp/FeldmannM16} (which was indeed originally shown
by~\citet{DBLP:journals/siamdm/GuoNS11}). As this stands in stark contrast to
\autoref{thm:alg-biSCSS}, it is particularly interesting that the \scss problem
on bidirected input graphs is FPT. We prove this result by decomposing an
optimum solution to \biscss into sub-instances of \biscssK, where~$\mc{K}$ is a
class of directed graphs of treewidth~$1$ (so-called \emph{poly-trees}). For
each such sub-instance we can compute a solution in $2^{O(k)}\polyn$ time by
using \autoref{thm:alg-FMarx} (for $\omega=1$), and then combine them into an
optimum solution to \biscss.

\paragraph*{W[1]-hardness and runtime lower bounds.}
Our hardness proofs for \bidsn are based on reductions from the \gt
problem~\cite{pc-book}. This problem is particularly well-suited to prove
hardness for problems on planar graphs, due to its grid-like structure. We first
develop a specific gadget that can be exploited to show hardness for bidirected
graphs. This gadget however is not planar. We only exploit the structure of \gt
to show that the optimum solution is planar for \autoref{thm:lb-scheme-biDSN}.
For \autoref{thm:lb-biDSN} we modify this reduction to obtain a stronger runtime
lower bound, but in the process we lose the property that the optimum is planar.

\paragraph*{Parameterized inapproximability.}
Our hardness result for \scss is proved by combining a variant of a known
reduction by \citet{DBLP:journals/siamdm/GuoNS11} with a recent parameterized
hardness of approximation result for \pname{Densest
$k$-Subgraph}~\cite{param-inapprox}. Our inapproximability result for \bidsn is
shown by combining our W[1]-hardness reduction with the same hardness of
approximation result of \pname{Densest $k$-Subgraph}.

\subsection{Approximate kernelization}

A topic closely related to parameterized algorithms is kernelization, which
concerns efficient pre-processing algorithms. As formalized by
\citet{lokshtanov2017lossy}, an \emph{$\alpha$-approximate kernel} for an
optimization problem consists of a \emph{reduction} and a \emph{lifting}
algorithm, both running in polynomial time. The reduction algorithm takes an
instance $I$ with parameter $k$ and computes a new instance $I'$ and parameter
$k'$, such that the size $|I'|+k'$ of the new instance is bounded by some
function $f(k)$ of the input parameter. This new instance $I'$ is also called a
\emph{kernel} of~$I$. The lifting algorithm takes as input any
$\beta$-approximation for the kernel $I'$ and computes an
$\alpha\beta$-approximate solution for $I$.

It has long been known that a problem is FPT if and only if it admits an
\emph{exact} kernel, i.e., a $1$-approximate kernel. \citet{lokshtanov2017lossy}
prove that this is also the case in general: a problem has a parameterized
$\alpha$-approximation algorithm if and only if it admits an
$\alpha$-approximate kernel. Note that the size of the kernel might in general
be very large, even if it is bounded in the input parameter. Therefore a
well-studied interesting question is whether a polynomial-sized kernel exists,
which can be taken as evidence that a problem admits a very efficient
pre-processing algorithm. If a problem admits a polynomial-sized
$(1+\eps)$\hy{}approximate kernel for every $\eps>0$, then we say that it
admits a \emph{polynomial-sized approximate kernelization scheme~(PSAKS)}.

The \pname{ST} problem is known to be
FPT~\cite{DBLP:journals/networks/DreyfusW71}, while not admitting any
polynomial-sized exact kernel~\cite{dom} for parameter~$k$, unless
NP~$\subseteq$~coNP/Poly. However, as noted by \citet{lokshtanov2017lossy}, the
\citeauthor{borchers-du} Theorem implies the existence of a PSAKS for the
\pname{ST} problem. As a consequence of our generalization of the
\citeauthor{borchers-du} Theorem we also obtain a PSAKS for \bidsnP (see
\autoref{crl:lossy} for a formal statement). This is despite the fact that this
problem does not admit \emph{any} exact kernel for parameter $k$ according to
\autoref{thm:lb-scheme-biDSN}. Furthermore, we observe that the same kernel is
in fact a polynomial-sized $(2+\eps)$-approximate kernel for the \bidsn problem.
This nicely complements the existence of a parameterized $2$-approximation
algorithm according to \autoref{thm:alg-approx-biDSN}.

\subsection{Related work}

The \pname{ST} problem is one of the 21 NP-hard problems listed in the seminal
paper of \citet{MR51:14644}. \citet{DBLP:journals/networks/DreyfusW71} showed
that the problem is solvable in time $3^k\polyn$, which was later
improved~\cite{fuchs2007dynamic} to $(2+\eps)^k\polyn$ for any constant
$\eps>0$. For values $k>2\log n\log^3\log n$ an even faster algorithm
exists~\cite{vygen2011faster}. For unweighted graphs an algorithm with runtime
$2^k\polyn$ can be
obtained~\cite{BjorklundHKK07,DBLP:journals/algorithmica/Nederlof13}. An early
LP-based $2$-approximation algorithm for \pname{ST} uses the so-called
\emph{bidirected cut relaxation (BCR)}~\cite{wong1984dual,E67,
feldmann2016equivalence}, which formulates the problem by bidirecting the
undirected input graph. Thus bidirected instances have implicitly been used even
for the classical \pname{ST} problem since the 1960s. For \pname{ST} and
\pname{SF} there are PTASes on planar and bounded genus
graphs~\cite{bateni2011approximation,eisenstat2012efficient}.

A recent result~\cite{DBLP:conf/stacs/EibenKPS19} investigates the complexity of
\dsn with respect to the stronger parameter~$|R|$ (instead of the number of
demands $k$; see remark above). It is shown that for bounded genus graphs the
\dsn problem can be solved in $f(|R|)\cdot n^{O(|R|)}$ time, while in general
no $f(|R|)\cdot n^{o(|R|^2/\log |R|)}$ time algorithm exists, under ETH.
The \pname{DST} problem has an $O(k^\eps)$-approximation in polynomial
time~\cite{DBLP:journals/jal/CharikarCCDGGL99}, and an $O(\log^2 k
/\log\log k)$-approximation in quasi-polynomial
time~\cite{DBLP:conf/stoc/0001LL19}. Moreover, no better approximation is
possible in quasi-polynomial time~\cite{DBLP:conf/stoc/0001LL19}. A long
standing open problem is whether a polynomial-time algorithm with
poly-logarithmic approximation guarantee exists for \pname{DST}. The \scss
problem has also been studied in the special case when $R=V$. This case is
commonly known as \pname{Minimum Strongly Connected Spanning Subgraph}, and the
best approximation factor known is~$2$, which is also given by computing two
spanning arborescences~\cite{frederickson1981approximation}, and for $R=V$ can
be done in polynomial time. For the unweighted case however, a
$3/2$-approximation is obtainable~\cite{vetta2001approximating}, which is
contrast to unweighted \scss, where the lower bound of
\autoref{thm:lb-approx-SCSS} is also valid.

Bidirected input graphs have been studied in the context of radio and ad-hoc
wireless networks~\cite{chen1989strongly,ramanathan2000topology,
wang2008approximate,lam2015dual}. In the \pname{Power Assignment} problem,
nodes of a given bidirected network need to be activated in order to induce a
network satisfying some connectivity condition. For instance
in~\cite{chen1989strongly}, the problem of finding a strongly connected network
is considered, but also other settings such as
$2$-(vertex/edge)-connectivity~\cite{wang2008approximate} or
$k$-(vertex/edge)-connectivity~\cite{lam2015dual} have been studied.

\subsection{Organization of the paper}

We give some preliminaries and basic observations on the structure of optimum
solutions to \bidsn in bidirected input graphs in \autoref{sec:bidir}. These are
used throughout \autoref{sec:scheme}, where we present our approximation scheme
for \bidsnP of \autoref{thm:scheme}, and \autoref{sec:alg-opt}, where we show
how to compute optimum solutions to \bidsnP for \autoref{thm:alg-planar-biDSN}
and \biscss for \autoref{thm:alg-biSCSS}. Before presenting our main result of
\autoref{sec:scheme} however, we first need to develop the approximation
algorithms for \bidsn of \autoref{thm:alg-approx-biDSN}, which we do in
\autoref{sec:dir-vs-undir} together with the hardness result for \bidsnP of
\autoref{thm:lb-biDST}. The inapproximability results for \bidsn of
\autoref{thm:lb-approx-biDSN} and \scss of \autoref{thm:lb-approx-SCSS} are
given in \autoref{sec:inapprox}, and the remaining runtime lower bounds for
\bidsnP of \autoref{thm:lb-scheme-biDSN}, \bidsn of \autoref{thm:lb-biDSN}, and
\biscss of \autoref{thm:lb-biSCSS} can be found in \autoref{sec:lb}. In
\autoref{app:dsn} we present the reduction that was used later by \citet{DM18}
to prove the $k^{1/4 - o(1)}$-approximation hardness for \dsn. Finally, in
\autoref{sec:questions} we list some open questions.

\section{Structural properties of optimum solutions to \altbidsn}
\label{sec:bidir}

In this section we give some definitions relevant to directed and bidirected
graphs, and some fundamental observations on solutions to \pname{bi-DSN} that
we will use throughout the paper.

Due to the similarity of bidirected graphs to undirected graphs, we will often
exploit the structure of the underlying undirected graph of a given bidirected
graph. More generally, for any directed graph $G$ we denote the underlying 
undirected graph by~$\ud{G}$. A \emph{poly-graph} is obtained by
directing the edges of an undirected graph, and analogously we obtain
\emph{poly-cycles}, \emph{poly-paths}, and \emph{poly-trees}. A strongly
connected poly-cycle is a \emph{directed cycle}, and a poly-tree for which all
vertices can reach (or are reachable from) a designated \emph{root} vertex $r$
is called an \emph{out-arborescence} (or \emph{in-arborescence}). Note that for 
any edge $uv$ of a poly-graph, the reverse edge $vu$ does not exist, and
so a poly-graph is in a sense the opposite of a bidirected graph. In between
poly-graphs and bidirected graphs are general directed graphs.

\subsection{Cycles of optimum solutions in bidirected graphs}

We need the following observation, which has far reaching consequences for
\pname{bi-DSN} algorithms. An optimum \bidsn solution may contain poly-cycles, 
which are not directed cycles: consider for instance a bidirected graph for 
which the underlying undirected graph is a cycle on four vertices and every 
edge has unit weight. If the vertices are numbered $1,2,3,4$ along the cycle and 
the demands are $\{(1,2),(1,4),(3,2),(3,4)\}$, then it is not hard to see that 
an optimum solution is given by the poly-cycle with edges corresponding to the 
demands, i.e., the edge set $\{(1,2),(1,4),(3,2),(3,4)\}$. As the following 
lemma shows however, any such poly-cycle can be replaced by a directed cycle.

\begin{lem}\label{lem:cycles}
Let $O$ be a poly-cycle of a subgraph $N\subseteq G$ in a bidirected graph
$G$. Replacing~$O$ with a directed cycle on $V(O)$ in $N$ results in a subgraph
$M$ of $G$ with cost at most that of $N$, such that a $u\to v$ path exists in
$M$ for every vertex pair $u,v$ for which $N$ contained a $u\to v$ path.
\end{lem}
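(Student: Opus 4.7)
The plan is to handle the two claims separately: the cost bound and the path-preservation statement. Both will hinge on the very structural consequence of bidirectedness, namely that for any underlying undirected edge $\{x,y\}$ of $G$, the two directed edges $xy$ and $yx$ both exist and carry the same weight.

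For the cost bound, I would first fix a labelling $v_1,\ldots,v_\ell$ of the vertices of $O$ along the underlying cycle $\ud{O}$, and take the directed cycle $C$ on $V(O)$ that uses the edges $v_1v_2, v_2v_3,\ldots, v_\ell v_1$. By bidirectedness of $G$, each edge $v_iv_{i+1}$ of $C$ exists in $G$ and has the same weight as the (possibly reversed) edge of $O$ between $v_i$ and $v_{i+1}$. Since $M$ is obtained by deleting $E(O)$ from $N$ and adding $E(C)$, and since the total weight of $E(C)$ equals that of $E(O)$, we get $\cost(M)\le \cost(N\setminus E(O)) + \cost(C) = \cost(N)$, where the inequality accounts for the possibility that some edges of $C$ already occur in $N\setminus E(O)$ and are therefore not added twice.

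For the path-preservation claim, let $P$ be a $u\to v$ path in $N$. The key observation I plan to exploit is that $C$ is a directed cycle on $V(O)$, hence the subgraph of $M$ induced by $V(O)$ is strongly connected: for any $x,y\in V(O)$ there is a directed $x\to y$ path in $C$. I will then build a $u\to v$ walk in $M$ by traversing $P$ and, whenever $P$ uses an edge $xy\in E(O)$, replacing that step by the unique directed $x\to y$ path along $C$. Every other edge of $P$ lies in $N\setminus E(O)\subseteq M$, so the resulting sequence is a legal walk in $M$ from $u$ to $v$; extracting a simple path from this walk yields the required $u\to v$ path.

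The only mild subtlety, and the place to be a touch careful, is that $P$ may enter and leave $V(O)$ several times and may itself contain an entire subpath lying inside $O$. The argument above treats this uniformly, because every individual edge of $E(O)$ used by $P$ is replaced independently by a directed $C$-path between the same endpoints; strong connectivity of $C$ on $V(O)$ makes each such replacement possible without any global case analysis. I expect no real obstacle beyond writing this replacement cleanly.
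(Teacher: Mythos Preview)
Your proposal is correct and follows essentially the same approach as the paper's proof, just spelled out in more detail: the paper argues in two sentences that the directed cycle costs the same as the poly-cycle by bidirectedness (with possible savings from overlap), and that any path through $O$ can be rerouted through the strongly connected directed cycle.
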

\begin{proof}
Removing all edges of $O$ in $N$ and replacing them with a directed cycle cannot
increase the cost, as $G$ is bidirected (the cost may decrease if an edge $uv$
of $O$ is replaced by an edge $vu$, which is already contained in $N$). Any
$u\to v$ path that leads through $O$ in $N$ can be rerouted through the strongly
connected directed cycle in $M$.
\end{proof}

From this we can deduce the following useful observation, which we will exploit
for all of our algorithms. The intuitive meaning of it is that any poly-cycle
of an optimum \pname{bi-DSN} solution splits the solution into parts of which
each contains at least one terminal.

\begin{lem}\label{lem:cycle-split}
Let $N\subseteq G$ be an optimum \pname{bi-DSN} solution in a bidirected
graph $G$, such that $N$ contains a poly-cycle $O\subseteq N$. Every edge of
$N$ that is incident to two vertices of $O$ is also part of~$O$. Moreover,
every connected component of the graph resulting from removing $V(O)$ from $N$
contains at least one terminal.
\end{lem}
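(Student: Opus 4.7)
The plan is to prove both parts by contradiction, in each case invoking \autoref{lem:cycles} to replace the poly-cycle $O$ by a strongly connected directed cycle $O'$ on $V(O)$. This yields a subgraph $M\subseteq G$ with $\cost(M)\le\cost(N)$ that realises all the same demand reachabilities as $N$. Since $O'$ is strongly connected, there is a directed path inside $O'$ between any two vertices of $V(O)$; I will use only this property to locally reroute demand paths and then delete either a redundant edge (for the first claim) or an entire terminal-free component (for the second claim), strictly reducing the cost and contradicting the optimality of $N$.

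For the first claim, suppose for contradiction that some $e=uv\in E(N)\setminus E(O)$ has both endpoints in $V(O)$. Because \autoref{lem:cycles} only rewires edges among $V(O)$, the edge $e$ survives in $M$. Any demand path in $M$ that traverses $e$ goes from $u$ to $v$ along it, and can be rerouted through the directed $u\to v$ subpath of $O'$; hence $M-e$ still satisfies every demand, yet $\cost(M-e)<\cost(M)\le\cost(N)$, contradicting optimality.

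For the second claim, let $C$ be a connected component of $N-V(O)$ and assume $C$ contains no terminal. The key structural observation is that, by definition of a component of $N-V(O)$, no vertex of $V(C)$ has an $N$-neighbour outside $V(O)\cup V(C)$; since \autoref{lem:cycles} only touches edges among $V(O)$, this remains true in $M$. Thus, in $M$, every maximal excursion of a demand path $s\to t$ through $V(C)$ enters at some $x_1\in V(O)$ and exits at some $x_2\in V(O)$, and can be short-circuited by the directed $x_1\to x_2$ path inside $O'$; moreover, since $V(C)$ contains no terminal we have $s,t\notin V(C)$, so no demand path is forced to begin or end inside $V(C)$. Consequently $M-V(C)$ still satisfies every demand, and since $V(C)$ is incident to at least one edge of $M$ we obtain $\cost(M-V(C))<\cost(M)\le\cost(N)$, again contradicting optimality.

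The main obstacle I anticipate is the structural observation underlying the second claim: one must recognise that the components of $N-V(O)$ have their external neighbours located entirely inside $V(O)$, so that every excursion of every demand path through a terminal-free component $C$ can be simultaneously short-circuited through $O'$. Once this is in place, the rerouting and cost-comparison arguments go through uniformly for both parts.
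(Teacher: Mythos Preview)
Your overall approach is the same as the paper's: apply \autoref{lem:cycles} to pass to a network $M$ containing a directed cycle $O'$ on $V(O)$, then locally reroute through $O'$ and delete a redundant piece to obtain a strictly cheaper feasible solution. Your argument for the second claim is essentially identical to the paper's.

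There is, however, a small but genuine gap in your argument for the first claim. You assert that ``$M-e$ still satisfies every demand'' because any use of $e=uv$ can be rerouted along the $u\to v$ subpath of $O'$. This fails precisely when $e$ itself is an edge of $O'$: in a directed cycle, the unique $u\to v$ path is the edge $uv$, so deleting $e$ destroys the very detour you want to use. This situation arises exactly when $u,v$ are adjacent on~$\ud{O}$ (so $e$ is the reverse of some $f\in E(O)$) and the orientation of $O'$ happens to agree with $e$ at that position. The paper splits into two cases for this reason: if $e$ is a chord of $O$ your deletion argument works verbatim; if $e$ is the reverse of some $f\in E(O)$, one instead orients $O'$ in the direction of $e$, and then forming $M$ already drops the cost strictly (since $N$ paid for both $e$ and $f$, whereas $M$ pays for $e$ only once), giving the contradiction without deleting anything. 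Equivalently, you could simply stipulate that you pick the orientation of $O'$ so that $e\notin E(O')$, which is always possible for a single fixed $e$; with that one extra sentence your argument goes through.
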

\begin{proof}
By \autoref{lem:cycles} we may exchange $O$ with a directed cycle $O'$ without
increasing the cost and maintaining all connections for the demands given by
the \pname{bi-DSN} instance. Since $N$ has minimum cost, this means that the
resulting network $N'$ is also an optimum solution. Assume that $N$ contained
some edge $e$ incident to two vertices of $O$ but $e\notin E(O)$. The edge $e$
cannot be a reverse edge of some edge $f$ of~$O$, as we could replace $O$ with a
cycle directed in the same direction as $e$. This would decrease the cost as
$N'$ only contains $e$, while $N$ contains both $e$ and~$f$. We are left with
the case that $e$ is a \emph{chord} of $O$, i.e., it connects two non-adjacent
vertices of $O$. However in this case, the endpoints of $e$ are strongly
connected through~$O'$ in $N'$ even after removing $e$. Thus we would be able to
safely remove $e$ and decrease the cost of $N'$.

Now assume that some connected component $C$ of the graph obtained from $N$ by
removing $V(O)$ contains no terminal. Note that $C$ also exists in the graph
obtained from $N'$ by removing~$O'$ and its vertices. As $C$ contains no 
terminals, any $s\to t$ path in $N'$ for a demand $(s,t)$ that contains a vertex 
of $C$ must contain a $u\to v$ subpath for some vertices $u,v\in V(O')$ with 
internal vertices from $C$. However the vertices $u,v$ are strongly connected 
through $O'$ and hence the $u\to v$ subpath can be rerouted via~$O'$. This means 
we may safely remove $C$ from $N'$ without loosing any connections for the 
required demands. However this contradicts the optimality of~$N'$, and in turn 
also our assumption that $N$ is an optimum solution.
\end{proof}

\subsection{Reducing the vertex degrees}\label{app:degrees}

For our proofs, it will be convenient to assume that the degrees of the vertices 
in some given graph are bounded. More specifically, consider a (possibly planar) 
graph $N$ connecting a terminal set $R$ according to some set of demands. We use 
the standard procedure below, which assures that every terminal has only one 
neighbour in~$N$, and every \emph{Steiner vertex} of $N$, i.e.\ every 
non-terminal in $V(N)\setminus R$, has exactly three neighbours in~$N$.

We execute the following steps on $N$ in the given order. It is easy to see that 
these operations preserve planarity (if $N$ is planar), the cost of $N$, and the 
connectivity according to the demands.
\begin{enumerate}
 \item For every terminal $t\in R$ that has more than one neighbour in $N$, we
introduce a new Steiner vertex $v$ and add the edges $vt$ and $tv$ with cost $0$
each. Thereafter every neighbour $w$ of $t$ different from $v$ is made a
neighbour of $v$ instead. That is, the edges $wt$ and $tw$ are replaced by the
edges $wv$ and $vw$ of the same cost. After this, every terminal in $N$ has one
neighbour only.
 \item Then for every Steiner vertex $v$ with more than $3$ neighbours in $N$,
we split $v$ into two vertices as follows. In case $N$ is planar we first fix a 
drawing of $N$. Then we introduce a new Steiner vertex~$u$ and edges $uv$ and 
$vu$ with cost~$0$ each. As the new vertex only has one neighbour, we may draw 
$u$ in an arbitrary face of $N$ that is incident to $v$. Let $F$ be this face 
containing $u$, and let $w_1$ and $w_2$ be the two neighbours of~$v$ incident to 
$F$ that are different from $u$. For each $j\in\{1,2\}$, we replace the edges 
$vw_j$ and $w_jv$ with edges $uw_j$ and $w_ju$, respectively. We maintain the 
edge costs in each of these replacement steps. Note that $N$ remains planar 
under this operation. In case $N$ is not planar, we proceed in the same way but 
simply pick arbitrary neighbours $w_1,w_2$ of $v$ that are different from $u$. 
After repeating this for every Steiner vertex with more than $3$ neighbours, 
all Steiner vertices of $N$ have at most three neighbours.
 \item Next we consider each Steiner vertex $v$ that has exactly two neighbours
$u$ and $w$. If $N$ contains the path $uvw$, we add an edge $uw$ with the same 
cost as the path. Similarly, if $N$ has a $wvu$ path, we introduce the edge 
$wu$ with the same cost. We then remove the vertex~$v$. After this all Steiner 
vertices of $N$ have exactly three neighbours.
\end{enumerate}

\section{Hardness and algorithms for {\altbidsn} via undirected graphs}
\label{sec:dir-vs-undir}

In this section we present two results for problems on bidirected graphs that
follow from corresponding results on undirected graphs. We first prove
\autoref{thm:lb-biDST}, which we restate below. In particular, it implies that
\bidsnP has no PTAS, unless P=NP.

\thmlbbiDST*
\begin{proof}
Given a \pname{Steiner Tree (ST)} instance on an undirected graph $\ud{G}$, we
simply bidirect each edge to obtain the bidirected graph $G$. We then choose any
of the terminals in $G$ as the root to get an instance of \pname{bi-DST}. It is
easy to see that any solution to \pname{ST} in $\ud{G}$ corresponds to a
solution to \pname{bi-DST} in $G$ of the same cost, and vice versa. As the
\pname{ST} problem is APX-hard~\cite{chlebik2002approximation}, the hardness
carries over to \pname{bi-DST}.
\end{proof}

Note that as the definition of the \pname{ST} problem does not restrict the
feasible solutions to trees, this hardness result does not restrict the
approximate solutions to \pname{bi-DST} to arborescences either. That is, it is
also hard to compute an approximation $N$ to the optimum \bidsnP solution,
even if we allow~$N$ to be a non-planar graph.

Next we turn to the positive result of \autoref{thm:alg-approx-biDSN}, which we
also restate below. Note that this theorem is in contrast to
\autoref{thm:lb-biDST} and \autoref{thm:lb-approx-biDSN}.

\thmalgapproxbiDSN*
\begin{proof}
Given a bidirected graph $G$ and a demand set $\mc{D}=\{(s_i, t_i)\mid 1\leq
i\leq k\}$ of an instance to \bidsn, we reduce it to an instance of the
\pname{Steiner Forest (SF)} problem in the underlying undirected graph $\ud{G}$
with the corresponding unordered demand set $\ud{\mc{D}}=\{\{s_i,t_i\}\mid
(s_i,t_i)\in\mc{D}\}$. The returned \bidsn solution is the network $N\subseteq
G$ that contains both edges $uv$ and $vu$ for any undirected edge between $u$
and $v$ of the \pname{SF} solution computed for $\ud{G}$. Thus the cost of $N$
is at most twice the cost of the \pname{SF} solution. At the same time, the
optimum \pname{SF} solution in $\ud{G}$ has cost at most that of the optimum
\bidsn solution in $G$, since taking the underlying undirected graph of the
latter is an \pname{SF} solution in $\ud{G}$.

The first part of theorem now follows by using the polynomial time
$2$-approximation algorithm \citet{steiner-forest} for the \pname{SF} problem.
We now show how to solve the \pname{SF} problem in $2^{O(k)}\cdot n^{O(1)}$ time
using the FPT algorithm of  \citet{DBLP:journals/networks/DreyfusW71} for the
\pname{ST} problem which runs  in $3^p\polyn$ time where $p$ is number of
terminals. However, as observed in~\cite{DBLP:conf/icalp/FeldmannM16}, it can
also easily be used for \pname{SF} as well: the optimum solution to \pname{SF}
is a forest and therefore the terminal set can be partitioned so that each part
is a tree in the optimum.
We now use dynamic programming: for each $X\subseteq [k]$ let $\text{OPT}[X]$
denote the minimum cost solution for the instance of \pname{SF} with the
terminal pairs $\ud{\mc{D}}_{X}:=\{\{s_i,t_i\}\mid i\in X\}$. We define
$\text{OPT}[\emptyset]=0$. Then, we have the following recurrence $$
\text{OPT}[X] = \min_{Y\subseteq X, Y\neq \emptyset} \Big\{ \text{DF}[Y] +
\text{OPT}[X\setminus Y] \Big\} $$ where $\text{DF}[Y]$ is the cost obtained by
running the Dreyfus-Wagner algorithm on the instance of \pname{ST} whose
terminal set is $Y$. The correctness of the recurrence follows since the forest
which forms the optimal solution of the instance of \pname{SF} with terminal
pairs $\ud{\mc{D}}_{X}$ must contain some non-empty subset of the terminal pairs
in one of its trees. The final answer that we output is
$\text{OPT}\big[[k]\big]$. Since the running time of the Dreyfus-Wagner
algorithm is $3^p\polyn$, the running time of the dynamic program is
\[
 \sum_{i=1}^{k} \binom{k}{i}\cdot \Big(
\sum_{j=1}^{i} 3^{2j}\polyn \Big) \leq 9\polyn\cdot \Big(\sum_{i=1}^{k}
\binom{k}{i}\cdot 9^{i} \Big) = 2^{O(k)}\polyn \qedhere
\]
\end{proof}

In \autoref{sec:scheme} we will prove that a PSAKS exists for the \bidsnP
problem. One ingredient for this will be the existence of a polynomial time
algorithm that gives a good estimate of the cost of a planar optimum solution.
The $4$-approximation algorithm of \autoref{thm:alg-approx-biDSN} provides a
good lower bound on the cost of the overall optimum, and thus also for the
planar optimum. However, a priori this does not provide a good upper bound,
since the planar optimum might cost a lot more than the overall optimum, and
thus than the approximation. Note though that both algorithms of
\autoref{thm:alg-approx-biDSN} compute planar solutions, and hence they also
upper bound the optimum planar solution. In particular, the existence of a
planar $2$-approximation of the optimum solution to \bidsn implies that the
planar optimum cannot cost more than twice the overall optimum, as summarized
below.

\begin{crl}\label{crl:estimate-biDSN_P}
For any \bidsn instance with optimum solution $N$, there exists a planar
solution $N'$ such that $\cost(N')\leq 2\cost(N)$.
\end{crl}

\section{An approximation scheme for {\altbidsnP}}
\label{sec:scheme}

In this section we prove \autoref{thm:scheme}, which is restated below. Note
that since we have $k$ demand pairs, it follows that the number $|R|$ of
terminals is at most $2k$, where $R= \bigcup_{i=1}^{k} \{s_i, t_i\}$. Henceforth
in this section, we use the upper bound $2k$ on the number of terminals $|R|$
for ease of presentation (when instead we could replace $k$ by $|R|$ in the
running time of \autoref{thm:scheme}).

\thmscheme*

The bulk of the proof is captured by the following result, which generalizes the
corresponding theorem by \citet{borchers-du} for the \pname{ST} problem, and
which is our main technical contribution. In order to facilitate the definition
of a sub-instance to \dsn, we encode the demands of a \dsn instance using a
\emph{pattern graph}~$H$, as also done in~\cite{DBLP:conf/icalp/FeldmannM16}:
the vertex set of $H$ is the terminal set $R$, and $H$ contains the directed
edge $st$ if and only if $(s,t)$ is a demand. Hence the \dsn problem asks for a
minimum cost network $N\subseteq G$ having an $s\to t$ path for each edge $st$
of $H$. Here $\cost(N)$ denotes the cost of a graph (solution) $N$, i.e., the
sum of its edge weights.

\begin{thm}\label{lem:full-comp}
Let $G$ be a bidirected graph, and $H$ a pattern graph on $R\subseteq V(G)$. Let
$N\subseteq G$ be the cheapest planar solution to pattern $H$. For any $\eps>0$,
there exists a set of patterns $\mc{H}$ such that
\begin{enumerate}
 \item $V(H')\subseteq R$ with $|V(H')|\leq 2^{1+1/\eps}$ for each
$H'\in\mc{H}$,
 \item given any feasible solutions $N_{H'}\subseteq G$ for all $H'\in\mc{H}$,
the union~$\bigcup_{H'\in\mc{H}} N_{H'}$ of the these solutions forms a feasible
solution to $H$, and
 \item there exist feasible planar solutions $N^*_{H'}\subseteq G$ for all
$H'\in\mc{H}$ such that\\ $\sum_{H'\in\mc{H}}\cost(N^*_{H'})\leq
(1+\eps)\cdot\cost(N)$.
\end{enumerate}
\end{thm}

Note that the pattern graphs $H'$ of the set $\mc{H}$ in this theorem do not
have to be subgraphs of the given pattern $H$. In fact, as the proof of
\autoref{lem:full-comp} below shows, in general they are not. Before we give
the proof, we describe the consequences of this theorem.

\paragraph*{Consequences of \autoref{lem:full-comp}.}
Our approximation scheme of \autoref{thm:scheme} will compute optimum planar
solutions to all patterns with at most $2^{1+1/\eps}$ terminals using the XP
algorithm of \autoref{thm:alg-planar-biDSN}, and then essentially find the set
$\mc{H}$ of \autoref{lem:full-comp} via a dynamic program. This is captured in
the following proof.

\begin{proof}[Proof of \autoref{thm:scheme}.]
The first step of the algorithm is to consider every possible pattern graph on
at most $g(\eps)=2^{1+1/\eps}$ terminals from $R$. For each such pattern $H'$
the algorithm computes an optimum \bidsnP solution~$N_{H'}$ (if any) using the
algorithm of \autoref{thm:alg-planar-biDSN}. Since any considered pattern graph
has at most $2{g(\eps)\choose 2}< g(\eps)^2$ edges, and (regardless of the input
pattern $H$) there is a total of $2{2k\choose 2}< 4k^2$ possible demands between
the at most~$2k$ terminals of~$R$, the total number of considered pattern graphs
is less than $\sum_{i=0}^{g(\eps)^2}{4k^2 \choose i}=k^{2^{O(1/\eps)}}$. An
optimum planar solution (if it exists) to each of these patterns is computed in
$2^{g(\eps)^{3/2}\log g(\eps)}\cdot n^{O(\sqrt{g(\eps)})}=n^{2^{O(1/\eps)}}$
time via \autoref{thm:alg-planar-biDSN}. Hence up to now the algorithm takes
$n^{2^{O(1/\eps)}}$ time,  as $k\in O(n^2)$.

The next step is to use a dynamic program to compute a solution to the input
pattern $H$ by putting together these pre-computed planar solutions. More
concretely, let $N_1,\ldots,N_p$ be all the solutions computed in the first step
(given in any arbitrary order). For any subset $\mc{N}$ of these planar graphs,
in the following we denote by $\cost(\mc{N}):=\sum_{N\in\mc{N}}\cost(N)$ their
total cost and by $\bigcup\mc{N}:=\bigcup_{N\in\mc{N}}N$ their union. For $1\leq
i\leq p$ and any pattern graph~$H'$, we define
\[\label{eq:sigma1}
\sigma(H',i)=\min\left\{\cost(\mc{N})~\Big\vert~\mc{N}\subseteq\{N_1,
\ldots,N_i\} \text{ and } \bigcup\mc{N} \text{ feasible for } H'\right\}
\]
to be the minimum total cost of a subset of the first $i$ planar graphs
$N_1,\ldots,N_i$ that forms a feasible solution to~$H'$. Note that the total
cost of a set $\mc{N}$ counts edges appearing in more than one graphs of
$\mc{N}$ several times. If no feasible solution to $H'$ can be obtained from
any subset of $N_1,\ldots,N_i$, then we define $\sigma(H',i)$ to be~$\infty$.

Let $\mc{H}$ be the set of patterns given by \autoref{lem:full-comp} for the
optimum planar solution $N$ to $H$, and let $N^*_{H'}$ be the planar solution to
each $H'\in\mc{H}$ given by the theorem. The existence of $N^*_{H'}$ implies
that for each $H'\in\mc{H}$ there is also a feasible planar solution $N_{H'}$
among $N_1,\ldots,N_p$, and by \autoref{lem:full-comp} their union
$\bigcup_{H'\in\mc{H}} N_{H'}$ is a feasible solution to $H$. Thus
\[
\sigma(H,p)\leq \sum_{H'\in\mc{H}} \cost(N_{H'}) \leq \sum_{H'\in\mc{H}}
\cost(N^*_{H'}) = (1+\eps)\cost(N),
\]
where the second inequality follows since each computed planar solution $N_i$
(and thus each $N_{H'}$ where~$H'\in\mc{H}$) is an optimum solution due to
\autoref{thm:alg-planar-biDSN}. We conclude that $\sigma(H,p)$ is the value of a
$(1+\eps)$-approx\-imation to the optimum \bidsnP solution.

To recursively compute $\sigma(H',i)$ for any pattern graph $H'$ on $R$ and any
$1\leq i\leq p$, we keep track of the subset
$\mc{N}^i_{H'}\subseteq\{N_1,\ldots,N_i\}$ of planar graphs that obtain the cost
stored by the following dynamic program in $\sigma(H',i)$. For $i=1$ we just
check whether~$N_1$ is a feasible solution to $H'$. If so, we set
$\sigma(H',1)=\cost(N_1)$ and~$\mc{N}^1_{H'}=\{N_1\}$, while otherwise we set
$\sigma(H',1)=\infty$ and $\mc{N}^1_{H'}=\emptyset$. This obviously computes
$\sigma(H',1)$ correctly. To compute $\sigma(H',i)$ for any~$i\geq 2$, we check
for every pattern graph $H''$ whether $(\bigcup\mc{N}^{i-1}_{H''}) \cup N_i$ is
a feasible solution to~$H'$. Among all such solutions and the graph
$\bigcup\mc{N}^{i-1}_{H'}$ we store the cost of the cheapest option. More
formally, we claim that for $i\geq 2$
\begin{multline}
\label{eq:dp1}
 \sigma(H',i)=\min\Big\{\sigma(H',i-1), \sigma(H'',i-1)+\cost(N_i)~\Big\vert\\
H''\text{ is a pattern with } \big(\bigcup\mc{N}^{i-1}_{H''}\big) \cup N_i\text{ 
feasible for } H'\Big\}.
\end{multline}
If the right-hand side of~\eqref{eq:dp1} is some finite value, we set
$\mc{N}^i_{H'}$ to the subset obtaining the minimum (i.e.,~either
$\mc{N}^{i-1}_{H'}$ or~$\mc{N}^{i-1}_{H''}\cup\{N_i\}$ for some $H''$).
Otherwise, we let $\mc{N}^i_{H'}=\emptyset$.

To show that the recursion given by~\eqref{eq:dp1} is correct, fix $H'$ and
$i\geq 2$, and let $\mc{N}^*\subseteq\{N_1,\ldots,N_i\}$ be the subset of
planar graphs defining $\sigma(H',i)$, i.e., $\mc{N}^*$ minimizes the
right-hand side of~\eqref{eq:sigma1}. We need to show that
$\cost(\mc{N}^i_{H'})=\cost(\mc{N}^*)$. First note that by~\eqref{eq:dp1},
$\bigcup\mc{N}^i_{H'}$ is a feasible solution to $H'$ and is the union of some
subset of $N_1,\ldots,N_i$, so that $\cost(\mc{N}^i_{H'})\geq\cost(\mc{N}^*)$ by
definition of $\mc{N}^*$. In case $N_i\notin\mc{N}^*$, by induction we have
$\cost(\mc{N}^{i-1}_{H'})=\cost(\mc{N}^*)$, and so
$\cost(\mc{N}^i_{H'})\leq\cost(\mc{N}^*)$, since
$\sigma(H',i-1)=\cost(\mc{N}^{i-1}_{H'})$ is considered as one of the values
over which~\eqref{eq:dp1} minimizes. In the other case when~$N_i\in\mc{N}^*$,
consider the graph $\bigcup(\mc{N}^*\setminus\{N_i\})$ obtained by taking the
union of all planar graphs in $\mc{N}^*$ except~$N_i$ (note that it may still
contain edges of $N_i$). Now let $H''$ be the pattern graph on~$R$, which
contains an edge $st$ if and only if $\bigcup(\mc{N}^*\setminus\{N_i\})$
contains an $s\to t$ path. By induction we have
$\cost(\mc{N}^{i-1}_{H''})\leq\cost(\mc{N}^*\setminus\{N_i\})$, and adding
$\cost(N_i)$ to both sides of this inequality we get
$\cost(\mc{N}^{i-1}_{H''})+\cost(N_i)\leq\cost(\mc{N}^*)$, since~$\mc{N}^*$
contains~$N_i$. Moreover, $(\bigcup\mc{N}^{i-1}_{H''})\cup N_i$ is a feasible
solution to $H'$, since $\bigcup\mc{N}^{i-1}_{H''}$ is a feasible solution to
$H''$ and adding~$N_i$ we obtain an $s\to t$ path between terminals $s,t\in R$
if and only if $\bigcup\mc{N}^*$ contains some $s\to t$ path as well. Hence
$\cost(\mc{N}^i_{H'})\leq\cost(\mc{N}^{i-1}_{H''})+\cost(N_i)$, as the latter
term is equal to $\sigma(H'',i-1)+\cost(N_i)$ and is considered as one of the
values over which~\eqref{eq:dp1} minimizes. In conclusion, also if
$N_i\in\mc{N}^*$ we have $\cost(\mc{N}^i_{H'})\leq\cost(\mc{N}^*)$ and so
$\cost(\mc{N}^i_{H'})=\cost(\mc{N}^*)$. Thus the recursion given
in~\eqref{eq:dp1} correctly computes the value of $\sigma(H',i)$ according to
its definition in~\eqref{eq:sigma1}.

To bound the runtime of the dynamic program, recall that there are $2{2k\choose
2}< 4k^2$ possible demands between the at most~$2k$ terminals of
$R$. Hence the number of considered pattern graphs $H'$ is less than~$2^{4k^2}$. 
Recall also that the first step of the algorithm computes (at most)
one planar solution to each pattern on at most $g(\eps)$ terminals of which
there are~$n^{2^{O(1/\eps)}}$ as $k\in O(n^2)$. Thus the asymptotic size of the
table given by all entries~$\sigma(H',i)$ (with $1\leq i\leq p\leq
n^{2^{O(1/\eps)}}$) is $2^{O(k^2)}n^{2^{O(1/\eps)}}$. To compute one entry of
the table via~\eqref{eq:dp1}, we need to consider every pattern $H''$ for each
of which we perform a feasibility check, which can be done in polynomial time.
Thus the runtime per entry is $2^{O(k^2)}\polyn$, and the total runtime of the
algorithm (including the first step) is bounded by
$2^{O(k^2)}n^{2^{O(1/\eps)}}$.
\end{proof}

Note that even though the output of the algorithm is a $(1+\eps)$-approximation
to the cheapest planar solution, the computed solution may not be planar if the
input graph is not. \autoref{lem:full-comp} shows though that the
\citeauthor{borchers-du} Theorem can be extended to a much more general case,
while the inapproximability results of \autoref{thm:lb-approx-biDSN} for \bidsn
and of~\cite{Chitnis19-directed} for \dsnP show that no further generalizations
in this direction are possible. Another consequence of the
\citeauthor{borchers-du} Theorem for the \pname{ST} problem is the existence of
a PSAKS for \pname{ST}, as recently shown by \citet{lokshtanov2017lossy}. By
similar arguments this is also true for \bidsnP, due to \autoref{lem:full-comp}.
A simple observation is that the same kernel is also a $(2+\eps)$-approximate
kernel for \bidsn due to \autoref{crl:estimate-biDSN_P}, which complements the
parameterized $2$-approximation of \autoref{thm:alg-approx-biDSN}. We defer the
proof of the following corollary to the end of this section, since we will
utilize some of the insights gained to prove \autoref{lem:full-comp} (in
particular those from \autoref{lem:paths} below).

\begin{crl}\label{crl:lossy}
The \bidsnP problem admits a polynomial-size approximate kernelization scheme
(PSAKS) of size~$k^{2^{O(1/\eps)}}$. The same kernel is also a polynomial-sized
$(2+\eps)$-approximate kernel for~\bidsn.
\end{crl}

\paragraph*{Proving \autoref{lem:full-comp}.}
We first use the transformations of \autoref{app:degrees} on the cheapest planar
solution $N\subseteq G$, so that each terminal has only~$1$ neighbour, and each
Steiner vertex has exactly $3$ neighbours. Furthermore, let $G_N$ be the graph
spanned by the edge set $\{uv,vu\mid uv\in E(N)\}$, i.e., it is the underlying
bidirected graph of $N$ after performing the transformations of
\autoref{app:degrees} on~$N$. In particular, also in $G_N$ each terminal has
only~$1$ neighbour, and each Steiner vertex has exactly $3$ neighbours. It is
not hard to see that proving \autoref{lem:full-comp} for the solution $N$ in
$G_N$ implies the same result for the original solution in $G$, by reversing all
transformations given in \autoref{app:degrees}.

The proof consists of two parts, of which the first exploits the bidirectedness
of $G_N$, while the second exploits the planarity of $N$. The first part will
identify paths connecting each Steiner vertex to some terminal in such
a way that the paths do not overlap much. This will enable us to select a
subset of these paths in the second part, so that the total weight of the
selected paths is an $\eps$-fraction of the cost of the solution $N$. This
subset of paths will be used to connect terminals to the boundary vertices of
small regions into which we divide $N$. These regions extended by the paths then
form solutions to sub-instances, which together have a cost of $1+\eps$ times
the optimum. The first part is captured by the next lemma.

\begin{lem}\label{lem:paths}
Let $N\subseteq G_N$ be the cheapest planar solution to a pattern graph $H$ on
$R\subseteq V(G_N)$. For every Steiner vertex $v\in V(N)\setminus R$ of $N$
there is a path $P_v$ in $G_N$, such that $P_v$ is a $v\to t$ path to some
terminal~$t\in R$, and the total cost $\sum_{v\in V(N)\setminus R}\cost(P_v)$ of
these paths is~$O(\cost(N))$.
\end{lem}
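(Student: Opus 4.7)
Since $G_N$ is bidirected, each directed $v \to t$ path in $G_N$ corresponds to an undirected walk of the same total weight in $\ud{N}$, the underlying undirected graph of $N$, and conversely. By the transformations in \autoref{app:degrees}, every Steiner vertex of $N$ has degree exactly $3$ in $\ud{N}$ and every terminal has degree $1$. I would take, for each Steiner vertex $v \in V(N) \setminus R$, the path $P_v$ to be a shortest path in $\ud{N}$ from $v$ to $R$, which by bidirectedness is realizable as a directed $v \to t$ path in $G_N$. Setting $\phi(v) := \cost(P_v) = d_{\ud{N}}(v, R)$, it then suffices to show $\sum_{v \in V(N) \setminus R} \phi(v) = O(\cost(\ud{N}))$, and then the claim follows because $\cost(\ud{N}) \le \cost(N)$.

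To prove this bound, I would consider the BFS forest $F$ in $\ud{N}$ rooted at $R$: each Steiner vertex $v$ receives as parent $p(v)$ a neighbour of $v$ in $\ud{N}$ lying on some shortest $v \to R$ path, and since $\phi$ strictly decreases along parent pointers, $F$ is acyclic with each component rooted at a terminal. Rewriting $\sum_v \phi(v) = \sum_{e \in F} w(e) \cdot |D_e|$, where $D_e$ is the set of Steiner descendants in $F$ of the deeper endpoint of $e$, the task reduces to bounding this weighted descendant-count sum.

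The heart of the argument is to convert the combinatorial degree-$3$ condition into a weighted statement. For each $F$-edge $e$ whose subtree $T$ contains many Steiner vertices, the $3$-regularity counts $3|T|$ half-edges leaving Steiners of $T$, of which at most $2|T| + O(1)$ can be absorbed by $F$-edges internal to or leaving $T$, leaving $\Omega(|T|)$ ``extra'' $\ud{N}$-edges incident to $T$. The plan is then a charging scheme that distributes each $F$-edge's contribution $w(e)\cdot|D_e|$ across the edges of $\ud{N}$ incident to $T$, in such a way that each edge of $\ud{N}$ receives only $O(1)$ times its own weight in total charge. To cope with subtrees whose internal/boundary edges have wildly different weight magnitudes, I would bucket the edges of $\ud{N}$ dyadically by weight and handle each scale separately before combining the bounds.

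The main obstacle I anticipate is precisely this weight-scale matching: the degree count cleanly bounds \emph{how many} internal or boundary edges a subtree has, but transferring this to a bound on \emph{weight} requires that the edges available at each scale can absorb the $F$-edges that live at (or hang over) that scale. The dyadic bucketing combined with the optimality of $N$ (which prevents pathological weight configurations, since any ``useless'' light edges could be removed) is the technical device intended to overcome this difficulty without incurring a logarithmic loss.
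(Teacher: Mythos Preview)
Your reduction to bounding $\sum_{v}\phi(v)=\sum_{e\in F}w(e)\,|D_e|$ is fine, and the target inequality $\sum_v d_{\ud N}(v,R)=O(\cost(N))$ is in fact true (it follows a posteriori from the paper's construction). But your charging argument does not establish it. The problem is not the weights: take the unit\hy{}weight cylinder---two $m$\hy{}cycles $a_1\cdots a_m$ and $b_1\cdots b_m$ with rungs $a_ib_i$ for $i\ge 2$, and terminals $t_a,t_b$ attached at $a_1,b_1$. Every Steiner vertex has degree~$3$, terminals have degree~$1$, and the BFS forest consists of four rays along the two cycles. For the $F$\hy{}edge $e=a_ia_{i+1}$ the subtree $T_e=\{a_{i+1},\dots,a_{m/2}\}$ indeed has $\Omega(|T_e|)$ extra edges (the rungs), so your half\hy{}edge count is correct---but the rung $a_jb_j$ lies in $T_e$ for \emph{every} ancestor edge $e$ of $a_j$, i.e.\ for $\Theta(j)$ values of~$e$, and so receives $\Theta(j)$ units of charge rather than $O(1)$. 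Summing gives $\sum_v\phi(v)=\Theta(m^2)$ against $\cost(\ud N)=\Theta(m)$. Dyadic bucketing cannot help here since all weights are equal, and your proposed use of optimality (``useless light edges could be removed'') is vacuous in a unit\hy{}weight graph. The cylinder is \emph{not} an optimum \bidsnP solution---the cycle $a_2\cdots a_{m-1}b_{m-1}\cdots b_2a_2$ has the interior rungs as chords, violating \autoref{lem:cycle-split}---but nothing in your argument distinguishes it from a legitimate optimum.

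The paper's route is genuinely different and uses optimality structurally rather than as a tiebreaker. It first extends each weakly connected component $C$ of $N$ to a strongly connected $C'$ at cost $\le 2\cost(C)$, writes $C'=A_{\text{in}}\cup A_{\text{out}}$ as the union of two arborescences on the same leaf set $R_C$, and invokes a tree lemma of Du et al.\ to get pairwise edge\hy{}disjoint branching\hy{}point\hy{}to\hy{}leaf paths inside each arborescence (so their total cost is at most $\cost(C')$). The remaining Steiner vertices are handled by a key structural claim, proved from \autoref{lem:cycle-split}, that every vertex of $C'$ lies within hop\hy{}distance~$2$ of a branching point or terminal; each such vertex then borrows a nearby branching\hy{}point path, extended by at most two edges. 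The degree bound guarantees each branching\hy{}point path is borrowed $O(1)$ times, yielding the $O(\cost(N))$ total. If you want to rescue the shortest\hy{}path formulation, this ``within two hops of a branching point'' fact is exactly the missing structural ingredient.
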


For the second part we give each vertex $v$ of $N$ a weight $c(v)$, which is
zero for terminals and equal to~$\cost(P_v)$ for each Steiner vertex $v\in
V(N)\setminus R$ and corresponding path $P_v$ given by \autoref{lem:paths}. We
now divide the optimum solution $N$ into regions of small size, such that the
boundaries of the regions have small total weight.

\begin{dfn}
A \emph{region} is a subgraph of~$N$, and given a set of regions, a
\emph{boundary vertex} is a vertex that lies in at least two regions. Given a
planar graph $N$ and a value $r$, a \emph{weighted weak $r$-division} is a set
of regions of $N$ inducing a partition on the edges of $N$, such that each
region has at most~$r$ vertices, and the total weight of all boundary vertices
is an $O(1/\log r)$-fraction of the total weight~$\sum_{v\in V(N)} c(v)$.
\end{dfn}

Unweighted weak $r$-divisions of planar graphs have found many applications in
approximation algorithms, and are for instance defined
in~\cite{italiano2011improved}. For these, the total number of boundary vertices
is at most $O(n/\sqrt r)$ (they are called ``weak'' since they do not bound the
boundary vertices of each region individually). Additionally, the number of
regions is bounded by
$O(n/r)$ in this case~\cite{italiano2011improved,frederickson1981approximation}.
To prove the existence of such weak $r$-divisions for planar graphs, a
separator theorem is applied recursively until each resulting region is small
enough. The bound on the number of boundary vertices follows from the well-known
fact that any planar graph has a small separator of size~$O(\sqrt{n})$.

We however need to bound the total weight of the boundary vertices to obtain
weighted weak $r$-divisions. Unfortunately, separator theorems are not helpful
here, since they only bound the number of vertices in the separator
but cannot bound their weight. Instead we leverage techniques developed for the
Klein-Plotkin-Rao (KPR) Theorem~\cite{KPR-Lee,KPR,fakcharoenphol2003improved}.
Even though the obtained $O(1/\log r)$-fraction for the weighted case is
exponentially worse than the $O(1/\sqrt r)$-fraction for unweighted graphs
obtained in~\cite{italiano2011improved,frederickson1981approximation}, it
follows from a lower bound result of \citet{borchers-du} that for weighted
graphs this is best possible, even if the graph is a tree. In contrast to the
unweighted case, we also do not guarantee any bound on the number of regions,
and we do not need such a bound either. Our proof follows the outlines of the
proof given by Lee~\cite{KPR-Lee} for the KPR Theorem.

\begin{lem}\label{lem:division}
Let $N$ be a directed planar graph for which each vertex has at most $3$
neighbours, and let each vertex $v$ of $N$ have a weight $c(v)\in\mathbb{R}$.
For any $r\in\mathbb{N}$ there is a weighted weak $r$-division.
\end{lem}

We first show how to put \autoref{lem:paths} and \autoref{lem:division}
together in order to prove \autoref{lem:full-comp}, before proving the lemmas.

\begin{proof}[Proof of \autoref{lem:full-comp}]
Given the cheapest planar solution $N$, recall that if $v$ is a Steiner vertex 
then we set the weight $c(v)$ to $\cost(P_v)$, i.e., the path costs of the paths 
$P_v$ of \autoref{lem:paths}, and otherwise we set the weight to $0$. 
Let~$\mc{E}$ be the partition of the edges of $N$ induced by the weighted weak 
$r$-division of $N$ given by \autoref{lem:division} using the weights~$c(v)$. To 
identify the pattern set $\mc{H}$, we first construct a graph $N_E\subseteq G_N$ 
from every edge set $E\in\mc{E}$ and the paths given by \autoref{lem:paths}, 
after which we extract a pattern from it.  We let $r=2^{1/\eps}$ in 
\autoref{lem:division}, so that each region has at most $2^{1/\eps}$ vertices 
and the total weight of the boundary vertices is an $O(\eps)$-fraction of the 
total weight.

We first include the graph spanned by $E$ in $N_E$. For every Steiner vertex $v$
that is a boundary vertex of the $r$-division inducing $\mc{E}$ and is incident 
to some edge of $E$, we also include the $v\to t$ path~$P_v$ given by 
\autoref{lem:paths} in~$N_E$. As $G_N$ is bidirected, the reverse $t\to v$ path 
of $P_v$ also exists in~$G_N$, and we include this path in $N_E$ as well. Let 
$H_E$ be the pattern that has the terminal set of $N_E$ as its vertices, and an 
edge $st$ if and only if there is an $s\to t$ path in $N_E$. The pattern set 
$\mc{H}$ contains all patterns $H_E$ constructed in this way for the edge sets 
$E\in\mc{E}$. We need to show (1) that each pattern $H_E$ contains a bounded 
number of terminals, (2) that the union of any solutions to these patterns is 
feasible for the input pattern $H$, and (3) that there are solutions to the 
patterns with total cost at most $(1+O(\eps))\cdot\cost(N)$. Making $\eps$ 
sufficiently small, this implies \autoref{lem:full-comp}.

For the first part, the bound on the terminals in a pattern $H_E$ follows from
the bound on the vertices spanned by the edges of $E$, as given in
\autoref{lem:division}: the graph $N_E$ contains all terminals spanned by the
edges of $E$, and one terminal for each boundary vertex that is a Steiner vertex
spanned by $E$. Thus the total number of terminals of $N_E$, and therefore also
of $H_E$, is at most $2r=2^{1+1/\eps}$.

For the second part, consider any solutions $N'_E$ to the patterns
$H_E\in\mc{H}$. We need to show that for every edge $st$ of $H$ there is an
$s\to t$ path in the union $\bigcup_{H_E\in\mc{H}} N'_E$. As $N$ is a feasible
solution to $H$, it contains an $s\to t$ path $P\subseteq N$. Consider the
sequence $P_1,P_2,\ldots,P_\ell$ of subpaths of~$P$, such that the edges of each
subpath belong to the same edge set of $\mc{E}$ and the subpaths are of maximal
length under this condition. We construct a sequence $t_0,t_1,\ldots, t_\ell$ of
terminals from these subpaths as follows. As it has maximal length, the
endpoints of each subpath $P_i$ is either a Steiner vertex that is also a
boundary vertex of $\mc{E}$, or a terminal (e.g.\ $s$ and $t$). First we set
$t_0=s$. For any $i\geq 1$, let $E\in\mc{E}$ be the set that contains the edges
of~$P_i$. If the last vertex of $P_i$ is a terminal, then $t_i$ is that
terminal, while if the last vertex is a Steiner vertex $v$, then $t_i$ is the
terminal that the path $P_v$ included in $N_E$ connects to. If the first vertex
of $P_i$ is a terminal, then clearly it is equal to $t_{i-1}$. Moreover, if the
first vertex of $P_i$ is a Steiner vertex $v$, then by construction the graph
$N_E$ contains the reverse $t_{i-1}\to v$ path of $P_v$. Thus $N_E$ contains a
$t_{i-1}\to t_i$ path, and so the pattern $H_E$ contains the edge~$t_{i-1}t_i$.
This implies that also any arbitrary solution $N'_E$ to $H_E$ contains a
$t_{i-1}\to t_i$ path, and therefore the union $\bigcup_{E\in\mc{E}} N'_E$ of
arbitrary solutions contains a $t_0\to t_\ell$ path via the intermediate
terminals $t_i$ where $i\in\{1,\ldots,\ell-1\}$. As $t_0=s$ and $t_\ell=t$, this
means that the union is feasible for $H$.

For the third part we just bound $\sum_{E\in\mc{E}}\cost(N_E)$, i.e., the
special solutions $N^*_{H'}$ of the theorem statement are exactly the solutions
$N_E$ constructed above, which are subgraphs of the planar graph $G_N$ and thus
are planar as well. Note that the cost of each $N_E$ is the cost of the edge set
$E$ plus the cost of the paths~$P_v$ and their reversed paths attached to the
boundary Steiner vertices $v$ incident to~$E$. The sum of the costs of all edge
sets $E\in\mc{E}$ contribute exactly the cost of $N$ to
$\sum_{E\in\mc{E}}\cost(N_E)$, since $\mc{E}$ is a partition of the edges of
$N$. As we assume that each boundary vertex $v$ of $\mc{E}$ has at most three
neighbours, $v$ is incident to a constant number of edge sets of~$\mc{E}$. Thus
$\sum_{E\in\mc{E}}\cost(N_E)$ also contains the cost of path $P_v$ only a
constant number times: twice for each set $E\in\mc{E}$ incident to boundary
vertex $v$, due to $P_v$ and its reverse path, which in a bidirected instance
has the same cost as $P_v$. By \autoref{lem:division}, $\sum_{v\in B}c(v)\leq
O(\eps)\cdot \sum_{v\in V(N)}c(v)$, where~$B$ is the set of boundary vertices
of~$\mc{E}$, and the cost $c(v)$ of a vertex is the cost of the path $P_v$ if
$v$ is a Steiner vertex, and $0$ otherwise. Hence all paths $P_v$ and their
reverse paths contained in all the graphs $N_E$ for $E\in\mc{E}$ contribute at
most $O(\eps)\cdot \sum_{v\in V(N)}c(v)=O(\eps)\cdot\sum_v\cost(P_v)$ to
$\sum_{E\in\mc{E}}\cost(N_E)$. By \autoref{lem:paths},
$\sum_v\cost(P_v)=O(\cost(N))$, and so
$\sum_{E\in\mc{E}}\cost(N_E)=(1+O(\eps))\cdot\cost(N)$.
\end{proof}

We now turn to proving the two remaining lemmas, starting with finding paths
for Steiner vertices for \autoref{lem:paths}.

\begin{proof}[Proof of \autoref{lem:paths}]
We begin by analysing the structure of optimal \dsn solutions in bidirected
graphs, based on \autoref{lem:cycles}. Here a \emph{condensation graph} of a
directed graph results from contracting each strongly connected component, which
hence is a DAG.

\begin{claim}\label{clm:sym-structure}
For any solution $N\subseteq G_N$ to a pattern $H$, there is a solution
$M\subseteq G_N$ to $H$ with $\cost(M)\leq\cost(N)$ and $\ud{N}=\ud{M}$, such
that the condensation graph of $M$ is a poly-forest.
\end{claim}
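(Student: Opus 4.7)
The plan is to transform $N$ into $M$ by repeatedly applying \autoref{lem:cycles} to collapse poly-cycles that straddle distinct strongly connected components (SCCs), thereby merging them. Since the number of SCCs is a nonnegative integer that strictly decreases at each step, and each step preserves feasibility and does not increase cost, the process terminates at a solution $M$ whose condensation is a poly-forest.

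The key step is showing that whenever $C(M)$ is not a poly-forest, $M$ contains a poly-cycle whose vertex set meets at least two distinct SCCs. First I would pick an edge of $\ud{C(M)}$ that lies on a cycle, and let $e=uv\in E(M)$ be a representative of that edge, so that $u$ and $v$ lie in distinct SCCs $C_1$ and $C_2$ of $M$. Then I would use the rest of this cycle in $\ud{C(M)}$ to construct a walk in $\ud{M}\setminus\{e\}$ from $u$ to $v$: it concatenates the alternate cross-SCC edges of the cycle with undirected paths chosen inside each visited SCC, which are available since each SCC is connected in $\ud{M}$. Extracting a simple $u$-$v$ path from this walk and closing it with $e$ produces a simple cycle $O$ in $\ud{M}$, i.e.\ a poly-cycle of $M$ whose vertex set intersects both $C_1$ and $C_2$.

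Next I would apply \autoref{lem:cycles} to $O$, obtaining a subgraph $M'\subseteq G_N$ with $\cost(M')\leq\cost(M)$ that still routes every demand of $H$. The replacement directed cycle on $V(O)$ is strongly connected, so all of $V(O)$ collapses into a single SCC of $M'$; together with the untouched internal edges of the original SCCs met by $V(O)$, this fuses those SCCs (in particular $C_1$ and $C_2$) into one SCC of $M'$. Hence the SCC count strictly drops, and after at most $|V(N)|$ iterations the condensation is a poly-forest while the cost bound $\cost(M)\leq\cost(N)$ is preserved throughout.

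The main obstacle I anticipate is ensuring that the extracted cycle $O$ genuinely crosses SCC boundaries rather than being some incidental simple cycle contained in a single SCC, which would fail to reduce the SCC count. I address this by fixing the cross-SCC edge $e$ at the outset and forcing $O$ to contain $e$: since the endpoints of $e$ lie in two different SCCs, every cycle containing $e$ must visit at least two SCCs, so the SCC-merging step is guaranteed to make progress.
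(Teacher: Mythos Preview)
Your proof is correct and takes a genuinely different route from the paper's. The paper works with the $2$-connected components of $\ud{N}$: for any pair $u,v$ in such a component with no $u\to v$ path, it invokes Menger's theorem to produce two internally disjoint poly-paths, hence a poly-cycle through $u$ and $v$, and applies \autoref{lem:cycles} to introduce a $u\to v$ path. Iterating makes every $2$-connected component of $\ud{M}$ strongly connected, and then the condensation must be a poly-forest because any undirected cycle in it would lift to a cycle in $\ud{M}$ sitting inside a single $2$-connected component, hence inside a single SCC --- a contradiction.

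Your argument instead works directly on the condensation: lift a cycle of $\ud{C(M)}$ to a poly-cycle in $M$ that contains a fixed cross-SCC edge $e$, apply \autoref{lem:cycles}, and use the reachability-preservation clause of that lemma to guarantee that SCCs can only merge, never split. This is slightly more elementary --- you avoid Menger's theorem entirely --- and the invariant (SCC count strictly decreases) is clean. The paper's route, in exchange, yields the intermediate structural fact that every $2$-connected block of $\ud{M}$ is strongly connected; this is not used afterwards, so nothing is lost by your approach. One small point worth making explicit in your write-up: the reason SCCs cannot split under \autoref{lem:cycles} is precisely that the lemma preserves all directed reachability, so any two vertices that were mutually reachable remain so.
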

\begin{proof}
Let $C$ be a subgraph of $N$, which induces a maximal $2$-connected component in
$\ud{N}$.
Assume there is a vertex pair $u,v\in V(C)$ for which no $u\to v$ path exists in
$C$. As $C$ induces a $2$-connected component in~$\ud{N}$, by Menger's
Theorem~\cite{MR2159259} there are two internally disjoint poly-paths $P$ and
$Q$ between $u$ and $v$ in $C$, which together form a poly-cycle $O$. By
\autoref{lem:cycles} we may replace $O$ by a directed cycle without increasing
the cost, and so that there is a directed path for every pair of vertices for
which such a path existed before. Additionally, this step introduces a $u\to v$
path along this new directed cycle in $C$. Repeating this for any pair of
vertices for which no directed path exists in $C$ will eventually result in a
strongly connected component. Hence we can make every component of $N$, which
induces a maximal $2$-connected component in $\ud{N}$, strongly connected
without increasing the cost. Note also that $\ud{N}$ does not change.

After this procedure we obtain the graph $M\subseteq G_N$. The maximal
$2$-connected components in $\ud{M}$ induce subgraphs of the strongly connected
components of $M$ (they may be subgraphs due to cycles of length $2$ in $M$,
which are bridges in~$\ud{M}$). Contracting all strongly connected components of
$M$ must therefore result in a poly-forest, as any poly-cycle in the
condensation graph would also induce a cycle in~$\ud{M}$.
\cqed
\end{proof}

By \autoref{clm:sym-structure} we may assume w.l.o.g.\ that the condensation
graph of the optimum solution $N$ is a poly-forest such that every terminal has
one neighbour and every Steiner vertex has three neighbours. Consider a weakly
connected component $C$ of $N$, i.e., inducing a connected component of
$\ud{N}$. We first extend $C$ to a strongly connected graph $C'$ as follows.
Let~$F$ be the edges of $C$ that do not lie in a strongly connected component,
i.e., they are the edges of the condensation graph of $C$. Let $\widetilde
F=\{uv\mid vu\in F\}$ be the set containing the reverse edges of $F$, and let
$C'$ be the strongly connected graph spanned by all edges of $C$ in addition to
the edges in~$\widetilde F$. Note that adding $\widetilde F$ to $C$ increases
the cost by at most a factor of two as $G_N$ is bidirected, while the number of
neighbours of any vertex does not change. We claim that in fact $C'$ is a
\emph{minimal} \scss solution to the terminal set $R_C\subseteq R$ contained
in~$C$, that is, removing any edge of $C'$ will disconnect some terminal pair of
$R_C$.

For this, consider any $s\to t$ path of $C'$ containing an edge $e\in\widetilde
F$ for some terminal pair $s,t\in R_C$. As the edges $F$ of the condensation
graph of $C$ form a poly-tree, every path from $s$ to $t$ in $C'$ must pass
through~$e$. In particular there is no $s\to t$ path in $C$, and thus there is
no edge $st$ in the pattern graph~$H$. Or equivalently, for any terminal pair
$s,t\in R_C$ for which there is a demand $st\in E(H)$, no $s\to t$ path in $C'$
passes through an edge of $\widetilde F$. Thus for such a terminal pair the set
of paths from $s$ to $t$ is the same in $C'$ and~$C$. Since $N$ is an optimum
solution so that every edge $e$ of $C$ is necessary for some pair $s,t\in R_C$
with $st\in E(H)$, the edge $e$ is still necessary in $C'$. Moreover, for any of
the added edges $uv\in\widetilde F$ the reverse edge $vu\in F$ was necessary in
$C$ to connect some $s\in R_C$ to some $t\in R_C$. As observed above, $uv$ is
necessary to connect $t$ to $s$ in~$C'$, since the edges $F$ of the condensation
graph form a poly-tree.

As $C'$ is a minimal \scss solution to the terminals $R_C$ contained
within, it is the union of an in-arborescence $A_{in}$ and out-arborescence
$A_{out}$, both with the same root $r\in R_C$ and leaf set~$R_C\setminus\{r\}$,
since every terminal only has one neighbour in $G_N$. A \emph{branching point}
of an arborescence $A$ is a vertex with at least two children in $A$. We let
$W\subseteq V(C')$ be the set consisting of all terminals $R_C$ and all
branching points of $A_{in}$ and~$A_{out}$. We will need that any vertex of $C'$
has a vertex of $W$ in its close vicinity. That is, if $N[v]=N(v)\cup\{v\}$
denotes the closed neighbourhood of a vertex $v$ and $N^2[v]=\bigcup_{u\in
N[v]}N[u]$, we prove the following.

\begin{claim}\label{clm:nb-W}
For every vertex $v$ of $C'$, there is a vertex of $W$ in $N^2[v]$.
\end{claim}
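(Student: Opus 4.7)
The plan is proof by contradiction. Suppose $v\in V(C')$ satisfies $N[N[v]]\cap W=\emptyset$, so $v$ and every vertex at undirected distance at most $2$ from $v$ in $\overline{C'}$ lies in $S:=V(C')\setminus W$. In particular each such vertex is a Steiner vertex, non-branching in both $A_{in}$ and $A_{out}$. The normalisation of \autoref{app:degrees} then gives that each such vertex has exactly $3$ undirected neighbours in $\overline{C'}$ and exactly one parent and one child in each arborescence.

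Apply this to $v$ itself: let $p_{in},c_{in}$ be its parent and child in $A_{in}$ and $p_{out},c_{out}$ in $A_{out}$. Since $\{p_{in},c_{in},p_{out},c_{out}\}\subseteq N(v)$ with $|N(v)|=3$ and $p_{in}\neq c_{in}$, $p_{out}\neq c_{out}$, pigeonhole forces at least one of the four identifications $p_{in}=p_{out}$, $c_{in}=c_{out}$, $p_{in}=c_{out}$ or $c_{in}=p_{out}$. Reversing every edge orientation in $G_N$ swaps $A_{in}\leftrightarrow A_{out}$ and parent $\leftrightarrow$ child while preserving the bidirected structure, so this reduces the analysis to two essentially distinct patterns --- a shared parent ($p_{in}=p_{out}$) and a crossed pair ($p_{in}=c_{out}$), which I analyse by the same template. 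Iterating the same counting argument at the three neighbours of $v$ (all assumed to lie in $S$) and at the distance-$2$ vertices pins down the $3$-regular local structure of $\overline{C'}$ on the entire ball around $v$, forcing, e.g.\ in the shared-parent case, that the common parent $u:=p_{in}=p_{out}$ itself has $v$ as its unique child in both arborescences and has two \emph{distinct} further parents in $S$.

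The hard part, which I expect to be the main technical obstacle, is turning this local rigidity into a contradiction with the optimality of $N$. In the shared-parent case both directed edges $(v,u)\in A_{in}$ and $(u,v)\in A_{out}$ lie in $C'$, so $\{u,v\}$ appears bidirected. Combining these two anti-parallel edges with the $A_{in}$-edge $(c_{in},v)$ and the $A_{out}$-edge $(v,c_{out})$, and extending through the previously pinned-down neighbourhoods of $c_{in},c_{out}$ by a short path inside the distance-$2$ ball, one can exhibit a poly-cycle $O$ of $N$ all of whose vertices lie in $S$ (hence none is a terminal). Then \autoref{lem:cycle-split} applied to the optimum $N$ and the poly-cycle $O$ requires every connected component of $N\setminus V(O)$ to contain a terminal of $R_C$; but the ``interior'' component bounded by $O$ consists entirely of vertices of $S$ by the contradiction hypothesis, contradicting \autoref{lem:cycle-split}. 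The crossed-pair case is handled by the analogous cycle construction, using the symmetric overlap pattern in place of the shared-parent pattern, and concluding once more via \autoref{lem:cycle-split}.
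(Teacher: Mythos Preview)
Your case split is sound and, after relabeling, matches the paper's: your ``crossed pair'' ($p_{in}=c_{out}$ or $c_{in}=p_{out}$) is the paper's three-edge case where one incident edge lies in both arborescences, and your ``shared parent/child'' is the paper's four-edge case where two antiparallel edges go to the same neighbour. The local degree counting and the symmetry reduction are also fine.

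The genuine gap is in the crucial step. You assert that one can ``exhibit a poly-cycle $O$ of $N$ all of whose vertices lie in $S$'' by extending through the pinned-down neighbourhoods ``by a short path inside the distance-$2$ ball,'' and then apply \autoref{lem:cycle-split} to a terminal-free ``interior component.'' Neither part is justified. First, there is no reason a poly-cycle exists within $N[N[v]]$: nothing in the hypotheses forces, say, $c_{in}$ and $c_{out}$ (or $q_{in}$ and $q_{out}$, in your notation for the parents of $u$) to be joined by a path avoiding $v$ inside that ball; $\overline{C'}$ can be locally tree-like around $v$. Second, even if a small local cycle existed, removing it need not leave any non-empty component at all, let alone a terminal-free one, so your invocation of the ``component'' clause of \autoref{lem:cycle-split} has no force.

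The paper's argument is \emph{global}, not local, and uses the \emph{chord} clause of \autoref{lem:cycle-split} rather than the component clause. In the crossed-pair case, let $e$ be the edge incident to $v$ lying in both arborescences and let $u$ be its other endpoint. Assuming $u\notin W$, one considers the $t\to v$ path in $A_{in}$ and the $r\to v$ path in $A_{out}$ through the non-shared edges at $v$, together with the $u\to r$ path in $A_{in}$ and the $u\to t$ path in $A_{out}$ through the non-shared edges at $u$ (the existence of a \emph{common} terminal $t$ requires a short minimality argument). None of these four paths contains $e$, and their union contains a poly-cycle $O$ for which $e$ is a chord; pulling $O$ and $e$ back from $C'$ to $N$ contradicts the first clause of \autoref{lem:cycle-split}. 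The shared-parent case is handled by a more delicate bridge analysis in $\overline{C'}$, again culminating in a contradiction via global arborescence paths. Your proposal never accesses these global paths, which is why the cycle you need fails to materialise.
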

\begin{proof}
Assume that $v\notin W$, since otherwise we are done. Such a vertex must be a
Steiner vertex, and hence has exactly three neighbours in~$C'$. As $v$ is not a
branching point of $A_{in}$ or $A_{out}$, this means that $v$ is incident to two
edges of $A_{in}$ and two edges of $A_{out}$. This can either mean that there
are three edges incident to $v$ of which one lies in both $A_{in}$ and
$A_{out}$, or there are four edges incident to $v$ of which two connect to the
same neighbour of $v$ but point in opposite directions. Consider the former case
first, i.e., there is one of the edges $e$ incident to $v$ that lies in the
intersection of the two arborescences, another incident edge $f^v_{in}$ that
lies in $A_{in}$ but not in~$A_{out}$, and a third incident edge $f^v_{out}$
that lies in $A_{out}$ but not in $A_{in}$. Now assume for a contradiction that
the neighbour $u$ of $v$ incident to $e$ also does not belong to~$W$, and
w.l.o.g., let $e=vu$ (for the symmetric when $e=uv$, an analogous argument to
the following exists). In particular, both $f^v_{in}$ and $f^v_{out}$ are
incoming edges to $v$. By the same observations as for $v$, there must be an
incident edge $f^u_{in}$ to $u$ that lies in $A_{in}$ but not in $A_{out}$, and
an incident edge $f^u_{out}$ that lies in $A_{out}$ but not in~$A_{in}$. Both
these edges must be outgoing of $u$. See \autoref{fig:Wcase1}.

\begin{figure}[t]
 \centering
  \includegraphics[width=0.49\textwidth]{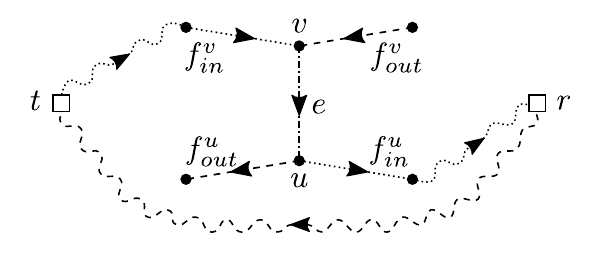}
  \includegraphics[width=0.49\textwidth]{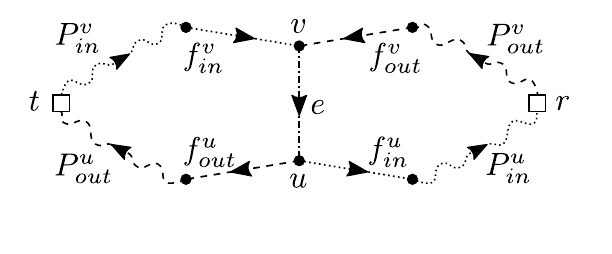}
  \caption{The case when $e=vu$ lies in both $A_{in}$ (dotted) and $A_{out}$
(dashed). Edges are straight lines, paths are wavy and may intersect. On the
left: assuming that $A_{out}$ contains a path from the root $r$ to terminal $t$
that does not contain $e$ leads to a contradiction, since~$f^v_{out}$ would then
be redundant. On the right: otherwise, the paths connecting $r$ and $t$ with $v$
and $u$ imply the existence of a poly-cycle, which due to $e$ has a chord, again
leading to a contradiction.}
\label{fig:Wcase1}
\end{figure}

The in-arborescence $A_{in}$ contains a $t\to r$ path from some terminal $t\in
R_C$ to the root $r$ passing through~$e$. We claim that $A_{out}$ must contain
an $r\to t$ path to the same terminal $t$ passing through $e$ as well. If this
were not the case there would be some other $r\to t$ path of $A_{out}$ not
containing~$e$. Together with the $t\to v$ subpath of the $t\to r$ path in
$A_{in}$, this implies an $r\to v$ path not containing~$f^v_{out}$: the latter
edge is not contained in $A_{in}$ and therefore cannot be part of the $t\to v$
subpath. However this means that every terminal reachable from $r$ via $v$ in
$C'$ is reachable by a path not containing $f^v_{out}$. As this edge is not
contained in $A_{in}$, it could safely be removed from $C'$ without
disconnecting any terminal pair. This would contradict the minimality of~$C'$,
which means there must be an $r\to t$ path in $A_{out}$ that passes through $v$.

For this terminal $t$, we can conclude that there is a $t\to v$ path
$P^v_{in}\subseteq A_{in}$ ending in $f^v_{in}$, a $u\to r$ path
$P^u_{in}\subseteq A_{in}$ starting in $f^u_{in}$, but also an $r\to v$ path
$P^v_{out}\subseteq A_{out}$ ending in $f^v_{out}$, and a $u\to t$ path
$P^u_{out}\subseteq A_{out}$ starting in $f^u_{out}$. Moreover, none of these
four paths contains $e$. Note that the union $P^v_{in}\cup P^u_{in}\cup
P^v_{out}\cup P^u_{out}$ of the four paths contains a poly-cycle $O$ for which
$e$ is a \emph{chord}, i.e., it connects two non-adjacent vertices of $O$.

The strongly connected component $C'$ was constructed from the component $C$ of
the optimum solution $N$ by adding the set $\widetilde F$ of reverse edges to
some existing edge set $F$ of $C$. Hence, even if $O$ and/or $e$ do not exist in
$N$, there still exists a poly-cycle $O'$ in $N$ with the same vertex set and
underlying undirected graph as $O$, and an edge $e'$ that is a chord to~$O'$,
which may be $e$ or its reverse edge. This contradicts the optimality of $N$ by
\autoref{lem:cycle-split}, and thus $u\in N(v)$ is in~$W$.

It remains to consider the case when $v$ has four incident edges. This means
that for one neighbour $u$ of $v$ there are two edges $uv$ and $vu$ in $C'$ of
which one belongs to $A_{in}$ and the other to $A_{out}$. W.l.o.g., let $uv$
belong to $A_{out}$ (for the other case when $uv$ belongs to $A_{in}$, by
symmetry an analogous argument to the following exists). Now let $w_{out}$ and
$w_{in}$ be the other two neighbours of $v$, for which the edge $vw_{out}$ is in
$A_{out}$, while the edge $w_{in}v$ is in $A_{in}$. If either $w_{in}$ or
$w_{out}$ is in $W$, we are done. Hence assuming that $w_{in},w_{out}\notin W$,
just as $v$, both $w_{in}$ and $w_{out}$ are Steiner vertices with three
neighbours, each incident to two edges of $A_{in}$ and two edges of $A_{out}$.
If either $w_{in}$ or $w_{out}$ has an incident edge that lies in the
intersection of $A_{in}$ and $A_{out}$, by the same argument as for $v$ above,
some vertex of $N(w_{in})\cup N(w_{out})$ must lie in $W$. As $N(w_{in})\cup
N(w_{out})\subseteq N^2[v]$ this
would conclude the proof.

\begin{figure}[t]
 \centering
  \includegraphics[width=0.49\textwidth]{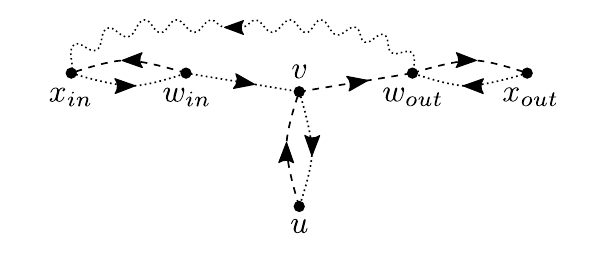}
  \caption{The case when $vu$ lies in $A_{in}$ (dotted) and $uv$ lies in
$A_{out}$ (dashed). Assuming that neither $w_{in}$ nor $w_{out}$ has an incident
edge lying in both arborescences leads to a contradiction, since $vu$ being a
bridge in $\ud{C'}$ implies a path of $A_{in}$ (wavy) connecting $w_{out}$
with~$x_{in}$, but $x_{in}w_{in}$ is also a bridge in $\ud{C'}$.}
\label{fig:Wcase2}
\end{figure}

Hence assume that neither $w_{in}$ nor $w_{out}$ has an incident edge lying in
both arborescences. Thus $w_{out}$ has a neighbour $x_{out}\neq v$ such that
$w_{out}x_{out}\in E(A_{out})$ and $x_{out}w_{out}\in E(A_{in})$, and $w_{in}$
has a neighbour $x_{in}\neq v$ such that $x_{in}w_{in}\in E(A_{in})$ and
$w_{in}x_{in}\in E(A_{out})$. See \autoref{fig:Wcase2}. Note that $x_{in}\neq
x_{out}$ as otherwise $A_{in}$ would have a vertex of out-degree more than one.
Moreover, by the following argument, we can conclude that in $\ud{C'}$, all
three undirected edges $vu$, $x_{out}w_{out}$, and $x_{in}w_{in}$ are bridges.
Consider any edge $e$ in the component $C$ of the optimum solution $N$ from
which $C'$ was constructed. By \autoref{lem:cycle-split}, the reverse edge of
$e$ can only exist in $C$ if $e$ does not lie on any poly-cycle. That is, if $e$
and its reverse edge exist in $C$ then the corresponding edge in $\ud{C}$ is a
bridge. To obtain $C'$ from $C$ we added $\widetilde F$, which contains all
reverse edges of the condensation graph of $C$. From \autoref{clm:sym-structure}
we concluded that the condensation graph of $C$ is a poly-forest. Thus any edge
of $C'$ for which the reverse edge exists in $C'$ as well, must correspond to a
bridge in~$\ud{C'}$, including $vu$, $x_{out}w_{out}$, and $x_{in}w_{in}$, which
all lie in $A_{in}$. Note also that by the same observations, $v$, $w_{out}$,
and $w_{in}$ lie in the same $2$-connected component of $\ud{C'}$, as the
reverse edges of $vw_{out}$ and $w_{in}v$ do not exist in $C'$.

This means that $A_{in}$ contains a path starting in $x_{out}w_{out}$, which
reaches the root of $A_{in}$ by passing through $vu$, as the latter is a bridge
of $\ud{C'}$ while $v$ and $w_{out}$ lie in the same $2$-connected component
of~$\ud{C'}$. Since neither $v$ nor $w_{in}$ is a branching point of~$A_{in}$
while $x_{in}w_{in}, w_{in}v, vu \in E(A_{in})$, this path of $A_{in}$ contains
the subpath given by the sequence $x_{in}w_{in}vu$. But this means that there is
a path from $x_{out}w_{out}$ to $x_{in}$ that does not pass through
$w_{in}x_{in}\in A_{out}$. This contradicts the fact that $x_{in}w_{in}$ is a
bridge of $\ud{C'}$, and thus concludes the proof.
\cqed
\end{proof}

As the graph $G_N$ is bidirected, for any $v$-$u$ path $P$ in the underlying
undirected graph $\ud{G}_N$ of $G_N$, there exists a corresponding directed
$v\to u$ path in $G_N$ of the same cost. Therefore, we can ignore the directions
of the edges in $C'$ and the arborescences $A_{out}$ and $A_{in}$ to identify
the paths $P_v$ for Steiner vertices $v$ of $N$. Thus we will only consider
paths in the graphs $\ud{C'}$, $\ud{A}_{out}$, and $\ud{A}_{in}$ from now on. In
particular, we exploit the following observation found in~\cite{du1991better}
(and also used by~\citet{borchers-du}) on undirected
trees.\footnote{In~\cite{du1991better,borchers-du} the claim is stated for
binary trees, but this is an assumption that can be made w.l.o.g.\ using similar
vertex degree transformations as presented in \autoref{app:degrees}.}

\begin{claim}[{\cite[Lemma~3.2]{du1991better}}]\label{clm:tree-paths}
For any undirected tree $T$ we can find a path $P_v\subseteq T$ for every
branching point $v$, such that $P_v$ leads from $v$ to some leaf of $T$, and
all these paths $P_v$ are pairwise edge-disjoint.
\end{claim}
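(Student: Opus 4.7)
My plan is to root $T$ at an arbitrary leaf $r$ and construct the paths in a single DFS post-order sweep. During the recursion I maintain the invariant that for every already processed vertex $u$ I have computed a \emph{spine} $S_u$: a path from $u$ to some leaf of the subtree $T_u$ rooted at $u$, which is edge-disjoint from all paths $P_w$ already assigned to branching descendants $w$ of $u$; moreover these $P_w$ are themselves pairwise edge-disjoint.

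The recursive step is as follows. If $u$ is a leaf, set $S_u$ to be the trivial single-vertex path at $u$. If $u$ has a single child $c$, prepend the edge $uc$ to $S_c$ to form $S_u$. If $u$ is a branching point with children $c_1, \ldots, c_d$ where $d \geq 2$, pick any two of them, say $c_1$ and $c_2$; define $P_u$ to be the edge $uc_1$ concatenated with $S_{c_1}$, and set the new spine $S_u$ to be the edge $uc_2$ concatenated with $S_{c_2}$. The spines $S_{c_3}, \ldots, S_{c_d}$ (if any) are simply discarded.

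The main thing to check is that edge-disjointness survives the branching step. Any path $P_w$ previously assigned to a branching vertex $w$ in $T_{c_i}$ is entirely contained in $T_{c_i}$, so for $i \geq 2$ it trivially shares no edge with $P_u$. For $i=1$, the inductive invariant tells us that the paths $P_w$ with $w$ a branching vertex of $T_{c_1}$ are edge-disjoint from $S_{c_1}$, and none of them can contain the edge $uc_1$ because $u \notin T_{c_1}$. Hence $P_u$ is edge-disjoint from all previously assigned paths, and an identical argument applies to the new spine $S_u$. The step I expect to require the most careful bookkeeping is precisely this edge-disjointness verification, because one must track in which subtree each previously assigned path lives; the choice of \emph{which} two children to use for $P_u$ and $S_u$ is, however, immaterial. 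After processing the root, every branching point of $T$ has received a path, yielding the desired collection $\{P_v\}$.
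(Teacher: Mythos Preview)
Your proposal is correct. The paper does not actually prove this claim; it merely cites it from \cite{du1991better} and uses it as a black box. Your rooted-DFS argument with the ``spine'' invariant is the standard way to establish this fact, and it is essentially the approach used in the original reference: at each branching vertex one consumes one child's spine to define $P_u$ and passes another child's spine up to the parent, while the remaining spines are abandoned. One tiny point you leave implicit is that previously assigned paths $P_w$ lying in \emph{different} subtrees $T_{c_i}$ and $T_{c_j}$ are automatically edge-disjoint from one another (not just from $P_u$ and $S_u$); this is immediate from the observation you already made that each such $P_w$ is contained in its own $T_{c_i}$, so it poses no real difficulty.
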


If a Steiner vertex $v$ of $C'$ is a branching point of $A_{out}$ ($A_{in}$),
we let $P_v$ be the corresponding path in $\ud{A}_{out}$ ($\ud{A}_{in}$) given
by \autoref{clm:tree-paths} from $v$ to some leaf of $A_{out}$ ($A_{in}$),
which is a terminal. Note that paths in $\ud{A}_{in}$ may overlap with paths in
$\ud{A}_{out}$. However any edge in the union of all the paths $P_v$ chosen so
far is contained in at most two such paths, one for a branching point of
$A_{out}$ and one for a branching point of~$A_{in}$.

It remains to choose a path $P_v$ for every Steiner vertex $v$ that is neither a
branching point of $A_{out}$ nor of $A_{in}$, i.e., for every vertex not in $W$.
By \autoref{clm:nb-W} for any such vertex $v\notin W$ there is a vertex $u\in
N^2[v]$ for which $u\in W$. If $u$ is a terminal, then the path $P_v$ is simply
the edge~$vu$ if $u\in N(v)$ or the corresponding path $vwu$ for some $w\in
N(v)$ otherwise. If $u$ is not a terminal but a branching point of $A_{out}$ or
$A_{in}$, then we chose a path $P_u$ for $u$ above. In this case, $P_v$ is the
path contained in the walk given by extending the path $P_u$ by the edge~$vu$ or
the path $vwu$, respectively. Note that, as any vertex of $C'$ has at most $3$
neighbours, any terminal or branching point $u\in W$ can be used in this way
for some vertex $v\notin W$ at most nine times. Therefore any edge in the union
of all chosen paths is contained in $O(1)$ paths. Consequently the total cost
$\sum_{v\in V(N)\setminus R}\cost(P_v)$ is $O(\cost(C'))$, and as $\cost(C')\leq
2\cost(C)$ we also get $\sum_{v\in V(N)\setminus R}\cost(P_v)=O(\cost(C))$.

We may repeat these arguments for every weakly connected component of $N$ to
obtain the lemma.
\end{proof}

Next we give the proof of \autoref{lem:division}, which shows that there are
weighted weak $r$-divisions for planar graphs.

\begin{proof}[Proof of \autoref{lem:division}]
We will not be concerned with the edge weights of $N$ and accordingly define
the distance function $d_M(u,v)$ for any subgraph $M$ of $N$ to be the
\emph{hop-distance} between $u$ and $v$ in~$\ud{M}$, i.e., the minimum number
of edges on any path from $u$ to $v$ in $\ud{M}$. The idea (as outlined
in~\cite{KPR-Lee,fakcharoenphol2003improved}) is to iteratively ``chop'' the
vertices of $N$ into disjoint sets that induce annuli of bounded thickness
measured in the hop-distance, using the following random process. For a fixed
value $\tau$, if we are given some connected graph $M$, then we first choose an
offset $\tau_0\in\{1,\ldots,\tau\}$ uniformly at random and an arbitrary vertex
$v_0$ of~$M$. A so-called \emph{$\tau$-chop} then is the partition of the
vertices of $M$ defined by the sets
\[
A_0&=\{v\in V(M)\mid d_M(v_0,v)<\tau_0\} \text{ and}\\
A_i&=\{v\in V(M)\mid \tau_0+(i-1)\tau\leq d_M(v_0,v)<\tau_0+i\tau\} \text{
for } i\geq 1.
\]

We define a $\tau$-chop of a disconnected graph as the partition given by the
union $\bigcup_{\ell}\mc{P}_\ell$ of $\tau$-chops~$\mc{P}_\ell$, $\ell\geq 1$, 
of the connected components, where for each component we choose an offset 
$\tau_0$ uniformly at random and an arbitrary vertex~$v_0$. Finally, a 
$\tau$-chop of a partition $\mc{P}$ is the refined partition given by the union 
of $\tau$-chops on each subgraph induced by a set in~$\mc{P}$, again choosing a 
$v_0$ and a $\tau_0$ for every component of the subgraphs. Hence we may start 
with $N$ and iteratively perform $\tau$-chops to obtain smaller and smaller 
subsets of vertices.

\citet{KPR-Lee} now proves the following claim, where the \emph{weak diameter}
of a subgraph $M\subseteq N$ is the maximum hop-distance of any two vertices of
$M$ measured in the underlying graph $N$, i.e., $\max_{u,v\in V(M)} d_N(u,v)$.
Note that this claim holds independent of the choices of the vertices $v_0$ and
the offsets $\tau_0$.

\begin{claim}[Lemma~2 in~\cite{KPR-Lee}]\label{clm:Lee}
If $N$ excludes $K_h$ as a minor, then any sequence of $h-1$ iterated
$\tau$-chops on $N$ results in a partition $\mc{P}$ of $V(N)$, such that each
graph induced by a set $S\in\mc{P}$ has weak diameter~$O(h\tau)$.
\end{claim}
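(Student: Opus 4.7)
The plan is to follow the standard Klein--Plotkin--Rao style argument by induction on $i \in \{0, 1, \ldots, h-1\}$, establishing the following strengthened invariant from which the claim follows immediately: after $i$ iterated $\tau$-chops, for every resulting piece $S$ and every pair $u, v \in S$ with $d_N(u,v) > c \cdot i \tau$ (for a suitable absolute constant $c$), there exist pairwise disjoint, connected, pairwise adjacent subgraphs $B_1, \ldots, B_{i+1}$ of $N$, i.e.\ the branch sets of a $K_{i+1}$-minor in $N$, with $B_{i+1}$ containing both $u$ and $v$. Setting $i = h-1$, any piece $S$ with two vertices at $d_N$-distance exceeding $c(h-1)\tau$ would produce a $K_h$-minor in $N$, contradicting the hypothesis; hence every piece has weak diameter $O(h\tau)$.

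For the base case $i = 0$, the only piece is $V(N)$ itself and a single branch set $B_1$ can be taken to be a shortest $u$-$v$ path in $N$. For the inductive step, consider a piece $S$ after the $i$-th chop, carved out of a parent piece $M$ by a $\tau$-chop based at some $v_0 \in V(M)$. Because $S\subseteq M$, both $u$ and $v$ must lie in the same BFS-annulus $A_j$ of $M$, so their $d_M$-distances from $v_0$ differ by less than $\tau$. Take BFS-paths in $M$ from $v_0$ to $u$ and from $v_0$ to $v$; if their combined length is at most $c \cdot i \tau$ we apply the inductive hypothesis inside $M$ at the two endpoints to obtain branch sets $B_1,\dots,B_i$ in ancestor pieces, all disjoint from $S$. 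Otherwise the BFS-paths are long, which provides slack to carve $B_{i+1}$ out of a neighbourhood of the inner boundary of $A_j$ in $M$: a connected subgraph containing $u$ and $v$, disjoint from the ancestor branch sets, and reaching to each of them via short extensions along the BFS-paths.

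The pairwise adjacency of $B_1, \ldots, B_i$ is inherited from the inductive hypothesis, while adjacency between each $B_j$ ($j\le i$) and the newly constructed $B_{i+1}$ follows because every $B_j$ was placed near an annulus boundary in its own round, and those boundaries are connected to the inner boundary of $A_j$ by short paths in $N$ (hence by edges, after contracting the branch sets). Disjointness across rounds is automatic since each new branch set is extracted from the interior of a different annulus introduced at its own chop.

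The main obstacle will be the combinatorial bookkeeping: at each step the new $B_{i+1}$ must simultaneously (i) remain connected despite the earlier $B_j$'s possibly piercing $A_j$, (ii) avoid those earlier branch sets, and (iii) still be adjacent to every one of them. The uniform thickness $\tau$ of each chop, together with the freedom in the offset $\tau_0 \in \{1,\ldots,\tau\}$, is exactly what provides two-sided slack around the annulus boundary needed to meet (i)--(iii) together. Tracking how $c$ accumulates across rounds yields the $O(h\tau)$ weak-diameter bound in the claim.
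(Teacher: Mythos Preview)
The paper does not prove this claim; it is quoted as Lemma~2 of \cite{KPR-Lee} and used as a black box inside the proof of \autoref{lem:division}. Your outline follows the right overall strategy---if a piece after $i$ chops still has large weak diameter, exhibit a $K_{i+1}$-minor with one branch set per chop level---which is exactly the Klein--Plotkin--Rao/Lee approach.

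However, the specific inductive invariant you state does not close. If $B_{i+1}$ is required to contain both $u$ and $v$, then applying the hypothesis to the parent piece $M$ with the \emph{same} pair $u,v$ (which certainly satisfy $d_N(u,v)>c(i-1)\tau$) produces a $B_i$ already containing $u,v$, and you cannot then build a disjoint $B_{i+1}$ that also contains them. Your case split on the combined BFS-path length is meant to avoid this by recursing on different points, but you never say which points, nor why they are far in $N$; and ``all disjoint from $S$'' is inconsistent with the invariant that the last branch set contains the two chosen points of $M\supseteq S$. In Lee's actual argument the construction is not a recursion on a pair of points at all: one fixes the final piece, takes a long shortest $u$--$v$ path $P$ in $N$, and builds all $h$ branch sets in one pass from the nested annulus structure, with $B_j$ carved from a thin slab near the boundary of the level-$j$ annulus and the adjacencies supplied by $P$, which must cross every such boundary. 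That direct construction is what resolves the bookkeeping you flag in your last paragraph; the recursive formulation you propose would need a substantially stronger invariant (recording where each $B_j$ sits relative to the annulus boundaries) to go through.
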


Let $\mc{P}$ be the partition of $V(N)$ from \autoref{clm:Lee}. Since $N$ is
planar, it excludes $K_5$ as a minor, and so the weak diameter of each set
$S\in\mc{P}$ is $O(\tau)$. We define a partition $\ud{\mc{E}}$ of the edges of
$\ud{N}$, consisting of sets $E_S\subseteq E(\ud{N})$ for each~$S\in\mc{P}$. In
particular, if $S$ is the set containing the lexicographically smaller vertex
incident to an edge $e$ of $\ud{N}$, then $e$ is contained in $E_S$. Note that
the weak diameter of a region $M_S$ spanned by an edge set $E_S$ is at most the
weak diameter of the graph induced by $S$ plus $2$, i.e., also the weak diameter
of $M_S$ is $O(\tau)$. Since $\ud{N}$ has maximum degree~$3$, the weak diameter
bounds the number of vertices in each region by $|V(M_S)|=2^{O(\tau)}$ for every
$S\in\mc{P}$. As $\ud{\mc{E}}$ corresponds to a partition $\mc{E}$ of the edges
of $N$, for some~$\tau=\Theta(\log r)$ we obtain an $r$-division given by
$\mc{E}$ with the required bound on the sizes of the regions.

It remains to bound the weight of the boundary vertices, for which we bound the
expected weight among the random choices of offsets. More concretely, note that
when performing a single $\tau$-chop on a connected graph $M$ from a fixed
vertex~$v_0$, two adjacent vertices $u,v$ end up in different sets $S$ with
probability at most $1/\tau$ by the choice of the offset~$\tau_0$ and the
definition of the sets~$A_i$,~$i\geq 0$. We assign the edge $uv$ of $\ud{N}$ to
the set $E_S$ containing the lexicographically smaller vertex among $u$ and $v$.
Thus any vertex $w$, which has degree at most $3$ in $\ud{N}$, is a boundary
vertex of a region spanned by some set $E_S$ with probability at most $3/\tau$
when performing a single $\tau$-chop from a fixed vertex $v_0$. As we perform
$h-1=4$ iterative $\tau$-chops, the expected weight of the boundary vertices is 
at most $\frac{3\cdot 4}{\tau} \sum_{v\in V(N)} c(v)$. Hence, since $N$ is 
planar and by our choice of $\tau=\Theta(\log r)$, there exists an $r$-division 
with only a $O(1/\log r)$-fraction of the total vertex weight in the boundary 
vertices.
\end{proof}

\paragraph*{Proving \autoref{crl:lossy}.}
Finally, we can also prove that \autoref{lem:full-comp} implies a PSAKS for
\bidsnP, by utilizing some of the insights of the above proofs.
The proof essentially follows the same lines as the one given for the
\pname{ST} problem by \citet{lokshtanov2017lossy} based on the
\citeauthor{borchers-du} Theorem.

\begin{proof}[Proof of \autoref{crl:lossy}]
To obtain a polynomial-sized $(1+\eps)$-approximate kernel we proceed similar
to the algorithm described at the beginning of this section, by first computing
an optimum solution for every possible pattern graph on at most
$g(\eps)=2^{1+1/\eps}$ terminals from $R$. Using the XP algorithm of
\autoref{thm:alg-planar-biDSN}, this takes $n^{2^{O(1/\eps)}}$ time, as
determined before. If $\eps$ is a constant, this amounts to a polynomial
runtime. Taking the union of all precomputed solutions gives a graph, which due
to \autoref{lem:full-comp} contains a $(1+\eps)$-approximation to the optimum of
the input graph~$G$ (by the same arguments showing that the algorithm computes a
$(1+\eps)$-approximation). However the union is not a kernel, since its size is
not necessarily bounded as a function of the parameter $k$. In particular, it
may contain many vertices and the edge weights might be large.

To reduce the number of vertices, we apply the vertex degree transformations
from \autoref{app:degrees} to each computed optimum solution $N_{H'}$ to
patterns~$H'$ on at most $g(\eps)$ terminals. In particular, every Steiner
vertex of $N_{H'}$ now has exactly three neighbours. We use the insights from
the proof of \autoref{lem:paths} to argue that $N_{H'}$ has a bounded number of
vertices. By \autoref{clm:sym-structure} we may assume that $N_{H'}$ is an
optimum solution to $H'$ for which the condensation graph is a poly-forest. Now
consider a weakly connected component $C$ of~$N_{H'}$. As argued in the proof of
\autoref{lem:paths}, if we add to $C$ the edge set $\widetilde F$, which
contains the reverse edges to those of the condensation graph of~$C$, then we
obtain a minimal \scss solution~$C'$ for the terminal set $R_C\subseteq R$
contained in $C$. This means that $C'$ is the union of an in-arborescence
$A_{in}$ and an out-arborescence $A_{out}$, both rooted at some terminal $r\in
R_C$ and with leaves from~$R_C$. The number of branching points of each of these
arborescences is at most~$|R_C|$. Hence the set $W$ (as defined earlier) of
branching points and terminals $R_C$ contains at most $3|R_C|$ vertices. Due to
\autoref{clm:nb-W} we can map any vertex not contained in $W$ to a vertex in $W$
at hop-distance at most $2$. Since every vertex not in $W$ is a Steiner vertex
and has three neighbours, at most $9$ vertices map to any particular vertex of
$W$. Thus the number of vertices of $C$ not in $W$ is at most $27|R_C|$, which
brings the total to at most $30|R_C|$ after adding~$W$. This means that the
number of vertices of $N_{H'}$ is at most $30g(\eps)=2^{O(1/\eps)}$. As
calculated earlier, the total number of pattern graphs $H'$ on at most $g(\eps)$
terminals is~$k^{2^{O(1/\eps)}}$. Hence taking the union of all computed
solutions $N_{H'}$ after applying the vertex degree transformations of
\autoref{app:degrees} gives a graph $G'$ with $2^{O(1/\eps)}\cdot
k^{2^{O(1/\eps)}}=k^{2^{O(1/\eps)}}$ vertices, which is polynomial in the 
parameter $k$ if $\eps$ is constant.

For the edge weights, \citet{lokshtanov2017lossy} show how to round them in such
a way that each edge weight can be stored using $O(\log(k/\eps))$ bits for the
\pname{ST} problem. Here we will need slightly more bits. As an ingredient we
use that a polynomial time constant approximation algorithm exists, which is
provided by \autoref{thm:alg-approx-biDSN}. In particular, let $M\subseteq G$ be
a $4$-approximate solution computed by this algorithm for the input graph $G$.
If the weight of an edge $e$ of the union graph $G'$ currently is $w(e)$, then
we define a rounded integer weight
\[
\widehat w(e)=\left\lfloor \frac{|E(G')|w(e)}{\eps \cost(M)} \right\rfloor,
\]
and set the edge weights of the union graph $G'$ to $\widehat w(e)$ instead. By
\autoref{crl:estimate-biDSN_P} we have $\cost(N)\leq 2\cost(M)$ for the optimum
planar solution $N$ to the input instance, and so we may remove any edge of cost
more than $2\cost(M)$. This implies that $\widehat w(e)\leq 2|E(G')|/\eps$, 
which asymptotically is $k^{2^{O(1/\eps)}}$, as the graph $G'$ has 
$k^{2^{O(1/\eps)}}$ vertices. Hence each edge weight can be encoded using 
$2^{O(1/\eps)}\log(k)$ bits, and the size of the kernel including the edge 
weights is~$k^{2^{O(1/\eps)}}$.

It remains to show that rounding the edge weights does not distort the solution 
costs by too much. Let $N'\subseteq G'$ be a $\beta$-approximation of the 
optimum planar solution in the kernel, i.e., using weights $\widehat w(e)$. Let 
also $N^\star$ be the optimum planar solution in $G'$ when using the original 
weights $w(e)$. In particular, we get $\sum_{e\in E(N')}\widehat 
w(e)\leq\beta\sum_{e\in E(N^\star)}\widehat w(e)$, since the optimum planar 
solution in the kernel has cost at most that of $N^\star$ according to 
weights~$\widehat w(e)$. Since $N'$ has at most as many edges as $G'$, the cost 
of $N'$ measured by the original edge weights $w(e)$ compared to $N^\star$ and 
$M$ is
\[
\sum_{e\in E(N')} w(e)
&\leq \sum_{e\in E(N')}\left(\frac{\eps \cost(M)}{|E(G')|}(1+\widehat
w(e))\right)\\
&\leq \eps\cost(M)+\frac{\eps \cost(M)}{|E(G')|}\sum_{e\in 
E(N')}\widehat w(e)\\
&\leq \eps\cost(M)+\frac{\eps \cost(M)}{|E(G')|}\cdot\beta\sum_{e\in
E(N^\star)} \widehat w(e)\\
&\leq \eps\cost(M)+\frac{\eps \cost(M)}{|E(G')|}\cdot\beta\sum_{e\in
E(N^\star)}\frac{|E(G')|w(e)}{\eps\cost(M)}\\
&= \eps\cost(M) + \beta\sum_{e\in E(N^\star)} w(e).
\]

Each of $N'$ and $N^\star$ can clearly be lifted to a solution in the input
instance with the same or lower cost (when using weights $w(e)$) in polynomial
time. By \autoref{lem:full-comp}, we have $\cost(N^\star)\leq(1+\eps)\cost(N)$
for the optimum planar solution $N$ of the input instance. At the same time, $M$
is a $4$-approximation for the input instance, which means that $\cost(M)\leq
4\cost(N)$. By the above calculations we hence get that $\cost(N')\leq
(1+5\eps)\beta\cost(N)$. By making $\eps$ sufficiently small, this implies the
desired approximation bound for \bidsnP. Moreover, due to
\autoref{crl:estimate-biDSN_P}, the planar solution $N'$ is also a
$2(1+5\eps)\beta$-approximation of the overall optimum of the input instance,
and thus the claimed approximation bound for \bidsn follows as well.
\end{proof}

\section{Computing optimum solutions in bidirected graphs}
\label{sec:alg-opt}

In this section we show how to compute optimum solutions to \biscss and
to \bidsnP, and we start with the latter.

\subsection{An XP algorithm for \altbidsnP}

In this section we prove \autoref{thm:alg-planar-biDSN}, which is restated
below.

\thmalgplanarbiDSN*
\begin{proof}
The proof hinges on the fact that an optimum solution $N\subseteq G$ to \bidsnP
has treewidth less than $6\sqrt k$. To show this, assume to the contrary that
the treewidth of $N$ is at least $6\sqrt{k}$. It is well-known that this
implies that $N$ contains a $6\sqrt{k}\times 6\sqrt{k}$ grid minor. Consider a
planar drawing of $N$. Since there are only $k$ terminal pairs, we can have at
most $2k$ terminals. By the pigeon-hole principle, the grid minor contains some
$3\times 3$ grid minor $M$ for which no terminal touches any of the faces in the
interior of $M$ in the drawing. We can see $M$ as consisting of a poly-cycle $O$
with all other vertices of $M$ touching faces in the interior of~$O$ in the
drawing. In particular, removing $O$ from $M$ will leave a non-empty connected
component (the interior of~$O$) which contains no terminals. This however
contradicts \autoref{lem:cycle-split}. Note that such a solution would also be
planar, and thus the treewidth of the optimum planar solution is~$O(\sqrt k)$.

Since by \autoref{thm:alg-FMarx} there is an algorithm to compute the optimum
among all solutions of treewidth at most $\omega$ in time
$2^{O(k\omega\log\omega)}\cdot n^{O(\omega)}$, the above treewidth bound
implies \autoref{thm:alg-planar-biDSN}.
\end{proof}

Note that the algorithm in \autoref{thm:alg-planar-biDSN} does not necessarily
compute a planar solution, if the input is not planar, but the found solution
will still have cost at most that of the cheapest planar solution.
\autoref{thm:lb-scheme-biDSN} shows that the running time obtained in
\autoref{thm:alg-planar-biDSN} for \bidsnP is asymptotically optimal under ETH.

\subsection{FPT algorithm for \altbiscss}

We now turn to \biscss (without restricting the optimum) and show that this
problem is FPT for parameter~$k$ (recall that for \scss the number of demands
equals the number of terminals). The formal theorem is restated below:

\thmalgbiSCSS*

An optimum solution to \biscss can have treewidth $\Omega(k)$, as the following
lemma shows. This is particularly interesting, since the results
in~\cite{DBLP:conf/icalp/FeldmannM16} show that any problem with optima of
unbounded treewidth on general input graphs is W[1]-hard. Note that no solution
with larger treewidth can exist, as by~\cite{DBLP:conf/icalp/FeldmannM16} any
optimum solution to \pname{DSN} has treewidth~$O(k)$.

\begin{lem}
There are instances of \biscss in which the optimum solution has
treewidth~$\Omega(k)$.
\end{lem}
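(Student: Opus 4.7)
The plan is to prove the lemma by exhibiting an explicit family of \biscss instances, parameterized by~$k$, whose optimum solutions have treewidth $\Omega(k)$. The instance is a bidirected graph~$G$ together with a terminal set~$R$ of size~$k$, designed so that every strongly connected spanning subgraph of~$G$ containing~$R$ must itself contain a dense substructure of treewidth $\Omega(k)$. Note that by \autoref{thm:alg-FMarx} such an $\Omega(k)$ bound is the most one can hope for, since any \dsn optimum has treewidth $O(k)$.

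The high-level approach is to choose as a candidate optimum a bidirected subgraph $N^*\subseteq G$ whose underlying undirected graph is highly connected (for instance a copy of $K_{m,m}$ with $m=\Theta(k)$, or a suitable $\Theta(\sqrt k)\times\Theta(\sqrt k)$ grid-like gadget), and then fit $N^*$ inside~$G$ together with carefully chosen edge weights so that (i)~$N^*$ is strongly connected and contains~$R$, giving an upper bound on the \biscss optimum; and (ii)~every other feasible solution has cost at least $\cost(N^*)$. The concrete steps would be: (1) specify~$G$, $R$, and the weights explicitly; (2)~verify feasibility and compute $\cost(N^*)$; (3) prove the matching lower bound $\cost(N)\geq\cost(N^*)$ for every feasible~$N\subseteq G$; and (4) exhibit a $K_{\Omega(\sqrt k),\Omega(\sqrt k)}$ or grid minor inside $N^*$ in order to read off the $\Omega(k)$ treewidth bound from standard facts about minors.

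The main obstacle is step~(3). Bidirected graphs naturally admit cheap strongly connected spanning subgraphs of treewidth only~$2$: whenever the terminals lie on a short spanning poly-cycle, \autoref{lem:cycles} turns it into a directed cycle of the same or smaller cost, and this is precisely the kind of low-treewidth competitor that must be ruled out. The construction must therefore preclude short spanning cycles through~$R$ — for example by forcing an unbalanced bipartite or parity structure that forbids any Hamiltonian cycle through the terminals, or by placing the terminals so that every cycle meeting all of them is provably longer than the dense candidate — while still keeping the dense core~$N^*$ affordable. Likewise, the ``doubled undirected Steiner tree'' alternative (cf.~\autoref{thm:alg-approx-biDSN} and \autoref{fig:savings}) has to be made strictly more expensive than~$N^*$ by the weight assignment. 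Once these cheap tree-like and cycle-like competitors are eliminated, the optimum is forced onto the dense core, and its treewidth follows.
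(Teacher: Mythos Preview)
Your proposal is a plan, not a proof: you explicitly flag step~(3) as ``the main obstacle'' and then describe what would need to be done without doing it. That is a genuine gap. In particular, your suggested route --- choosing a dense core like $K_{m,m}$ and then tuning edge weights so that every cycle-like or tree-like competitor is strictly more expensive --- is delicate, and you give no weight assignment or lower-bound argument that actually carries it through.

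The paper's construction sidesteps this difficulty by a device you are missing: it takes a constant-degree expander on $\Theta(k)$ vertices, subdivides every edge twice, and declares \emph{every} vertex a terminal. All edges have unit weight. Because each subdivided edge is now a path of length~$3$ with two degree-$2$ internal terminals, any strongly connected solution must pay at least three directed edges per original expander edge (if it uses all three segments once) or four (if it drops a segment and must double back on the other two). One can then build a feasible solution using exactly one directed edge per undirected edge --- so its underlying graph is exactly $\ud{G}$ --- by repeatedly applying \autoref{lem:cycles} to poly-cycles until the orientation is strongly connected (the expander's $2$-edge-connectivity guarantees the needed cycles exist). This solution is therefore optimal, and since subdivision preserves treewidth, it inherits the expander's $\Omega(k)$ treewidth. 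The point is that making all vertices terminals and using subdivision forces the optimum to span the whole graph, so there is no ``cheap competitor'' problem to solve by weight engineering.
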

\begin{proof}
We will describe the underlying undirected graph $\ud{G}$ of an input graph $G$ 
to \biscss. We begin with a constant degree expander graph with $k$ vertices, 
for which we subdivide each edge twice. The resulting graph is going to be 
$\ud{G}$, where each edge has unit weight. All vertices of the graph are going 
to be terminals, which means that the number of terminals is $\Theta(k)$, since 
the number of edges in a constant degree expander is linear in the number of 
vertices. Also, the treewidth of the expander graph is $\Theta(k)$, which is not 
changed by subdividing edges.

Consider any of the twice subdivided edges, i.e.\ let $P$ be a path of length
$3$ in $\ud{G}$ for which both internal vertices $u,v$ have degree $2$ in
$\ud{G}$. Let $e$ be one of the edges of $P$. If a strongly connected \scss
solution containing all vertices of the bidirected graph $G$ does not use any of
the two edges corresponding to $e$ in~$G$, then it needs to use all four of the
other edges of $G$ corresponding to the two edges of $P$ different from $e$:
this is the only way in which all other terminals can reach $u$ and $v$, and $u$
and $v$ can reach all other terminals. Note also that it is not possible for a
strongly connected solution to only use two directed edges corresponding to
edges of $P$.

We can however construct a solution $N$ in which for every edge of $P$ we use
exactly one directed edge of $G$, and this must then be optimal: the solution
$N$ initially contains one of the directed edges of $G$ corresponding to an edge
of $\ud{G}$ each. As the underlying undirected graph of $N$ would be
exactly~$\ud{G}$, its treewidth is $\Omega(k)$, as claimed. However $N$ might
not yet be strongly connected. If there are two vertices $u$ and $v$, for which
no $u\to v$ path exists in $N$, we introduce such a path as follows. An expander
cannot contain any bridge, and so $\ud{G}$ is $2$-edge-connected. Thus by
Menger's Theorem~\cite{MR2159259} there are two edge-disjoint paths $P$ and $Q$
between $u$ and $v$ in~$\ud{G}$. Consider any poly-cycle $O$ formed by edges of
the paths in $N$ corresponding to $P$ and $Q$. By \autoref{lem:cycles} we may
replace $O$ by a directed cycle without losing the connectivity between any pair
of vertices of $N$ for which a directed path already existed. Also the
underlying undirected graph of the resulting solution $N$ is still $\ud{G}$.
After replacing every poly-cycle formed by edges corresponding to those of $P$
and $Q$ in this way, there will be a $u\to v$ path in $N$. We may repeat this
procedure for any pair of vertices that does not have a path between them, until
the solution is strongly connected.
\end{proof}

We prove that \biscss is FPT via a similar decomposition to the one of
\autoref{lem:full-comp} for \bidsnP (or the \citeauthor{borchers-du} Theorem for
\pname{ST}). More concretely, we show that any optimum solution to \biscss can
be decomposed into non-overlapping (i.e., edge-disjoint) poly-trees, each of
which is a feasible solution to some demand pairs of the terminals. As a
consequence, similar to the PAS of \autoref{thm:scheme}, we can compute optimum
poly-tree solutions via \autoref{thm:alg-FMarx}, among which we can find a
solution to \biscss. Since, compared to \autoref{lem:full-comp}, here we have
the stronger property that poly-tree solutions in the decomposition for \biscss
do not overlap, we obtain an optimum solution this way. However, in contrast to
\autoref{lem:full-comp} the number of terminals in each poly-tree is not bounded
by any constant, and thus applying the algorithm of \autoref{thm:alg-FMarx}
needs FPT time instead of polynomial time. Due to this weaker property compared
to \autoref{lem:full-comp}, also no kernelization is implied by our
decomposition for \biscss, as this would require a polynomial time
algorithm.\footnote{For this reason \autoref{lem:bi-SCSS-struct} is ``merely'' a
lemma, while \autoref{lem:full-comp} is a theorem.} For the following statement
we reuse the formulation of \dsn (and thus in particular for \biscss) in terms
of pattern graphs, as introduced in \autoref{sec:scheme}.

\begin{lem}\label{lem:bi-SCSS-struct}
Let $G$ be a bidirected graph with terminal set $R$, and $N\subseteq G$ be the
cheapest strongly connected subgraph containing $R$. There exists a set of
patterns $\mc{H}$ such that
\begin{enumerate}
\item $V(H)\subseteq R$ for each $H\in\mc{H}$,
\item given any feasible solutions $N_{H}\subseteq G$ for all $H\in\mc{H}$,
the union~$\bigcup_{H\in\mc{H}} N_{H}$ of the these solutions strongly
connects $R$, and
\item there exist feasible solutions $T^*_{H}\subseteq G$ for all $H\in\mc{H}$
where each $T^*_H$ is a poly-tree and $\sum_{H\in\mc{H}}\cost(T^*_H)=\cost(N)$.
\end{enumerate}
\end{lem}

Before proving this lemma we show that it implies the claimed FPT algorithm of
\autoref{thm:alg-biSCSS}. Note that we do not know the set $\mc{H}$ of
\autoref{lem:bi-SCSS-struct} without knowing the solution $N$, which we wish to
compute. However, a simple dynamic programming approach can be used to find $N$.

\begin{proof}[Proof of \autoref{thm:alg-biSCSS}]
We present an algorithm that is very similar to the one for
\autoref{thm:scheme}.
The first step of the algorithm is to compute the optimum poly-tree solutions
to all patterns on the terminals~$R$. Note that poly-trees are exactly the
directed graphs with treewidth~$1$, so that for every pattern graph $H$ on $R$
we can set $\omega=1$ in \autoref{thm:alg-FMarx} to compute the best poly-tree
solution (if any) in $2^{O(k)}\polyn$ time. Since there are $2{k\choose
2}=k^2-k$ possible edges for any pattern on~$R$, and any subset of these may
span a pattern $H$, there are $2^{k^2-k}$ possible patterns. Thus up to now the
algorithm uses $2^{k^2+O(k)}\polyn$ time.

The next step is to use a dynamic program to compute a solution strongly
connecting all of~$R$ by putting together these poly-tree solutions. More
concretely, let $T_1,\ldots,T_p$ be all the poly-tree solutions computed in the
first step (given in any arbitrary order). For any subset $\mc{T}$ of these
poly-trees, in the following we denote by
$\cost(\mc{T}):=\sum_{T\in\mc{T}}\cost(T)$ their total cost and by
$\bigcup\mc{T}:=\bigcup_{T\in\mc{T}}T$ their union. For $1\leq i\leq p$ and any
pattern graph~$H$, we define
\[\label{eq:sigma2}
\sigma(H,i)=\min\left\{\cost(\mc{T})~\Big\vert~\mc{T}\subseteq\{T_1,
\ldots,T_i\} \text{ and } \bigcup\mc{T} \text{ feasible for } H\right\}
\]
to be the minimum total cost of a subset of the first $i$ poly-trees
$T_1,\ldots,T_i$ that forms a feasible solution to~$H$. Note that the total cost
of a set $\mc{T}$ counts edges appearing in more than one poly-tree of $\mc{T}$
several times. If no feasible solution to $H$ can be obtained from any subset of
$T_1,\ldots,T_i$, then we define $\sigma(H,i)$ to be~$\infty$.

Let $\mc{H}$ be the set of patterns given by \autoref{lem:bi-SCSS-struct} for
the optimum \biscss solution $N$, and let $T^*_H$ be the poly-tree solution to
each $H\in\mc{H}$ given by the lemma. The existence of $T^*_H$ in particular
implies that for each $H\in\mc{H}$ there is a feasible poly-tree solution $N_H$
among  $T_1,\ldots,T_p$, and by \autoref{lem:bi-SCSS-struct} their union
$\bigcup_{H\in\mc{H}} N_H$ strongly connects $R$. Thus if~$H^*$ is any strongly
connected pattern graph on $R$ (e.g., a directed cycle on $R$), then
\[
\sigma(H^*,p)\leq \sum_{H\in\mc{H}} \cost(N_H) \leq \sum_{H\in\mc{H}}
\cost(T^*_H) =\cost(N),
\]
where the second inequality follows since each computed poly-tree $T_i$ (and
thus each $N_H$ where $H\in\mc{H}$) is an optimum poly-tree solution according
to \autoref{thm:alg-FMarx}. We conclude that $\sigma(H^*,p)$ is the value of the
optimum \biscss solution we wish to compute.

To recursively compute $\sigma(H,i)$ for any pattern graph $H$ on $R$ and any
$1\leq i\leq p$, we keep track of the subset
$\mc{T}^i_H\subseteq\{T_1,\ldots,T_i\}$ of poly-trees that obtain the cost
stored by the following dynamic program in~$\sigma(H,i)$. For $i=1$ we just
check whether~$T_1$ is a feasible solution to $H$. If so, we set
$\sigma(H,1)=\cost(T_1)$ and~$\mc{T}^1_H=\{T_1\}$, while otherwise we set
$\sigma(H,1)=\infty$ and $\mc{T}^1_H=\emptyset$. This obviously computes
$\sigma(H,1)$ correctly. To compute $\sigma(H,i)$ for any $i\geq 2$, we check
for each pattern graph $H'$ whether $(\bigcup\mc{T}^{i-1}_{H'}) \cup T_i$ is a
feasible solution to~$H$. Among all such solutions and the
graph~$\bigcup\mc{T}^{i-1}_H$ we store the cost of the cheapest option. More
formally, we claim that for $i\geq 2$
\begin{multline}
\label{eq:dp2}
 \sigma(H,i)=\min\Big\{\sigma(H,i-1), \sigma(H',i-1)+\cost(T_i)~\Big\vert\\
H'\text{ is a pattern with } \big(\bigcup\mc{T}^{i-1}_{H'}\big) \cup T_i\text{ 
feasible for } H\Big\}.
\end{multline}
If the right-hand side of~\eqref{eq:dp2} is some finite value, we set
$\mc{T}^i_H$ to the subset obtaining the minimum (i.e., either $\mc{T}^{i-1}_H$
or~$\mc{T}^{i-1}_{H'}\cup\{T_i\}$ for some $H'$). Otherwise, we let
$\mc{T}^i_H=\emptyset$.

To show that the recursion given by~\eqref{eq:dp2} is correct, fix $H$ and
$i\geq 2$, and let $\mc{T}^*\subseteq\{T_1,\ldots,T_i\}$ be the subset of
poly-trees defining $\sigma(H,i)$, i.e., $\mc{T}^*$ minimizes the right-hand
side of~\eqref{eq:sigma2}. We need to show that
$\cost(\mc{T}^i_H)=\cost(\mc{T}^*)$. First note that by~\eqref{eq:dp2},
$\bigcup\mc{T}^i_H$ is a feasible solution to $H$ and is the union of some
subset of $T_1,\ldots,T_i$, and so $\cost(\mc{T}^i_H)\geq\cost(\mc{T}^*)$ by
definition of $\mc{T}^*$. In case $T_i\notin\mc{T}^*$, we have
$\cost(\mc{T}^{i-1}_H)=\cost(\mc{T}^*)$ by induction, and so
$\cost(\mc{T}^i_H)\leq\cost(\mc{T}^*)$, since
$\sigma(H,i-1)=\cost(\mc{T}^{i-1}_H)$ is considered as one of the values over
which~\eqref{eq:dp2} minimizes. In the other case when~$T_i\in\mc{T}^*$,
consider the graph $\bigcup(\mc{T}^*\setminus\{T_i\})$ obtained by taking the
union of all poly-trees in $\mc{T}^*$ except~$T_i$ (note that it may still
contain edges of $T_i$). Now let $H'$ be the pattern graph on~$R$, which
contains an edge $st$ if and only if $\bigcup(\mc{T}^*\setminus\{T_i\})$
contains an $s\to t$ path. By induction we have
$\cost(\mc{T}^{i-1}_{H'})\leq\cost(\mc{T}^*\setminus\{T_i\})$, and adding
$\cost(T_i)$ to both sides of this inequality we get
$\cost(\mc{T}^{i-1}_{H'})+\cost(T_i)\leq\cost(\mc{T}^*)$, since~$\mc{T}^*$
contains $T_i$. Moreover, $(\bigcup\mc{T}^{i-1}_{H'})\cup T_i$ is a feasible
solution to $H$, since $\bigcup\mc{T}^{i-1}_{H'}$ is a feasible solution to $H'$
and adding $T_i$ we obtain an $s\to t$ path between terminals $s,t\in R$ if and
only if $\bigcup\mc{T}^*$ contains some $s\to t$ path as well. Hence
$\cost(\mc{T}^i_H)\leq\cost(\mc{T}^{i-1}_{H'})+\cost(T_i)$, as the latter term
is equal to $\sigma(H',i-1)+\cost(T_i)$ and is considered as one of the values
over which~\eqref{eq:dp2} minimizes. In conclusion, also if $T_i\in\mc{T}^*$ we
have $\cost(\mc{T}^i_H)\leq\cost(\mc{T}^*)$ and so
$\cost(\mc{T}^i_H)=\cost(\mc{T}^*)$. Thus the recursion given in~\eqref{eq:dp2}
correctly computes the value of $\sigma(H,i)$ according to its
definition in~\eqref{eq:sigma2}.

To bound the runtime of the dynamic program, recall that there are $2^{k^2-k}$
possible pattern graphs $H$ on $R$, and the first step of the algorithm computes
at most one poly-tree solution to each pattern $H$, i.e.,~$p\leq 2^{k^2-k}$.
Thus the size of the table given by all entries $\sigma(H,i)$ (with $1\leq i\leq
p$) is at most $2\cdot 2^{k^2-k}$. To compute one entry of the table
via~\eqref{eq:dp2}, we need to consider every pattern $H'$ for each of which we
perform a feasibility check, which can be done in polynomial time. Thus the
runtime for each of the at most $2\cdot 2^{k^2-k}$ entries is $2^{k^2-k}\polyn$, 
and the total runtime of the algorithm (including the first step) is bounded by 
$4^{k^2+O(k)}\polyn$.
\end{proof}

To complete this section we now prove \autoref{lem:bi-SCSS-struct} and show how
to decompose any strongly connected solution in a bidirected graph.

\begin{proof}[Proof of \autoref{lem:bi-SCSS-struct}]
We will assume w.l.o.g.\ that in the cheapest solution $N\subseteq G$ each
terminal has only $1$ neighbour, and each Steiner vertex has exactly $3$
neighbours. We may assume this according to the transformation given in
\autoref{app:degrees}, just as for our earlier proof of \autoref{lem:full-comp}.
Furthermore, let $G_N$ again be the graph spanned by the edge set~$\{uv,vu\mid 
uv\in E(N)\}$. It is not hard to see that proving \autoref{lem:bi-SCSS-struct} 
for the obtained optimum solution $N$ in $G_N$ implies the same result for the 
original optimum solution in $G$, by reversing all transformations given in 
\autoref{app:degrees}.

We will first reduce the claim to solutions that have a $2$-connected underlying
undirected graph. In particular, consider a maximal $2$-connected component
$\ud{C}$ of $\ud{N}$, and the set of articulation points $W$ of $\ud{N}$
contained in $\ud{C}$, i.e., $w\in W$ if and only if $w\in V(C)$ and $w$ is
adjacent to some vertex of $\ud{N}$ that is not in $\ud{C}$. We now claim that
the directed subgraph $C$ of $N$ corresponding to $\ud{C}$ is an optimum
strongly connected solution for the terminal set given by $W$. First off, note
that $C$ cannot contain any terminals from $R$, as we assume that every terminal
in $R$ has only one neighbour, while $\ud{C}$ is $2$-connected. Since $\ud{C}$
is a maximal $2$-connected component of $\ud{N}$, no path leaving $C$ can
return to $C$. So any $u\to v$ path connecting a pair of vertices $u,v\in W$
must be entirely contained in~$C$. This means that $C$ strongly connects $W$,
since $N$ is strongly connected. If $C$ was not an optimum strongly connected
solution for $W$, we could replace it by a cheaper one in~$N$. This would result
in a feasible solution to $R$ but with smaller cost than $N$, which would
contradict the optimality of~$N$.

Since $C$ is an optimum strongly connected solution for $W$, we are able to
prove the next claim, which essentially follows from our main observation on
solutions in bidirected graphs given by \autoref{lem:cycle-split}.

\begin{claim}\label{clm:cycles}
Every cycle of $\ud{C}$ contains at least two vertices of $W$.
\end{claim}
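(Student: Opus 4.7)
Plan: I would prove this by contradiction, leveraging the fact (established in the paragraph immediately preceding the claim) that $C$ is an optimum \biscss solution for the terminal set $W$ in the bidirected graph $G$. Suppose that some cycle $\ud{O}\subseteq\ud{C}$ contains at most one vertex of $W$, and let $O$ be the corresponding poly-cycle of $C$. The goal is to construct a strongly connected subgraph of $G$ spanning $W$ whose cost is strictly smaller than $\cost(C)$, contradicting this optimality.

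The first step is to extract structural information about $O$. Applying \autoref{lem:cycle-split} to $C$ with terminal set $W$ gives that $O$ has no chord in $C$ and that every connected component of the graph obtained from $C$ by removing $O$ contains a vertex of $W$. Combined with the fact that under the degree normalization of \autoref{app:degrees} each Steiner vertex has exactly three neighbours in $C$, the absence of chords forces every $v\in V(O)\setminus W$ to have a unique edge in $C$ to a vertex outside $V(O)$ --- its ``branch.'' Next, I apply \autoref{lem:cycles} to replace $O$ by a directed cycle $O'$ on $V(O)$ at no greater cost, while preserving every required directed reachability among terminals, producing a candidate solution $C'$ with $\cost(C')\leq\cost(C)$.

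The crux is then to identify an edge of $O'$ whose removal still leaves $W$ strongly connected, yielding a subgraph of $G$ strictly cheaper than $C$ and the desired contradiction. Since $|V(O)\cap W|\leq 1$, at most one vertex of $V(O)$ acts as a terminal; every other $W$-to-$W$ directed route through $V(O)$ in $C'$ must enter and exit via branches of the Steiner vertices. Using bidirectedness of $G$ to pick between the two possible orientations of a directed cycle on $V(O)$, I would argue that one can always orient $O'$ so that some edge of $O'$ is not traversed by any required $W$-to-$W$ flow, and is therefore deletable. The main obstacle is making this ``unused edge'' argument rigorous: one must analyse the entry/exit pattern of required flows at the branches, invoke the no-chord property to preclude shortcuts between two vertices of $V(O)$, and carry out a case split depending on whether $V(O)\cap W$ is empty or a singleton, the latter case being more delicate since the $W$-vertex on $O$ may serve as its own entry or exit point and thus shifts which edges of $O'$ are candidates for removal.
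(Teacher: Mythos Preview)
Your approach diverges from the paper's in a substantive way, and the point of divergence is exactly where the gap lies.

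The paper never tries to build a cheaper solution directly. Instead it stays entirely in the undirected graph $\ud{C}$ and uses $2$-connectedness (via Menger) to find a path $Q$ between two vertices $u,v\in V(O)$ that avoids the edges of $O$. Together with a carefully chosen arc $P\subseteq O$ (taken minimal among arcs with no internal $W$-vertex), this path $Q$ is used to build a \emph{different} poly-cycle $O'$ of $C$. A short case analysis (depending on whether $P$ has internal degree-$3$ vertices) shows that removing $O'$ leaves a component consisting of internal vertices of $P$ (or of a subpath $P'\subseteq P$), none of which are in $W$. That directly contradicts \autoref{lem:cycle-split} applied to $O'$. The whole argument is about finding the right cycle to feed into \autoref{lem:cycle-split}; no edge is ever deleted.

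Your plan, by contrast, hinges on step~3: after replacing $O$ by a directed cycle $O'$, exhibit an edge of $O'$ whose deletion preserves strong connectivity of $W$. You acknowledge this is the obstacle, but the proposed line of attack (``analyse entry/exit patterns at the branches, choose the orientation of $O'$ appropriately, case-split on $|V(O)\cap W|$'') does not, as stated, yield a proof. The difficulty is that the only structural facts you have extracted about $C'$ are the no-chord property and that every component of $C'-V(O)$ meets $W$; neither of these controls the \emph{orientations} of the branch edges in $C$, and it is precisely those orientations that determine which arcs of $O'$ are ``covered'' by detours through the attached components. One can write down directed graphs $C'$ (consistent with no-chord and component-meets-$W$) in which every edge of \emph{both} orientations of $O'$ is needed for strong connectivity of~$W$; such $C'$ are of course not optimal, but your argument has not isolated any consequence of optimality beyond \autoref{lem:cycle-split} that would rule them out. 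In other words, to push your approach through you would need a further structural lemma about how optimality constrains the branch orientations --- and proving such a lemma looks at least as hard as the paper's direct argument.

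So the gap is real: the ``removable edge'' step is asserted, not established, and the tools you have set up do not obviously suffice to establish it. The paper's route --- constructing an alternative poly-cycle in $\ud{C}$ and invoking \autoref{lem:cycle-split} once more --- sidesteps this difficulty entirely.
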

\begin{proof}
Assume $\ud{C}$ contains a cycle $O$ with at most one vertex from $W$. As
$\ud{C}$ is $2$-connected and $C$ is a minimum cost solution for $W$, there are
at least two vertices in $W$, and at least one of these does not lie on $O$. In
particular, on the cycle $O$ there must be vertices that have degree more than
$2$ in $\ud{C}$ where paths lead to vertices of $\ud{C}$ not on $O$. As $\ud{C}$
is $2$-connected, by Menger's Theorem~\cite{MR2159259} any such path leading
away from $O$ from a vertex $u\in V(O)$ must eventually lead back to some vertex
$v\in V(O)$. We assume that every vertex of $\ud{N}$ has degree at most $3$, and
so $u\neq v$. This means that there exists a non-empty path $P\subseteq O$
between $u$ and $v$ along $O$ that contains no vertex of $W$ as an internal
vertex, since $O$ contains at most one vertex from $W$.

We will now fix such a pair of vertices $u,v\in V(O)$ of degree $3$ in $\ud{C}$,
such that there is a $u$--$v$ path $Q\subseteq\ud{C}$, which contains no edge of
$O$. We choose the pair $u,v$ under the minimality condition that the $u$--$v$
path $P\subseteq O$ not containing an internal vertex from $W$ is of minimum
length. That is, there is no pair $u',v'$ of vertices on $P$, so that at least
one of $u'$ and $v'$ is an internal vertex of~$P$, and so that there is a
$u'$--$v'$ path in~$\ud{C}$, which contains no edge of $O$: otherwise the
$u'$--$v'$ subpath of $P$ would be a shorter path not containing an internal
vertex from $W$ than $P$ for the pair $u,v$. In particular, this means that any
path from an internal vertex of $P$ that leads away from $O$ must lead back to a
vertex of $O$ that does not lie on~$P$.

Assume that $P$ has internal vertices of degree $3$ in $\ud{C}$, and let $w$ be
the closest one to $u$ on $P$. That is, there is a $w$--$w'$ path $Q'$ not
containing any edge of $O$, such that $w'$ lies on $O$ but not on $P$, by our
choice of $u$ and $v$. Furthermore, the $w$--$u$ subpath $P'$ of $P$ has no
internal vertex of degree $3$ in $\ud{C}$ by our choice of $w$, but it has
length at least $1$, as $w$ is an internal vertex of $P$. Now consider the cycle
$O'$ formed by the $u$--$v$ path $Q$, the $v$--$w$ subpath of $P$ (with edges
not on~$P'$), the $w$--$w'$ path $Q'$, and the $w'$--$u$ path on $O$ not
containing~$v$. As $P'$ does not lie on $O'$ but connects the vertices $w$ and
$v$ of $O'$, by \autoref{lem:cycle-split} the path $P'$ cannot be a single edge,
since $C$ is an optimum solution for $W$. Thus removing $O'$ from $\ud{C}$
results in a connected component that consists of the subpath of $P'$ connecting
the non-empty set of internal vertices of~$P'$. This is because $w$ is the
closest internal vertex of $P$ with degree $3$ to $u$, so that each internal
vertex of $P'$ has degree $2$ in~$\ud{C}$. However none of the vertices of this
connected component is from $W$, as $P$, and therefore $P'$, has no internal
vertex from $W$. This contradicts \autoref{lem:cycle-split}, as $C$ is an
optimum \scss solution for the set $W$.

Thus we are left with the case when all internal vertices of $P$ have degree $2$
in $\ud{C}$. In this case we consider the cycle $O'$ formed by the $u$--$v$ path
$Q$ and the $v$--$u$ path $Q'\subseteq O$ containing no edge of~$P$. Again, note
that $P$ connects the two vertices $u$ and $v$ of the cycle $O'$ but $P$ does
not lie on~$O'$. Thus, as before, $P$ cannot be a single edge by
\autoref{lem:cycle-split}, so that the non-empty set of internal vertices of $P$
induce a connected component after removing $O'$ from $\ud{C}$. This connected
component contains no vertex from $W$, which once more contradicts
\autoref{lem:cycle-split} since $C$ is optimum.
\cqed
\end{proof}

This claim implies that we can partition the edges of $\ud{C}$ into sets
spanning edge-disjoint trees with leaves from $W$ and internal vertices not in
$W$, as follows. Take any edge $e$ of~$\ud{C}$ and consider the set of paths
$\mc{P}$ in $\ud{C}$ that contain $e$, have two vertices of $W$ as endpoints,
and only vertices not in $W$ as internal vertices. Assume the paths in $\mc{P}$
together span a graph containing a cycle. By \autoref{clm:cycles} there is a
vertex $w\in W$ on this cycle. This vertex $w$ is the endpoint of two paths
in~$\mc{P}$, each of which contains a different edge incident to $w$ on the
cycle. Since both these paths also contain~$e$, they span a cycle~$O$ containing
$w$ ($O$ may be different from the former cycle). As none of the internal
vertices of the two paths is from $W$ while the endpoints are, the cycle $O$
also contains no vertex from $W$ apart from $w$ (otherwise the paths could not
share $e$). Hence we found a cycle $O$ with only one vertex from $W$, which
contradicts \autoref{clm:cycles}, and so the set $\mc{P}$ spans a tree. As we
can find such a set of paths for every edge of~$\ud{C}$, we can also find the
desired edge partition for which each set spans a tree with leaves from $W$ and
internal vertices not from~$W$. Let $\mc{T}_C$ be the set containing the graphs
in $C$ of treewidth $1$ corresponding to these trees in~$\ud{C}$.

We now extend the graphs of $\mc{T}_C$ of all $2$-connected components $\ud{C}$
into edge-disjoint poly-trees of~$N$, for which the leaves are terminals in $R$,
as follows. Each graph $T\in\mc{T}_C$ is a poly-tree of~$C$, since every edge of
$C$ lies on a cycle, for which by \autoref{lem:cycle-split} no reverse edge
exists in~$N$. However a leaf $w$ of $T$ is not a terminal from $R$ but an
articulation point of $\ud{N}$, i.e\ a vertex of the corresponding set $W$. The
$2$-connected components of $\ud{N}$ are connected through these articulation
points by trees, for which the leaves are terminals or articulation points of
$\ud{N}$. As $N$ is strongly connected, such a tree corresponds to a bidirected
graph $T$ of treewidth $1$ in $N$. This means that fixing one of the leaves $r$
of $T$, the graph $T$ is the edge-disjoint union of an in- and an
out-arborescence on the same vertex set both with root~$r$. We denote by
$\mc{A}$ the set of edge-disjoint in- and out-arborescences connecting the
earlier-defined components $C$ of $N$ for which $\ud{C}$ is $2$-connected. Note
that $N$ is the disjoint union of all poly-trees in the sets $\mc{T}_C$ and the
arborescences in $\mc{A}$.

Since we assume that every vertex of $N$ has at most $3$ neighbours and $\ud{C}$
is $2$-connected, an articulation point $w$ of $\ud{N}$ in $\ud{C}$ has two
neighbours in $C$ and one neighbour outside of $C$. Thus $w$ is either the root
or a leaf of the two arborescences of $\mc{A}$ containing $w$, and it is a leaf
of two edge-disjoint poly-trees of~$\mc{T}_C$. In particular there are exactly
four edges incident to $w$. One of the arborescences $A\in\mc{A}$ has an edge
for which $w$ is the tail, while the other $A'\in\mc{A}$ has an edge for which
$w$ is the head. This means that $w$ must be the head of an edge of a poly-tree
$T\in\mc{T}_C$, and the tail of an edge of a poly-tree $T'\in\mc{T}_C$. Taking
the union of $T$ and $A$, and also the union of $T'$ and $A'$, results in two
edge-disjoint poly-trees in each of which every directed path of maximal length
has two leaves of the resulting poly-tree as endpoints. These endpoints are
either terminals or articulation points of $\ud{N}$ different from $w$. We can
repeat this procedure at every articulation point of $\ud{N}$ to form two new
edge-disjoint poly-trees, each from the union of two smaller poly-trees and/or
arborescences. This will result in larger and larger poly-trees, until we obtain
a partition of the edges of $N$ into sets, each of which spans a poly-tree in
which every maximal length directed path connects two terminals of $R$ that are
leaves of the poly-tree. Let $\mc{T}_N$ denote the set of all these poly-trees.

For each poly-tree $T\in\mc{T}_N$, we introduce a pattern graph $H$ to $\mc{H}$
having the subset of $R$ contained in~$T$ as its vertex set, and having an edge
$st$ whenever $T$ contains an $s\to t$ path. The solution $T^*_H$ to $H$ is
exactly the poly-tree $T$. Note that each pattern $H$ has only terminals of $R$ 
as vertices, and since the solutions $T^*_H\subseteq N$ are edge-disjoint we 
get $\sum_{H\in\mc{H}}\cost(T^*_H)=\cost(N)$. It remains to show that any union 
of feasible solutions $N_H$ for all $H\in\mc{H}$ strongly connects $R$. 

For this it suffices to argue that the union~$\bigcup\mc{H}$ of all pattern 
graphs is strongly connected. We prove this by induction on the above procedure 
constructing the poly-trees in $\mc{T}_N$ from the poly-trees in $\mc{T}_C$ and 
arborescences in~$\mc{A}$. Initially, consider a pattern graph $H'$ encoding the 
connectivity given by $\mc{T}_C$ and~$\mc{A}$: the vertex set of $H'$ consists 
of $R$ and all articulation points of $\ud{N}$ in 2-connected 
components~$\ud{C}$, and~$H'$ contains an edge $st$ if and only if there is some 
poly-tree $T\in\mc{T}_C\cup\mc{A}$ with an $s\to t$ path. Note that if $W$ is 
the set containing all leaves and the root of an arborescence $T\in\mc{A}$, then 
the induced pattern graph~$H'[W]$ is strongly connected: $\mc{A}$ contains an 
in- and an out-arborescence for this set $W$ with the same root. At the same 
time, if $W$ denotes the set of articulation points of $\ud{N}$ in a 2-connected 
component~$\ud{C}$, then also $H'[W]$ is strongly connected: any two vertices 
$u,v\in W$ lie on a directed cycle $O$ of $C$, for which the (at least two) 
vertices of $O\cap W$ are strongly connected by paths, each of which lies in 
some poly-tree of~$\mc{T}_C$. Therefore the whole pattern graph $H'$ is 
strongly connected, as each articulation point of $\ud{N}$ in 2-connected 
components $\ud{C}$ is a leaf or the root of some arborescence in $\mc{A}$.

Now consider any step of the above procedure in which we form the union of a
pair $A,T$ and a pair~$A',T'$ of poly-trees intersecting at some articulation 
point $w$ of $\ud{N}$ in a 2-connected component $\ud{C}$. In this step we also 
modify the pattern graph $H'$ by short-cutting the vertex $w$, i.e., for any 
edges $sw$ given by $s\to w$ paths of~$T$ and edges $wt$ given by $w\to t$ 
paths of $A$ we add an edge $st$ to $H'$. At the same time we also add an edge 
$st$ to $H'$ for any edges $sw$ given by $s\to w$ paths of $A'$ and edges $wt$ 
given by $w\to t$ paths of $T'$. Now we may remove the vertex $w$ and all its 
incident edges from $H'$, and by induction the resulting pattern graph still 
strongly connects all remaining vertices. At the same time, the new pattern 
graph encodes the connectivity between all leaves of the poly-trees after 
including $A\cup T$ and $A'\cup T'$ and removing $A$, $A'$, $T$, and $T'$. At 
the end of this procedure, we are left with the poly-trees in~$\mc{T}_N$ 
connecting only terminals of $R$, and a strongly connected pattern graph $H'$ 
encoding the connectivity of these poly-trees. In particular, the union 
$\bigcup\mc{H}$ of pattern graphs is exactly $H'$, which concludes the proof.
\end{proof}

\autoref{thm:lb-biSCSS} shows that \biscss is NP-hard, and even has a
$2^{o(k)}\cdot n^{O(1)}$ runtime lower bound under ETH. Hence, to the best of our
knowledge, the class of bidirected graphs is the first example where \scss
remains NP-hard but turns out to be FPT parameterized by the number of
terminals~$k$.

\section{Runtime lower bounds}
\label{sec:lb}

This section is devoted to proving runtime lower bounds. First, in
\autoref{sec:uniqueness-gadget} we describe a general gadget which is used
in both \autoref{thm:lb-scheme-biDSN} and \autoref{thm:lb-biDSN}. Then \autoref{sec:bidsn-planar-w[1]} and \autoref{sec:bidsn-w[1]} contain the
proof of W[1]-hardness of \bidsnP and \bidsn respectively. Finally,
\autoref{sec:np-hard} contains the proof of NP-hardness and the
$2^{o(k)}\polyn$ runtime lower bound for \altbiscssP.

\subsection{Constructing a ``uniqueness'' gadget}
\label{sec:uniqueness-gadget}

For every integer $n$ we define the following gadget $U_n$ which contains
$4n+4$ vertices (see \autoref{fig:uniqueness}). All edges will have
the same weight~$M$, which we will fix later during the reductions. The gadget
$U_n$ is constructed as follows (we first construct an undirected graph, and
then bidirect each edge):

        \begin{itemize}
        \item Introduce two source vertices $s_{1}, s_{2}$,
two target vertices $t_{1}, t_{2}$, and for each $i\in [n]$ the
four vertices $0_i,1_i,2_i,3_i$.
        \item $U_n$ has a path of three edges corresponding to each $i\in [n]$.
                \begin{itemize}
                \item Let $i\in [n]$. Then we denote the path in $U_{n}$
corresponding to $i$ by $P_{U_{n}}(i):= 0_{i}- 1_{i}-
2_{i}- 3_{i}$.
                \item Each of these edges is called a \emph{base} edge and has
weight $M$
                 \end{itemize}

        \item Finally we add the following edges:
                \begin{itemize}
                \item for each $i\in [n]$, the edges $s_{1}- 1_{i}$ and $t_{1}-1_{i}$
                \item for each $i\in [n]$, the edges $s_{2}- 2_{i}$ and $t_{2}- 2_{i}$
                \item Each of these edges is called a \emph{connector} edge and has weight $M$.
                \end{itemize}

        \end{itemize}

\begin{figure}

\centering
\begin{tikzpicture}[scale=0.3]

\foreach \j in {0,1,2,3,4}
{
\begin{scope}[shift={(0,5*\j)}]

\foreach \i in {0,1,2,3}
{
\draw [black] plot [only marks, mark size=8, mark=*] coordinates {(10*\i,0)};
}

\foreach \i in {0,10,20}
{
\path (\i, 0) node(a) {} (10+\i,0) node(b) {};
        \draw[thick,black] (a) -- (b);
}

\end{scope}
}

\draw [black] plot [only marks, mark size=8, mark=*] coordinates {(10,25)}
node[label={[xshift=-6mm,yshift=-2mm] $s_1$}] {} ;

\draw [black] plot [only marks, mark size=8, mark=*] coordinates {(20,25)}
node[label={[xshift=-6mm,yshift=-2mm] $s_2$}] {} ;

\draw [black] plot [only marks, mark size=8, mark=*] coordinates {(10,-5)}
node[label={[xshift=6mm,yshift=-7mm] $t_1$}] {} ;

\draw [black] plot [only marks, mark size=8, mark=*] coordinates {(20,-5)}
node[label={[xshift=6mm,yshift=-7mm] $t_2$}] {} ;

\foreach \j in {0,1}
{
\begin{scope}[shift={(10*\j,0)}]

\path (10,25) node(a) {} (10,20) node(b) {};
 \draw [thick,dotted] (a) to (b);

\path (10,25) node(a) {} (10,15) node(b) {};
 \draw [thick,dotted] (a.north) to [out=60,in=60] (b.north);

\path (10,25) node(a) {} (10,10) node(b) {};
 \draw [thick,dotted] (a.north) to [out=60,in=60] (b.north);

\path (10,25) node(a) {} (10,5) node(b) {};
 \draw [thick,dotted] (a.north) to [out=60,in=60] (b.north);

\path (10,25) node(a) {} (10,0) node(b) {};
 \draw [thick,dotted] (a.north) to [out=60,in=60] (b.north);

\end{scope}
}

\foreach \j in {0,1}
{
\begin{scope}[shift={(10*\j,0)}]

\path (10,-5) node(a) {} (10,20) node(b) {};
  \draw [thick,dotted] (a.north) to [out=240,in=240] (b.south);

\path (10,-5) node(a) {} (10,15) node(b) {};
 \draw [thick,dotted] (a.north) to [out=240,in=240] (b.south);

\path (10,-5) node(a) {} (10,10) node(b) {};
 \draw [thick,dotted] (a.north) to [out=240,in=240] (b.south);

\path (10,-5) node(a) {} (10,5) node(b) {};
 \draw [thick,dotted] (a.north) to [out=240,in=240] (b.south);

\path (10,-5) node(a) {} (10,0) node(b) {};
  \draw [thick,dotted] (a) to (b);

\end{scope}
}

\draw [black] plot [only marks, mark size=0, mark=*] coordinates {(0,20)}
node[label={[xshift=-2mm,yshift=-1mm] $0_1$}] {} ;

\draw [black] plot [only marks, mark size=0, mark=*] coordinates {(10,20)}
node[label={[xshift=-2mm,yshift=-1mm] $1_1$}] {} ;

\draw [black] plot [only marks, mark size=0, mark=*] coordinates {(20,20)}
node[label={[xshift=-2mm,yshift=-1mm] $2_1$}] {} ;

\draw [black] plot [only marks, mark size=0, mark=*] coordinates {(30,20)}
node[label={[xshift=-2mm,yshift=-1mm] $3_1$}] {} ;

\draw [black] plot [only marks, mark size=0, mark=*] coordinates {(0,10)}
node[label={[xshift=-2mm,yshift=-1mm] $0_i$}] {} ;

\draw [black] plot [only marks, mark size=0, mark=*] coordinates {(10,10)}
node[label={[xshift=0mm,yshift=-1mm] $1_i$}] {} ;

\draw [black] plot [only marks, mark size=0, mark=*] coordinates {(20,10)}
node[label={[xshift=0mm,yshift=-1mm] $2_i$}] {} ;

\draw [black] plot [only marks, mark size=0, mark=*] coordinates {(30,10)}
node[label={[xshift=-2mm,yshift=-1mm] $3_i$}] {} ;

\draw [black] plot [only marks, mark size=0, mark=*] coordinates {(0,0)}
node[label={[xshift=-2mm,yshift=-1mm] $0_n$}] {} ;

\draw [black] plot [only marks, mark size=0, mark=*] coordinates {(10,0)}
node[label={[xshift=0mm,yshift=-1mm] $1_n$}] {} ;

\draw [black] plot [only marks, mark size=0, mark=*] coordinates {(20,0)}
node[label={[xshift=0mm,yshift=-1mm] $2_n$}] {} ;

\draw [black] plot [only marks, mark size=0, mark=*] coordinates {(30,0)}
node[label={[xshift=-2mm,yshift=-1mm] $3_n$}] {} ;

\end{tikzpicture}

\caption{The construction of the uniqueness gadget $U_n$. Note that the
gadget has $4n+4$ vertices. Each \emph{base} edge is denoted by a filled edge
and each connector edge is denoted by a dotted edge in the figure.
 \label{fig:uniqueness}}
\end{figure}

After bidirecting all above undirected edges in the gadgets, we give the
following definitions for the directed graph $U_n$.

\begin{dfn}
The set of \emph{boundary} vertices of $U_n$ is $\bigcup_{i=1}^{n}
\big\{0_i, 3_i\big\} $. For each $R\in \{0,1,2,3\}$ the set of $R$-vertices of $U_{n}$ is $\big\{ R_i : 1\leq i\leq n \big\}$.
\label{defn-boundary}
\end{dfn}

\begin{dfn}
A set of edges $E'$ of $U_n$ satisfies the \emph{in-out} property if each of
the following four conditions is satisfied
\begin{itemize}
\item $s_1$ can reach some boundary vertex
\item $s_2$ can reach some boundary vertex
\item $t_1$ can be reached from some boundary vertex
\item $t_2$ can be reached from some boundary vertex
\end{itemize}
\label{defn-in-out}
\end{dfn}

\begin{dfn}
A set of edges $E'$ of $U_n$ is \emph{represented} by $i\in [n]$ and
\emph{right-oriented} if
\begin{itemize}
\item the connector edges in $E'$ are $(s_1, 1_{i}),
(s_2, 2_{i}), (1_{i},t_1)$, and
$(2_{i},t_2)$, and
\item base edges in $E'$ are $(0_i, 1_i), (1_i, 2_i)$ and $(2_i, 3_i)$ which 
form the directed $0_i \to 3_i$ path denoted by~$P^{\text{right}}_{U_n}(i)$.
\end{itemize}
A set of edges $E''$ of $U_n$ is \emph{represented} by $i\in [n]$ and
~\emph{left-oriented} if
\begin{itemize}
\item the connector edges in $E''$ are $(s_1, 1_{i}),
(s_2, 2_{i}), (1_{i},t_1)$, and
$(2_{i},t_2)$, and
\item base edges in $E''$ are $(3_i, 2_i), (2_i, 1_i)$ and $(1_i, 0_i)$ which 
form the directed $3_i \to 0_i$ path denoted by~$P^{\text{left}}_{U_n}(i)$.
\end{itemize}
\label{defn-representation}
\end{dfn}

We now show a lower bound on the weight of edges we need to pick from $U_n$ to
satisfy the in-out property.

\begin{lem}
\label{lem:macro-uniqueness-gadget}
Let $E'$ be a set of edges of $U_n$ which satisfies the in-out property.
Then we have that either\\
(i) the weight of $E'$ is at least $8M$, or\\
(ii) the weight of $E'$ is exactly $7M$ and there is an integer $i\in [n]$
such that $E'$ is represented by~$i$ and is either left-oriented or
right-oriented.
\end{lem}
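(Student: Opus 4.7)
The plan is to bound the number of connector and base edges in $E'$ separately, using a natural partition of non-boundary vertices of $U_n$ into a ``$1$-world'' $\{U_n(s_1),U_n(t_1)\}\cup\{U_n(1_i):i\in[n]\}$ and a ``$2$-world'' $\{U_n(s_2),U_n(t_2)\}\cup\{U_n(2_i):i\in[n]\}$. Every connector edge stays inside its world, so the only way to cross between worlds is via a \emph{middle} base edge $U_n(1_i)\leftrightarrow U_n(2_i)$. Reaching a boundary vertex requires a \emph{down} base edge (of the form $U_n(1_i)\to U_n(0_i)$ or $U_n(2_i)\to U_n(3_i)$), and being reached from a boundary vertex requires an \emph{up} base edge.

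First I would show $|E'|\geq 7$. Since $s_1,s_2,t_1,t_2$ are incident only to connector edges, the in-out property forces at least one outgoing connector at each of $s_1,s_2$ and one incoming connector at each of $t_1,t_2$; these four edges have pairwise distinct terminal endpoints and are therefore distinct, so $E'$ contains at least four connector edges. Next, a short case check on the types of base edges shows that two base edges do not suffice, regardless of the number of connectors. Two downs leave $t_1,t_2$ unreached from boundary, two ups leave $s_1,s_2$ unable to reach boundary, one down plus one up with no middle edge keeps the two worlds disconnected so that only one of $s_1,s_2$ can reach boundary and only one of $t_1,t_2$ can be reached, and the remaining combinations involving a middle edge but lacking a down or an up fail analogously. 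Hence at least three base edges are required, for a total of at least seven edges.

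For the tight case $|E'|=7$, both counts are attained: four connectors and three base edges. With each terminal incident to exactly one connector in $E'$, no required reachability path can profitably detour through a terminal, so each of the four in-out paths lies inside a single column $P_{U_n}(i)$. A direct count rules out distributing these four paths over two or more columns: any $2$-$2$ or $3$-$1$ split, or four distinct columns, already forces at least four base edges. So all four paths share a common column $i$, and the four connectors must be $s_1\to U_n(1_i)$, $s_2\to U_n(2_i)$, $U_n(1_i)\to t_1$, and $U_n(2_i)\to t_2$. Finally, an enumeration over the sixteen possibilities for the four boundary endpoints $(b_1,b_2,b_3,b_4)\in\{U_n(0_i),U_n(3_i)\}^4$ (each choice forcing a specific set of required directed base edges) reveals that exactly two of them fit within three base edges: the right-oriented choice $b_1=b_2=U_n(3_i), b_3=b_4=U_n(0_i)$ yielding $P^{\text{right}}_{U_n}(i)$, and the symmetric left-oriented choice yielding $P^{\text{left}}_{U_n}(i)$; together with the four connectors above, this matches the right- and left-oriented representations from the definition. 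The main technical obstacle is this concluding enumeration, in particular verifying that every mixed choice of boundary endpoints in the shared column already forces four or more base edges; the world-separation framework then makes the connector-count and column-count steps routine.
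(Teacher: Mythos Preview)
Your proof is correct and reaches the same conclusion, but the internal organization differs from the paper's in two notable ways.

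For the lower bound, the paper partitions base edges by \emph{position} (``0--1'', ``1--2'', ``2--3'') and shows that if any one of these three types is absent then at least four base edges are forced; hence three are always needed and in the tight case exactly one of each type is present. You instead partition base edges by \emph{direction relative to the center} (up/down/middle) together with the $1$-world vs.\ $2$-world split. Your case check that two base edges never suffice is a cleaner, fully explicit version of what the paper states in one sentence; on the other hand, the paper's positional partition immediately hands it the structural fact ``exactly one edge of each type'' in the tight case, which streamlines what follows.

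For the tight case, the order of deductions is reversed. The paper fixes the orientation first (by casing on whether the unique ``1--2'' edge is rightward or leftward) and then forces the four connector indices $\beta_1=\beta_2=\beta_3=\beta_4=\beta$ to coincide with its column. You instead first argue that all four connectors lie in a common column $i$ (via the single-column path observation and a count over how the four paths split across columns), and only then enumerate the sixteen boundary-endpoint tuples to pin down the orientation. Both routes work; the paper's avoids the final sixteen-case enumeration at the cost of a slightly more ad hoc chain of implications, while yours is more mechanical. One small omission in your column-count step: you list the $2$--$2$, $3$--$1$, and four-distinct cases but not the $2$--$1$--$1$ split over three columns; this case is just as easy (the pair contributes at least two base edges and each singleton at least one), but you should mention it.
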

\begin{proof}
We clearly need at least four connector edges in $E'$:
\begin{itemize}
\item one outgoing edge from $s_1$ so that it can reach some boundary
vertex,
\item one outgoing edge from $s_2$ so that it can reach some boundary
vertex,
\item one incoming edge into $t_1$ so that it can be reached from some
boundary vertex, and
\item one incoming edge into $t_2$ so that it can be reached from some
boundary vertex.
\end{itemize}
This incurs a cost of $4M$ in $E'$.
We now see how many base edges we must have in $E'$. We define the following:
\begin{itemize}
\item \underline{``0-1'' edges}: this is the set of edges $\big\{
(0_{i}, 1_i)\ :\ 1\leq i\leq n\big\} \cup \big\{
(1_{i}, 0_i)\ :\ 1\leq i\leq n\big\}$
\item \underline{``1-2'' edges}: this is the set of edges $\big\{
(1_{i}, 2_i)\ :\ 1\leq i\leq n\big\} \cup \big\{
(2_{i}, 1_i)\ :\ 1\leq i\leq n\big\}$
\item \underline{``2-3'' edges}: this is the set of edges $\big\{
(2_{i}, 3_i)\ :\ 1\leq i\leq n\big\} \cup \big\{
(3_{i}, 2_i)\ :\ 1\leq i\leq n\big\}$
\end{itemize}
In each of the following four cases, we show that weight of $E'$ is at least $8M$ (note that we have already shown that $E'$ must contain at least four connector edges, and hence to show the lower bound of $8M$ on weight of $E'$ we just need to show that it contains at least four base edges):
\begin{enumerate}
\item \underline{$E'$ has no ``0-1'' edges}: This implies that $E'$ has at
least
4 base edges from $U_n$: two rightward edges (one ``1-2'' and one ``2-3'') so
that $s_1$ can reach some boundary vertex, and two leftward edges (one
``1-2'' and one ``2-3'') so that $t_1$ can be reached from some boundary
vertex.

\item \underline{$E'$ has no ``2-3'' edges}: This implies that $E'$ has at
least 4 base edges from $U_n$: two leftward edges (one ``0-1'' and one ``1-2'')
so that $s_2$ can reach some boundary vertex, and two rightward edges
(one ``0-1'' and one ``1-2'') so that $t_2$ can be reached from some
boundary vertex.

\item \underline{$E'$ has no ``1-2'' edges}: This implies that $E'$ has at
least 4 base edges from $U_n$: a leftward ``0-1'' edge so that $s_1$
can reach some boundary vertex, a rightward ``0-1'' edge so that $t_1$
can be reached from some boundary vertex, a leftward ``2-3'' edge so that
$t_2$ can be reached from some boundary vertex and a rightward ``2-3''
edge so that $s_2$ can reach some boundary vertex.

\item \underline{$E'$ has more than one edge of at least one of ``0-1'', ``1-2'' and ``2-3'' types}:
    If $E'$ does not contain at least one edge from each of the types ``0-1'', ``1-2'' and ``2-3'', then we are done by the three previous cases. Hence, $E'$ contains at least one edge from each of the types ``0-1'', ``1-2'' and ``2-3''. Now, in the given case, if $E'$ has more than one edge of at least one of ``0-1'', ``1-2'' and ``2-3'' types then $E'$ contains at least four base edges which is what we had to prove.
\end{enumerate}

In each of the aforementioned four cases we have shown that weight of $E'$ is at least $8M$. The only case that remains to be considered is when $E'$ has \emph{exactly one} edge of each of the types ``0-1'', ``1-2'' and ``2-3''.
In this case, $E'$ has  weight exactly $7M$ and contains exactly four connector edges (one incident on each source and target
vertex) and exactly three base edges (one each from ``0-1'', ``1-2'',
and ``2-3'').
Let the four connector edges in $E'$ be given by
\begin{itemize}
\item $(s_1, 1_{\beta_1})$
\item $(s_2, 2_{\beta_2})$
\item $(1_{\beta_3}, t_1)$
\item $(2_{\beta_4}, t_2)$
\end{itemize}

Suppose that the (only) ``1-2'' edge of $E'$ is rightward and given by
$(1_{\beta}, 2_{\beta})$. We will now show that
$\beta=\beta_1=\beta_2=\beta_3=\beta_4$ and that the three base edges of $E'$
are exactly those which form the path~$P^{\text{right}}_{U_n}(\beta)$, i.e., the 
edges $(0_{\beta}, 1_{\beta}), (1_{\beta}, 2_{\beta})$ and $(2_{\beta}, 
3_{\beta})$.

\begin{itemize}
\item \underline{The unique ``0-1'' edge is rightward and given by
$(0_{\beta_3}, 1_{\beta_3})$}. Suppose the (unique)
``0-1'' edge is leftward: however this implies there is no incoming path to
$t_1$ which contradicts the fact that it can be reached from some
boundary vertex. Since the unique ``1-2'' edge is rightward, it follows that
the path in $E'$, which connects some boundary vertex to $t_1$, must use
the unique ``0-1'' rightward edge which is hence forced to be
$(0_{\beta_3}, 1_{\beta_3})$.

\item \underline{The unique ``2-3'' edge is rightward and given by
$(2_{\beta_2}, 3_{\beta_2})$}. Suppose the (unique)
``2-3'' edge is leftward: however this implies there is no outgoing path from
$s_2$ (since the unique ``1-2'' edge is rightward and the unique ``2-3''
edge is leftward), which contradicts the fact that $s_2$ can reach some
boundary vertex. Since both the unique ``1-2'' edge and the unique ``2-3'' edge
is rightward, it follows that the path in $E'$ from $s_2$ to some
boundary vertex must use the unique ``2-3'' rightward edge which is hence forced
to be $(2_{\beta_2}, 3_{\beta_2})$.
\end{itemize}

Hence, we have that the only base edges in $E'$ are given by
\begin{itemize}
\item the rightward ``0-1'' edge $(0_{\beta_3}, 1_{\beta_3})$,
\item the rightward ``1-2'' edge $(1_{\beta}, 2_{\beta})$, and
\item the rightward ``2-3'' edge $(2_{\beta_2}, 3_{\beta_2})$.
\end{itemize}

Now the existence of a path in $E'$ from boundary vertex to $t_2$
implies $\beta_3=\beta=\beta_4$. Similarly, the existence of a path in $E'$
from $s_1$ to some boundary vertex implies $\beta_1=\beta=\beta_2$.
Hence, we have that $\beta=\beta_1=\beta_2=\beta_3=\beta_4$, i.e., the three
base edges in $E'$ are exactly those which form the path
$P^{\text{right}}_{U_{n}}(\beta)$, i.e., the edges $(0_{\beta}, 1_{\beta}), (1_{\beta}, 2_{\beta})$ and $(2_{\beta}, 3_{\beta})$. If the unique ``1-2'' edge in $E'$ is leftward, then the arguments are symmetric.
\end{proof}

The following corollary follows immediately from the second part of proof of \autoref{lem:macro-uniqueness-gadget}.

\begin{crl}
For every $i\in [n]$ there is a set of edges $E^{\emph{right}}_{U_n}(i)$
\big(resp.~$E^{\emph{left}}_{U_n}(i)$\big) of cost exactly $7M$ which represents $i$,
is right-oriented (resp.~left-oriented) and satisfies the ``in-out'' property.
\label{crl:macro}
\end{crl}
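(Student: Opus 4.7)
The plan is to prove the corollary by explicit construction. For any $i\in[n]$, I would define $E^{\text{right}}_{U_n}(i)$ to consist of exactly the seven edges implicitly identified in the second part of the proof of \autoref{lem:macro-uniqueness-gadget}: the four connector edges
\[
U_n(s_1)\to U_n(1_i),\quad U_n(s_2)\to U_n(2_i),\quad U_n(1_i)\to U_n(t_1),\quad U_n(2_i)\to U_n(t_2),
\]
together with the three base edges of $P^{\text{right}}_{U_n}(i)$, namely $U_n(0_i)\to U_n(1_i)$, $U_n(1_i)\to U_n(2_i)$, and $U_n(2_i)\to U_n(3_i)$. Since each edge of the bidirected gadget $U_n$ has weight $M$, the total cost is exactly $7M$.

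Next I would verify the three required properties in sequence. Representation by $i$ and right-orientation hold immediately by matching the construction against \autoref{defn-representation}: the connector edges are exactly the four listed, and the base edges comprise precisely $P^{\text{right}}_{U_n}(i)$. For the ``in-out'' property of \autoref{defn-in-out}, I would exhibit explicit directed paths to each of the two boundary vertices $U_n(0_i)$ and $U_n(3_i)$: the source $U_n(s_1)$ reaches $U_n(3_i)$ via $U_n(s_1)\to U_n(1_i)\to U_n(2_i)\to U_n(3_i)$; the source $U_n(s_2)$ reaches $U_n(3_i)$ via $U_n(s_2)\to U_n(2_i)\to U_n(3_i)$; the target $U_n(t_1)$ is reached from $U_n(0_i)$ via $U_n(0_i)\to U_n(1_i)\to U_n(t_1)$; and the target $U_n(t_2)$ is reached from $U_n(0_i)$ via $U_n(0_i)\to U_n(1_i)\to U_n(2_i)\to U_n(t_2)$.

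For the left-oriented version $E^{\text{left}}_{U_n}(i)$, I would simply take the reversed edges: the four connectors $U_n(s_1)\to U_n(1_i)$ (still outgoing from $s_1$ to $1_i$, which in left orientation means $U_n(s_1)\to U_n(1_i)$ is replaced according to the mirror setup in \autoref{defn-representation}), and analogously for the base edges forming $P^{\text{left}}_{U_n}(i)$. The verification is symmetric: paths to the opposite boundary vertex $U_n(0_i)$ and from the other boundary vertex $U_n(3_i)$ are obtained by reversing the arguments above.

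There is no real obstacle here; the corollary is simply the ``construction matches the extremal characterisation'' direction that complements the lower bound proved in \autoref{lem:macro-uniqueness-gadget}, and the work reduces to counting seven edges and exhibiting four short directed paths inside the gadget.
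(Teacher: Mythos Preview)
Your proposal is correct and matches the paper's approach: the paper simply states that the corollary ``follows immediately from the second part of proof of the previous lemma,'' and your explicit seven-edge construction together with the four path verifications is exactly the unpacking of that remark. One small clarification on your left-oriented description: per \autoref{defn-representation} the four connector edges are identical in both orientations (only the three base edges reverse to form $P^{\text{left}}_{U_n}(i)$), so nothing is ``replaced'' on the connector side; with that fix the symmetric verification goes through as you indicate.
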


\subsection{W[1]-hardness for \altbidsnP}
\label{sec:bidsn-planar-w[1]}

The goal of this section is to prove \autoref{thm:lb-scheme-biDSN}.
We reduce from the \gt problem introduced by Marx~\cite{daniel-grid-tiling}:

\begin{center}
\noindent\framebox{\begin{minipage}{5.00in}
\textbf{$k\times k$ \textsc{Grid Tiling}}\\
\emph{Input}: integers $k, n$, and a collection of $k^2$ non-empty sets $S_{i,j}\subseteq
[n]\times [n]$ where $1\leq i, j\leq k$.\\
\emph{Question}: for each $1\leq i, j\leq k$ does there exist a value
$\gamma_{i,j}\in S_{i,j}$ such that
\begin{itemize}
\item if $\gamma_{i,j}=(x,y)$ and $\gamma_{i,j+1}=(x',y')$ then $x=x'$, and
\item if $\gamma_{i,j}=(x,y)$ and $\gamma_{i+1,j}=(x',y')$ then $y=y'$.
\end{itemize}
\end{minipage}}
\end{center}

\begin{figure}[t]
\centering
\vspace{-5mm}
\includegraphics[width=6in]{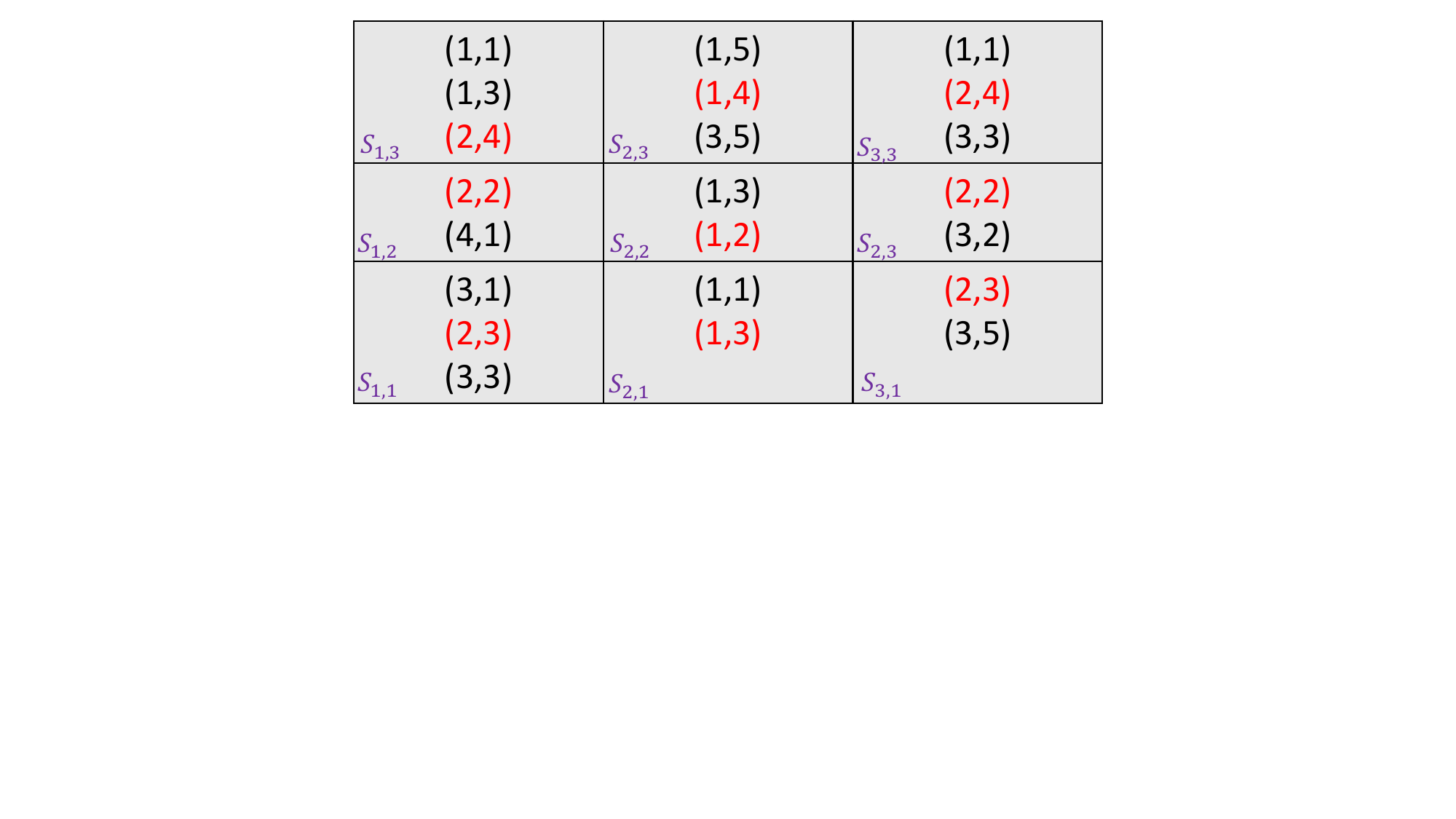}
\vspace{-35mm}
\caption{An instance of \gt with $k=3, n=5$ with a solution highlighted in red.
Note that in a solution, all entries from a row agree in the second coordinate
and all entries from a column agree in the first coordinate.}
\label{fig:gt}
\end{figure}

See \autoref{fig:gt} for example of an instance of \gt.
Under ETH~\cite{ImpagliazzoP01,ImpagliazzoPZ01}, it was shown by Chen et al.~\cite{chen-hardness} that $k$-\textsc{Clique} does not admit an algorithm running in time $f(k)\cdot n^{o(k)}$ for any computable function $f$. There is a simple reduction~\cite[Theorem 14.28]{pc-book} from $k$-\textsc{Clique} to $k\times k$ \gt implying the same runtime lower bound for the latter problem.
To prove \autoref{thm:lb-scheme-biDSN}, %
we give a reduction which transforms an instance $\big(k,n,\{S_{i,j}\}_{1\leq i,j\leq k}\big)$ of \gt into an instance $(G^*, \mathcal{D})$ of \bidsnP which has a planar optimum and the number of terminals is $|\mathcal{D}|=O(k^2)$. We design two types of gadgets: the \emph{main gadget} and the \emph{secondary
gadget}. The reduction from \gt represents each cell of the grid with a copy of
the main gadget, and each main gadget is surrounded by four secondary gadgets:
on the top, right, bottom and left. Each of these gadgets are actually
copies of the ``uniqueness gadget'' from \autoref{sec:uniqueness-gadget}
with $M=k^4$:
each secondary gadget is a copy
of $U_{n}$ and for each $1\leq i,j\leq k$ the main gadget $\M_{i,j}$
(corresponding to the set $S_{i,j}$) is a copy of~$U_{|S_{i,j}|}$. Since we have many copies of the uniqueness gadget from~\autoref{sec:uniqueness-gadget}, we need some notation to help us distinguish between similar vertices from different copies of the gadgets. This will be achieved by using the notation $M_{i,j}(v)$ to represent the vertex $v$ from $U_n$ in the the gadget $M_{i,j}$ which is a copy of the uniqueness gadget $U_{|S_{i,j}|}$: for example, the two source vertices of $M_{i,j}$ are $M_{i,j}(s_1)$ and $M_{i,j}(s_2)$. We refer to
\autoref{fig:bi-dsn-planar-big-picture} (bird's-eye view) and
\autoref{fig:bi-dsn-planar-main-and-4-surrounding} (zoomed-in view) for an
illustration of the reduction.

 \begin{figure}

 \centering
 \begin{tikzpicture}[scale=0.35]

 \foreach \j in {0,1,2}
 {
 \begin{scope}[shift={(0,12*\j)}]

 \foreach \j in {0,1,2}
 {
 \begin{scope}[shift={(12*\j,0)}]

     \draw[rectangle] (0,0) rectangle (5,5);

     \foreach \j in {0,1}
     {
     \begin{scope}[shift={(-12*\j,0)}]
     \foreach \j in {0,1,2,3}
     {
     \begin{scope}[shift={(0,\j)}]
     \foreach \i in {0,1,2,3}
    \draw [black] plot [only marks, mark size=3, mark=*] coordinates {(7+\i,1)};
     \end{scope}
     }
     \end{scope}
     }

     \foreach \j in {0,1}
     {
     \begin{scope}[shift={(-12*\j,0)}]
     \foreach \j in {0,1,2,3}
     {
     \begin{scope}[shift={(0,\j)}]
     \foreach \i in {0,1,2}
     {
     \path (7+\i, 1) node(a) {} (7+\i+1, 1) node(b) {};
         \draw[thick,black] (a) -- (b);
     }

     \end{scope}
     }
     \end{scope}
     }

     \foreach \j in {0,1}
     {
     \begin{scope}[shift={(0,-12*\j)}]
     \foreach \j in {0,1,2,3}
     {
     \begin{scope}[shift={(\j,0)}]
     \foreach \i in {0,1,2,3}
    \draw [black] plot [only marks, mark size=3, mark=*] coordinates {(1,7+\i)};
     \end{scope}
     }
     \end{scope}
     }

     \foreach \j in {0,1}
     {
     \begin{scope}[shift={(0,-12*\j)}]
     \foreach \j in {0,1,2,3}
     {
     \begin{scope}[shift={(\j,0)}]
     \foreach \i in {0,1,2}
     {
     \path (1,7+\i) node(a) {} (1,7+\i+1) node(b) {};
         \draw[thick,black] (a) -- (b);
     }
     \end{scope}
     }
     \end{scope}
     }
 \end{scope}
 }

 \end{scope}
 }

 \foreach \i in {0,1,2}
 {
 \begin{scope}[shift={(0,12*\i)}]

 \foreach \j in {-1,0,1,2}
     {
     \begin{scope}[shift={(12*\j,0)}]

 \draw[green] (6.5,-0.5) rectangle (10.5,5.5) ;

 \draw [red] plot [only marks, mark size=3, mark=*] coordinates {(8,5)}
 node[label={[xshift=-3mm,yshift=-4mm] \small{$s_1$}}] {} ;

 \draw [red] plot [only marks, mark size=3, mark=*] coordinates {(9,5)}
 node[label={[xshift=3mm,yshift=-4mm] \small{$s_2$}}] {} ;

 \draw [red] plot [only marks, mark size=3, mark=*] coordinates {(8,0)}
 node[label={[xshift=-3mm,yshift=-4mm] \small{$t_1$}}] {} ;

 \draw [red] plot [only marks, mark size=e, mark=*] coordinates {(9,0)}
 node[label={[xshift=3mm,yshift=-4mm] \small{$t_2$}}] {} ;

 \end{scope}
 }
 \end{scope}
 }

 \foreach \i in {0,1,2}
 {
 \begin{scope}[shift={(12*\i,0)}]

 \foreach \j in {0,1,2,3}
     {
     \begin{scope}[shift={(0,12*\j)}]

 \draw[blue] (-0.5,-5.5) rectangle (5.5,-1.5) ;

 \draw [red] plot [only marks, mark size=3, mark=*] coordinates {(5,-3)}
 node[label={[xshift=0mm,yshift=-1mm] \small{$s_1$}}] {} ;

 \draw [red] plot [only marks, mark size=3, mark=*] coordinates {(5,-4)}
 node[label={[xshift=0mm,yshift=-7mm] \small{$s_2$}}] {} ;

 \draw [red] plot [only marks, mark size=3, mark=*] coordinates {(0,-3)}
 node[label={[xshift=0mm,yshift=-1mm] \small{$t_1$}}] {} ;

 \draw [red] plot [only marks, mark size=e, mark=*] coordinates {(0,-4)}
 node[label={[xshift=0mm,yshift=-7mm] \small{$t_2$}}] {} ;

 \end{scope}
 }
 \end{scope}
 }

 \foreach \i in {1,2,3}
 {
 \foreach \j in {1,2,3}
 {
 \draw [black] plot [only marks, mark size=0, mark=*] coordinates
{(-9.5+12*\i,-9.5+12*\j)}
 node[label={[xshift=0mm,yshift=-4mm] $\M_{\i,\j}$}] {} ;
 }
 }

 \draw [black] plot [only marks, mark size=3, mark=*] coordinates {(-7.5,2.5)}
node[label={[xshift=0mm,yshift=0mm] $c_1$}] {};
 \draw [black] plot [only marks, mark size=3, mark=*] coordinates {(-7.5,14.5)}
node[label={[xshift=0mm,yshift=0mm] $c_2$}] {};
 \draw [black] plot [only marks, mark size=3, mark=*] coordinates {(-7.5,26.5)}
node[label={[xshift=0mm,yshift=0mm] $c_3$}] {};

 \foreach \i in {0,1,2}
 {
 \begin{scope}[shift={(0,12*\i)}]
 \foreach \i in {0,1,2,3}
 {
 \path (-7.5,2.5) node(a) {} (-5,\i+1) node(b) {};
         \draw[ultra thick,orange] (a) -- (b);
 }
 \end{scope}
 }

 \draw [black] plot [only marks, mark size=3, mark=*] coordinates {(36.5,2.5)}
node[label={[xshift=0mm,yshift=0mm] $d_1$}] {};
 \draw [black] plot [only marks, mark size=3, mark=*] coordinates {(36.5,14.5)}
node[label={[xshift=0mm,yshift=0mm] $d_2$}] {};
 \draw [black] plot [only marks, mark size=3, mark=*] coordinates {(36.5,26.5)}
node[label={[xshift=0mm,yshift=0mm] $d_3$}] {};

 \foreach \i in {0,1,2}
 {
 \begin{scope}[shift={(0,12*\i)}]
 \foreach \i in {0,1,2,3}
 {
 \path (36.5,2.5) node(a) {} (34,\i+1) node(b) {};
         \draw[ultra thick,orange] (a) -- (b);
 }
 \end{scope}
 }

 \draw [black] plot [only marks, mark size=3, mark=*] coordinates {(2.5,-7.5)}
node[label={[xshift=0mm,yshift=-7mm] $b_1$}] {};
 \draw [black] plot [only marks, mark size=3, mark=*] coordinates {(14.5,-7.5)}
node[label={[xshift=0mm,yshift=-7mm] $b_2$}] {};
 \draw [black] plot [only marks, mark size=3, mark=*] coordinates {(26.5,-7.5)}
node[label={[xshift=0mm,yshift=-7mm] $b_3$}] {};

 \foreach \i in {0,1,2}
 {
 \begin{scope}[shift={(12*\i,0)}]
 \foreach \i in {0,1,2,3}
 {
 \path (2.5,-7.5) node(a) {} (\i+1,-5) node(b) {};
         \draw[orange, ultra thick] (a) -- (b);
 }
 \end{scope}
 }

 \draw [black] plot [only marks, mark size=3, mark=*] coordinates {(2.5,36.5)}
node[label={[xshift=0mm,yshift=0mm] $a_1$}] {};
 \draw [black] plot [only marks, mark size=3, mark=*] coordinates {(14.5,36.5)}
node[label={[xshift=0mm,yshift=0mm] $a_2$}] {};
 \draw [black] plot [only marks, mark size=3, mark=*] coordinates {(26.5,36.5)}
node[label={[xshift=0mm,yshift=0mm] $a_3$}] {};

 \foreach \i in {0,1,2}
 {
 \begin{scope}[shift={(12*\i,0)}]
 \foreach \i in {0,1,2,3}
 {
 \path (2.5,36.5) node(a) {} (\i+1,34) node(b) {};
         \draw[ultra thick,orange] (a) -- (b);
 }
 \end{scope}
 }

 \foreach \j in {1,2,3}
 {
 \foreach \i in {1,2,3,4}
 {
 \draw [green] plot [only marks, mark size=0, mark=*] coordinates
{(-15.5+12*\i,-13+12*\j)} node[label={[xshift=0mm,yshift=-6mm] $\VS_{\i,\j}$}]
{};

 }
 }

 \foreach \j in {1,2,3,4}
 {
 \foreach \i in {1,2,3}
 {
 \draw [blue] plot [only marks, mark size=0, mark=*] coordinates
{(-11+12*\i,-18+12*\j)} node[label={[xshift=-12mm,yshift=-1mm] $\HS_{\i,\j}$}]
{};
 }
 }

 \end{tikzpicture}

 \caption{A bird's-eye view of the instance of $G^*$ with $k=3$ and $n=4$ (see
\autoref{fig:bi-dsn-planar-main-and-4-surrounding} for a zoomed-in view). The
connector edges within each main and secondary gadget are not shown. Similarly,
the vertices and edges within each main gadget are not shown here either.
Additionally we have some red edges between each main gadget and the four
secondary gadgets surrounding it which are omitted in this figure for clarity
(they are shown in \autoref{fig:bi-dsn-planar-main-and-4-surrounding} which
gives a more zoomed-in view).
  \label{fig:bi-dsn-planar-big-picture}}
 \end{figure}

 \begin{figure}

 \centering
 \begin{tikzpicture}[scale=0.5]

 \foreach \j in {0,1}
 {
 \begin{scope}[shift={(-21*\j,0)}]

 \foreach \j in {0,1,2,3}
 {
 \begin{scope}[shift={(0,\j)}]
 \foreach \i in {0,1,2,3}
 {
 \draw [black] plot [only marks, mark size=3, mark=*] coordinates {(12+2*\i,3)};
 }

 \foreach \i in {0,1,2}
 {
 \path (12+2*\i,3) node(a) {} (12+2*\i+2,3) node(b) {};
         \draw[ultra thick,black] (a) -- (b);
 }
 \end{scope}
 }

 \draw [red] plot [only marks, mark size=3, mark=*] coordinates {(14,2)} ;
 \draw [red] plot [only marks, mark size=3, mark=*] coordinates {(16,2)} ;
 \draw [red] plot [only marks, mark size=3, mark=*] coordinates {(14,7)} ;
 \draw [red] plot [only marks, mark size=3, mark=*] coordinates {(16,7)} ;

 \foreach \i in {0,1}
 {
 \begin{scope}[shift={(2*\i,0)}]
 \path (14,7) node(a) {} (14,6) node(b) {};
  \draw [thick,dotted] (a) to (b);

 \path (14,7) node(a) {} (14,5) node(b) {};
  \draw [thick,dotted] (a.north) to [out=60,in=60] (b.north);

 \path (14,7) node(a) {} (14,4) node(b) {};
  \draw [thick,dotted] (a.north) to [out=45,in=45] (b.north);

 \path (14,7) node(a) {} (14,3) node(b) {};
  \draw [thick,dotted] (a.north) to [out=45,in=45] (b.north);

 \path (14,2) node(a) {} (14,3) node(b) {};
  \draw [thick,dotted] (a) to (b);

 \path (14,2) node(a) {} (14,4) node(b) {};
  \draw [thick,dotted] (a.north) to [out=120,in=120] (b.south);

 \path (14,2) node(a) {} (14,5) node(b) {};
  \draw [thick,dotted] (a.north) to [out=135,in=135] (b.south);

 \path (14,2) node(a) {} (14,6) node(b) {};
  \draw [thick,dotted] (a.north) to [out=135,in=135] (b.south);
 \end{scope}
 }

 \end{scope}
 }

 \foreach \j in {0,1}
 {
 \begin{scope}[shift={(0,21*\j)}]

 \foreach \j in {0,1,2,3}
 {
 \begin{scope}[shift={(\j,0)}]
 \foreach \i in {0,1,2,3}
 {
 \draw [black] plot [only marks, mark size=3, mark=*] coordinates {(3,-3-2*\i)};
 }

 \foreach \i in {0,1,2}
 {
 \path (3,-3-2*\i) node(a) {} (3,-3-2*\i-2) node(b) {};
         \draw[ultra thick,black] (a) -- (b);
 }
 \end{scope}
 }

 \draw [red] plot [only marks, mark size=3, mark=*] coordinates {(2,-5)} ;
 \draw [red] plot [only marks, mark size=3, mark=*] coordinates {(2,-7)} ;
 \draw [red] plot [only marks, mark size=3, mark=*] coordinates {(7,-5)} ;
 \draw [red] plot [only marks, mark size=3, mark=*] coordinates {(7,-7)} ;

 \foreach \i in {0,1}
 {
 \begin{scope}[shift={(0,-2*\i)}]
 \path (7,-5) node(a) {} (6,-5) node(b) {};
  \draw [thick,dotted] (a) to (b);

 \path (7,-5) node(a) {} (5,-5) node(b) {};
  \draw [thick,dotted] (a.south) to [out=210,in=210] (b.south);

 \path (7,-5) node(a) {} (4,-5) node(b) {};
  \draw [thick,dotted] (a.south) to [out=210,in=210] (b.south);

 \path (7,-5) node(a) {} (3,-5) node(b) {};
  \draw [thick,dotted] (a.south) to [out=210,in=210] (b.south);

 \path (2,-5) node(a) {} (3,-5) node(b) {};
  \draw [thick,dotted] (a) to (b);

 \path (2,-5) node(a) {} (4,-5) node(b) {};
  \draw [thick,dotted] (a.north) to [out=135,in=135] (b.south);

 \path (2,-5) node(a) {} (5,-5) node(b) {};
  \draw [thick,dotted] (a.north) to [out=135,in=135] (b.south);

 \path (2,-5) node(a) {} (6,-5) node(b) {};
  \draw [thick,dotted] (a.north) to [out=135,in=135] (b.south);
 \end{scope}
 }

 \end{scope}
 }

 \foreach \j in {0,1,2}
 {
 \begin{scope}[shift={(0,1.5*\j)}]
 \foreach \i in {0,1,2,3}
 {
\draw [black] plot [only marks, mark size=3, mark=*] coordinates {(1.5+2*\i,3)};
 }

 \foreach \i in {0,1,2}
 {
 \path (1.5+2*\i,3) node(a) {} (1.5+2*\i+2,3) node(b) {};
         \draw[ultra thick,black] (a) -- (b);
 }
 \end{scope}
 }

 \draw [red] plot [only marks, mark size=3, mark=*] coordinates {(3.5,1.5)} ;
 \draw [red] plot [only marks, mark size=3, mark=*] coordinates {(5.5,1.5)} ;
 \draw [red] plot [only marks, mark size=3, mark=*] coordinates {(3.5,7.5)} ;
 \draw [red] plot [only marks, mark size=3, mark=*] coordinates {(5.5,7.5)} ;

 \foreach \i in {0,1}
 {
 \begin{scope}[shift={(2*\i,0)}]
 \path (3.5,7.5) node(a) {} (3.5,6) node(b) {};
  \draw [thick,dotted] (a) to (b);

 \path (3.5,7.5) node(a) {} (3.5,4.5) node(b) {};
  \draw [thick,dotted] (a.north) to [out=60,in=60] (b.north);

 \path (3.5,7.5) node(a) {} (3.5,3) node(b) {};
  \draw [thick,dotted] (a.north) to [out=45,in=45] (b.north);

 \path (3.5,1.5) node(a) {} (3.5,3) node(b) {};
  \draw [thick,dotted] (a) to (b);

 \path (3.5,1.5) node(a) {} (3.5,4.5) node(b) {};
  \draw [thick,dotted] (a.north) to [out=120,in=120] (b.south);

 \path (3.5,1.5) node(a) {} (3.5,6) node(b) {};
  \draw [thick,dotted] (a.north) to [out=135,in=135] (b.south);

 \end{scope}
 }

 \path (1.5,4.5) node(a) {} (-3,4) node(b) {};
  \draw [thick,red] (a) to (b);

 \path (7.5,4.5) node(a) {} (12,4) node(b) {};
  \draw [thick,red] (a) to (b);

 \path (1.5,4.5) node(a) {} (5,12) node(b) {};
  \draw [thick,red] (a) .. controls (-2,9) .. (b);

 \path (7.5,4.5) node(a) {} (5,-3) node(b) {};
  \draw [thick,red] (a) .. controls (11,0) .. (b);

 \draw [green] plot [only marks, mark size=0, mark=*] coordinates {(-6,1)}
node[label={[xshift=0mm,yshift=-7mm] \Large{$\VS_{i,j}$}}] {};

 \draw [green] plot [only marks, mark size=0, mark=*] coordinates {(15,1)}
node[label={[xshift=0mm,yshift=-7mm] \Large{$\VS_{i+1,j}$}}] {};

 \draw [blue] plot [only marks, mark size=0, mark=*] coordinates {(4.5,-10)}
node[label={[xshift=0mm,yshift=-6mm] \Large{$\HS_{i,j}$}}] {};

 \draw [blue] plot [only marks, mark size=0, mark=*] coordinates {(4.5,19)}
node[label={[xshift=0mm,yshift=-4mm] \Large{$\HS_{i,j+1}$}}] {};

 \draw [black] plot [only marks, mark size=0, mark=*] coordinates {(2,0)}
node[label={[xshift=4mm,yshift=-1mm] \Large{$\M_{i,j}$}}] {};

 \draw [black] plot [only marks, mark size=0, mark=*] coordinates {(1.5,4.5)}
node[label={[xshift=-10mm,yshift=-2mm] \footnotesize{$\M_{i,j}(0_{x,y})$}}] {};

 \draw [black] plot [only marks, mark size=0, mark=*] coordinates {(7.5,4.5)}
node[label={[xshift=3mm,yshift=-1mm] \footnotesize{$\M_{i,j}(3_{x,y})$}}] {};

 \draw [black] plot [only marks, mark size=0, mark=*] coordinates {(5,12)}
node[label={[xshift=6mm,yshift=-7mm] \footnotesize{$\HS_{i,j+1}(3_{x})$}}] {};

 \draw [black] plot [only marks, mark size=0, mark=*] coordinates {(-3,4)}
node[label={[xshift=6mm,yshift=-7mm] \footnotesize{$\VS_{i,j}(3_{y})$}}] {};

 \draw [black] plot [only marks, mark size=0, mark=*] coordinates {(5,-3)}
node[label={[xshift=-2mm,yshift=0mm] \footnotesize{$\HS_{i,j}(0_{x})$}}] {};

 \draw [black] plot [only marks, mark size=0, mark=*] coordinates {(12,4)}
node[label={[xshift=-5mm,yshift=-6mm] \footnotesize{$\VS_{i+1,j}(0_{y})$}}] {};

 \end{tikzpicture}

 \caption{A zoomed-in view of the main gadget $\M_{i,j}$ surrounded by four
secondary gadgets: vertical gadget $\HS_{i,j+1}$ on the top, horizontal gadget
$\VS_{i,j}$ on the left, vertical gadget $\HS_{i,j}$ on the bottom and horizontal
gadget $\VS_{i+1,j}$ on the right. Each of the secondary gadgets is a copy of
the uniqueness gadget $U_n$ (see \autoref{sec:uniqueness-gadget}) and the main
gadget $\M_{i,j}$ is a copy of the uniqueness gadget $U_{|S_{i,j}|}$. The only
inter-gadget edges are the red edges: they have one end-point in a main gadget
and the other end-point in a secondary gadget. We have shown four such red
edges which are introduced for every $(x,y)\in S_{i,j}$.
  \label{fig:bi-dsn-planar-main-and-4-surrounding}}
 \end{figure}

Fix some $1\leq i,j\leq k$. The main gadget $\M_{i,j}$ has four secondary
gadgets\footnote{Half of the secondary gadgets are called ``horizontal'' since
their base edges are horizontal (as seen by the reader), and the other half of
the secondary gadgets are called ``vertical''.} surrounding it:
\begin{itemize}
\item above $\M_{i,j}$ is the vertical secondary gadget $\HS_{i,j+1}$,
\item on the right of $\M_{i,j}$ is the horizontal secondary gadget $\VS_{i+1,j}$,
\item below $\M_{i,j}$ is the vertical secondary gadget $\HS_{i,j}$, and
\item on the left of $\M_{i,j}$ is the horizontal secondary gadget $\VS_{i,j}$.
\end{itemize}
Hence, there are $k(k+1)$ horizontal secondary gadgets and $k(k+1)$ vertical
secondary gadgets.
Recall that $\M_{i,j}$ is a copy of $U_{|S_{i,j}|}$ and each of the secondary
gadgets are copies of $U_n$ (with $M=k^4$). With slight abuse of notation, we
assume that the rows of $\M_{i,j}$ are indexed by the set $\big\{(x,y)\ :\ (x,y)\in
S_{i,j}\big\}$.
We add the following edges (in \textcolor[rgb]{1.00,0.00,0.00}{red} color) of
weight $1$. For each~$(x,y)\in S_{i,j}$ add an edge connecting
\begin{itemize}
\item $\HS_{i,j+1}(3_x)$ and $\M_{i,j}(0_{(x,y)})$,
\item $\VS_{i,j}(3_y)$ and $\M_{i,j}(0_{(x,y)})$,
\item $\VS_{i+1,j}(0_y)$ and $\M_{i,j}(3_{(x,y)})$, and
\item $\HS_{i,j}(0_x)$ and $\M_{i,j}(3_{(x,y)})$.
\end{itemize}
Introduce the following $4k$ vertices (which we call \emph{border} vertices):
\begin{itemize}
\item $a_1, a_2, \ldots, a_k$,
\item $b_1, b_2, \ldots, b_k$,
\item $c_1, c_2, \ldots, c_k$,
\item $d_1, d_2, \ldots, d_k$.
\end{itemize}
For each $i\in [k]$ add an edge (in \textcolor[rgb]{1.00,0.50,0.00}{orange} color in
\autoref{fig:bi-dsn-planar-big-picture}) with weight 1 connecting
\begin{itemize}
\item $a_{i}$ and $\HS_{i,k+1}(0_{j})$ for each $j\in [n]$,
\item $b_{i}$ and $\HS_{i,1}(3_{j})$ for each $j\in [n]$,
\item $c_{i}$ and $\VS_{1,i}(0_{j})$ for each $j\in [n]$, and
\item $d_{i}$ and $\VS_{k+1,i}(3_{j})$ for each $j\in [n]$.
\end{itemize}

\begin{center}
\noindent\framebox{\begin{minipage}{\textwidth}
We follow the convention that for each $i\in [k]$:
\begin{itemize}
\item $\M_{0,i}(\cdot) = c_i $ and $\M_{k+1, i}(\cdot) = d_i$ irrespective of the argument, and
\item $\M_{i,0}(\cdot) = b_i $ and $\M_{i,k+1}(\cdot) =a_i$ irrespective of the argument.
\end{itemize}
\end{minipage}}
\end{center}

This concludes the construction of the graph $G^*$. Note that we bidirect each edge of $G^*$. Finally, the set of demand pairs $\mathcal{D}$ is given by:
\begin{itemize}
\item \underline{Type I}: Let $1\leq i\leq k+1, 1\leq j\leq k$. Consider the
horizontal secondary gadget $\VS_{i,j}$. We add the pairs $\big(\M_{i-1,j}(s_1),
\VS_{i,j}(t_1)\big)$ and $\big(\M_{i-1,j}(s_2), \VS_{i,j}(t_2)\big)$ in addition to the pairs
$\big(\VS_{i,j}(s_1), \M_{i,j}(t_1)\big)$ and $\big(\VS_{i,j}(s_2), \M_{i,j}(t_2)\big)$.

\item \underline{Type II}: Let $1\leq j\leq k+1, 1\leq i\leq k$. Consider the
vertical secondary gadget $\HS_{i,j}$. We add the pairs $(\M_{i,j}(s_1),
\HS_{i,j}(t_1))$ and $(\M_{i,j}(s_2), \HS_{i,j}(t_2))$ in addition to the pairs
$(\HS_{i,j}(s_1), \M_{i,j-1}(t_1))$ and $(\HS_{i,j}(s_2), \M_{i,j-1}(t_2))$
\end{itemize}

We have $O(k^2)$ vertical and horizontal secondary gadgets and we add $O(1)$
demand pairs corresponding to each of these gadgets. Hence, the total number of
demand pairs is
\begin{equation}\label{eqn:D-bidsnP}
|\mathcal{D}|=O(k^2)
\end{equation}

Fix the budget $B^*= 4k+ k(k+1)\cdot B + k(k+1)\cdot B + k^{2}\cdot (B+4)$
where $B=7M= 7k^4$. The high-level intuition is the following: we need $4k$ from the budget (via \textcolor[rgb]{1.00,0.50,0.00}{orange} edges) just to include one edge incident on each of the $4k$ border vertices (each of which is part of a demand pair). We have $k(k+1)$
horizontal and vertical secondary gadgets each. We argue that any solution for
\bidsn must satisfy the in-out property in each of the secondary gadgets,
and then invoke \autoref{lem:macro-uniqueness-gadget}. Finally, for each main
gadget, we again show that it must satisfy the in-out property and hence has
cost at least $B$. %
However, here we show that we
additionally need at least four
red edges and hence the cost of any \bidsn solution restricted to a main gadget
is at least $B+4$. Since we have $k^2$ main gadgets, this completely uses up the
budget $B^*$.

We now show the correctness of our reduction by showing that the instance $\big(k,n,\{S_{i,j}\}_{1\leq i,j\leq k} \big)$ of \gt has a solution if and only if the instance $(G^*, \mathcal{D})$ of \bidsnP has a planar solution of cost $\leq B^*$. First we show the forward direction:
\begin{lem}
\label{lem:bidsnp-easy-gt}
If the instance $\big(k,n,\{S_{i,j}\}_{1\leq i,j\leq k} \big)$ of \gt has a 
solution then the instance~$(G^*, \mathcal{D})$ of \bidsnP has a planar 
solution of cost at most $B^*$
\end{lem}
\begin{proof}
Suppose that the instance $\big(k,n,\{S_{i,j}\}_{1\leq i,j\leq k} \big)$ of \gt has a solution, i.e., there exist $1\leq \alpha_1, \alpha_2, \ldots, \alpha_k\leq n$ and $1\leq \beta_1, \beta_2, \ldots, \beta_k \leq n$ such that for each $1\leq i,j\leq k$ we have $(\alpha_i, \beta_j)\in S_{i,j}$.
We now build an edge set $E' \subseteq E(G^*)$ of weight $\leq B^*$ such that the network $\big(V(G^*),E'\big)$ is a \emph{planar} solution for the \bidsn instance $(G^*,
\mc{D})$. In the edge set $E'$, we take the following edges:
\begin{enumerate}
\item The \textcolor[rgb]{1.00,0.50,0.00}{orange} edges $(c_j, \VS_{1,j}(0_{\beta_j}))$ and $(\VS_{k+1,j}(3_{\beta_j}),
d_j)$ for each $j\in [k]$. This uses up $2k$ from the budget since each of
these edges has weight 1.

\item The \textcolor[rgb]{1.00,0.50,0.00}{orange} edges $(a_i, \HS_{i,k+1}(0_{\alpha_i}))$ and $(\HS_{i,1}(3_{\alpha_i}),
b_i)$ for each $i\in [k]$. This uses up $2k$ from the budget since each of
these edges has weight 1.

\item For each $1\leq i,j\leq k$ for the main gadget $\M_{i,j}$, use
\autoref{crl:macro} to pick a set of edges
$E^{\text{right}}_{\M_{i,j}}\big( (\alpha_i, \beta_j) \big)$
which is right-oriented, represented by $(\alpha_i, \beta_j)$ and has weight
exactly $B$.
Additionally we also pick the following four
\textcolor[rgb]{1.00,0.00,0.00}{red} edges (each of which has weight 1):
        \begin{itemize}
        \item $\HS_{i,j+1}(3_{\alpha_i})\rightarrow \M_{i,j}(0_{\alpha_i,
\beta_j})$
        \item $\VS_{i,j}(3_{\beta_j})\rightarrow \M_{i,j}(0_{\alpha_i,
\beta_j})$
        \item $\HS_{i,j}(0_{\alpha_i})\leftarrow \M_{i,j}(3_{\alpha_i, \beta_j})$
        \item $\VS_{i+1,j}(0_{\beta_j})\leftarrow \M_{i,j}(3_{\alpha_i, \beta_j})$
        \end{itemize}

\item For each $1\leq j\leq k+1$ and $1\leq i\leq k$ for the vertical
secondary gadget $\HS_{i,j}$, use
    \autoref{crl:macro} to pick a set of edges
$E^{\text{right}}_{\HS_{i,j}}(\alpha_i)$
which is right-oriented, represented by $\alpha_i$ and has weight exactly~$B$.

\item For each $1\leq j\leq k$ and $1\leq i\leq k+1$ for the horizontal secondary
gadget $\VS_{i,j}$, use
    \autoref{crl:macro} to pick a set of edges
$E^{\text{right}}_{\VS_{i,j}}(\beta_j)$
which is right-oriented, represented by $\beta_j$ and has weight exactly~$B$.
\end{enumerate}

It is easy to see that $\big(V(G^*),E'\big)$ is planar, as each application of
\autoref{crl:macro} gives a planar graph, separate applications
give vertex-disjoint graphs, and moreover, the \textcolor[rgb]{1.00,0.00,0.00}{red} and \textcolor[rgb]{1.00,0.50,0.00}{orange} edges do not
destroy planarity either. The weight of the edge set $E'$ is exactly $4k+ k(k+1)\cdot B + k(k+1)\cdot B + k^{2}\cdot
(B+4) = B^*$.
We now show that the network $\big(V(G^*),E'\big)$ is indeed a solution for the \bidsn instance $(G^*, \mathcal{D})$.
Fix $i,j$ such that $1\leq i\leq k+1, 1\leq j\leq k$. Consider the four demand
pairs of Type~I.
\begin{itemize}
\item \underline{$\big( \M_{i-1,j}(s_1), \VS_{i,j}(t_1) \big)$}. This path is obtained by
concatenation of the following paths:
        \begin{itemize}
        \item from $E_{\M_{i-1,j}}^{\text{right}}\big( (\alpha_{i-1},\beta_j) \big)$ use the path $\M_{i-1,j}(s_1)\rightarrow
\M_{i-1,j}(1_{\alpha_{i-1}, \beta_j})\rightarrow \M_{i-1,j}(2_{\alpha_{i-1},
\beta_j})\rightarrow \M_{i-1,j}(3_{\alpha_{i-1}, \beta_j})$
        \item use the edge
$\M_{i-1,j}(3_{\alpha_{i-1}, \beta_j})\rightarrow \VS_{i,j}(0_{\beta_j})$
        \item from $E_{\VS_{i,j}}^{\text{right}}(\beta_j)$ use the path $\VS_{i,j}(0_{\beta_j})\rightarrow
\VS_{i,j}(1_{\beta_j})\rightarrow \VS_{i,j}(t_1)$.
        \end{itemize}

\item \underline{$\big( \M_{i-1,j}(s_2), \VS_{i,j}(t_2) \big)$}. This path is obtained by
concatenation of the following paths:
        \begin{itemize}
        \item from $E_{\M_{i-1,j}}^{\text{right}}\big( (\alpha_{i-1},\beta_j) \big)$ use the path $\M_{i-1,j}(s_2)\rightarrow
\M_{i-1,j}(2_{\alpha_{i-1}, \beta_j})\rightarrow \M_{i-1,j}(3_{\alpha_{i-1}, \beta_j})$
        \item use the edge $\M_{i-1,j}(3_{\alpha_{i-1}, \beta_j})\rightarrow \VS_{i,j}(0_{\beta_j})$
        \item from $E_{\VS_{i,j}}^{\text{right}}(\beta_j)$ use the path $\VS_{i,j}(0_{\beta_j})\rightarrow
\VS_{i,j}(1_{\beta_j})\rightarrow \VS_{i,j}(2_{\beta_j})\rightarrow
\VS_{i,j}(t_2)$.
        \end{itemize}

\item \underline{$\big( \VS_{i,j}(s_1), \M_{i,j}(t_1) \big)$}. This path is obtained by
concatenation of the following paths:
        \begin{itemize}
        \item from $E_{\VS_{i,j}}^{\text{right}}(\beta_j)$ use the path $\VS_{i,j}(s_1)\rightarrow
\VS_{i,j}(1_{\beta_j})\rightarrow \VS_{i,j}(2_{\beta_j})\rightarrow
\VS_{i,j}(3_{\beta_j})$,
        \item use the edge $\VS_{i,j}(3_{\beta_j})\rightarrow \M_{i,j}(0_{\alpha_i, \beta_j})$
        \item from $E_{\M_{i,j}}^{\text{right}}\big( (\alpha_i,\beta_j) \big)$ use the path $\M_{i,j}(0_{\alpha_i, \beta_j})\rightarrow \M_{i,j}(1_{\alpha_i,
\beta_j})\rightarrow \M_{i,j}(t_1)$.
        \end{itemize}

\item \underline{$\big( \VS_{i,j}(s_2), \M_{i,j}(t_2) \big)$}. This path is obtained by
concatenation of the following paths:
        \begin{itemize}
        \item from $E_{\VS_{i,j}}^{\text{right}}(\beta_j)$ use the path $\VS_{i,j}(s_2)\rightarrow
\VS_{i,j}(2_{\beta_j})\rightarrow \VS_{i,j}(3_{\beta_j})$
        \item use the edge $\VS_{i,j}(3_{\beta_j})\rightarrow \M_{i,j}(0_{\alpha_i, \beta_j})$
        \item from $E_{\M_{i,j}}^{\text{right}}\big( (\alpha_i,\beta_j) \big)$  use the path $\M_{i,j}(0_{\alpha_i, \beta_j})\rightarrow \M_{i,j}(1_{\alpha_i,
\beta_j})\rightarrow \M_{i,j}(2_{\alpha_i, \beta_j})\rightarrow \M_{i,j}(t_2)$.
        \end{itemize}
\end{itemize}
The analysis for demand pairs of Type II is analogous, and therefore
omitted here.
\end{proof}

Our next lemma shows the reverse direction of the correctness of the reduction: 
a solution of cost at most $B^*$ for the instance $(G^*, \mathcal{D})$ of 
\bidsnP implies a solution for the instance $\big(k,n,\{S_{i,j}\}_{1\leq i,j\leq 
k} \big)$ of \gt. This implies that if the instance 
$\big(k,n,\{S_{i,j}\}_{1\leq i,j\leq k} \big)$ of \gt does not have a solution 
then the cost of an optimal solution (and hence the cost of an optimal planar 
solution, if one exists) for the instance $(G^*, \mathcal{D})$ of \bidsnP is 
more than $B^*$.

\begin{lem}
\label{lem:bidsnp-hard-gt}
If the instance $(G^*, \mathcal{D})$ of \bidsnP has a solution of cost at most 
$B^*$ then the instance $\big(k,n,\{S_{i,j}\}_{1\leq i,j\leq k} \big)$ of 
\gt has a solution.
\end{lem}
\begin{proof}
Suppose that the instance $(G^*, \mathcal{D})$ of \bidsnP has a solution, say $N:=\big(V(G^*), E^*\big)$, of cost at most $B^*$.

\begin{claim}
For every $1\leq i\leq k+1, 1\leq j\leq k$ the edge set $E^*$ restricted to the
horizontal secondary gadget $\VS_{i,j}$ satisfies the in-out property. Hence,
$\VS_{i,j}$ uses up weight of at least $B$ from the budget.
\label{lem:in-out-vertical}
\end{claim}
\begin{proof}
Looking at the demand pairs in $\mathcal{D}$ of Type I, we observe that
\begin{itemize}
\item $\VS_{i,j}(s_1)$ is the source of some demand pair whose other end-point
lies outside of $\VS_{i,j}$,
\item $\VS_{i,j}(s_2)$ is the source of some demand pair whose other end-point
lies outside of $\VS_{i,j}$,
\item $\VS_{i,j}(t_1)$ is the target of some demand pair whose other end-point
lies outside of $\VS_{i,j}$,
\item $\VS_{i,j}(t_2)$ is the target of some demand pair whose other end-point
lies outside of $\VS_{i,j}$.
\end{itemize}
Since the network $\big(V(G^*),E^*\big)$ is a solution of the \bidsn instance, it follows that there is a
path starting at $\VS_{i,j}(s_1)$ which must leave the gadget $\VS_{i,j}$, i.e.,
$\VS_{i,j}(s_1)$ can reach either a 0-vertex or a 3-vertex. The other three
conditions of \autoref{defn-in-out} follow by similar reasoning. By
\autoref{lem:macro-uniqueness-gadget}, it follows that $\VS_{i,j}$ uses up weight
of at least $B$ from the budget.
\end{proof}

Analogous claims hold also for the vertical secondary gadgets and main gadgets:
\begin{claim}
For every $1\leq i\leq k, 1\leq j\leq k+1$ the edge set $E^*$ restricted to the
vertical secondary gadget $\HS_{i,j}$ satisfies the in-out property.
Hence, $\HS_{i,j}$ uses up weight of at least $B$ from the budget.
\label{lem:in-out-horizontal}
\end{claim}

\begin{claim}
For every $1\leq i,j\leq k$ the edge set $E^*$ restricted to the main gadget
$\M_{i,j}$ satisfies the in-out property. Hence, $\M_{i,j}$ uses up weight of
at least $B$ from the budget.
\label{lem:in-out-main}
\end{claim}

From \autoref{lem:in-out-vertical}, \autoref{lem:in-out-horizontal} and
\autoref{lem:in-out-main} we know that each of the gadgets (horizontal
secondary, vertical secondary and main) use up at least weight $B$ in $E^*$. We
now claim that $E^*$ restricted to each vertical secondary gadget, horizontal
secondary gadget and main gadget has weight exactly $B$. Suppose there is at
least one gadget where $E^*$ has weight more than $B$. By
\autoref{lem:macro-uniqueness-gadget}, the weight of $E^*$ in this gadget is at
least $8M= B+M$. Since $M=k^4$ and $B^* = 4k+ k(k+1)\cdot B + k(k+1)\cdot B +
k^{2}\cdot (B+4)$, where $B=7M$, the weight of $E^*$ is at least
\begin{multline*}
\Big(k(k+1)\cdot B +k(k+1)\cdot B + k^{2}\cdot B \Big)+ M  = 
\Big(k(k+1)\cdot B +k(k+1)\cdot B + k^{2}\cdot B \Big)+ k^4 > \\
\Big(k(k+1)\cdot B +k(k+1)\cdot B + k^{2}\cdot B \Big)+ 4(k+k^2) = B^*,
\end{multline*}
which is a contradiction (since $k^4 > 4k+4k^2$ for $k\geq 3$).

Therefore, together with \autoref{lem:macro-uniqueness-gadget}, we have the following:

\begin{claim}
The weight of $E^*$ restricted to each main gadget and each secondary gadget is exactly $B=7M$. Moreover,
\begin{itemize}

\item for each $1\leq i\leq k+1, 1\leq j\leq k$, the horizontal secondary gadget
$\VS_{i,j}$ is represented by some $y_{i,j}\in [n]$ and is either right-oriented
or left-oriented,

\item for each $1\leq i\leq k, 1\leq j\leq k+1$, the vertical secondary
gadget $\HS_{i,j}$ is represented by some $x_{i,j}\in [n]$ and is either
right-oriented or left-oriented, and

\item for each $1\leq i, j\leq k$, the main gadget $\M_{i,j}$ is represented by
some $(\lambda_{i,j}, \delta_{i,j})\in S_{i,j}$ and is either right-oriented or
left-oriented.
\end{itemize}
\label{thm:representing-planar}
\end{claim}

We now show that for each $1\leq i,j\leq k$, the edge set $E^*$ must also contain
some \textcolor[rgb]{1.00,0.00,0.00}{red} edges which have exactly one end-point in
vertices of $\M_{i,j}$.

\begin{claim}
For each $1\leq i,j\leq k$, the edge set $E^*$ must contain at least one
\textcolor[rgb]{1.00,0.00,0.00}{red} edge of each of the following four types:
\begin{itemize}
  \item an edge with one end-point in the set of $3$-vertices of $\M_{i,j}$ and other end-point in the set of $0$-vertices of $\HS_{i,j}$
  \item an edge with one end-point in the set of $3$-vertices of $\M_{i,j}$ and other end-point in the set of $0$-vertices of $\VS_{i+1,j}$
  \item an edge with one end-point in the set of $0$-vertices of $\M_{i,j}$ and other end-point in the set of $3$-vertices of $\VS_{i,j}$
  \item  an edge with one end-point in the set of $0$-vertices of $\M_{i,j}$ and other end-point in the set of $3$-vertices of $\HS_{i,j+1}$
\end{itemize}
\label{lem:main-at-least-4-red-planar}
\end{claim}
\begin{proof}
We show that $E^*$ must use at least one red edge which has one end-point in
the set of $3$-vertices of $\M_{i,j}$ and the other end-point in the set of
$0$-vertices of $\HS_{i,j}$. Analogous arguments hold for the other three
secondary gadgets surrounding the main gadget $\M_{i,j}$. Note that these four red edges are distinct since the vertex sets of the secondary gadgets are pairwise disjoint.

We know by \autoref{thm:representing-planar} that $\HS_{i,j}$ is either
right-oriented or left-oriented. Suppose $\HS_{i,j}$ is right-oriented (the
other case is analogous). By
\autoref{lem:macro-uniqueness-gadget} and ~\autoref{thm:representing-planar}, we
know that the only base edges of $\HS_{i,j}$ picked in $E^*$ are
$\HS_{i,j}(0_{x_{i,j}})\rightarrow \HS_{i,j}(1_{x_{i,j}})\rightarrow
\HS_{i,j}(2_{x_{i,j}})\rightarrow \HS_{i,j}(3_{x_{i,j}})$ and the only connector
edges of $\HS_{i,j}$ picked in $E^*$ are $\HS_{i,j}(s_1)\rightarrow
\HS_{i,j}(1_{x_{i,j}}), \HS_{i,j}(s_2)\rightarrow \HS_{i,j}(2_{x_{i,j}}),
\HS_{i,j}(t_1)\leftarrow \HS_{i,j}(1_{x_{i,j}})$ and $\HS_{i,j}(t_2)\leftarrow
\HS_{i,j}(2_{x_{i,j}})$. But there is a Type II demand pair whose target is
$\HS_{i,j}(t_1)$: the path satisfying this demand pair has to enter $\HS_{i,j}$
through the vertex $\HS_{i,j}(0_{x_{i,j}})$. The only edges (which are not base
or connector edges of $\HS_{i,j}$) incident on the 0-vertices of $\HS_{i,j}$ have
their other end-point in the set of 3-vertices of $\M_{i,j}$. That is, $E^*$
contains a
red edge whose start vertex is a 3-vertex of $\M_{i,j}$ and end vertex is a
0-vertex of $\HS_{i,j}$.
\end{proof}

\begin{claim}
For each $1\leq i,j\leq k$, the edge set $E^*$ must contain at least one \textcolor[rgb]{1.00,0.50,0.00}{orange}
edge of each of the following types:
\begin{itemize}
\item an outgoing edge from $a_i$,
\item an incoming edge into $b_i$,
\item an outgoing edge from $c_j$, and
\item an incoming edge into $d_j$.
\end{itemize}
\label{lem:dotted-planar}
\end{claim}
\begin{proof}
Note that for each $i\in [k]$ the vertex $a_i$ is the source of two Type II demand pairs, viz.\ $(a_i,
\HS_{i,k+1}(t_1))$ and $(a_i, \HS_{i,k+1}(t_2))$. Hence $E^*$ must contain at
least one outgoing edge from $a_i$. The other three statements follow similarly.
\end{proof}

We now show that there is no slack, i.e., the weight of $E^*$ must be exactly $B^*$.

\begin{claim}
The weight of $E^*$ is exactly $B^*$, and is inclusion-wise minimal.
\label{thm:exact-B^*-planar}
\end{claim}
\begin{proof}
From \autoref{thm:representing-planar}, we know that each main gadget and each secondary gadget contributes towards a weight of
$B$ in $E^*$. \autoref{lem:main-at-least-4-red-planar} says that each main
gadget needs at least 4 red edges, and \autoref{lem:dotted-planar} says that the
\textcolor[rgb]{1.00,0.50,0.00}{orange} edges contribute at least $4k$ to weight of $E^*$. Since all these edges
are distinct, we have that the weight of $E^*$ is at least $k(k+1)\cdot B +
k(k+1)\cdot B + k^{2}\cdot B + 4(k+k^2) = B^*$. Hence, the weight of $E^*$ is
exactly $B^*$, and it is minimal (under edge deletions) since no edges have
zero weights.
\end{proof}

Consider a main gadget $\M_{i,j}$. The main gadget has four secondary gadgets
surrounding
it: $\HS_{i,j}$ below it, $\HS_{i,j+1}$ above it, $\VS_{i,j}$ to the left and
$\VS_{i+1,j}$ to the right. By \autoref{thm:representing-planar}, these gadgets
are represented by $x_{i,j}, x_{i,j+1}, y_{i,j}$ and $y_{i+1,j}$ respectively.
The main gadget $\M_{i,j}$ is represented by $(\lambda_{i,j}, \delta_{i,j})$.

\begin{claim}[propagation]
For every main gadget $\M_{i, j}$, we have
$x_{i,j}=\lambda_{i,j}=x_{i,j+1}$ and $y_{i,j}=\delta_{i,j}=y_{i+1,j}$.
\label{lem:agreement-tight-planar}
\end{claim}
\begin{proof}
Due to symmetry, it suffices to only argue that $x_{i,j}=\lambda_{i,j}$. Let us
assume for the sake of contradiction that $x_{i,j}\ne\lambda_{i,j}$. We will
now show that there is a vertex such that (1) there is exactly one edge adjacent
to it in the solution $E^*$ and (2) it does not belong to any demand pair.
Observe that removing its only adjacent edge from $E^*$ does not effect the
validity of the solution. This contradicts \autoref{thm:exact-B^*-planar} which
states that $E^*$ is minimal.

From \autoref{lem:main-at-least-4-red-planar} and
\autoref{thm:exact-B^*-planar}, it follows that $E^*$ contains exactly one red
edge, say $e_1$, which has one end-point in the set of $3$-vertices of
$\M_{i,j}$ and the other end-point in set of $0$-vertices of $\HS_{i,j}$. Also,
$E^*$ contains exactly one red edge, say $e_2$, which has one end-point in the
set of $3$-vertices of $\M_{i,j}$ and the other end-point in set of $0$-vertices
of $\VS_{i+1,j}$.

Observe that if $e_1$ does not have one endpoint at $\HS_{i, j}(0_{x_{i, j}})$,
then the vertex $\HS_{i, j}(0_{x_{i, j}})$ is the desired vertex. Hence, suppose
that one endpoint of the edge $e_1$ is $\HS_{i, j}(0_{x_{i, j}})$. The other
endpoint of $e_1$ must be $\M_{i, j}(3_{x_{i, j}, y})$ for some $y \in V_j$.
Since $x_{i, j} \ne \lambda_{i, j}$, we have $\M_{i, j}(3_{x_{i, j}, y}) \ne
\M_{i, j}(3_{\lambda_{i, j}, \delta_{i, j}})$. Suppose that one of the
end-points of $e_2$ is $\M_{i, j}(3_{x', y'})$. Since $\M_{i, j}(3_{x_{i, j}, y})
\ne \M_{i, j}(3_{\lambda_{i, j}, \delta_{i, j}})$, at least one of the following
must be true: $\M_{i, j}(3_{x_{i, j}, y}) \ne \M_{i, j}(3_{x', y'})$ or $\M_{i,
j}(3_{\lambda_{i, j}, \delta_{i, j}}) \ne \M_{i, j}(3_{x', y'})$. If $\M_{i,
j}(3_{x_{i, j}, y}) \ne \M_{i, j}(3_{x',y'})$, then
$\M_{i, j}(3_{x_{i, j}, y})$ is the desired vertex. Otherwise, if $\M_{i,
j}(3_{\lambda_{i, j}, \delta_{i, j}}) \ne \M_{i, j}(3_{x', y'})$, then $\M_{i,
j}(3_{\lambda_{i, j}, \delta_{i, j}})$ is the desired vertex.
In all cases, we have found a vertex with the desired properties. Hence, we
have arrived at a contradiction.
\end{proof}

By \autoref{lem:agreement-tight-planar}, it follows that for each
$1\leq i,j\leq k$ we have $x_{i,j}=\lambda_{i,j}=x_{i,j+1}$ and
$y_{i,j}=\delta_{i,j}=y_{i+1,j}$ in addition to $(\lambda_{i,j},
\delta_{i,j})\in S_{i,j}$ (by the definition of the main gadget). This implies
that the instance $\big( k, n, \{S_{i,j}\}_{1\leq i,j\leq k} \big)$ of \gt has a solution.
This concludes the proof of~\autoref{lem:bidsnp-hard-gt}.
\end{proof}

Finally we are ready to prove~\autoref{thm:lb-scheme-biDSN} which is restated below:
\thmlbschemebiDSN*
\begin{proof}
Each main gadget has $O(n^2)$ vertices and $G^*$ has $O(k^2)$ main gadgets. Each secondary gadget has $O(n)$ vertices and $G^*$ has $O(k^2)$ secondary gadgets. The number of border vertices in $G^*$ is $4k$. Hence, the total number of vertices in the graph $G^*$ is $O(n^{2}k^{2})=\poly(n,k)$. It is easy to see that $G^*$ can be constructed in $\poly(n,k)$ time from a given instance $\big( k, n, \{S_{i,j}\}_{1\leq i,j\leq k} \big)$ of \gt. It is known~\cite[Theorem 14.28]{pc-book} that $k\times k$ \gt is W[1]-hard parameterized by $k$, and under ETH cannot be solved in $f(k)\cdot n^{o(k)}$ time for any computable function $f$. Combining the two directions from~\autoref{lem:bidsnp-easy-gt} and~\autoref{lem:bidsnp-hard-gt}, we get a parameterized reduction from $k\times k$ \gt to an instance of \bidsnP with $|\mathcal{D}|=O(k^2)$ terminal pairs (\autoref{eqn:D-bidsnP}). Hence, it follows that \bidsnP is W[1]-hard parameterized by the number $k$ of terminal pairs and under ETH cannot be solved in $f(k)\cdot n^{o(\sqrt{k})}$ time for any computable function $f$.

Suppose now that there is an algorithm $\mathbb{A}$ which runs in time
$f(k,\eps)\cdot n^{o(\sqrt{k})}$ and computes an $(1+\eps)$-approximate
solution for \bidsnP. Recall that our reduction works as follows: the instance $\big( k, n, \{S_{i,j}\}_{1\leq i,j\leq k} \big)$ of \gt
is a YES instance if and only if the instance $(G^*, \mathcal{D})$ of \bidsnP has a planar solution of cost $B^* =
4k+ k(k+1)\cdot B
+ k(k+1)\cdot B + k^{2}\cdot (B+4) = O(k^6)$ since $B=O(k^4)$.
Consequently, consider running $\mathbb{A}$ with $\eps$ set to a
value such that $(1+\eps)\cdot B^{*} <B^{*}+1$ with $\eps$ being a function of
the parameter $k$ independent of $n$. Every edge of our constructed graph $G^*$
has weight at least $1$, and hence a $(1+\epsilon)$-approximation is in fact
forced to find a solution of cost at most~$B^*$, i.e., $\mathbb{A}$ finds
an optimum solution. By the previous paragraph, this is not possible.
\end{proof}

\subsection{W[1]-hardness for \altbidsn}
\label{sec:bidsn-w[1]}

The goal of this section is to prove \autoref{thm:lb-biDSN}.
We reduce from the \csi problem\footnote{This is sometimes also known as \textsc{Colored Subgraph Isomorphism}} introduced by \citet{marx-beat-treewidth}.

\begin{center}
\noindent\framebox{\begin{minipage}{5.0in}
\textbf{\csi} (PSI)\\
\emph{Input}: An undirected graph $G=(V_G,E_G)$ and $H=(V_H = \{1, 2, \ldots,
\ell\},E_H)$, and a partition of $V_G$ into
disjoint subsets $V_1,V_2,\ldots, V_{\ell}$ \\
\emph{Question}: Is there a function $\phi: V_H\rightarrow V_G$ such that
                \begin{enumerate}
                \item for every $i\in [\ell]$ we have $\phi(i)\in V_i$, and
                \item for every edge $ij\in E_H$ we have
$\phi(i)\phi(j)\in E_G$.
                \end{enumerate}
\end{minipage}}
\end{center}

The W[1]-hardness of \csi parameterized by $|E_H|$ follows since the W[1]-hard problem \mcc~\cite{mcc-1,mcc-2} is a special case when $H$ is a clique. Marx~\cite[Corollary 6.3]{marx-beat-treewidth} showed the following stronger lower bound: under ETH, \csi cannot be solved in time $f(|E_H|)\cdot |V_G|^{o\big(|E_H|/\log |E_H|\big)}$ for any computable function $f$. To prove~\autoref{thm:lb-biDSN}, we give a reduction which transforms an instance $(G,H)$ of \csi into an instance $(G^*, \mathcal{D})$ of \bidsn which has $|\mathcal{D}|=O(|E_H|)$ demand pairs. This reduction is a modification of that given for \autoref{thm:lb-scheme-biDSN} in \autoref{sec:bidsn-planar-w[1]}: there we had the special case when $H$ is a clique on $\ell$ vertices and hence could ensure planarity (\autoref{fig:bi-dsn-planar-big-picture}) of the optimum at the cost of a quadratic blowup in the number of demand pairs as compared to $|V_H|$. In this reduction, we lose that structure but achieve the condition that number of demand pairs in the instance of \bidsn is linear in $|E_H|$.

Consider an instance of \csi
given by two undirected graphs $G=(V_G,E_G)$ and $H=(V_H = \{1, 2, \ldots,
\ell\},E_H)$, and a partition of $V_G$ into disjoint subsets
$V_1,V_2,\ldots, V_{\ell}$. We now build an instance $(G^*,\mathcal{D})$ of \bidsn. %
Let $|E_H|=k$. We define the following quantities:
\begin{itemize}
  \item For each $1\leq i\leq \ell$, $E^{*}_H := E_H \cup \{ii\ : 1\leq i\leq \ell\}$, i.e., we introduce self-loops
  \item For each $1\leq i\neq j\leq \ell$, $E_{\{i,j\}}\subseteq E_G$ is the set of edges which have one endpoint in $V_i$ and the other endpoint in $V_j$. For $1\leq j\leq \ell$ let $E_{\{j,j\}} = \{xx\ :\ x\in V_j\}$.

  \item For each $1\leq i\leq \ell$, $N_{H}(i):= \{j\ :\ ij\in E^*_H\}$
  \item For each $1\leq i\leq \ell$, $\alpha_i= \min \big\{j\ :\ j\in N_{H}(i)\big\}$ and $\beta_i=\max \big\{j\ :\ j\in N_{H}(i)\big\}$.
  \item For each $1\leq i\leq \ell$, $N'_{H}(i):= N_{H}(i)\cup \{\ell+1\}$

  \item For each $1\leq i, j\leq \ell$ such that $ij\in E^*_H$ we define  $\nextone_{j}(i)=\min\big\{\{\ell+1\}\cup \{r\ : r>j, ri\in E^*_H\}\big\}$. For each $i\in [\ell]$ we define $\nextone_{0}(i)=\alpha_i$ and $\nextone_{\ell+1}(i)=\ell+1$.
  \item For each $1\leq i, j\leq \ell$ such that $ij\in E^*_H$ we define $\prevone_{j}(i)=\max\big\{\{0\}\cup \{r\ : r<j, ri\in E^*_H\}\big\}$. For each $i\in [\ell]$ we define $\prevone_{\ell+1}(i)=\beta_i$ and $\prevone_{0}(i)=0$.
\end{itemize}

\begin{remark}
\label{remark:H-is-connected-for-CSI}
Note that the quantities $\alpha_i$ and $\beta_i$ are well-defined since we can
assume that $H$ is connected (and hence has no isolated vertices) since
otherwise we can solve the \csi instance separately
for each component of $H$.
\end{remark}

The graph~$G^*$ has two types of gadgets: the \emph{main gadget} and the \emph{secondary gadget}. Each of these gadgets are copies of the ``uniqueness gadget'' from
\autoref{sec:uniqueness-gadget} with~$M=k^4$.
\begin{itemize}
  \item For each $1\leq i,j\leq \ell$ such that $ij\in E^{*}_H$ the gadget $\M_{i,j}$ (corresponding to the set $E_{\{i,j\}}$) is a copy of~$U_{|E_{\{i,j\}}|}$.
  \item For each $i\in [\ell]$ and each $j\in N'_{H}(i)$, the vertical secondary gadget $\HS_{i,j}$ is a copy of $U_{|V_i|}$.
  \item For each $j\in [\ell]$ and each $i\in N'_{H}(j)$, the horizontal secondary gadget $\VS_{i,j}$ is a copy of $U_{|V_j|}$.
\end{itemize}

Hence, we have a total of $2k+\ell$ main gadgets, and $2k+2\ell$ horizontal and vertical secondary gadgets each. For each $1\leq i,j\leq \ell$ such that $ij\in E^{*}_H$, the main gadget
$\M_{i,j}$ is surrounded (\autoref{fig:bi-dsn-main-and-4-surrounding}) by the following four secondary gadgets:
\begin{itemize}
  \item $\VS_{i,j}$ on the left,
  \item $\HS_{i,j}$ on the bottom,
  \item $\VS_{\nextone_{i}(j),j}$ on the right, and
  \item $\HS_{i,\nextone_{j}(i)}$ on the top.
\end{itemize}

Recall that for each $1\leq i,j\leq \ell$ such that $ij\in E^{*}_H$, the main gadget $\M_{i,j}$ is a copy of $U_{|E_{\{i,j\}}|}$ with $M=k^4$. With
slight abuse of notation, we assume that the rows of $\M_{i,j}$ are indexed by
the set $\big\{\{x,y\}\ :\ \{x,y\}\in E_{\{i,j\}, x\in V_i, y\in V_j}\big\}$.
For each $1\leq i,j\leq \ell$ such that $ij\in E^{*}_H$, we add an edge (in \textcolor[rgb]{1.00,0.00,0.00}{red} color) of
weight $1$ for each $\{x,y\}\in E_{\{i,j\}}$ connecting
\begin{itemize}
\item $\HS_{i,\nextone_{j}(i)}(3_x)$ and $\M_{i,j}(0_{(x,y)})$,
\item $\VS_{i,j}(3_y)$ and $\M_{i,j}(0_{(x,y)})$,
\item $\VS_{\nextone_{i}(j),j}(0_y)$ and $\M_{i,j}(3_{(x,y)})$, and
\item $\HS_{i,j}(0_x)$ and $\M_{i,j}(3_{(x,y)})$.
\end{itemize}

Introduce the following $4\ell$ vertices (which we call \emph{border} vertices):
\begin{itemize}
\item $a_1, a_2, \ldots, a_{\ell}$
\item $b_1, b_2, \ldots, b_\ell$
\item $c_1, c_2, \ldots, c_\ell$
\item $d_1, d_2, \ldots, d_\ell$
\end{itemize}

\begin{center}
\noindent\framebox{\begin{minipage}{\textwidth}
We follow the convention that for each $i\in [\ell]$:
\begin{itemize}
\item $\M_{0,i}(\cdot) = c_i$ and $\M_{\ell+1, i}(\cdot) = d_i$ irrespective of the argument, and
\item $\M_{i,0}(\cdot) = b_i$ and $\M_{i, \ell+1}(\cdot) = a_i$ irrespective of the argument.
\end{itemize}
\end{minipage}}
\end{center}

For each $i\in [\ell]$ add an edge (in \textcolor[rgb]{1.00,0.50,0.00}{orange} color in \autoref{fig:bi-dsn-planar-big-picture}) with weight 1 connecting
\begin{itemize}
\item $a_{i}$ and $\HS_{i,\ell+1}(0_{x})$ for each $x\in V_i$,
\item $b_{i}$ and $\HS_{i,\alpha_i}(3_{x})$ for each $x\in V_i$,
\item $c_{i}$ and $\VS_{\alpha_i,i}(0_{x})$ for each $x\in V_i$, and
\item $d_{i}$ and $\VS_{\ell+1,i}(3_{x})$ for each $x\in V_i$.
\end{itemize}

This concludes the construction of the graph $G^*$. Note that we bidirect each edge of $G^*$.
We now define the set of demand pairs:

 \begin{figure}

 \centering
 \begin{tikzpicture}[scale=0.5]

 \foreach \j in {0,1}
 {
 \begin{scope}[shift={(-21*\j,0)}]

 \foreach \j in {0,1,2,3}
 {
 \begin{scope}[shift={(0,\j)}]
 \foreach \i in {0,1,2,3}
 {
 \draw [black] plot [only marks, mark size=3, mark=*] coordinates {(12+2*\i,3)};
 }

 \foreach \i in {0,1,2}
 {
 \path (12+2*\i,3) node(a) {} (12+2*\i+2,3) node(b) {};
         \draw[ultra thick,black] (a) -- (b);
 }
 \end{scope}
 }

 \draw [red] plot [only marks, mark size=3, mark=*] coordinates {(14,2)} ;
 \draw [red] plot [only marks, mark size=3, mark=*] coordinates {(16,2)} ;
 \draw [red] plot [only marks, mark size=3, mark=*] coordinates {(14,7)} ;
 \draw [red] plot [only marks, mark size=3, mark=*] coordinates {(16,7)} ;

 \foreach \i in {0,1}
 {
 \begin{scope}[shift={(2*\i,0)}]
 \path (14,7) node(a) {} (14,6) node(b) {};
  \draw [thick,dotted] (a) to (b);

 \path (14,7) node(a) {} (14,5) node(b) {};
  \draw [thick,dotted] (a.north) to [out=60,in=60] (b.north);

 \path (14,7) node(a) {} (14,4) node(b) {};
  \draw [thick,dotted] (a.north) to [out=45,in=45] (b.north);

 \path (14,7) node(a) {} (14,3) node(b) {};
  \draw [thick,dotted] (a.north) to [out=45,in=45] (b.north);

 \path (14,2) node(a) {} (14,3) node(b) {};
  \draw [thick,dotted] (a) to (b);

 \path (14,2) node(a) {} (14,4) node(b) {};
  \draw [thick,dotted] (a.north) to [out=120,in=120] (b.south);

 \path (14,2) node(a) {} (14,5) node(b) {};
  \draw [thick,dotted] (a.north) to [out=135,in=135] (b.south);

 \path (14,2) node(a) {} (14,6) node(b) {};
  \draw [thick,dotted] (a.north) to [out=135,in=135] (b.south);
 \end{scope}
 }

 \end{scope}
 }

 \foreach \j in {0,1}
 {
 \begin{scope}[shift={(0,21*\j)}]

 \foreach \j in {0,1,2,3}
 {
 \begin{scope}[shift={(\j,0)}]
 \foreach \i in {0,1,2,3}
 {
 \draw [black] plot [only marks, mark size=3, mark=*] coordinates {(3,-3-2*\i)};
 }

 \foreach \i in {0,1,2}
 {
 \path (3,-3-2*\i) node(a) {} (3,-3-2*\i-2) node(b) {};
         \draw[ultra thick,black] (a) -- (b);
 }
 \end{scope}
 }

 \draw [red] plot [only marks, mark size=3, mark=*] coordinates {(2,-5)} ;
 \draw [red] plot [only marks, mark size=3, mark=*] coordinates {(2,-7)} ;
 \draw [red] plot [only marks, mark size=3, mark=*] coordinates {(7,-5)} ;
 \draw [red] plot [only marks, mark size=3, mark=*] coordinates {(7,-7)} ;

 \foreach \i in {0,1}
 {
 \begin{scope}[shift={(0,-2*\i)}]
 \path (7,-5) node(a) {} (6,-5) node(b) {};
  \draw [thick,dotted] (a) to (b);

 \path (7,-5) node(a) {} (5,-5) node(b) {};
  \draw [thick,dotted] (a.south) to [out=210,in=210] (b.south);

 \path (7,-5) node(a) {} (4,-5) node(b) {};
  \draw [thick,dotted] (a.south) to [out=210,in=210] (b.south);

 \path (7,-5) node(a) {} (3,-5) node(b) {};
  \draw [thick,dotted] (a.south) to [out=210,in=210] (b.south);

 \path (2,-5) node(a) {} (3,-5) node(b) {};
  \draw [thick,dotted] (a) to (b);

 \path (2,-5) node(a) {} (4,-5) node(b) {};
  \draw [thick,dotted] (a.north) to [out=135,in=135] (b.south);

 \path (2,-5) node(a) {} (5,-5) node(b) {};
  \draw [thick,dotted] (a.north) to [out=135,in=135] (b.south);

 \path (2,-5) node(a) {} (6,-5) node(b) {};
  \draw [thick,dotted] (a.north) to [out=135,in=135] (b.south);
 \end{scope}
 }

 \end{scope}
 }

 \foreach \j in {0,1,2}
 {
 \begin{scope}[shift={(0,1.5*\j)}]
 \foreach \i in {0,1,2,3}
 {
\draw [black] plot [only marks, mark size=3, mark=*] coordinates {(1.5+2*\i,3)};
 }

 \foreach \i in {0,1,2}
 {
 \path (1.5+2*\i,3) node(a) {} (1.5+2*\i+2,3) node(b) {};
         \draw[ultra thick,black] (a) -- (b);
 }
 \end{scope}
 }

 \draw [red] plot [only marks, mark size=3, mark=*] coordinates {(3.5,1.5)} ;
 \draw [red] plot [only marks, mark size=3, mark=*] coordinates {(5.5,1.5)} ;
 \draw [red] plot [only marks, mark size=3, mark=*] coordinates {(3.5,7.5)} ;
 \draw [red] plot [only marks, mark size=3, mark=*] coordinates {(5.5,7.5)} ;

 \foreach \i in {0,1}
 {
 \begin{scope}[shift={(2*\i,0)}]
 \path (3.5,7.5) node(a) {} (3.5,6) node(b) {};
  \draw [thick,dotted] (a) to (b);

 \path (3.5,7.5) node(a) {} (3.5,4.5) node(b) {};
  \draw [thick,dotted] (a.north) to [out=60,in=60] (b.north);

 \path (3.5,7.5) node(a) {} (3.5,3) node(b) {};
  \draw [thick,dotted] (a.north) to [out=45,in=45] (b.north);

 \path (3.5,1.5) node(a) {} (3.5,3) node(b) {};
  \draw [thick,dotted] (a) to (b);

 \path (3.5,1.5) node(a) {} (3.5,4.5) node(b) {};
  \draw [thick,dotted] (a.north) to [out=120,in=120] (b.south);

 \path (3.5,1.5) node(a) {} (3.5,6) node(b) {};
  \draw [thick,dotted] (a.north) to [out=135,in=135] (b.south);

 \end{scope}
 }

 \path (1.5,4.5) node(a) {} (-3,4) node(b) {};
  \draw [thick,red] (a) to (b);

 \path (7.5,4.5) node(a) {} (12,4) node(b) {};
  \draw [thick,red] (a) to (b);

 \path (1.5,4.5) node(a) {} (5,12) node(b) {};
  \draw [thick,red] (a) .. controls (-2,9) .. (b);

 \path (7.5,4.5) node(a) {} (5,-3) node(b) {};
  \draw [thick,red] (a) .. controls (11,0) .. (b);

 \draw [green] plot [only marks, mark size=0, mark=*] coordinates {(-6,1)}
node[label={[xshift=0mm,yshift=-7mm] \Large{$\VS_{i,j}$}}] {};

 \draw [green] plot [only marks, mark size=0, mark=*] coordinates {(15,1)}
node[label={[xshift=0mm,yshift=-7mm] \Large{$\VS_{\nextone_{i}(j),j}$}}] {};

 \draw [blue] plot [only marks, mark size=0, mark=*] coordinates {(4.5,-10)}
node[label={[xshift=0mm,yshift=-6mm] \Large{$\HS_{i,j}$}}] {};

 \draw [blue] plot [only marks, mark size=0, mark=*] coordinates {(4.5,19)}
node[label={[xshift=0mm,yshift=-4mm] \Large{$\HS_{i,\nextone_{j}(i)}$}}] {};

 \draw [black] plot [only marks, mark size=0, mark=*] coordinates {(2,0)}
node[label={[xshift=4mm,yshift=-1mm] \Large{$\M_{i,j}$}}] {};

 \draw [black] plot [only marks, mark size=0, mark=*] coordinates {(1.5,4.5)}
node[label={[xshift=-10mm,yshift=-2mm] \footnotesize{$\M_{i,j}(0_{x,y})$}}] {};

 \draw [black] plot [only marks, mark size=0, mark=*] coordinates {(7.5,4.5)}
node[label={[xshift=3mm,yshift=-1mm] \footnotesize{$\M_{i,j}(3_{x,y})$}}] {};

 \draw [black] plot [only marks, mark size=0, mark=*] coordinates {(5,12)}
node[label={[xshift=6mm,yshift=-7mm]
\footnotesize{$\HS_{i,\nextone_{j}(i)}(3_{x})$}}] {};

 \draw [black] plot [only marks, mark size=0, mark=*] coordinates {(-3,4)}
node[label={[xshift=6mm,yshift=-7mm] \footnotesize{$\VS_{i,j}(3_{y})$}}] {};

 \draw [black] plot [only marks, mark size=0, mark=*] coordinates {(5,-3)}
node[label={[xshift=-2mm,yshift=0mm] \footnotesize{$\HS_{i,j}(0_{x})$}}] {};

 \draw [black] plot [only marks, mark size=0, mark=*] coordinates {(12,4)}
node[label={[xshift=-5mm,yshift=-6mm]
\footnotesize{$\VS_{\nextone_{i}(j),j}(0_{y})$}}] {};

 \end{tikzpicture}

 \caption{A zoomed-in view of the main gadget $\M_{i,j}$ surrounded by four
secondary gadgets: vertical gadget $\HS_{i,\nextone_{j}(i)}$ on the top,
horizontal gadget $\VS_{i,j}$ on the left, vertical gadget $\HS_{i,j}$ on the
bottom and horizontal gadget $\VS_{\nextone_{i}(j),j}$ on the right. The main gadget $\M_{i,j}$ is a copy of the uniqueness gadget $U_{|E_{\{i,j\}}|}$ (see~\autoref{sec:uniqueness-gadget}).
The vertical gadgets $\HS_{i,j}$ and $\HS_{i,\nextone_{j}(i)}$ are copies of $U_{|V_i|}$. The horizontal gadgets $\VS_{i,j}$ and $\VS_{\nextone_{i}(j),j}$ are copies of $U_{|V_j|}$.
The only inter-gadget edges are the red
edges: they have one end-point in a main gadget and the other end-point in a
secondary gadget. We have shown four such red edges which are introduced for
every~$(x,y)\in E_{\{i,j\}}$.
\label{fig:bi-dsn-main-and-4-surrounding}}
\end{figure}

\begin{center}
\noindent\framebox{\begin{minipage}{\textwidth}
The set of demand pairs $\mathcal{D}$ is given by:
\begin{itemize}
\item \underline{Type I}: For each $j\in [\ell]$ and each $i\in N'_{H}(j)$, we add the following four pairs involving the vertices of the horizontal secondary gadget $\VS_{i,j}$:
        \begin{itemize}
          \item $\big(\M_{\prevone_{i}(j),j}(s_1), \VS_{i,j}(t_1)\big)$
          \item $\big(\M_{\prevone_{i}(j),j}(s_2), \VS_{i,j}(t_2)\big)$
          \item $\big(\VS_{i,j}(s_1), \M_{i,j}(t_1)\big)$
          \item $\big(\VS_{i,j}(s_2), \M_{i,j}(t_2)\big)$
        \end{itemize}

\item \underline{Type II}: For each $i\in [\ell]$ and each $j\in N'_{H}(i)$, we add the following four pairs involving the vertices of the vertical secondary gadget $\HS_{i,j}$:
        \begin{itemize}
          \item $\big(\M_{i,j}(s_1), \HS_{i,j}(t_1)\big)$
          \item $\big(\M_{i,j}(s_2), \HS_{i,j}(t_2)\big)$
          \item $\big(\HS_{i,j}(s_1), \M_{i,\prevone_{j}(i)}(t_1)\big)$
          \item $\big(\HS_{i,j}(s_2), \M_{i,\prevone_{j}(i)}(t_2)\big)$
        \end{itemize}

\end{itemize}
\end{minipage}}
\end{center}

We have $2k+2\ell$ vertical and horizontal secondary gadgets each, and we add
$O(1)$ demand pairs corresponding to each of these gadgets. Hence, the total
number of demand pairs is
\begin{equation}\label{eqn:D-bidsn}
|\mathcal{D}|=O(2k+2\ell) = O(k)
\end{equation}
since we can assume that $H$ is connected (\autoref{remark:H-is-connected-for-CSI}) which implies $k\geq \ell-1$.

Fix the budget $B^*= 4\ell+ (2k+2\ell)\cdot B + (2k+2\ell)\cdot B +
(2k+\ell)\cdot (B+4)$
where $B=7M= 7k^4$. The high-level intuition is the following: we need $4\ell$ (via \textcolor[rgb]{1.00,0.50,0.00}{orange} edges)
from the budget just to include one edge incident on each of the $4\ell$ border vertices (each of which is part of a demand pair). We have
$(2k+2\ell)$
horizontal and vertical secondary gadgets each. We argue that any solution for
\bidsn must satisfy the in-out property in each of the secondary gadgets,
and then invoke \autoref{lem:macro-uniqueness-gadget}. Finally, for each main
gadget, we again show that it must satisfy the in-out property and hence has
cost at least~$B$. %
However, here we show that we additionally need at least four red edges (which
have exactly one end-point in a main gadget) and hence the cost of any \bidsn
solution restricted to edges having at least one end-point in each main gadget
is at least $B+4$. Since we have $2k+\ell$ main gadgets, this completely uses up
the budget $B^*$.

We now prove the correctness of our reduction by showing that the instance $(G,H)$ of \csi is a YES instance if and only if there is a solution to the \bidsn instance ($G^*,\mathcal{D})$ with cost at most $B^*$. First we show the forward direction:

\begin{lem}
If the instance $(G,H)$ of \csi is a YES instance, then the instance $(G^*, \mathcal{D})$ of \bidsn has a solution of cost at most $B^*$.
\label{lem:bidsn-easy-csi}
\end{lem}
\begin{proof}
Suppose that the \pname{PSI} instance is a YES instance, i.e., there exists a
function $\phi:
V_H\rightarrow V_G$ such that
        \begin{enumerate}
          \item for every $i\in [\ell]$ we have $\phi(i)\in V_i$, and
          \item for every edge $ij\in E_H$ we have $\phi(i)\phi(j)\in E_G$.
        \end{enumerate}
We now show that there is an edge set $E'\subseteq E(G^*)$ of weight $B^*$ such that the network $\big(V(G^*),E'\big)$ is a solution for the \bidsn instance $(G^*,\mathcal{D})$. The edge set $E'$ consists of the following edges:
\begin{itemize}
  \item For each $1\leq i\leq \ell$, pick the edges $a_i \to
\HS_{i,\ell+1}(0_{\phi(i)})$ and $\HS_{i,\alpha_i}(3_{\phi(i)})\to
b_i$.
  \item For each $1\leq j\leq \ell$, pick the edges $c_j \to
\HS_{\alpha_j,j}(0_{\phi(j)})$ and $\HS_{\ell+1, j}(3_{\phi(j)})
\to d_j$.
  \item For each $1\leq i,j\leq \ell$ such that $ij\in E^*_H$ pick the
following edges (guaranteed to exist by \autoref{crl:macro}).
        \begin{itemize}
           \item The set of edges in $\M_{i,j}$ which is oriented rightwards,
represents $\{\phi(i), \phi(j)\}$ and has weight $M$. Note that this is possible since $\M_{i,j}$ has a row corresponding to each edge of $E_{\{i,j\}}$, and $\phi(i)\phi(j)\in E_{\{i,j\}}\subseteq E_G$ by $(1)$ and $(2)$.
        \end{itemize}
  \item For each $i\in [\ell], j\in N'_{H}(i) $ pick the following edges
(guaranteed to exist by \autoref{crl:macro}).
        \begin{itemize}
          \item The set of edges in $\HS_{i,j}$ which is oriented rightwards,
represents $\phi(i)$ and has weight $M$. Note that this is possible since $\HS_{i,j}$ has a row corresponding to each vertex of $V_{i}$, and $\phi(i)\in V(i)$ by $(1)$.
        \end{itemize}
    \item For each $j\in [\ell], i\in N'_{H}(j)$ pick the following edges
(guaranteed to exist by \autoref{crl:macro}).
        \begin{itemize}
          \item The set of edges in $\VS_{i,j}$ which is oriented rightwards,
represents $\phi(j)$ and has weight $M$. Note that this is possible since $\VS_{i,j}$ has a row corresponding to each vertex of $V_{j}$, and $\phi(j)\in V(j)$ by $(1)$.
        \end{itemize}

    \item For each $1\leq i,j\leq \ell$ such that $ij\in E^*_H$ pick the four
red edges connecting main gadgets and secondary gadgets
given by
        \begin{itemize}
           \item $\HS_{i,\nextone_{j}(i)}(3_{\phi(i)}) \to
\M_{i,j}(0_{\phi(i),\phi(j)})$,
           \item $\VS_{i,j}(3_{\phi(j)}) \to \M_{i,j}(0_{\phi(i),\phi(j)})$,
           \item $\M_{i,j}(3_{\phi(i),\phi(j)}) \to \HS_{i,j}(0_{\phi(i)})$, and
           \item $\M_{i,j}(3_{\phi(i),\phi(j)}) \to
\VS_{\nextone_{i}(j),j}(0_{\phi(j)})$.
         \end{itemize}
    Observe that these four red edges are guaranteed to exist since either $i\neq j$ which implies $\phi(i)\phi(j)\in
E_{\{i,j\}}\subseteq E_G$ by $(2)$ or otherwise $i=j$ which implies $\phi(i)\phi(i)\in
E_{\{i,i\}}$ by $(1)$.
\end{itemize}

It is easy to see that the cost of $E'$ is $4\ell+ (2k+2\ell)\cdot B +
(2k+2\ell)\cdot B + (2k+\ell)\cdot (B+4)=B^*$. We now show that each demand pair
of Type I is satisfied by the edge set $E'$.
\begin{itemize}
  \item The pair $(\M_{\prevone_{i}(j),j}(s_1), \VS_{i,j}(t_1))$ is satisfied by
the path $\M_{\prevone_{i}(j),j}(s_1)\rightarrow
\M_{\prevone_{i}(j),j}(1_{\phi(i),\phi(j)})\rightarrow
\M_{\prevone_{i}(j),j}(2_{\phi(i),\phi(j)})\rightarrow
\M_{\prevone_{i}(j),j}(3_{\phi(i),\phi(j)})\rightarrow
\VS_{i,j}(0_{\phi(j)})\rightarrow \VS_{i,j}(1_{\phi(j)})\rightarrow \VS_{i,j}(t_1)$
in~$N$, since $i=\nextone_{\prevone_{i}(j)}(j)$.
  \item The pair $(\M_{\prevone_{i}(j),j}(s_2), \VS_{i,j}(t_2))$ is satisfied by
the path $\M_{\prevone_{i}(j),j}(s_2)\rightarrow
\M_{\prevone_{i}(j),j}(2_{\phi(i),\phi(j)})\rightarrow
\M_{\prevone_{i}(j),j}(3_{\phi(i),\phi(j)})\rightarrow
\VS_{i,j}(0_{\phi(j)})\rightarrow \VS_{i,j}(1_{\phi(j)})\rightarrow
\VS_{i,j}(2_{\phi(j)}) \rightarrow \VS_{i,j}(t_2)$ in $N$, since
$i=\nextone_{\prevone_{i}(j)}(j)$.
  \item The pair $(\VS_{i,j}(s_1), \M_{i,j}(t_1))$ is satisfied by the path
$\VS_{i,j}(s_1)\rightarrow \VS_{i,j}(1_{\phi(j)})\rightarrow
\VS_{i,j}(2_{\phi(j)})\rightarrow \VS_{i,j}(3_{\phi(j)})\rightarrow
\M_{i,j}(0_{\phi(i),\phi(j)})\rightarrow \M_{i,j}(1_{\phi(i),\phi(j)})\rightarrow
\M_{i,j}(t_1)$ in $N$
  \item The pair $(\VS_{i,j}(s_2), \M_{i,j}(t_2))$ is satisfied by the path
$\VS_{i,j}(s_2)\rightarrow \VS_{i,j}(2_{\phi(j)})\rightarrow
\VS_{i,j}(3_{\phi(j)})\rightarrow \M_{i,j}(0_{\phi(i),\phi(j)})\rightarrow
\M_{i,j}(1_{\phi(i),\phi(j)})\rightarrow \M_{i,j}(2_{\phi(i),\phi(j)})\rightarrow
\M_{i,j}(t_2)$ in $N$
\end{itemize}

The proof of satisfiability for the demand pairs of Type II is analogous.
Hence, the network $\big(V(G^*),E'\big)$ is indeed a solution for the \bidsn instance.
\end{proof}

Our next lemma shows the reverse direction of the correctness of the reduction: the existence of a solution of small cost for the \bidsn instance implies a solution for the \csi instance.

\begin{lem}
\label{lem:bidsn-hard-csi}
If the instance $(G^*, \mathcal{D})$ of \bidsn has a solution of cost at most $B^*$, then the instance $(G,H)$ of \csi is a YES instance.
\end{lem}
\begin{proof}

The arguments here are almost identical to those from~\autoref{lem:bidsnp-hard-gt}.
Suppose that \bidsn has a solution, say the network $\big(V(G^*),N\big)$, of cost at most $B^* = 4\ell+
(2k+2\ell)\cdot B + (2k+2\ell)\cdot B + (2k+\ell)\cdot (B+4)$.

\begin{claim}
For any $j\in [\ell], i\in N'_{H}(j)$ the edges in $N$ which have both
end-points in the horizontal secondary gadget $\VS_{i,j}$ satisfy the in-out
property. Hence, $\VS_{i,j}$ uses up weight of at least $B$ from the budget.
\label{lem:in-out-vertical-general}
\end{claim}
\begin{proof}
Looking at the demand pairs in $\mathcal{D}$ of Type I, we observe that
\begin{itemize}
\item $\VS_{i,j}(s_1)$ is the source of some demand pair whose other end-point
lies outside of $\VS_{i,j}$,
\item $\VS_{i,j}(s_2)$ is the source of some demand pair whose other end-point
lies outside of $\VS_{i,j}$,
\item $\VS_{i,j}(t_1)$ is the target of some demand pair whose other end-point
lies outside of $\VS_{i,j}$,
\item $\VS_{i,j}(t_2)$ is the target of some demand pair whose other end-point
lies outside of $\VS_{i,j}$.
\end{itemize}
Since the network $\big(V(G^*),N\big)$ is a solution of the \bidsn instance, it follows that there is a
path in $N$ starting at $\VS_{i,j}(s_1)$ which must leave the gadget $\VS_{i,j}$, i.e.,
$\VS_{i,j}(s_1)$ can reach either a $0$-vertex or a $3$-vertex. The other three
conditions of \autoref{defn-in-out} follow by similar reasoning. By
\autoref{lem:macro-uniqueness-gadget}, it follows that the edges in $N$ which
have both endpoints in $\VS_{i,j}$ use up weight
of at least $B$ from the budget.
\end{proof}

Analogous claims hold also for vertical secondary gadgets and main gadgets:
\begin{claim}
For any $i\in [\ell], j\in N'_{H}(i)$ the edges in $N$ which have both
end-points in the vertical secondary gadget $\HS_{i,j}$ satisfy the
in-out property. Hence, $\HS_{i,j}$ uses up weight of at least $B$ from the
budget.
\label{lem:in-out-horizontal-general}
\end{claim}

\begin{claim}
For every $i,j$ such that $ij\in E^*_H$ the edges in $N$ which have both
end-points in the main gadget $\M_{i,j}$ satisfy the in-out property. Hence,
$\M_{i,j}$ uses up weight of at least $B$ from the budget.
\label{lem:in-out-main-general}
\end{claim}

From \autoref{lem:in-out-vertical-general},
\autoref{lem:in-out-horizontal-general} and
\autoref{lem:in-out-main-general} we know that each of the gadgets (horizontal
secondary, vertical secondary and main) use up at least weight $B$ in $N$. We
now claim that the edge set $N$ restricted to each vertical secondary gadget, horizontal
secondary gadget and main gadget has weight exactly $B$. Suppose there is at
least one gadget where the edges of $N$ have weight more than $B$. By
\autoref{lem:macro-uniqueness-gadget}, the weight of edges of $N$ in this gadget
is at
least $8M= B+M$. Since $M=k^4$ and $B^* = 4\ell+ (2k+2\ell)\cdot B +
(2k+2\ell)\cdot B +
(2k+\ell)\cdot (B+4)$, where $B=7M$, the weight of $N$ is at least
\begin{align*}
\Big((2k+2\ell)\cdot B +(2k+2\ell)\cdot B &+ (2k+\ell)\cdot B \Big)+ M\\  
&=\Big((2k+2\ell)\cdot B +(2k+2\ell)\cdot B + (2k+\ell)\cdot B \Big)+ k^4 \\
&> \Big((2k+2\ell)\cdot B +(2k+2\ell)\cdot B + (2k+\ell)\cdot B \Big) + 24k \\
&\geq \Big((2k+\ell)\cdot B +(2k+\ell)\cdot B + (2k+\ell)\cdot B \Big) + 4\ell +
4(2k+\ell)\\
&= B^*,
\end{align*}
which is a contradiction (since $k^4> 24k$ for $k\geq 3$, and $k\geq 
\ell-1$ since by Remark~\ref{remark:H-is-connected-for-CSI} we can assume that 
$H$ is connected).

Therefore, we have the following claim which follows from
\autoref{lem:macro-uniqueness-gadget}:

\begin{claim}
The weight of $N$ restricted to each gadget is exactly $B=7M$. Moreover,
\begin{itemize}

\item for each $j\in [\ell], i\in N'_{H}(j)$, the horizontal secondary gadget
$\VS_{i,j}$ is represented by some $y_{i,j}\in V_j$ and is either right-oriented
or left-oriented,

\item for each $i\in [\ell], j\in N'_{H}(i)$, the vertical secondary
gadget $\HS_{i,j}$ is represented by some $x_{i,j}\in V_i$ and is either
right-oriented or left-oriented, and

\item for each $i,j$ such that $ij\in E^*_H$, the main gadget $\M_{i,j}$ is
represented by
some $(\lambda_{i,j}, \delta_{i,j})\in E_{i,j}$ and is either right-oriented or
left-oriented.
\end{itemize}
\label{thm:representing}
\end{claim}

We now show that for each $1\leq i,j\leq \ell$ such that $ij\in E^*_H$, the edge set $N$ must
also contain some \textcolor[rgb]{1.00,0.00,0.00}{red} edges which have
exactly one end-point among vertices of $\M_{i,j}$.

\begin{restatable}{claim}{lemmainatleastfourred}
For each $1\leq i,j\leq \ell$ such that $ij\in E^*_H$, the edge set $N$ must
also contain at least one
\textcolor[rgb]{1.00,0.00,0.00}{red} edge of each of the following four types:
\begin{itemize}
  \item an edge with one end-point in the set of $3$-vertices of $\M_{i,j}$ and other end-point in the set of $0$-vertices of $\HS_{i,j}$
  \item an edge with one end-point in the set of $3$-vertices of $\M_{i,j}$ and other end-point in the set of $0$-vertices of $\VS_{\nextone_{i}(j),j}$
  \item an edge with one end-point in the set of $0$-vertices of $\M_{i,j}$ and other end-point in the set of $3$-vertices of $\VS_{i,j}$
  \item  an edge with one end-point in the set of $0$-vertices of $\M_{i,j}$ and other end-point in the set of $3$-vertices of $\HS_{i,\nextone_{j}(i)}$
\end{itemize}
\label{lem:main-at-least-4-red}
\end{restatable}
\begin{proof}
We show that $N$ must use at least one red edge which has one end-point in
the set of $3$-vertices of $\M_{i,j}$ and the other end-point in the set of
$0$-vertices of $\HS_{i,j}$. Analogous arguments hold for the other three
secondary gadgets surrounding the main gadget $\M_{i,j}$ and hence we get the
lower bound of four red edges as claimed (note that the vertex sets of the
secondary gadgets are pairwise disjoint, and hence these edges are distinct).

We know by \autoref{thm:representing} that $\HS_{i,j}$ is either
right-oriented or left-oriented. Suppose $\HS_{i,j}$ is right-oriented (the case when $\HS_{i,j}$ is left-oriented is analogous). Also, by
\autoref{lem:macro-uniqueness-gadget} and ~\autoref{thm:representing}, we
know that the only base edges of $\HS_{i,j}$ picked in $N$ are
$\HS_{i,j}(0_{x_{i,j}})\rightarrow \HS_{i,j}(1_{x_{i,j}})\rightarrow
\HS_{i,j}(2_{x_{i,j}})\rightarrow \HS_{i,j}(3_{x_{i,j}})$ and the only connector
edges of $\HS_{i,j}$ picked in $E^*$ are $\HS_{i,j}(s_1)\rightarrow
\HS_{i,j}(1_{x_{i,j}}), \HS_{i,j}(s_2)\rightarrow \HS_{i,j}(2_{x_{i,j}}),
\HS_{i,j}(t_1)\leftarrow \HS_{i,j}(1_{x_{i,j}})$ and $\HS_{i,j}(t_2)\leftarrow
\HS_{i,j}(2_{x_{i,j}})$. But there is a Type II demand pair whose target is
$\HS_{i,j}(t_1)$: the path satisfying this demand pair has to enter $\HS_{i,j}$
through the vertex $\HS_{i,j}(0_{x_{i,j}})$. The only edges (which are not base
or connector edges of $\HS_{i,j}$) incident on the $0$-vertices of $\HS_{i,j}$ have
their other end-point in the set of $3$-vertices of $\M_{i,j}$. That is, $N$
contains a
red edge whose start vertex is a $3$-vertex of $\M_{i,j}$ and end vertex is a
$0$-vertex of $\HS_{i,j}$.
\end{proof}

\begin{restatable}{claim}{lemdotted}
For each $1\leq i,j\leq \ell$, the edge set $N$ must contain at least one \textcolor[rgb]{1.00,0.50,0.00}{orange} edge of each of the following types:
\begin{itemize}
\item an outgoing edge from $a_i$,
\item an incoming edge into $b_i$,
\item an outgoing edge from $c_j$, and
\item an incoming edge into $d_j$.
\end{itemize}
\label{lem:dotted}
\end{restatable}
\begin{proof}
Note that $a_i$ is the source of two Type II demand pairs, viz.\ $\big(a_i,
\HS_{i,\ell+1}(t_1)\big)$ and $\big(a_i, \HS_{i,\ell+1}(t_2)\big)$. Hence $N$ must contain at
least one outgoing edge from $a_i$. The other three statements follow similarly.
\end{proof}

We show now that we have no slack, i.e., the weight of $N$ must be exactly $B^*$.

\begin{claim}
The weight of $N$ is exactly $B^*$, and is inclusion-wise minimal.
\label{thm:exact-B^*}
\end{claim}
\begin{proof}
From \autoref{thm:representing}, we know that each gadget has a weight of $B$ in
$N$. \autoref{lem:main-at-least-4-red} says that each main gadget contributes at
least 4 red edges which have exactly one end-point in this main gadget, and
\autoref{lem:dotted} says that the \textcolor[rgb]{1.00,0.50,0.00}{orange} edges (incident on border vertices)
contribute at least $4\ell$ to weight of $N$. Since all these edges are
distinct, we have that the weight of $N$ is at least $(2k+2\ell)\cdot B +
(2k+2\ell)\cdot B + (2k+\ell)\cdot B + 4(\ell+2k+\ell) = B^*$. Hence, the weight
of $N$ is exactly $B^*$, and it is minimal (under edge deletions) since no
edges have zero weights.
\end{proof}

Consider a main gadget $\M_{i,j}$. It has four secondary gadgets surrounding it:
$\HS_{i,j}$ below it, $\HS_{i,\nextone_{j}(i)}$ above it, $\VS_{i,j}$ to the left
and $\VS_{\nextone_{i}(j),j}$ to the right. By \autoref{thm:representing}, these
gadgets are represented by $x_{i,j}, x_{i,\nextone_{j}(i)}, y_{i,j}$, and
$y_{\nextone_{i}(j),j}$ respectively. The main gadget $\M_{i,j}$ is represented
by $(\lambda_{i,j}, \delta_{i,j})$.

\begin{claim}[propagation]
For every main gadget $\M_{i, j}$, we have
$x_{i,j}=\lambda_{i,j}=x_{i,\nextone_{j}(i)}$ and
$y_{i,j}=\delta_{i,j}=y_{\nextone_{i}(j),j}$.
\label{lem:agreement-tight-dsn}
\end{claim}
\begin{proof}
Due to symmetry, it suffices to only argue that $x_{i,j}=\lambda_{i,j}$. Let us
assume for the sake of contradiction that $x_{i,j}\ne\lambda_{i,j}$. We will now
show that there is a vertex such that (1) there is exactly one edge adjacent to
it in the edge set $N$ and (2) it does not belong to any demand pair. Observe
that removing its only adjacent edge from $N$ does not effect the validity of
the solution $\big(V(G^*),N\big)$ for the instance $(G^*,\mathcal{D})$ of \bidsn. This contradicts \autoref{thm:exact-B^*} which states that $N$ is minimal.

From \autoref{lem:main-at-least-4-red} and \autoref{thm:exact-B^*}, it follows
that $N$ contains exactly one red edge, say $e_1$, which has one end-point in
the set of $3$-vertices of $\M_{i,j}$ and other end-point in set of $0$-vertices
of $\HS_{i,j}$. Also, $N$ contains exactly one red edge, say $e_2$, which has one
end-point in the set of $3$-vertices of $\M_{i,j}$ and other end-point in set of
$0$-vertices of $\VS_{\nextone_{i}(j),j}$.

Observe that if $e_1$ does not have one endpoint at $\HS_{i, j}(0_{x_{i, j}})$,
then the vertex $\HS_{i, j}(0_{x_{i, j}})$ is the desired vertex. Hence, suppose
that one endpoint of the edge $e_1$ is $\HS_{i, j}(0_{x_{i, j}})$. The other
endpoint of $e_1$ must be $\M_{i, j}(3_{x_{i, j}, y})$ for some $y \in V_j$.
Since $x_{i, j} \ne \lambda_{i, j}$, we have $\M_{i, j}(3_{x_{i, j}, y}) \ne
\M_{i, j}(3_{\lambda_{i, j}, \delta_{i, j}})$. Suppose that one of the end-points
of $e_2$ is $\M_{i, j}(3_{x', y'})$. Since $\M_{i, j}(3_{x_{i, j}, y}) \ne \M_{i,
j}(3_{\lambda_{i, j}, \delta_{i, j}})$, at least one of the following must be
true: $\M_{i, j}(3_{x_{i, j}, y}) \ne \M_{i, j}(3_{x', y'})$ or $\M_{i,
j}(3_{\lambda_{i, j}, \delta_{i, j}}) \ne \M_{i, j}(3_{x', y'})$. If $\M_{i,
j}(3_{x_{i, j}, y}) \ne \M_{i, j}(3_{x',y'})$, then
$\M_{i, j}(3_{x_{i, j}, y})$ is the desired vertex. Otherwise, if $\M_{i,
j}(3_{\lambda_{i, j}, \delta_{i, j}}) \ne \M_{i, j}(3_{x', y'})$, then $\M_{i,
j}(3_{\lambda_{i, j}, \delta_{i, j}})$ is the desired vertex.
In all cases, we have found a vertex with desired properties; hence, we have
arrived at a contradiction.
\end{proof}

By \autoref{lem:dotted} and \autoref{thm:exact-B^*}, for each $i\in [\ell]$ the
edge set $N$ has exactly one incoming edge into~$b_i$. Define the function
$\phi: V_H\rightarrow V_G$ given by $\phi(i)=z$ if the unique edge in $N$ coming
into $b_i$ is from the vertex $\HS_{i,\alpha_i}(3_{z})$. By construction of
$G^*$, the vertical secondary gadget $\HS_{i,\alpha_i}$ is a copy of $U_{|V_i|}$ and
hence $\phi(i)\in V_i$. Since $b_i$ is the target of a terminal pair of Type II, we have that $\phi(i)=x_{i,\alpha_{i}}$.

Similarly, by \autoref{lem:dotted} and \autoref{thm:exact-B^*}, for each $j\in
[\ell]$ the edge set $N$ has exactly one outgoing edge from $c_j$. Define the
function $\psi: V_H\rightarrow V_G$ given by $\psi(j)=s$ if the unique edge in
$N$ coming out from $c_j$ is into the vertex $\VS_{\alpha_j,j}(0_{s})$. By
construction of $G^*$, the horizontal secondary gadget $\VS_{\alpha_j,j}$ is a
copy
of $U_{|V_j|}$ and hence $\psi(j)\in V_j$. Since $c_j$ is the target of a terminal pair of Type II, we have that $\psi(j)=y_{\alpha_{j},j}$.

By \autoref{lem:agreement-tight-dsn}, it follows that for each
$1\leq i,j\leq \ell$ such that $ij\in E^*_H$ we have
$x_{i,j}=\lambda_{i,j}=x_{i,\nextone_{j}(i)}$ and
$y_{i,j}=\delta_{i,j}=y_{\nextone_{i}(j),j}$ in addition to $(\lambda_{i,j},
\delta_{i,j})\in E_{\{i,j\}}$ (by the definition of the main gadget).

Therefore, we have that $\phi(i)=x_{i,j}$ for each $j\in N'_{H}(i)$ and
$\psi(j)=y_{i,j}$ for each $i\in N'_{H}(j)$. For each $i\in [\ell]$ the main
gadget $\M_{i,i}$ is a copy of $U_{|E_{\{i,i\}}|}$. Hence, we have that
$\phi(i)=x_{i,i}=\lambda_{i,i}=\delta_{i,i}=y_{i,i}=\psi(i)$ for each $i\in [\ell]$. Now consider any
$1\leq i,j\leq \ell$ such that $ij\in E_H$. We know that $\VS_{i,j}, \HS_{i,j}$
are represented by $\phi(j), \phi(i)$ respectively. Since
$\lambda_{i,j}=x_{i,j}=\phi(i), \delta_{i,j}=y_{i,j}=\phi(j)$ and
$(\lambda_{i,j}, \delta_{i,j})\in E_{i,j}\subseteq E_G$ it follows that
$\phi(i)\phi(j)\in E_G$, i.e., the instance $(G,H)$ of \csi is a YES instance.
This concludes the proof of~\autoref{lem:bidsn-hard-csi}.
\end{proof}

Finally we are ready to prove~\autoref{thm:lb-biDSN} which is restated below:
\thmlbbiDSN*
\begin{proof}
Each main gadget has $O(n^2)$ vertices and $G^*$ has $2k+\ell=O(k)$ main gadgets, where $n=|V_G|$ and $k=|E_H|$. Each secondary gadget has $O(n)$ vertices and $G^*$ has $2k+2\ell=O(k)$ secondary gadgets. The number of border vertices in $G^*$ is $4k$. Hence, the total number of vertices in the graph $G^*$ is $O(n^{2}k)=\poly(n,k)$. It is easy to see that $G^*$ can be constructed in $\poly(n,k)$ time from a given instance $(G,H)$ of \csi. Combining the two directions from~\autoref{lem:bidsn-easy-csi} and~\autoref{lem:bidsn-hard-csi}, we get a parameterized reduction from \csi to an instance of \bidsn with $|\mathcal{D}|=O(k)$ terminal pairs (\autoref{eqn:D-bidsnP}).

Hence, the W[1]-hardness of \bidsn parameterized by the number $k$ of demand pairs follows from the W[1]-hardness of \csi parameterized by $|E_H|$ (observe that the W[1]-hard problem \mcc~\cite{mcc-1,mcc-2} is a special case of \csi when $H$ is a clique). Marx~\cite[Corollary 6.3]{marx-beat-treewidth} showed that assuming ETH, the \csi problem cannot be solved in time $f(|E_H|)\cdot |V_G|^{o\big(|E_H|/\log |E_H|\big)}$ for any computable function $f$.
Hence, it follows that under ETH, the \bidsn cannot be solved in $f(k)\cdot n^{o(k/\log k)}$ time for any computable function $f$.
\end{proof}

Note that \autoref{thm:lb-biDSN} also implies that there is no \emph{efficient}
parameterized approximation scheme, i.e.\ an algorithm computing a
$(1+\eps)$-approximate solution in time $f(k,\eps)\cdot n^{O(1)}$ for some
function~$f$. This is because the hardness result holds for unweighted graphs in
which the optimum solution has cost at most $B^*= 4\ell+ (2k+2\ell)\cdot B +
(2k+2\ell)\cdot B + (2k+\ell)\cdot (B+4) = O(k^5)$ if and only if the
\csi instance is a YES instance. Consequently, if an efficient parameterized
approximation scheme existed we could set $\eps$ to a value such that
$(1+\eps)\cdot B^*<B^*+1$ with $\eps$ being a function of the parameter $k$
independent of $n$. Every edge of our constructed graph $G^*$ has weight at
least $1$, and hence an $(1+\epsilon)$-approximation is in fact forced to find a
solution of cost at most $B^*$, i.e., it finds an optimum solution. Hence, we obtain the following corollary:

\begin{crl}
\label{crl:bi-DSN}
The \bidsn problem has the following two lower bounds for efficient approximation schemes:
\begin{itemize}
  \item Unless \textup{FPT=W[1]}, there is no $(1+\eps)$-approximation running in time $f(\eps,k)\cdot n^{O(1)}$ for any function~$f$.
  \item Under ETH, there is no $(1+\eps)$-approximation running in time $f(\eps,k)\cdot n^{o(k/\log k)}$ for any function~$f$.
\end{itemize}
\end{crl}

\subsection{NP-hardness and runtime lower bound for \altbiscss}
\label{sec:np-hard}

In this section we prove \autoref{thm:lb-biSCSS}, which is restated below.
\thmlbbiSCSS*
\begin{proof}
We reduce from the NP-hard \hamc problem. Given an undirected unweighted graph
$G$ on $n$ vertices as an instance to \hamc, we construct a bidirected weighted
complete graph $H$ on the same vertex set as $G$ as follows:
\begin{itemize}
\item If $\{u,v\}$ is an edge of $G$, then we set the weight of $uv$ and $vu$
in $H$ to $1$.

\item If $\{u,v\}$ is not an edge of $G$, then we set the weight of $uv$ and
$vu$ in $H$ to $2$.
\end{itemize}
Consider the \biscss instance on $H$ where every vertex is a terminal. We now
show that $G$ has a Hamiltonian cycle if and only if the \biscss instance has a
solution of cost $n$.

Suppose $G$ has a Hamiltonian cycle. It corresponds to a directed cycle in
$H$ of cost $n$ and is a feasible solution for the \biscss instance. On
the other hand, note that every \biscss solution in $H$ will have cost at least
$n$, since every vertex has out-degree at least one in the solution. Hence, if
there is a solution $N\subseteq H$ for the \biscss instance of cost exactly
$n$, then every vertex has out-degree exactly one in $N$, and each edge in
$N$ will have cost one. As $N$ is strongly connected, this means $N$ is a
directed cycle. As this cycle consists of only edges of cost one, it follows
that each edge in $N$ is also an edge in $G$, i.e., the underlying undirected
cycle $\ud{N}$ is a Hamiltonian cycle in $G$.

Finally, observe that we have shown above that \biscss with $k=|V|$ terminals
can solve the Hamiltonian cycle problem. It is known~\cite[Theorem
14.6]{pc-book} that under ETH the \hamc problem has no $2^{o(n)}\polyn$
algorithm. This immediately implies that \biscss does not have an
$2^{o(k)}\polyn$ algorithm under ETH.
\end{proof}

\newcommand{\cA}{\mathcal{A}}
\newcommand{\cB}{\mathcal{B}}
\newcommand{\cD}{\mathcal{D}}
\newcommand{\cE}{\mathcal{E}}
\newcommand{\cH}{\mathcal{H}}
\newcommand{\cZ}{\mathcal{Z}}
\newcommand{\cP}{\mathcal{P}}
\newcommand{\bA}{\mathbb{A}}
\newcommand{\bB}{\mathbb{B}}
\newcommand{\uni}{\text{unique}}

\section{FPT inapproximability of {\altscss} and \altbidsn}
\label{sec:inapprox}

The starting point of our inapproximability results are based on the
recent parameterized inapproximability of \pname{Densest $k$-Subgraph}
from~\cite{param-inapprox} (which in turn builds on a construction
from~\cite{M17}). To state the result precisely, let us first state the
underlying assumption, the Gap Exponential Time Hypothesis (Gap-ETH). Note
that the version used here rules out not only deterministic but also randomized
algorithms; this is needed for the inapproximability result
of~\cite{param-inapprox}.

\begin{hyp}[(Randomized) Gap-ETH~\cite{Dinur16,MR17}] \label{hyp:gap-eth}
There exists a constant $\delta > 0$ such that, given a \pname{3CNF}
formula $\Phi$ on $n$ variables, no (possibly
randomized) $2^{o(n)}$-time algorithm can distinguish between the following two
cases correctly with probability at least 2/3:
\begin{itemize}
\item $\Phi$ is satisfiable.
\item Every assignment to the variables violates at least a $\delta$-fraction
of the clauses of $\Phi$.
\end{itemize}
\end{hyp}

Here we do not attempt to reason why Gap-ETH is a plausible assumption; for
more detailed discussions on the topic, please refer
to~\cite{Dinur16,param-inapprox}. For now, let us move on to state the
inapproximability result from~\cite{param-inapprox} that we need. Recall that,
in the \pname{Densest $k$-Subgraph (D$k$S)} problem~\cite{KortsarzP93}, we are
given an undirected graph $G = (V, E)$ and an integer $k$ and we are asked to
find a subset $S \subseteq V$ of size $k$ that induces as many edges in $G$ as
possible. \citet{param-inapprox} showed that, even when parameterized by~$k$,
the problem is hard to approximate to within a $k^{o(1)}$-factor, as stated
more formally below.

\begin{thm}[{\cite[Lemma~5.21]{param-inapprox}}] \label{thm:inapprox-dks}
Assuming randomized Gap-ETH, for any function \mbox{$h(k) = o(1)$}, there is no
$f(k) \polyn$-time algorithm that, given a graph $G$ on $n$ vertices and an
integer~$k$, can distinguish between the following two cases:
\begin{itemize}
\item (YES) $G$ contains at least one $k$-clique as a subgraph.
\item (NO) Every $k$-subgraph of $G$ contains less than $k^{-h(k)} \cdot
\binom{k}{2}$ edges.
\end{itemize}
\end{thm}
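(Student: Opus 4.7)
The plan is to design a randomized reduction from Gap-3SAT (hard by randomized Gap-ETH) to \pname{D$k$S}, built via the \emph{birthday repetition} technique of Manurangsi applied to an intermediate Label Cover instance, then adapted to the parameterized setting. The final instance will have the target parameter $k$ governed by the number of blocks in the repetition, and its size will be $f(k) \cdot n^{O(1)}$ so that a purported $f(k) \cdot n^{O(1)}$ approximation algorithm would yield a sub-exponential distinguisher for Gap-3SAT.

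First, I would reduce a Gap-3SAT formula $\Phi$ on $n$ variables to a Label Cover (projection game) instance $\mc{L}$ with a constant alphabet $\Sigma$, $N = \Theta(n)$ variables, constant right-degree, and the standard completeness/soundness gap $1$ versus $1-\delta'$ for some fixed $\delta' > 0$. This step is routine PCP-style machinery and introduces only polynomial blow-up. Next comes the birthday repetition: choose a block count $r = r(k)$ and partition the variables of $\mc{L}$ into blocks $B_1,\ldots,B_r$. The vertices of the D$k$S graph $G$ are pairs $(i,\alpha)$ for $i \in [r]$ and $\alpha \in \Sigma^{|B_i|}$. Place an edge between $(i,\alpha)$ and $(j,\beta)$ (for $i \ne j$) iff $\alpha \cup \beta$ satisfies every constraint of $\mc{L}$ whose scope lies in $B_i \cup B_j$. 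Set the target density size to $k := r$. In the YES case, a satisfying assignment to $\mc{L}$ produces a canonical $k$-clique by selecting the vertex $(i,\sigma|_{B_i})$ in each block, achieving density $1$.

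The main obstacle is the NO case: one must show that every $k$-vertex subset of $G$ induces at most $k^{-h(k)}\binom{k}{2}$ edges. The heart of the argument is a birthday-paradox calculation: for any partial assignment implied by a candidate $k$-subset, a uniformly random pair of blocks witnesses an unsatisfied constraint with probability bounded below by a function of $\delta'$. The delicate subcase is `unbalanced' $k$-subsets that place many vertices in a single block; this is handled by a convexity/counting argument showing that concentrating vertices in one block degrades the density in the remaining blocks, so the balanced distribution is essentially extremal. To promote the constant density gap into a $k^{-h(k)}$ gap for an arbitrarily slowly vanishing $h(k) = o(1)$, I would either iterate the construction or use its higher-arity variant with carefully tuned block size and alphabet, while keeping the total reduction size $f(k) \cdot n^{O(1)}$. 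Composing the $2^{o(n)}$ hardness from randomized Gap-ETH with this reduction then rules out $f(k) \cdot n^{O(1)}$-time $k^{h(k)}$-approximation algorithms for \pname{D$k$S}, as claimed.
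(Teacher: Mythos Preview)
The paper does not prove this theorem: it is quoted verbatim as Lemma~5.21 of~\cite{param-inapprox} and used purely as a black box (the only thing the paper proves in this vicinity is \autoref{crl:inapprox-colored-dks}, derived from it via splitters). So there is no in-paper argument to compare your proposal against.

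For what it is worth, your sketch is broadly aligned with the approach of the cited source, which indeed rests on Manurangsi's birthday-repetition construction~\cite{M17}. A few rough edges if you intend to flesh this out: the phrasing ``choose $r = r(k)$ \ldots set $k := r$'' is circular---fix the target parameter $k$ first, then take $r = k$ blocks. The instance-size accounting is the crux and deserves to be explicit: with blocks of size $\Theta(n/k)$ over a constant alphabet the D$k$S graph has $N = 2^{\Theta(n/k)}$ vertices, so an $f(k)\cdot N^{O(1)}$ algorithm on it runs in time $f(k)\cdot 2^{O(n/k)}$ on the 3SAT side, which is $2^{o(n)}$ precisely because $k$ is allowed to grow. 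Finally, your last step---``iterate the construction or use its higher-arity variant'' to reach the full $k^{-h(k)}$ gap for arbitrary $h(k)=o(1)$---is where most of the technical work in~\cite{param-inapprox} actually sits; as written it is a placeholder rather than an argument.
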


Instead of working with \pname{D$k$S}, it will be more convenient for us to work with a closely-related problem called \mcsi, which can be defined as follows.

\begin{center}
\noindent\framebox{\begin{minipage}{0.9\textwidth}
\textbf{\mcsi} (\pname{MPSI})\\
\emph{Input}: an instance $\Gamma$ of \pname{MPSI} consists of three
components:
\begin{itemize}
\item an undirected graph $G = (V_G, E_G)$,
\item a partition of vertex set $V_G$ into disjoint subsets $V_1, \dots,
V_\ell$, and
\item an undirected graph $H = (V_H = \{1, \dots, \ell\}, E_H)$.
\end{itemize}
\emph{Goal}: find an assignment $\phi: V_H \to V_G$ where $\phi(i) \in V_i$ for
every $i \in [\ell]$ that maximizes the number of edges $ij \in E_H$ such that
$\phi(i)\phi(j) \in E_G$.
\end{minipage}}
\end{center}

It is worth noting that this problem is closely related to the so-called \pname{Label Cover} problem that appears in the the hardness of approximation literature~\cite{Moshkovitz15}\footnote{The \pname{Label Cover} problem may be viewed as a special case of \pname{MPSI} with two additional constraints on the input: the graph $G$ is bipartite, and every vertex on the left hand side of $G$ has at most one edge to each partition $V_i$.}. However, we choose
the name \mcsi as it is more compatible with the naming conventions earlier in
\autoref{sec:lb}; our new problem is simply an optimization version of \csi
defined in that section.

The graph $H$ is sometimes referred to as the \emph{supergraph} of $\Gamma$.
Similarly, the vertices and edges of $H$ are called \emph{supernodes} and
\emph{superedges} of $\Gamma$. Moreover, the size of $\Gamma$ is defined as $n =
|V_G|$, the number of vertices of $G$. Additionally, for each assignment $\phi$,
we define its value $\val(\phi)$ to be the fraction of superedges $ij \in E_H$
such that $\phi(i)\phi(j) \in E_G$; such superedges are said to be
\emph{covered} by $\phi$. The objective of \pname{MPSI} is now to find an
assignment $\phi$ with maximum value. We denote the value of the optimal
assignment by $\val(\Gamma)$, i.e., $\val(\Gamma) = \max_\phi \val(\phi)$.

For this problem, a hardness similar to that of \pname{Densest $k$-Subgraph} can be shown:

\begin{crl} \label{crl:inapprox-colored-dks}
Assuming randomized Gap-ETH, for any function $h(\ell) = o(1)$, there is no
$f(\ell)\polyn$-time algorithm that, given an \pname{MPSI} instance $\Gamma$ of
size $n$ such that the supergraph $H$ is a complete graph on $\ell$ supernodes,
can distinguish between the following two cases:
\begin{itemize}
\item (YES) $\val(\Gamma) = 1$.
\item (NO) $\val(\Gamma) < \ell^{-h(\ell)}$.
\end{itemize}
\end{crl}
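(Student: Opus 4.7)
The plan is a gap-preserving reduction from Theorem~\ref{thm:inapprox-dks} that leverages additional structure in the hard \pname{D$k$S} instances. Given a \pname{D$k$S} instance $(G,k)$ produced by that theorem, I would construct an \pname{MCSI} instance $\Gamma$ with $\ell:=k$ supernodes as follows: the supergraph is $H:=K_\ell$, each color class $V_i$ is a disjoint copy of $V(G)$, and for every superedge $i-j\in E(H)$ and every edge $uv\in E(G)$ I include both pairs $(u^i,v^j)$ and $(v^i,u^j)$ in the allowed pairs between $V_i$ and $V_j$. An assignment $\phi$ is then an $\ell$-tuple $(\phi(1),\ldots,\phi(\ell))\in V(G)^\ell$, and the number of covered superedges equals $\sum_{uv\in E(G)} m_u m_v$, where $m_u$ is the multiplicity of $u$ in this tuple. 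Completeness is immediate: a $k$-clique $\{v_1,\ldots,v_k\}$ in $G$ yields $\phi(i):=v_i^i$, covering every one of the $\binom{\ell}{2}$ superedges and giving $\val(\phi)=1$.

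Soundness is the main step, and its main obstacle is that $\phi$ need not be injective; collapsed pairs $\phi(i)=\phi(j)$ contribute $0$ to the objective, but the multiplicities $m_u$ can still inflate $\sum m_u m_v$ without this quantity directly corresponding to a dense \emph{set} in $G$. The cleanest way around this is to exploit the fact that the construction underlying Theorem~\ref{thm:inapprox-dks} in~\cite{param-inapprox} produces $G$ together with a natural partition of $V(G)$ into $k$ equal-sized parts $U_1,\ldots,U_k$: the hard \pname{D$k$S} instances arise from a gap \pname{Label Cover} instance with $k$ supernodes whose labels form the parts $U_i$, and the NO case in fact guarantees the density bound for every $k$-subset that selects exactly one vertex per part. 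Taking $V_i:=U_i$ as the color class in our \pname{MCSI} reduction (instead of a copy of the entire $V(G)$) forces every assignment to be injective, so $m_u\in\{0,1\}$ and $\val(\phi)\cdot\binom{\ell}{2}$ equals exactly $|E(G[\{\phi(i):i\in[\ell]\}])|$, which is bounded by $k^{-h(k)}\binom{k}{2}$ by hypothesis. This yields $\val(\phi)<\ell^{-h'(\ell)}$ with $h'(\ell):=h(k)$.

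The main obstacle I expect is that Theorem~\ref{thm:inapprox-dks} is stated as a black box, with no reference to a partition structure of the hard instances; extracting this partition requires opening up the proof in~\cite{param-inapprox}. Fortunately this is straightforward, because their construction of the hard \pname{D$k$S} instance is explicitly a ``clause--variable'' double cover of a parallel-repetition \pname{Label Cover}, whose supernodes give the desired partition. An alternative, purely black-box route would be to apply the reduction to a vertex blow-up of $G$ and control the multiplicity contributions via Cauchy--Schwarz, but this incurs a polynomial loss in the gap that must be absorbed by moving from $h$ to a slightly larger $h+o(1)$, which is still $o(1)$ and therefore preserves the claim.
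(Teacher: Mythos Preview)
Your approach differs substantially from the paper's, and the black-box alternative you sketch does not work.

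The paper's proof is a direct black-box reduction using \emph{splitters} (color coding). Given a \pname{D$k$S} instance $(G,k)$ from \autoref{thm:inapprox-dks}, it constructs an $(n,k,k)$-splitter family $\Lambda_{n,k}$ of size $2^{O(k)}\log n$ and, for each coloring $\lambda\in\Lambda_{n,k}$, builds an \pname{MCSI} instance $\Gamma^\lambda$ with color classes $V_i^\lambda=\lambda^{-1}(\{i\})$ and supergraph $K_k$. Completeness holds because some $\lambda$ rainbow-colors the clique; soundness is immediate because the $V_i^\lambda$ already partition $V(G)$, so any assignment $\phi$ picks $k$ \emph{distinct} vertices and the number of covered superedges is exactly $|E(G[\{\phi(1),\ldots,\phi(k)\}])|$. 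No opening of \cite{param-inapprox} is needed, and there is no loss in the gap.

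Your main route---extracting a partition from the internals of the construction in \cite{param-inapprox}---may well succeed, but it is not black-box and requires verifying structural claims about that paper's reduction that you have not carried out. More importantly, it is unnecessary: the splitter trick produces exactly the partition you want, for \emph{any} hard \pname{D$k$S} instance, at FPT cost.

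Your fallback (vertex blow-up plus Cauchy--Schwarz) has a genuine gap. With color classes equal to copies of $V(G)$, a single edge $uv\in E(G)$ already allows an assignment mapping half the supernodes to $u$ and half to $v$, giving $\val(\phi)\approx 1/2$ regardless of how sparse all $k$-subgraphs of $G$ are. Any attempt to bound $\sum_{uv\in E}m_um_v$ by $k$-subgraph density loses a factor of roughly $k^2$ in the worst case, and a polynomial loss $k^c$ cannot be ``absorbed'' into $h$: since $h(k)=o(1)$, the resulting bound $k^{c-h(k)}$ is not of the form $k^{-h'(k)}$ for any $h'=o(1)$. Blowing up vertices does not help, because the blow-up of a $k$-clique is complete $k$-partite rather than a clique, so the YES case is not preserved.
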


The proof of \autoref{crl:inapprox-colored-dks} is rather simple, and
follows the standard technique of using splitters. Nevertheless, for
completeness, we give the full proof below. Before we proceed, we remark that our reduction is not the same as the \mcc hardness reduction from \textsc{Clique}~\cite{mcc-1,mcc-2}. Recall that the reduction in~\cite{mcc-1,mcc-2} simply makes each partition a copy of the original vertex set and add an edge between two new vertices iff there is an edge between the two corresponding vertices in the original graph. In this reduction, even in the NO case, we may take an edge $(u, v)$ of the original graph and then select $\lceil k/2 \rceil$ copies of $u$ together with $\lfloor k/2 \rfloor$ copies of $v$ in the new graph. This means that the gap between the YES and the NO cases in~\cite{mcc-1,mcc-2} can be at most two. Since we require a super constant gap in \autoref{crl:inapprox-colored-dks}, we need a different reduction.

\begin{dfn}\label{defn:splitters}
(\textbf{splitters}) Let $n \geq r \geq k$. An $(n,k,r)$-splitter is a family $\Lambda$ of functions $[n]\rightarrow [r]$ such that for every subset $S\subseteq [n]$ of size $k$ there is a function $\lambda \in \Lambda$ such that $\lambda$ is injective on $S$.
\end{dfn}

The following constructions of special families of splitters are due to
\citet{color-coding} and \citet{DBLP:conf/focs/NaorSS95}.

\begin{thm}[\cite{color-coding,DBLP:conf/focs/NaorSS95}]
\label{thm:color-coding}
There exists a $2^{O(k)} \cdot n^{O(1)}$-time algorithm that takes in $n, k \in \mathbb{N}$ such that $n \geq k$ and outputs an $(n,k,k)$-splitter family of functions $\Lambda_{n, k}$ such that $|\Lambda_{n, k}| = 2^{O(k)}\cdot \log n$.
\end{thm}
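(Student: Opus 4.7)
The plan is to obtain the splitter family via composition of two smaller hash families: first reducing the universe from $[n]$ down to $[k^2]$, and then mapping $[k^2]$ onto $[k]$ perfectly. Concretely, I would construct families $\Lambda_1 \subseteq [k^2]^{[n]}$ and $\Lambda_2 \subseteq [k]^{[k^2]}$ such that (a) for every $k$-subset $S\subseteq [n]$ there is some $h_1\in\Lambda_1$ that is injective on $S$, and (b) for every $k$-subset $T\subseteq [k^2]$ there is some $h_2\in\Lambda_2$ that is injective on $T$. Then $\Lambda_{n,k}=\{h_2\circ h_1 : h_1\in\Lambda_1, h_2\in\Lambda_2\}$ is an $(n,k,k)$-splitter, since for any $S$ of size $k$, we pick $h_1$ injective on $S$, set $T=h_1(S)$, and then pick $h_2$ injective on $T$; the composition is injective on $S$.

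For $\Lambda_1$, I would exploit a $k$-wise independent hash family: fix a prime $p=\Theta(k^2)$ and consider the evaluation maps of degree-$(k-1)$ polynomials over $\mathbb{F}_p$, viewed as functions $[n]\to[k^2]$ (assuming $n\leq p^k$; otherwise first fold $[n]$ into $\mathbb{F}_p^k$). A uniformly random such polynomial is injective on any fixed $k$-subset $S$ with probability at least $1-\binom{k}{2}/p \geq 1/2$. Taking $O(k\log n)$ independent polynomials and union-bounding over the $\binom{n}{k}\leq n^k$ candidate subsets yields a family of size $O(k\log n)$ that is an $(n,k,k^2)$-splitter with positive probability; this can then be derandomized by the method of conditional probabilities in time polynomial in $n$ and $2^{O(k)}$, giving the deterministic construction of $\Lambda_1$.

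For $\Lambda_2$, I would invoke known explicit constructions of $(k^2,k,k)$-perfect hash families of size $2^{O(k)}$ constructible in time $2^{O(k)}\cdot k^{O(1)}$. These are classical: either via the construction based on $k$-wise independent spaces combined with derandomization over the small universe $[k^2]$, or via the error-correcting-code-based construction of~\cite{DBLP:conf/focs/NaorSS95}. Since the ground set is only of polynomial size in $k$, exhaustive enumeration over polynomial-size families and checking the splitter property can even be done in time $2^{O(k)}$ in a brute-force manner as a fallback.

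Multiplying the two family sizes gives $|\Lambda_{n,k}|=|\Lambda_1|\cdot|\Lambda_2|=O(k\log n)\cdot 2^{O(k)}=2^{O(k)}\log n$, and the total construction time is $2^{O(k)}\cdot n^{O(1)}$, as required. The main obstacle in this plan is the construction of $\Lambda_2$, where we need an explicit, efficiently computable perfect hash family on the small universe with the stated parameters; however, this is a well-established result in the perfect-hashing literature, so the argument reduces to citing the appropriate construction.
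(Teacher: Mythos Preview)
The paper does not give its own proof of this theorem; it is stated as a known result and attributed to \citet{color-coding} and \citet{DBLP:conf/focs/NaorSS95}. Your two-stage composition---first an $(n,k,k^2)$-splitter $\Lambda_1$, then a $(k^2,k,k)$-splitter $\Lambda_2$---is exactly the strategy those references use, so at the structural level you are reproducing the literature's argument.

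There is, however, a genuine gap in your construction of $\Lambda_1$. You argue probabilistically that $O(k\log n)$ random functions drawn from a $k$-wise independent family form an $(n,k,k^2)$-splitter, and then propose to derandomize via the method of conditional probabilities ``in time polynomial in $n$ and $2^{O(k)}$''. But the natural conditional expectation one would track---the number of $k$-subsets on which every function chosen so far fails to be injective---is a sum over $\binom{n}{k}$ terms, and no polynomial-time pessimistic estimator for it is apparent from your sketch. As written, the derandomization would cost $n^{\Theta(k)}$ time, not $2^{O(k)}\cdot n^{O(1)}$. The constructions in the literature sidestep this entirely: the $(n,k,k^2)$-splitter of size $k^{O(1)}\log n$ is obtained from an \emph{explicit} algebraic family (e.g., the perfect-hash constructions of Fredman--Koml\'{o}s--Szemer\'{e}di or Schmidt--Siegel, or directly from~\cite{DBLP:conf/focs/NaorSS95}), not from a generic derandomization. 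You should replace the probabilistic-plus-conditional-probabilities step with a citation to one of these explicit constructions. Your handling of $\Lambda_2$ is fine since you already defer to~\cite{DBLP:conf/focs/NaorSS95}; note only that your brute-force fallback would not achieve $2^{O(k)}$ either, since even on the universe $[k^2]$ there are $\binom{k^2}{k}=2^{\Theta(k\log k)}$ subsets to cover, but this is moot given the citation.
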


\begin{proof}[Proof of \autoref{crl:inapprox-colored-dks}]
Suppose for the sake of contradiction that there exists an algorithm
$\mathbb{B}$ that can solve the distinguishing problem stated in
\autoref{crl:inapprox-colored-dks} in $f(\ell) \cdot n^{O(1)}$ time for some
function $f$. We will use this to construct another algorithm $\mathbb{B}'$ that
can solve the distinguishing problem stated in \autoref{thm:inapprox-dks} in
time $f'(k) \cdot n^{O(1)}$ for some function $f'$, which will thereby violate
Gap-ETH.

The algorithm $\mathbb{B}'$, on input $(G, k)$, proceeds as follows. We assume
w.l.o.g. that $V = [n]$. First, $\mathbb{B}'$ runs the algorithm from
\autoref{thm:color-coding} on $(n, k)$ to produce an $(n,k,k)$-splitter
family of functions $\Lambda_{n, k}$. For each $\lambda \in \Lambda_{n, k}$, it
creates a \pname{MPSI} instance $\Gamma^\lambda = (G^\lambda, H^\lambda,
V^\lambda_1 \cup \cdots \cup V^\lambda_k)$ where
\begin{itemize}
\item the graph $G^\lambda$ is simply the input graph $G$,
\item for each $i \in [k]$, we set $V^\lambda_i = \lambda^{-1}(\{i\})$, and,
\item the supergraph $H^\lambda$ is simply the complete graph on $[k]$, i.e.,
$H^\lambda = ([k], \binom{[k]}{2})$.
\end{itemize}
Then, it runs the given algorithm $\mathbb{B}$ on $\Gamma^\lambda$. If
$\mathbb{B}$ returns YES for some $\lambda \in \Lambda$, then $\mathbb{B}'$
returns YES. Otherwise, $\mathbb{B}'$ outputs NO.

It is obvious that the running time of $\mathbb{B}'$ is at most $O(2^{O(k)}
f(k) \cdot n^{O(1)})$. Moreover, if $G$ contains a $k$-clique, say $(v_1,
\dots, v_k)$, then by the properties of splitters we are guaranteed that there
exists $\lambda \in \Lambda_{n, k}$ such that $\lambda(\{v_1, \dots, v_k\}) =
[k]$.  Hence, the assignment $i \mapsto v_i$ covers all superedges in
$E_{H^\lambda}$, implying that $\mathbb{B}$ indeed outputs YES on such
$\Gamma^\lambda$. On the other hand, if every $k$-subgraph of $G$ contains less
than $k^{-h(k)} \cdot \binom{k}{2}$, then, for any $\lambda \in \Lambda_{n, k}$
and any assignment $\phi$ of $\Gamma^\lambda$, $(\phi(1), \dots, \phi(k))$
induces less than $k^{-h(k)} \cdot \binom{k}{2}$ edges in $G$. This also upper
bounds the number of superedges covered by $\phi$, which implies that
$\Gamma^\lambda$ is a NO instance of \autoref{crl:inapprox-colored-dks}. Thus,
in this case, $\mathbb{B}$ outputs NO on all $\Gamma^\lambda$'s. In other words,
$\mathbb{B}'$ can correctly distinguish the two cases in
\autoref{thm:inapprox-dks} in $O(2^{O(k)} f(k) \cdot n^{O(1)})$ time. This
concludes our proof of \autoref{crl:inapprox-colored-dks}.
\end{proof}

With the parameterized hardness of approximating \pname{MPSI} ready, we can now
prove our hardness results for \scss and \bidsn, starting with the former.

\subsection{Strongly Connected Steiner Subgraph}

Our proof of the parameterized inapproximability of \scss is based on a
reduction from \mcsi whose properties are
described below.

\begin{lem} \label{lem:scss-red}
For every constant $\gamma > 0$, there exists a polynomial time reduction that,
given an instance $\Gamma = (G, H, V_1 \cup \cdots \cup V_\ell)$ of \pname{MPSI}
where the supergraph $H$ is a complete graph\footnote{Reductions in
\autoref{lem:scss-red} and \autoref{lem:dsn-red} can be trivially modified to
work under a weaker assumption that $H$ is regular (but not necessarily
complete). However, we choose to only state the reductions when $H$ is complete
since this suffices for our purposes and the reductions are simpler to describe
in this case.}, produces an instance $(G',\mc{T}')$ of \scss, such
that
\begin{itemize}
\item (completeness) if $\val(\Gamma) = 1$, then there exists a network $N'
\subseteq G'$ of cost $2(1 + \gamma^{1/5})$ that is a solution of the instance $(G',\mathcal{T}')$ of \scss,
\item (soundness) if $\val(\Gamma) < \gamma$, then every network $N \subseteq
G'$ that is a solution of the instance $(G',\mathcal{T}')$ of \scss has cost more than $2(2 - 2\gamma^{1/5})$, and
\item (parameter dependency) the number of terminals $|\mc{T}'|$ is $\ell^2$.
\end{itemize}
\end{lem}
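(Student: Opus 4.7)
The plan is to reduce \mcsi (with complete supergraph $H$ on $\ell$ supernodes) to \scss by a gadget construction that translates label choices into terminal-path selections and superedge coverage into cheap ``shortcut'' connectivity. For each supernode $i \in [\ell]$ I will build a \emph{selection gadget} consisting of two terminals $a_i,b_i$ and, for each candidate label $v \in V_i$, an internally-disjoint directed \emph{label path} from $a_i$ to $b_i$ through a label vertex $x^i_v$, all label paths carrying equal weight. For each superedge $ij \in E_H$ and each $G$-edge $uv$ with $u\in V_i, v\in V_j$, a cheap directed \emph{shortcut} edge between $x^i_u$ and $x^j_v$ (in an appropriate direction) is added. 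The terminal set will consist of $\{a_i,b_i\}_{i=1}^\ell$ together with $\ell^2-2\ell$ auxiliary terminals placed inside the gadgets; these are arranged in a single cyclic demand order so that strong connectivity requires both a ``forward'' tour through each supernode gadget (along a chosen label path) and a ``reverse'' tour that becomes cheap precisely when shortcuts are abundant.

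\emph{Completeness.} Given an assignment $\phi$ with $\val(\phi) = 1$, the candidate network includes the single label path for $\phi(i)$ in each supernode gadget together with every shortcut $x^i_{\phi(i)} \to x^j_{\phi(j)}$, all of which exist because every superedge is covered. All $\ell^2$ cyclic demands then route forward on the selected label paths and backward through the shortcut network on the $\phi(i)$-labels. With the label-path and shortcut weights calibrated so that the $\ell$ selected paths sum to $2 + O(\gamma^{1/5})$ and the shortcuts contribute the remaining $\gamma^{1/5}$ slack, the total is exactly $2(1+\gamma^{1/5})$.

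\emph{Soundness.} I would show that any feasible \scss network $N$ induces an assignment $\phi_N$ (the ``principal'' label path used in each gadget), and that $\val(\phi_N) < \gamma$ forces $\cost(N) > 2(2 - 2\gamma^{1/5})$. The argument splits into a \emph{label-extraction} step, showing that either $N$ uses essentially one forward label path per supernode (defining $\phi_N$) or it already contains enough extra paths to violate the soundness budget; and a \emph{charging} step, where each superedge not covered by $\phi_N$ is charged for a detour---either a second label path in another gadget or a long bypass through several gadgets---whose marginal costs sum to more than $2 - 4\gamma^{1/5}$ once the number of covered superedges falls below $\gamma\binom{\ell}{2}$. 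The $\ell^2$ bound on $|\mc{D}'|$ is immediate from the cyclic demand construction.

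The main obstacle is the soundness analysis: one must argue that the directed reverse tour among $\ell^2$ terminals cannot be completed cheaply without many available shortcuts, even if $N$ creatively shares edges between forward and reverse sub-tours. This requires a structural analysis of minimal strongly connected subnetworks of the construction together with an amortized charging scheme, calibrated so that the threshold exponent $1/5$ produces exactly the stated completeness and soundness constants. The key technical lemma I expect to need says, roughly, that to satisfy all $\ell^2$ demands while spending less than $2(2-2\gamma^{1/5})$ on label paths, $N$ must use shortcuts corresponding to at least a $\gamma$-fraction of superedges covered by its principal labels, which then yields $\val(\phi_N) \geq \gamma$.
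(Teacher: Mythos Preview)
Your construction and the paper's are quite different. The paper does not build new selection gadgets; it reuses verbatim the Guo--Niedermeier--Such\'y reduction for \scss (vertex sets $B,C,C',D,D',F$ and edge sets $\cA,\cA',\cB,\cD,\cD',\cH,\cZ,\cZ'$, with terminals $B\cup F$), changing only the edge weights: each $\beta_v\in\cB$ gets weight $2\gamma^{1/5}/\ell$, each $\epsilon_{u,v}\in\cH$ gets weight $1/\binom{\ell}{2}$, and all other edges are free. Completeness is then a direct computation on the canonical clique solution.

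The more important divergence is in soundness. Your plan extracts a single ``principal'' label $\phi_N(i)$ per supernode and then charges each uncovered superedge to some detour cost. This is the step I do not believe goes through as written: a cheap strongly connected solution can legitimately spread its budget over several label paths in each gadget, and there is no canonical principal label; worse, in a strongly connected network the forward and reverse routes can share edges freely, so a per-superedge charging scheme has no clean edge set to charge to. The paper sidesteps both problems by \emph{not} committing to one label: it lets $S_i$ be the set of \emph{all} $v\in V_i$ with $\beta_v$ in the solution, bounds $|S|=\sum_i|S_i|\le 2\gamma^{-1/5}\ell$ from the $\cB$-weight, shows via a cut argument that for every ordered pair $(i,j)$ at least one $\epsilon$-edge is selected and that for most pairs exactly one is (from the $\cH$-weight), and then analyzes a \emph{random} assignment $\phi(i)\in S_i$. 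H\"older's inequality applied to $\sum_{(i,j)} 1/(|S_i||S_j|)$ converts the two global budget bounds into $\val(\Gamma)\ge\gamma$. The exponent $1/5$ arises precisely from balancing the cube in H\"older against the two size bounds, not from any calibration in the completeness direction.

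So the gap in your proposal is the soundness mechanism: replace the principal-label-plus-charging plan with a randomized rounding over all selected labels and a H\"older-type averaging; otherwise you will not be able to control solutions that hedge across multiple labels per supernode.
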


\begin{proof}
Assume without loss of generality that there is no edge in $G$ between two vertices in
the same set of the partition $V_1,\ldots,V_\ell$. We use the reduction of
\citet{DBLP:journals/siamdm/GuoNS11} with only a slight modification in that we
use different edge weights. Our graph remains unchanged from the
\citet{DBLP:journals/siamdm/GuoNS11} reduction; here we copy the graph
definition verbatim from~\cite{DBLP:journals/siamdm/GuoNS11}. We refer the reader to {\cite[Figure~3.1]{DBLP:journals/siamdm/GuoNS11}} for an illustration of the reduction.
The vertex set
$V'$ of $G'$ is $B \cup C \cup C' \cup D \cup D' \cup F$ where $B, C, C', D,
D', F$ are defined as follows:
\begin{itemize}
\item $B = \{b_i \mid i \in [\ell]\}$,
\item $C = \{c_v \mid v \in V_G\}$,
\item $C' = \{c'_v \mid v \in V_G\}$,
\item $D = \{d_{u, v}, d_{v, u} \mid uv \in E_G\}$,
\item $D' = \{d'_{u, v}, d'_{v, u} \mid uv \in E_G\}$, and,
\item $F = \{f_{i,j} \mid 1 \leq i\neq j \leq \ell\}$.
\end{itemize}
We view the partition $V_G = V_1 \cup \cdots \cup V_\ell$ as a function $\lambda: V_G
\to [\ell]$. The edge set $E'$ of $G'$ is $\cA \cup \cA' \cup \cB \cup \cD \cup
\cD' \cup \cH \cup \cZ \cup \cZ'$ where
\begin{itemize}
\item $\cA = \{\alpha_v = (b_{\lambda(v)}, c_v) \mid v \in
V_G\}$,
\item $\cA' = \{\alpha'_v = (c'_v, b_{\lambda(v)}) \mid v \in
V_G\}$,
\item $\cB = \{\beta_v = (c_v, c'_v) \mid v \in V_G\}$,
\item $\cD = \{\delta_{u, v} = (c'_u, d_{u, v}), \delta_{v,
u} = (c'_v, d_{v, u}) \mid uv \in E_G\}$,
\item $\cD' = \{\delta'_{u, v} = (d'_{u, v}, c_v),
\delta'_{v, u} = (d'_{v, u}, c_u) \mid uv \in E_G\}$,
\item $\cH = \{\epsilon_{u, v} = (d_{u, v}, d'_{u, v}),
\epsilon_{v, u} = (d_{v, u}, d'_{v, u}) \mid uv \in E_G\}$,
\item $\cZ = \{\zeta_{u, v} = (f_{\lambda(u), \lambda(v)},
d_{u, v}), \zeta_{v, u} = (f_{\lambda(v), \lambda(u)}, d_{v, u}) \mid
uv \in E_G\}$, and,
\item $\cZ' = \{\zeta'_{u, v} = (d'_{u, v}, f_{\lambda(u),
\lambda(v)}), \zeta'_{v, u} = (d'_{v, u}, f_{\lambda(v), \lambda(u)})
\mid uv \in E_G\}$.
\end{itemize}
As for the weights, we give weight $2\gamma^{1/5}/\ell$ to $\beta_v$ for every
$v \in V_G$, and weight $1/\binom{\ell}{2}$ to $\epsilon_{u, v}$ and
$\epsilon_{v, u}$ for every $uv \in E_G$; the rest of the edges have weight
zero. As noted earlier, this
is different from the weights assigned by \citet{DBLP:journals/siamdm/GuoNS11};
they simply assigned the same weight to every edge.
Finally, the terminal set is defined as $\mathcal{T}':= B \cup F$. Observe that the number of terminals is $|\mathcal{T}'|=|B|+|F|=\ell + 2\binom{\ell}{2} = \ell^2$.
We next move on to prove the completeness and soundness properties of the reduction.

{\bf (Completeness)} The solution in the completeness case is exactly the same
as the solution selected in~\cite{DBLP:journals/siamdm/GuoNS11}; we will repeat
their argument here.

If $\val(\Gamma) = 1$, then there exists $(v_1, \dots, v_\ell) \in V_1 \times
\cdots \times V_\ell$ that induces an $\ell$-clique. Consider the network 
$N'=\big(V(G'),E'\big)$ where 
\[
E' =
\big\{\alpha_{v_i}, \alpha'_{v_i}, \beta_{v_i} \mid i \in [\ell]\big\} \cup 
\big\{\delta_{v_i, v_j},
\delta'_{v_i, v_j}, \epsilon_{v_i, v_j}, \zeta_{v_i, v_j}, \zeta'_{v_i, v_j}
\mid 1 \leq i, j \leq \ell, i \ne j\big\}.
\]
The total weight the network $N'$ is $\ell \cdot
\left(2\gamma^{1/5}/\ell\right) + 2\binom{\ell}{2} \cdot \left(1/\binom{\ell}{2}\right) = 2(1 + \gamma^{1/5})$ as desired.

To see that the network $N'$ is indeed a solution for the instance $(G',\mathcal{T}')$ of \scss, observe that it suffices to show
that, for every $1 \leq i \ne j \leq \ell$, $f_{i, j}$ is reachable from $b_i$ and
$b_i$ is reachable from $f_{j, i}$. The former holds due to the path in $N$ given by the following edges (in order) $\alpha_{v_i}, \beta_{v_i}, \delta_{v_i, v_j}, \epsilon_{v_i, v_j},
\zeta'_{v_i, v_j}$ whereas the latter holds due to the path in $N$ given by the following edges (in order) $\zeta_{v_j, v_i},
\epsilon_{v_j, v_i}, \delta'_{v_j, v_i}, \beta_{v_i}, \alpha'_{v_i}$.

{\bf (Soundness)} Our soundness proof will require a more subtle analysis than
that of \citet{DBLP:journals/siamdm/GuoNS11}. Again, we will prove by
contrapositive. Suppose that there exists a network $N=\big(V(G'),E^*\big)$ of cost $\rho \leq 2(2
- 2\gamma^{1/5})$ which is a solution for the instance $(G',\mathcal{T}')$ of \scss. For each $i \in [\ell]$, let $S_i \subseteq V_G$ denote the set of all vertices $v \in V_i$ such that
$\beta_{v_i}$ is included in $E^*$. Moreover, let $S =
S_1 \cup \cdots \cup S_\ell$. Observe that, since each edge in $\cB$ has weight
$2\gamma^{1/5}/\ell$, we have $$|S| \leq \frac{\rho}{(2\gamma^{1/5}/\ell)} \leq
\frac{4}{(2\gamma^{1/5}/\ell)} \leq 2\gamma^{-1/5}\ell.$$

For every $1 \leq i \ne j \leq \ell$, let $\cH_{i, j}$ denote the set of all $\epsilon_{u, v}
\in E^*$ such that $u \in V_i$ and $v \in V_j$. First, we claim that, for every
$1 \leq i \ne j \leq \ell$, $\cH_{i, j} \ne \emptyset$. To see that this holds,
consider the set $T_{i, j} = \{f_{i, j}\} \cup \{d'_{u, v} \mid u \in
V_i, v \in V_j, uv \in E\}$. The only edges from outside $T_{i, j}$ coming into
this set are those in $\cH_{i, j}$. Since $\{f_{i,j}, b_i\}\subseteq \mathcal{T}'$, the vertex
$f_{i, j}$ has to be reachable from $b_i$ in $N$. However, since $b_i \notin T_{i, j}$, we can conclude
that at least one edge in $\cH_{i, j}$ must be selected in $E^*$.

Next, recall that each edge of the form $\epsilon_{u, v}$ has weight $1/\binom{\ell}{2}$. Since $N$ has cost $\rho$, we have
\begin{align*}
\sum_{1 \leq i \ne j \leq \ell} |\cH_{i, j}| &\leq \binom{\ell}{2} \cdot \rho,
\text{and thus} \\
\sum_{1 \leq i \ne j \leq \ell} (|\cH_{i, j}| - 1) &= \sum_{1 \leq i \ne j
\leq \ell} |\cH_{i, j}| - \ell(\ell-1) \leq
\binom{\ell}{2} \cdot \left(\rho - 2\right).
\end{align*}
From $\cH_{i, j} \ne \emptyset$ for every $1 \leq i \ne j \leq \ell$, the above
inequality implies that for at least $\binom{\ell}{2} \cdot (4 - \rho)
\geq 4\gamma^{1/5}\binom{\ell}{2}$ pairs of $(i, j)$'s we have $|\cH_{i, j}| = 1$ \big(since $\rho\leq 2(2-2\gamma^{1/5})$\big). Let
$\cP_{\uni}$ be the set of all such pairs of~$(i, j)$'s.

We will argue that a
random assignment defined from picking one vertex from each $S_i$ uniformly
independently at random covers many superedges in expectation. To do this, we
need to first show that, for many $(i, j)$'s, there exist $u \in S_i$ and $v \in
S_j$ such that $uv \in E_G$. In fact, we can show this for every $(i, j) \in
\cP_{\uni}$ as stated below.

\begin{claim} \label{claim:edgeins}
For every $(i, j) \in \cP_{\uni}$, there exists $u \in S_i$ and $v \in S_j$ such
that $uv \in E_G$.
\end{claim}

\begin{proof}
Since  $(i, j) \in \cP_{\uni}$, $\cH_{i, j}$ contains only one element. Let
this element be $\epsilon_{u_i, u_j}$. We will prove that $u_i \in S_i$ and $u_j
\in S_j$; note that this implies the claimed statement since $u_iu_j \in E_G$
by definition of~$\epsilon_{u_i, u_j}$.

To see that $u_i \in S_i$, consider the subset $C_{i, j} = \{f_{i,
j}\} \cup \{d'_{u, v} \mid u \in V_i, v \in V_j, uv \in E\} \cup
\{d_{u_i, u_j}\} \cup \{c'_{u_i}\}$. There are only two types of edges coming
into $C_{i, j}$: (1) $\beta_{u_i}$ and (2) $\epsilon_{u, v}$ where $u \in V_i,
v \in V_j$ and $(u, v) \ne (u_i, u_j)$. Since $\cH_{i, j} = \{\epsilon_{u_i,
u_j}\}$, the edges of the latter types are not selected in $E^*$. Moreover, since
$f_{i, j}$ is reachable from $b_i$, there must be at least one edge coming
into~$C_{i, j}$. As a result, $\beta_{u_i}$ must be selected, which means that
$u_i \in S_i$.

An analogous argument can be applied to $u_j$. Specifically, consider the
subset $C'_{i, j} = \{f_{i, j}\} \cup \{d_{u, v} \mid u \in V_i, v \in
V_j, uv \in E\} \cup \{d'_{u_i, u_j}\} \cup \{c_{u_j}\}$. There are only
two types of edges coming out of $C'_{ij}$: (1) $\beta_{u_j}$ and (2)
$\epsilon_{u, v}$ where $u \in V_i, v \in V_j$ and $(u, v) \ne (u_i, u_j)$.
Since $\cH_{i, j} = \{\epsilon_{u_i, u_j}\}$, the edges of the latter types are
not selected in $E^*$. Moreover, since $b_j$ is reachable from $f_{i, j}$, there
must be at least one edge coming out of $C_{i, j}$. As a result, $\beta_{u_j}$
must be selected, which means that $u_j \in S_j$.
\cqed
\end{proof}

Now, let $\phi: V_H \to V_G$ be a random assignment where each $\phi(i)$ is
chosen independently uniformly at random from $S_i$. By \autoref{claim:edgeins},
for every $(i, j) \in \cP_{\uni}$, there exists $u \in S_i$ and $v \in S_j$ such
that $uv \in E_G$. This means that, for such $(i, j)$, the probability that the
superedge $ij \in E_H$ is covered is at least the probability that $\phi(i) =
u$ and $\phi(j) = v$, which is equal to~$\frac{1}{|S_i| |S_j|}$. We now want a
lower bound on the expected number of superedges covered by $\phi$. For this, we
use the following inequality which follows from a special case of H\"{o}lder's
inequality for 3 variables\footnote{$(\sum_{r=1}^{n} a_r^{3})(\sum_{r=1}^{n}
b_r^{3})(\sum_{r=1}^{n} c_r^{3})\geq (\sum_{r=1}^{n} a_r b_r c_r)^{3}$}

\begin{align*}\label{eqn1}
\Big(\sum_{(i, j) \in \cP_{\uni}} \frac{1}{|S_i||S_j|}\Big) \cdot \Big(\sum_{(i, 
j) \in \cP_{\uni}} |S_i| \Big)&\cdot \Big(\sum_{(i, j) \in \cP_{\uni}} 
|S_j|\Big)\\
&\geq
\Big(\sum_{(i, j) \in \cP_{\uni}}
\Big(\frac{1}{|S_i||S_j|}\Big)^{1/3}\cdot |S_i|^{1/3}\cdot |S_j|^{1/3} \Big)^{3} \\
&= \Big(\sum_{(i, j) \in \cP_{\uni}} 1^{1/3} \Big)^{3} \\
&= |\cP_{\uni}|^{3} \tag{1}
\end{align*}

Hence, we have that the expected number of superedges covered by $\phi$ is at
least

\begin{align*}
\frac{1}{2} \cdot \sum_{(i, j) \in \cP_{\uni}} \frac{1}{|S_i||S_j|} %
&\geq \frac{1}{2} \cdot \frac{|\cP_{\uni}|^3}{\left(\sum_{(i, j) \in
\cP_{\uni}} |S_i|\right)\left(\sum_{(i, j) \in \cP_{\uni}} |S_j|\right)}
\tag*{\big(\text{from Equation~\eqref{eqn1}}\big)}\\
&\geq \frac{1}{2} \cdot \frac{|\cP_{\uni}|^3}{((\ell - 1)|S|)^2}
\tag*{\big(\text{since $S=\bigcup_{i=1}^{\ell} S_i$, and $(i,j)\in
\cP_{\uni}\Rightarrow i\neq j$}\big)}\\
&\geq \frac{1}{2} \cdot \frac{\big( 4\gamma^{1/5}\binom{\ell}{2} \big)^3}{(\ell - 1)^2 \cdot \big(2\gamma^{-1/5}\ell\big)^{2}}
\tag*{\big(\text{since $|\cP_{\uni}| \geq 4\gamma^{1/5}\binom{\ell}{2}$ and $|S| \leq 2\gamma^{-1/5}\ell$}\big)}\\
&= \frac{1}{2} \cdot \frac{64\gamma^{3/5}\cdot \binom{\ell}{2}^3}{(\ell - 1)^2 \cdot 4\gamma^{-2/5}\cdot \ell^{2}}\\
&\geq 2\gamma\cdot \binom{\ell}{2},
\end{align*}
where
note that the factor $1/2$ comes from the fact that we may double count each edge
for both $(i, j), (j, i)$. Hence, there exists an assignment of $\Gamma$ with value at least $2\gamma$, which implies that $\val(\Gamma) \geq 2\gamma\geq \gamma$.
\end{proof}

We can now easily prove \autoref{thm:lb-approx-SCSS} by combining
\autoref{lem:scss-red} and \autoref{crl:inapprox-colored-dks}.

\begin{proof}[Proof of \autoref{thm:lb-approx-SCSS}]
We again prove by contrapositive. Suppose that, for some constant $\varepsilon
> 0$ and for some function $f(k)$ independent of $n$, there exists an $f(k)
\polyn$-time $(2 - \varepsilon)$-approximation algorithm for
\scss. Let us call this algorithm $\bA$.

It is easy to see that there exists a sufficiently small $\gamma^* =
\gamma^*(\varepsilon)$ such that $\frac{2 - 2{\gamma^*}^{1/5}}{1 +
{\gamma^*}^{1/5}} \geq (2 - \varepsilon)$. We create an algorithm $\bB$ that can
distinguish between the two cases of \autoref{crl:inapprox-colored-dks} with
$h(\ell) = \log (1/\gamma^*)/\log \ell$. Our new algorithm $\bB$ works as
follows. Given an instance $(G, H, V_1 \cup \cdots \cup V_\ell)$ of \pname{MPSI}
where $H$ is a complete graph, $\bB$ uses the reduction from
\autoref{lem:scss-red} to create an \scss instance on the graph $G'$ with $k =
\ell^2$ terminals. $\bB$ then runs $\bA$ on this instance; if $\bA$ returns a
solution $N$ of cost at most $2(2 - 2{\gamma^*}^{1/5})$, then $\bB$ returns YES.
Otherwise, $\bB$ returns NO.

To see that algorithm $\bB$ can indeed distinguish between the YES and NO
cases, first observe that, in the YES case, \autoref{lem:scss-red} guarantees
that the optimal solution has cost at most $2(1 + {\gamma^*}^{1/5})$. Since $\bA$ is a $(2 - \varepsilon)$-approximation algorithm, it
returns a solution of cost at most $2(1 + {\gamma^*}^{1/5})
\cdot (2 - \varepsilon) \leq 2(2 - 2{\gamma^*}^{1/5})$ where
the inequality comes from our choice of $\gamma^*$; this means that $\bB$
outputs YES. On the other hand, if $(G, H, V_1 \cup \cdots \cup V_\ell)$ is a NO
instance, then the soundness property of \autoref{lem:scss-red} guarantees
that the optimal solution in $G'$ has cost more than $2(2 - 2{\gamma^*}^{1/5})$, which implies that $\bB$ outputs NO.

Finally, observe that the running time of $\bB$ is $f(\ell^2) \polyn$ and
that $h(\ell) = o(1)$. Hence, from \autoref{crl:inapprox-colored-dks}, randomized
Gap-ETH breaks.
\end{proof}

\subsection{Directed Steiner Network on Bidirected Graphs}

We will next prove our inapproximability result for \bidsn. For this result, we
will need a slightly more specific hardness of approximation for \mcsi where
every supernode has bounded degree. This bounded degree version of \pname{MPSI}
is defined below.

\begin{center}
\noindent\framebox{\begin{minipage}{0.9\textwidth}
\textbf{$t$-Bounded Degree \mcsi} \big(\pname{MPSI}($t$)\big)\\
\emph{Input}: an instance $\Gamma$ of \pname{MPSI}($t$) consists of three
components:
\begin{itemize}
\item an undirected graph $G = (V_G, E_G)$,
\item a partition of vertex set $V_G$ into disjoint subsets $V_1, \dots,
V_\ell$, and
\item an undirected graph $H = (V_H = \{1, \dots, \ell\}, E_H)$ such that each
vertex of $H$ has degree at most $t$.
\end{itemize}
\emph{Goal}: find an assignment $\phi: V_H \to V_G$ where $\phi(i) \in V_i$ for
every $i \in [\ell]$ that maximizes the number of edges $ij \in E_H$ such that
$\phi(i)\phi(j) \in E_G$.
\end{minipage}}
\end{center}

\citet{LRSZ17} gave the following reduction from (unbounded degree) \pname{MPSI}
to the bounded degree version of the problem. We remark here that their
reduction uses standard technique of sparsification via expanders, and similar
reductions have been presented before in literature (see e.g.~\cite{Dinur07}).

\begin{lem}[\cite{LRSZ17}]
For every $\varepsilon > 0$, there exists $\varepsilon' > 0$ and a polynomial
time reduction that, given an instance $\Gamma = (G, H, V_1 \cup \cdots \cup
V_\ell)$ of \pname{MPSI}, produces an instance $\Gamma' = (G', H', V_1 \cup
\cdots \cup V_{\ell'})$ of \pname{MPSI}(4) such that
\begin{itemize}
\item (YES) if $\val(\Gamma) = 1$, then $\val(\Gamma') = 1$,
\item (NO) if $\val(\Gamma) < 1 - \varepsilon$, then $\val(\Gamma') < 1 -
\varepsilon'$, and
\item (parameter dependency) $\ell' = O(\ell^2)$.
\end{itemize}
\end{lem}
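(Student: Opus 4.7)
The plan is to carry out a standard expander-based degree reduction familiar from hardness-of-approximation literature (going back at least to Dinur's PCP proof). For each supernode $i \in V_H$, I will introduce a \emph{cloud} $C_i$ of size equal to the degree $d_i$ of $i$ in $H$, with one element of $C_i$ associated to each edge of $H$ incident to $i$. The supergraph $H'$ is then obtained by letting $V_{H'}$ be the disjoint union of the clouds, turning each original superedge $i-j \in E_H$ into a single \emph{cross-cloud} superedge between the two associated elements of $C_i$ and $C_j$, and additionally overlaying a fixed $3$-regular constant-spectral-gap expander on each cloud $C_i$ whose edges I call \emph{consistency} superedges. Every vertex of $H'$ then has degree exactly $1+3=4$. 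On the graph side, I will place a private copy $V_i^u$ of $V_i$ for each $u \in C_i$; a cross-cloud superedge inherits the edges of $G$ between the relevant $V_i$ and $V_j$, while a consistency superedge between $u,v \in C_i$ receives a perfect matching in $G'$ between $V_i^u$ and $V_i^v$ that pairs each copy of $w \in V_i$ with itself, thereby encoding the equality constraint $\phi'(u) = \phi'(v)$.

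For the parameter bound note that $\ell' = \sum_i d_i = 2|E_H| = O(\ell^2)$. For completeness, given any assignment $\phi$ with $\val(\phi) = 1$, I will define $\phi'(u) := \phi(i)$ for each $u \in C_i$; every consistency superedge is then trivially covered (both endpoints receive the same label), and every cross-cloud superedge is covered because it reduces to the corresponding $\phi(i)$-$\phi(j)$ constraint in $G$, so $\val(\Gamma') = 1$.

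The heart of the argument is soundness. For an arbitrary assignment $\phi'$ of $\Gamma'$ I will \emph{plurality-decode} it to an assignment $\phi$ for $\Gamma$ by letting $\phi(i)$ be a value in $V_i$ attained by $\phi'$ on the largest number of vertices of $C_i$; let $\mu_i \in [0,1/2]$ be the fraction of $C_i$ not receiving this plurality value. The expander gives us the quantitative trade-off: by edge-expansion of the $3$-regular expander on $C_i$, the cut between the plurality set and its complement has size at least $\lambda \mu_i d_i$ for an absolute constant $\lambda>0$, and every such cut edge is a consistency superedge whose two endpoints carry distinct labels, hence is uncovered by $\phi'$. Now any superedge $i-j \in E_H$ not covered by $\phi$ must satisfy at least one of: (a) its corresponding vertex in $C_i$ or $C_j$ lies outside the plurality, or (b) the cross-cloud superedge itself is uncovered by $\phi'$. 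The number of edges of type (a) is bounded by $\sum_i \mu_i d_i$, and (b) contributes at most the number of uncovered cross-cloud superedges, so combining with the expander lower bound,
\[
|E_H| - \val(\phi)\cdot|E_H| \;\leq\; \tfrac{1}{\lambda}\cdot(\text{uncovered consistency edges}) + (\text{uncovered cross-cloud edges}) \;=\; O\bigl(|E_{H'}| - \val(\phi')\cdot|E_{H'}|\bigr).
\]
Since $|E_{H'}| = |E_H| + O(|E_H|) = O(|E_H|)$, this translates $\val(\phi') \geq 1 - \varepsilon'$ into $\val(\phi) \geq 1 - \varepsilon$ for $\varepsilon := C\varepsilon'$ with a suitable constant $C$, and taking $\varepsilon' := \varepsilon/C$ closes the contrapositive.

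The main obstacle, as is usually the case for this type of gadget, is calibrating the constants cleanly: one has to commit to an explicit expander family and track how its edge-expansion constant propagates through the plurality argument, and one must handle small clouds (where the expander is degenerate) by e.g.\ duplicating each vertex a constant number of times so that $d_i$ is always large enough for the expansion bound to bite. Both are routine, and given them the lemma follows.
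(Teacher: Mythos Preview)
The paper does not prove this lemma; it is cited from \cite{LRSZ17}, with only the remark that ``their reduction uses standard technique of sparsification via expanders, and similar reductions have been presented before in literature (see e.g.~\cite{Dinur07}).'' Your sketch is precisely this standard expander-replacement argument, so it matches what the paper points to.

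One small gap worth flagging: you assert $\mu_i \in [0,1/2]$, but the plurality label may occupy well under half of $C_i$ (e.g.\ if many distinct labels each appear once), in which case $\mu_i > 1/2$ and your edge-expansion step ``cut $\geq \lambda \mu_i d_i$'' is applied to the wrong side. The standard fix is to argue via the expander mixing lemma (or a spectral bound) that the fraction of monochromatic consistency edges in $C_i$ is at most $\sum_a q_a^2 + \lambda' \leq p_i + \lambda'$, where $q_a$ is the fraction of $C_i$ receiving label $a$ and $p_i = \max_a q_a$; hence the uncovered consistency edges number at least $(\mu_i - \lambda')\cdot\tfrac{3}{2}d_i$, which suffices after absorbing the additive $\lambda'$ into the final constant. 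With that adjustment your outline is correct.
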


Combined this with the parameterized inapproximability of
\autoref{crl:inapprox-colored-dks}, we can immediately conclude that the bounded
degree version of \pname{MPSI} is also hard to approximate, even for
parameterized algorithms:

\begin{crl} \label{crl:inapprox-bounded-degree-csi}
Assuming randomized Gap-ETH, for some $\varepsilon > 0$, there is no
$f(\ell)\polyn$-time algorithm that, given a \pname{MPSI}(4) instance $\Gamma =
(G, H, V_1 \cup \cdots \cup V_\ell)$ of size $n$, can distinguish between the
following two cases:
\begin{itemize}
\item (YES) $\val(\Gamma) = 1$.
\item (NO) $\val(\Gamma) < 1 - \varepsilon$.
\end{itemize}
\end{crl}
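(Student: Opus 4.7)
The plan is to derive \autoref{crl:inapprox-bounded-degree-csi} by composing the two preceding results: the parameterized inapproximability of (unbounded-degree) \pname{MCSI} from \autoref{crl:inapprox-colored-dks}, and the degree-reduction reduction from \cite{LRSZ17} stated in the lemma just above. The key observation is that \autoref{crl:inapprox-colored-dks} already gives a much stronger gap than we need, namely $1$ versus $\ell^{-h(\ell)}$ for any $h(\ell) = o(1)$, whereas we only want a constant soundness gap here.

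First I would fix a concrete $h(\ell) = o(1)$ (any choice like $h(\ell) = 1/\log\log\ell$ works) so that, for all sufficiently large $\ell$, $\ell^{-h(\ell)} \leq 1/2$. By \autoref{crl:inapprox-colored-dks}, assuming randomized Gap-ETH, no $f(\ell) \polyn$-time algorithm can distinguish YES instances of \pname{MCSI} (with complete supergraph $H$) from instances with $\val(\Gamma) < 1/2$. This yields hardness of distinguishing $\val(\Gamma) = 1$ from $\val(\Gamma) < 1 - \varepsilon_0$ for the constant $\varepsilon_0 = 1/2$, which is exactly the kind of constant-factor gap that the degree-reduction reduction can consume.

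Second, I would invoke the \cite{LRSZ17} reduction with $\varepsilon = \varepsilon_0 = 1/2$ to obtain a polynomial-time transformation $\Gamma \mapsto \Gamma'$, where $\Gamma'$ is a \pname{MCSI}(4) instance on $\ell' = O(\ell^2)$ supernodes, such that $\val(\Gamma) = 1$ implies $\val(\Gamma') = 1$, and $\val(\Gamma) < 1 - \varepsilon_0$ implies $\val(\Gamma') < 1 - \varepsilon'$ for the constant $\varepsilon'$ given by the lemma. Composing with the previous step, distinguishing $\val(\Gamma') = 1$ from $\val(\Gamma') < 1 - \varepsilon'$ in the \pname{MCSI}(4) instance $\Gamma'$ is at least as hard as distinguishing the two cases for $\Gamma$.

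Finally, I would argue by contrapositive to derive the desired runtime lower bound: if some algorithm solved the distinguishing problem on \pname{MCSI}(4) in time $g(\ell') \polyn$, then, pre-composing with the polynomial-time reduction, one would obtain an algorithm running in time $g(O(\ell^2)) \polyn = f(\ell) \polyn$ for the \pname{MCSI} distinguishing problem of \autoref{crl:inapprox-colored-dks}, contradicting randomized Gap-ETH. The whole argument is essentially a two-step reduction with no real technical obstacle; the only minor thing to be careful about is verifying that the soundness of \autoref{crl:inapprox-colored-dks} can be strengthened (trivially, since $\ell^{-h(\ell)} \to 0$) to absorb the constant $\varepsilon_0$ required as input by the degree-reduction lemma, and that the quadratic parameter blow-up is harmless because both hardness statements are parameterized by a function of the number of supernodes alone.
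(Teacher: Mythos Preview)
Your proposal is correct and is exactly the argument the paper has in mind: the paper simply states that the corollary follows by ``combining'' \autoref{crl:inapprox-colored-dks} with the \cite{LRSZ17} degree-reduction lemma, without spelling out the details you provide. Your care in noting that the $\ell^{-h(\ell)}$ gap is more than sufficient to feed a constant $\varepsilon_0$ into the degree-reduction lemma, and that the quadratic blow-up $\ell' = O(\ell^2)$ is harmless for FPT lower bounds, fills in precisely the routine steps the paper leaves implicit.
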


We are now ready to state the main lemma of this subsection, which provides a
reduction from bounded degree \pname{MPSI} to \bidsn:

\begin{lem} \label{lem:inapprox-bidsn-red}
For every constant $\varepsilon > 0$ and any $d \in \mathbb{N}$, there exists a
polynomial time reduction that, given an instance $\Gamma = (G, H, V_1 \cup
\cdots \cup V_\ell)$ of \pname{MPSI}($d$), produces an instance $(G', \mc{D}')$
of \bidsn and $B^* \in \mathbb{N}$ such that
\begin{itemize}
\item (completeness) if $\val(\Gamma) = 1$, then there exists a network $N
\subseteq G'$ of cost $B^*$ that satisfies all demands,
\item (soundness) if $\val(\Gamma) < 1 - \varepsilon$, then every network $N
\subseteq G'$ that satisfies all demands has cost more than $(1 +
\frac{\varepsilon}{11840 d})B^*$, and
\item (parameter dependency) The number of demand pairs $|\mc{D}'|$ is
$O(\ell)$.
\end{itemize}
\end{lem}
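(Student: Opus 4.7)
My plan is to adapt the W[1]-hardness reduction from Section~\ref{sec:bidsn-w[1]} (which reduces \csi to \bidsn with $O(\ell d)$ demand pairs when $H$ has maximum degree $d$) and make its soundness analysis quantitative, so that a factor-$(1+\Theta(\varepsilon/d))$ approximation for the constructed \bidsn instance yields an assignment covering a $(1-\varepsilon)$-fraction of the superedges. The only change to the construction itself is to replace the edge weight $M=k^{4}$ used in Section~\ref{sec:bidsn-w[1]} by a constant, say $M=1$; this makes the base cost $B^{*}$ linear in $\ell$ (of order $\Theta(\ell d)$), which is what allows an \emph{additive} slack proportional to $\varepsilon \ell$ to translate into only an $O(\varepsilon/d)$ multiplicative blowup.

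\textbf{Completeness.} If $\val(\Gamma)=1$, treat $\Gamma$ as a \csi instance and apply the construction from Section~\ref{sec:easy-csi} verbatim with $M=1$. This produces a \bidsn solution whose cost is exactly $B^{*} = 4\ell+(2k+2\ell)B+(2k+2\ell)B+(2k+\ell)(B+4)$ with $k=|E^{*}_{H}|$ and $B=7M=7$; in particular $B^{*}=\Theta(\ell d)$.

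\textbf{Soundness.} I would prove the contrapositive. Suppose $N$ is a \bidsn solution with $\cost(N)\le(1+\eta)B^{*}$ where $\eta=\varepsilon/(10592d)$. Lemmas \ref{lem:in-out-vertical-general}, \ref{lem:in-out-horizontal-general}, \ref{lem:in-out-main-general}, \ref{lem:main-at-least-4-red}, and \ref{lem:dotted} still force every gadget to cost $\ge B$, every main gadget to contribute $\ge 4$ red edges, and every border vertex to contribute $\ge 1$ dotted edge, so the ``excess'' $\eta B^{*}$ is the only slack available. Say a gadget is \emph{loose} if its edges in $N$ cost $\ge B+M$ (by Lemma~\ref{lem:macro-uniqueness-gadget} this is the only alternative to being tight at cost exactly $B$), and call an edge of $N$ \emph{redundant} if removing it still satisfies all demands. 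Then the number of loose gadgets plus redundant edges is at most $\eta B^{*}$. I define $\phi:V_{H}\to V_{G}$ by mirroring the argument of Lemma~\ref{thm:csi-says-yes}: $\phi(i)$ is read off from an incoming edge at $b_{i}$ (choose arbitrarily if several exist), and analogously $\psi(j)$ from $c_{j}$; since for each $i$ the gadget $M_{i,i}$ forces $\phi(i)=\psi(i)$ in the tight case, I first argue that $\phi(i)=\psi(i)$ unless some gadget in row or column $i$ is loose or contributes a redundant edge (otherwise Lemma~\ref{lem:agreement-tight-dsn} applies to that column/row).

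\textbf{Charging and the main obstacle.} For each superedge $\{i,j\}\in E_{H}$ that is uncovered by $\phi$, I would show the ``chain'' of five gadgets $M_{i,j}, HS_{i,j}, HS_{i,\mathbf{next}_{j}(i)}, VS_{i,j}, VS_{\mathbf{next}_{i}(j),j}$ together with the horizontal propagation in row $i$ from $b_i$ to $M_{i,j}$ and the vertical propagation in column $j$ from $c_j$ to $M_{i,j}$ must contain either a loose gadget or a redundant edge; this is the contrapositive of Lemmas~\ref{thm:representing} and~\ref{lem:agreement-tight-dsn} which together guaranteed, in the tight case, that $(\phi(i),\phi(j))=(\lambda_{i,j},\delta_{i,j})\in E_{G}$. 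The hard part will be to make this contrapositive uniform: every loose gadget and every redundant edge sits in $O(1)$ such chains (because every secondary gadget is shared between at most $2$ main gadgets and each red or connector edge belongs to one gadget), so a charging scheme gives $\#\text{uncovered}\le C\cdot\eta B^{*}$ for some absolute constant~$C$. Plugging in $\eta B^{*}=\Theta(\varepsilon\ell/10592)$ and $|E_{H}|=\Omega(\ell)$ (WLOG $H$ has no isolated supernodes, and bounded degree keeps $|E_H|=\Theta(\ell)$), the ratio $\#\text{uncovered}/|E_{H}|$ is at most $\varepsilon$, contradicting $\val(\Gamma)<1-\varepsilon$. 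The numerical constant $10592$ comes out of the precise sharing constants $C$ from this chain-charging analysis; tracking these constants tightly through the five-gadget chain and the horizontal/vertical propagation arguments of Section~\ref{sec:hard-csi} is the main technical work, while the high-level structure (tight vs.\ loose gadgets, redundant edges, chain-charging) mirrors the W[1]-hardness proof almost directly.
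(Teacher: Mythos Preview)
Your overall strategy is the same as the paper's: reuse the \csi-to-\bidsn reduction of Section~\ref{sec:bidsn-w[1]} with a \emph{constant} edge weight $M$ in the uniqueness gadgets, and turn the exact soundness analysis into a quantitative one by bounding how many gadgets can fail to be tight within the excess budget $\eta B^{*}$. The paper also reads off $\phi(i)$ from the diagonal main gadget $M_{i,i}$, defines a supernode $i$ to be \emph{good} if every $M_{i,j}$ and $M_{j,i}$ (with $j\in N_H'(i)$) is tight and touched by exactly four red edges, and proves that every superedge with two good endpoints is covered.

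There are, however, two concrete gaps in your plan. First, $M=1$ is too small. The paper's key accounting (its analogue of your ``loose gadgets'' bound) shows that the cost of $N$ is at least $B^{*}+X(M-12)+Y$, where $X$ is the number of non-tight gadgets and $Y$ the number of main gadgets with $\ge 5$ red edges: a non-tight gadget adds excess $M$ but can eliminate the four-red-edge requirement for up to three main gadgets, saving up to $12$ in red edges. For the bound to control $X$ you need $M>12$; with $M=1$ a solution can in principle trade one extra gadget edge for several red edges and the charging collapses. The paper sets $M=5$ and states $B^*+X+Y$, which appears to contain an arithmetic slip, but the substantive point is that $M$ must be a large enough constant, and $M=1$ definitely does not suffice.

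Second, your claim that ``every loose gadget and every redundant edge sits in $O(1)$ such chains'' is false for the chains you describe. The chain for superedge $\{i,j\}$ includes the full row-$i$ propagation, so a single loose gadget $M_{i,j'}$ lies in the chain of \emph{every} superedge incident to $i$, i.e., in up to $d$ chains. This is precisely where the factor $d$ in the denominator of the approximation ratio originates. You smuggle the $d$ back in via ``$B^{*}=\Theta(\ell d)$'', but that is only an upper bound: $B^{*}=\Theta(|E_H|+\ell)$, which can be $\Theta(\ell)$ when $H$ is sparse. The paper avoids this by defining bad \emph{supernodes} (at most $O(\eta B^{*})$ of them, since each bad main gadget spoils at most two supernodes) and then applying the degree bound once to conclude that at most $d$ superedges per bad supernode are lost; this gives $\#\text{uncovered}\le O(d)\cdot\eta B^{*}$ and, using $B^{*}/|E_H|=O(1)$, yields the stated ratio.
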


Before we proceed to prove the above lemma, let us note that
\autoref{thm:lb-approx-biDSN} follows immediately from
\autoref{crl:inapprox-bounded-degree-csi} and \autoref{lem:inapprox-bidsn-red}.

The construction for \autoref{lem:inapprox-bidsn-red} is exactly the same as
that in~\autoref{sec:bidsn-w[1]} with only one exception: each gadget will now
be a copy of the uniqueness gadget from~\autoref{sec:uniqueness-gadget} with $M
=13$ (instead of $M = k^4$ used before). Again, it is clear that the number of
demand pairs is $O(k + \ell) = O(\ell)$ where $k$ is the number of superedges,
i.e., $k = |E_H|$.

Let $B = 7M = 91$ and $B^* = 4\ell + (2k + 2\ell) \cdot B + (2k + 2\ell) \cdot B
+ (2k + \ell) \cdot (B + 4)$. Note that we can simplify the value of $B^*$ as follows:
\begin{equation}\label{eqn:Beestar}
B^* = 4\ell+(6k+5\ell)\cdot B + 4(2k+\ell) = 8(k+\ell)+ 91(6k+5\ell) = 554k + 463\ell
\end{equation}
It is not hard to see that, in the completeness
case, the solution used in~\autoref{sec:bidsn-w[1]} still works, and that it has
cost exactly $B^*$ as desired. Hence, we are only left to show the soundness of
the reduction.

We proceed to prove the soundness. Again, we will prove our soundness by
contrapositive. Suppose that there exists a network $\big(V(G'),N\big)$ of cost $\rho < (1 +
\beta)B^*$ where $\beta = \frac{\varepsilon}{11840d}$ that satisfies all the
demand pairs. We will also assume without loss of generality that the edge set
$N$ is \emph{inclusion-wise minimal}, i.e., that if we remove any edge from $N$,
then at least one demand pair must be unsatisfied.

Since our underlying graph and the demand pairs are exactly the same as those
from~\autoref{sec:bidsn-w[1]}, the restriction of $N$ into each gadget must
again satisfy the in-out property (similar to
\autoref{lem:in-out-vertical-general}, \autoref{lem:in-out-horizontal-general},
and \autoref{lem:in-out-main-general}) as stated below:

\begin{lem}
For any $j\in [\ell], i\in N'_{H}(j)$ the edges of $N$ which have both
end-points in the horizontal secondary gadget $\VS_{i,j}$ satisfy the in-out
property. Hence, $\VS_{i,j}$ uses up weight of at least $B$ from the budget.
\label{lem:in-out-v}
\end{lem}

\begin{lem}
For any $i\in [\ell], j\in N'_{H}(i)$ the edges of $N$ which have both
end-points in the vertical secondary gadget $\HS_{i,j}$ satisfy the
in-out property. Hence, $\HS_{i,j}$ uses up weight of at least $B$ from the
budget.
\label{lem:in-out-h}
\end{lem}

\begin{lem}
For every $i,j$ such that $ij\in E^*_H$ the edges of $N$ which have both
end-points in the main gadget $\M_{i,j}$ satisfy the in-out property. Hence,
$\M_{i,j}$ uses up weight of at least $B$ from the budget.
\label{lem:in-out-m}
\end{lem}

We say that a gadget is \emph{tight} if $N$ restricted to the gadget has cost
exactly $B$. Recall that the first step of the proof of the reverse direction\footnote{If the instance $(G^*,\mathcal{D})$ of \bidsn has a solution of weight $\leq B^*$ then the instance $(G,H)$ of \csi has a solution} of \autoref{thm:lb-biDSN} was to observe that every gadget must be tight; this was
true because the value $M$ over there was set so large that even an excess of
$B$ was already more than the total cost of all
\textcolor[rgb]{1.00,0.00,0.00}{red} edges. However, this is not true in our
modified construction anymore as we choose $M = 13$. Fortunately for us, we will
still be able to show that all but a small fraction of the gadgets are tight.

To prove such a bound, first recall that \autoref{lem:main-at-least-4-red} (used in the proof of \autoref{thm:lb-biDSN}) exactly shows that if a main gadget and all its four surrounding secondary
gadgets are tight, then the edge set $N$ must contain at least four
\textcolor[rgb]{1.00,0.00,0.00}{red} edges with exactly one-end point in the main
gadget. We restate this formally as follows (proof is omitted since it is exactly the same as that of \autoref{lem:main-at-least-4-red})

\begin{lem}
\label{lem:main-red}
For each $1\leq i,j\leq \ell$ such that $ij\in E^*_H$, if the main gadget $\M_{i,
j}$ and the four secondary gadgets surrounding it $\big( \HS_{i,j}, \HS_{i,\nextone_{j}(i)}, \VS_{i,j}$ and $\VS_{\nextone_{i}(j),j}\big)$ are tight, then the edge set $N$ must contain at least one
\textcolor[rgb]{1.00,0.00,0.00}{red} edge of each of the following types:
\begin{enumerate}[(a)]
\item an edge with one end-point in the set of $3$-vertices of $\M_{i,j}$ and the
other end-point in the set of $0$-vertices of $\HS_{i,j}$,
\item an edge with one end-point in the set of $0$-vertices of $\M_{i,j}$ and the
other end-point in the set of $3$-vertices of $\HS_{i,\nextone_{j}(i)}$,
\item an edge with one end-point in the set of $0$-vertices of $\M_{i,j}$ and the
other end-point in the set of $3$-vertices of $\VS_{i,j}$, and
\item an edge with one end-point in the set of $3$-vertices of $\M_{i,j}$ and the
other end-point in the set of $0$-vertices of $\VS_{\nextone_{i}(j),j}$.
\end{enumerate}
\end{lem}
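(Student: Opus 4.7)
The plan is to extend the argument of \autoref{lem:main-at-least-4-red} symmetrically to each of the four interfaces surrounding $M_{i,j}$. That earlier lemma already produces a red edge of type~(a); running the same template three more times against the three remaining sides will yield types~(b), (c) and~(d). The four resulting edges are automatically distinct, since the four surrounding secondary gadgets are pairwise vertex-disjoint.

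I will first pin down the rigid structure enforced by the tightness hypothesis. By \autoref{thm:representing} applied to the five tight gadgets $M_{i,j}$, $HS_{i,j}$, $HS_{i,\nextone_{j}(i)}$, $VS_{i,j}$, $VS_{\nextone_{i}(j),j}$, each of them carries a unique representative index together with an orientation (left or right), so that for any source- or target-vertex of such a gadget, the only internal path from that vertex to the gadget's boundary ends at one specific $0$-vertex or $3$-vertex determined by the representation. For each of the four edge types~(a)--(d) I will then single out one demand pair from $\mc{D}$ whose two endpoints lie in different gadgets of this set and on opposite sides of the interface in question:
\begin{itemize}
\item for~(a), the Type~II demand $(M_{i,j}(s_1), HS_{i,j}(t_1))$, exactly as in \autoref{lem:main-at-least-4-red};
\item for~(b), the Type~II demand $(HS_{i,\nextone_{j}(i)}(s_1), M_{i,j}(t_1))$, which lies in~$\mc{D}$ because $\prevone_{\nextone_{j}(i)}(i)=j$;
\item for~(c), the Type~I demand $(VS_{i,j}(s_1), M_{i,j}(t_1))$;
\item for~(d), the Type~I demand $(M_{i,j}(s_1), VS_{\nextone_{i}(j),j}(t_1))$.
\end{itemize}
In each case, the tight representations of the two relevant gadgets force the path to leave one gadget through its representative boundary vertex and enter the other through its representative boundary vertex, and the only edge in $G'$ joining those two specific boundary vertices is exactly a red edge of the claimed type---this is verbatim the escape argument of \autoref{lem:main-at-least-4-red}.

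The main obstacle, more bookkeeping than mathematical difficulty, is to carry out the case split on orientations cleanly. In \autoref{lem:main-at-least-4-red} only the right-oriented subcase is explicitly written out; to finish the stronger statement here I must verify, for each side, that in the left-oriented subcase the symmetric argument either produces the same red edge or can be paired with a complementary demand that does, always yielding a red edge with one endpoint in $M_{i,j}$. The non-sneaking condition---that the path cannot bypass the targeted interface by looping through gadgets outside the five listed ones---follows from tightness of these five gadgets together with the fact that the only inter-gadget edges touching their interiors are the red edges specified in the construction, so any feasible path must eventually cross one of these red edges on the correct side.
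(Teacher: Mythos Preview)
Your proposal is correct and follows exactly the route the paper takes: the paper does not give a separate proof of \autoref{lem:main-red} at all, but simply states that it is the same argument as \autoref{lem:main-at-least-4-red} applied to each of the four sides. Your write-up is in fact more explicit than the paper's, since you correctly name the specific demand pair driving each of the four cases and flag the need to treat the left-oriented subcase with the complementary demand (e.g.\ for type~(a) with $HS_{i,j}$ left-oriented, the outgoing demand $(HS_{i,j}(s_1), M_{i,\prevone_j(i)}(t_1))$ forces exit through $HS_{i,j}(0_x)$ and hence a red edge into a $3$-vertex of $M_{i,j}$).
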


Next we restate \autoref{lem:dotted}, which we can use here since it only uses the fact that the edge set $N$ is such that $\big(V(G'),N\big)$ is a solution for the instance $(G',\mathcal{D}')$ of \bidsn:

\lemdotted*

We are now ready to prove a bound on the number of non-tight gadgets. The key
idea here is that, while having a non-tight gadget may help ``save'' the number
of required \textcolor[rgb]{1.00,0.00,0.00}{red} edges
from~\autoref{lem:main-at-least-4-red}, this saving is still smaller than the excess cost
of $M$. Hence, if there are too many non-tight gadgets, then the cost of $N$
must be much more than the minimum possible cost of $B^*$, which would
contradict our assumption that the cost of $N$ is at most $(1 + \beta) B^*$.

In addition to the bound on the number of non-tight gadgets, we will be able to
give an upper bound on the number of main gadgets with at least five
\textcolor[rgb]{1.00,0.00,0.00}{red} edges touching them; again, this is just
because these edges add to the minimum possible cost $B^*$.

\begin{lem}
There are at most $\beta \cdot B^*$ non-tight gadgets. Moreover, there are at
most $\beta \cdot B^*$ main gadgets $\M_{i, j}$ such that there are at least
five \textcolor[rgb]{1.00,0.00,0.00}{red} edges with at least one endpoint
in~$\M_{i, j}$.
\label{lem:most-tight}
\end{lem}

\begin{proof}
Let $X$ be the number of non-tight gadgets and $Y$ be the number of main gadget
$\M_{i, j}$ such that there are at least five
\textcolor[rgb]{1.00,0.00,0.00}{red} edges with at least one endpoint in $\M_{i,
j}$.

We can lower bound the cost of the solution $N$ as follows.
\begin{itemize}
\item From \autoref{lem:dotted}, at least $4\ell$ \textcolor[rgb]{1.00,0.50,0.00}{orange} edges must be
selected.
\item Since there is a total of $6k + 5\ell$ gadgets (including both main and
secondary gadgets), at least $(6k + 5\ell - X)$ of these gadgets are tight.
These tight gadgets use up weight of $(6k + 5\ell - X) B$ from the budget.
Further, \autoref{lem:macro-uniqueness-gadget}, together with
\autoref{lem:in-out-v}, \autoref{lem:in-out-h} and \autoref{lem:in-out-m},
implies that each of the $X$ non-tight gadgets uses up weight at least $8M=(B+M)$.
Hence, in total, the weight of edges of $N$ whose both endpoints are from the
same gadget is at least $(6k + 5\ell)B + XM$.
\item Let us divide the main gadgets $\M_{i, j}$ into three groups based on the
number of \textcolor[rgb]{1.00,0.00,0.00}{red} edges touching them: (1) there
are at most three such edges, (2) there are at least five such edges and (3)
there are exactly four such edges.

From \autoref{lem:main-red}, each gadget of type (1) must either be non-tight or
be adjacent to at least one non-tight secondary gadgets. Since there are only
$X$ non-tight gadgets and each secondary gadget is adjacent to at most two main
gadgets, the number of main gadgets of type (1) is at most $X + 2X = 3X$. Recall
also that we assume that the number of main gadgets of type (2) is $Y$. As a
result, the number of \textcolor[rgb]{1.00,0.00,0.00}{red} edges is at least $5Y
+ 4(2k + \ell - 3X - Y) = Y - 12X + 4(2k + \ell)$.
\end{itemize}
We can conclude that in total the cost of $N$ must be at least
\begin{align*}
4\ell + \Big((6k + 5\ell)B + XM\Big) &+ \Big(Y - 12X + 4(2k + \ell)\Big)\\
&= 4\ell + \Big((6k + 5\ell)B + 13X\Big) + \Big(Y - 12X + 4(2k + \ell)\Big) 
\tag*{(since $M=13$)}\\
&= \Big(4\ell + (6k + 5\ell)B + 4(2k+\ell)  \Big) + Y + X \\
&= \Big(4\ell + (2k + 2\ell)B + (2k + 2\ell)B + (2k + \ell)(B+4) \Big) + Y + X \\
&= B^* + Y + X
\end{align*}
where the last equality follows because $B^* = 4\ell + (2k + 2\ell)B + (2k + 2\ell)B + (2k + \ell)(B+4)$. Since we assume that the total cost of $N$ is at most $(1 + \beta)B^*$, we have
$X, Y \leq \beta \cdot B^*$ as desired.
\end{proof}

We will next use the above bound to help us find a solution $\phi: V_H \to V_G$
to the \pname{MPSI}($d$) instance $\Gamma$. Unlike in the proof of
\autoref{thm:lb-biDSN} where the network $N$ canonically gives $\phi(i)$ for
every $i \in [\ell]$, this will only be true for ``good'' $i$ which is defined
below.

\begin{dfn}
A main gadget $\M_{i, j}$ is \emph{good} if the gadget itself and all its
surrounding secondary gadgets ($\HS_{i,j}$, $\HS_{i,\nextone_{j}(i)}$,
$\VS_{i,j}$, and $\VS_{\nextone_{i}(j),j}$) are tight and there are exactly four
\textcolor[rgb]{1.00,0.00,0.00}{red} edges with one endpoint in $\M_{i, j}$. We
call a main gadget $\M_{i, j}$ \emph{bad} if it is not good.

Furthermore, $i \in [\ell]$ is said to be \emph{good} if $\M_{i, j}$ and $\M_{j,
i}$ are good for every $ij \in E^*_H$. Similarly, we say that $i \in [\ell]$ is
\emph{bad} if it is not good.
\end{dfn}

We now set up some notation regarding representation of each tight gadget. Note that, while in \autoref{sec:bidsn-w[1]} every gadget is tight and hence the notation there applied for all gadgets, the following notation is only well-defined for tight gadgets in our proof:
\begin{itemize}
\item For each $j\in [\ell]$ and each $i\in N'_{H}(j)$, if the horizontal secondary gadget
$\VS_{i, j}$ is tight then $\VS_{i,j}$ is represented (\autoref{defn-representation}) by some $y_{i,j}\in V_j$,
\item For each $i\in [\ell]$ and each $j\in N'_{H}(i)$, if the vertical secondary
gadget $\HS_{i,j}$ is tight then $\HS_{i, j}$ represented by some $x_{i,j}\in
V_i$,
\item For each $ij\in E^*_H$, if the main gadget $\M_{i,j}$ is tight then $\M_{i, j}$ is represented by some $(\lambda_{i,j}, \delta_{i,j})\in E_{i,j}$.
\end{itemize}

Consider the assignment $\phi: V_H \to V_G$ defined as follows: for each $i\in [\ell]$
    \[
 \phi(i) =
  \begin{cases}
   \lambda_{i,i} & \text{if } i\ \text{is good} \\
   \text{any arbitrarily chosen vertex from $V_i$}       & \text{otherwise}
  \end{cases}
\]
The remaining argument
consists of two parts. First, we will show that $\phi$ covers every superedge
$ij \in E_H$ such that both $i, j$ are good. Then, we will argue that only a
small fraction of $i \in [\ell]$ is bad. Combining these two parts completes our
proof.

To show that $\phi$ satisfies all superedges whose endpoints are both good, we first argue (similar to \autoref{lem:agreement-tight-dsn} where every main gadget was good) that good gadgets allow us to propagate equality of representations, as stated formally below.

\begin{lem}
For every good main gadget $\M_{i, j}$, we have
$x_{i,j}=\lambda_{i,j}=x_{i,\nextone_{j}(i)}$ and
$y_{i,j}=\delta_{i,j}=y_{\nextone_{j}(i),j}$.
\label{lem:agreement-tight}
\end{lem}

\begin{proof}
Due to symmetry, it suffices to only argue that $x_{i,j}=\lambda_{i,j}$. Let us
assume for the sake of contradiction that $x_{i,j}\ne\lambda_{i,j}$. We will
argue that there is a vertex such that (1) there is exactly one edge adjacent to
it from the network $N$ and (2) it does not belong to any demand pair. Observe
that removing its only adjacent edge from $N$ does not affect the validity of
the solution. This contradicts our assumption that $N$ is minimal.

From \autoref{lem:main-red} and from our assumption that there are exactly four
\textcolor[rgb]{1.00,0.00,0.00}{red} edges with one endpoint in $\M_{i, j}$,
there must be exactly one red edge from each of the following types:
\begin{enumerate}[(a)]
\item an edge with one end-point in the set of $3$-vertices of $\M_{i,j}$ and the
other end-point in the set of $0$-vertices of $\HS_{i,j}$,
\item an edge with one end-point in the set of $0$-vertices of $\M_{i,j}$ and the
other end-point in the set of $3$-vertices of $\HS_{i,\nextone_{j}(i)}$,
\item an edge with one end-point in the set of $0$-vertices of $\M_{i,j}$ and the
other end-point in the set of $3$-vertices of $\VS_{i,j}$, and
\item an edge with one end-point in the set of $3$-vertices of $\M_{i,j}$ and the
other end-point in the set of $0$-vertices of $\VS_{\nextone_{i}(j),j}$.
\end{enumerate}

Observe that if the edge of type (a) does not have one endpoint at $\HS_{i,
j}(0_{x_{i, j}})$, then the vertex $\HS_{i, j}(0_{x_{i, j}})$ is the desired
vertex.

Now, suppose that one endpoint of the edge of type (a) is $\HS_{i, j}(0_{x_{i,
j}})$. The other endpoint must be $\M_{i, j}(3_{x_{i, j}, y})$ for some $y \in
V_j$. Since $x_{i, j} \ne \lambda_{i, j}$, we have $\M_{i, j}(3_{x_{i, j}, y})
\ne \M_{i, j}(3_{\lambda_{i, j}, \delta_{i, j}})$. Consider the edge of type (d);
suppose that one of its endpoint is $\M_{i, j}(3_{x', y'})$. Since $\M_{i,
j}(3_{x_{i, j}, y}) \ne \M_{i, j}(3_{\lambda_{i, j}, \delta_{i, j}})$, at least
one of the following must be true: $\M_{i, j}(3_{x_{i, j}, y}) \ne \M_{i,
j}(3_{x', y'})$ or $\M_{i, j}(3_{\lambda_{i, j}, \delta_{i, j}}) \ne \M_{i,
j}(3_{x', y'})$.

If $\M_{i, j}(3_{x_{i, j}, y}) \ne \M_{i, j}(3_{x', y'})$,
then $\M_{i, j}(3_{x_{i, j}, y})$ is the desired vertex. Otherwise, if $\M_{i,
j}(3_{\lambda_{i, j}, \delta_{i, j}}) \ne \M_{i, j}(3_{x', y'})$, then $\M_{i,
j}(3_{\lambda_{i, j}, \delta_{i, j}})$ is the desired vertex.

In all cases, we have found a vertex with desired properties, and hence we
have arrived at a contradiction.
\end{proof}

Our main claim now follows almost immediately from
\autoref{lem:agreement-tight}.

\begin{lem}
The mapping $\phi$ covers every $ij \in E_H$ such that both $i, j$ are good.
\end{lem}

\begin{proof}
Consider any such superedge $ij \in E_H$. Let $j_1 < j_2 < \cdots < j_p$ be all
elements of $N'_H(i)$ and $i_1 < i_2 < \dots < i_q$ be all elements of
$N'_H(j)$, \autoref{lem:agreement-tight} implies that
\begin{align*}
x_{i, j_1} = \lambda_{i, j_1} = x_{i, j_2} = \cdots = x_{i, j_p},
\end{align*}
and
\begin{align*}
y_{i_1, j} = \delta_{i_1, j} = y_{i_2, j} = \cdots = y_{i_q, j}.
\end{align*}
Since $j \in N'_H(i)$ and $i \in N'_H(j)$, the above inequalities imply that
$\lambda_{i,j} = \lambda_{i, i}$ and $\delta_{i, j} = \delta_{j, j}$.
Furthermore, observe that $(\lambda_{j, j}, \delta_{j, j}) \in E_{j, j}$,
meaning that $\delta_{j, j} = \lambda_{j, j}$.

Recall that we set $\phi(i) = \lambda_{i, i}$ and $\phi(j) = \lambda_{j, j}$.
This means that $(\phi(i), \phi(j)) = (\lambda_{i, j}, \delta_{i, j})$ which
must be in~$E_{i, j}$. In other words, $ij$ is covered by $\phi$.
\end{proof}

For the second part, let us first argue an upper bound on the number of bad
main gadgets. Observe that each bad main gadget $\M_{i, j}$ must satisfy at least
one of the three following conditions: (1) $\M_{i, j}$ is not tight, (2) one of
its surrounding secondary gadgets is not tight, or (3) there are at least five
\textcolor[rgb]{1.00,0.00,0.00}{red} edges with one endpoint in $\M_{i, j}$.
\autoref{lem:most-tight} implies that there are at most $\beta \cdot B^*$, $2
\beta \cdot B^*$, and $\beta \cdot B^*$ main gadgets that satisfy (1), (2), and
(3) respectively (recall that each secondary gadget has edges to at most two main gadgets). Hence, in total, there are at most $4 \beta \cdot B^*$ bad
main gadgets. Since, for each bad $i \in [\ell]$, there must exist some $j \in N'_H(i)$
such that $\M_{i, j}$ or $\M_{j, i}$ is a bad gadget, there can be at most $8
\beta \cdot B^*$ bad $i \in [\ell]$.

Due to our bounded degree assumption on $H$, there can be at most $8 d \beta
\cdot B^*$ superedges $ij \in E_H$ such that at least one of $i, j$ is bad. As
a result, $\phi$ satisfies all but $8 \beta d \cdot B^*$ superedges. Thus, we have
\begin{align*}
\val(\Gamma) &\geq 1 - \frac{8 \beta d \cdot B^*}{k} \\
&= 1 - \frac{8 \beta d \cdot (554k+463\ell)}{k} \tag*{(since $B^*=554k+463\ell$ from~\autoref{eqn:Beestar})}\\
&\geq 1 - 8 \beta d (554+926) \tag*{(since $k\geq \ell/2$)}\\
&= 1 - \varepsilon \tag*{(since $\beta = \frac{\varepsilon}{11840d}$)}
\end{align*}
where the second inequality comes from the fact that we can assume without loss
of generality that the supergraph $H$ does not contain any isolated vertex. This
concludes the proof of \autoref{lem:inapprox-bidsn-red}.

\section{A Reduction from MPSI to DSN}
\label{app:dsn}

In a previous version~\cite{DBLP:conf/esa/ChitnisFM18} of this manuscript, we 
had provided the following $k^{o(1)}$-factor inapproximability result for~\dsn:
\begin{thm}
Under Gap-ETH, for any function $g(k) =o(1)$and any $f(k)$ independent of $n$, there is no $f(k)\cdot n^{O(1)}$ time algorithm that computes an $k^{g(k)}$-approximation for \dsn.
\label{thm:DSN-inapprox}
\end{thm}

\autoref{thm:DSN-inapprox} has since been subsumed by~\cite{DM18} which shows an improved hardness of $k^{1/4 - o(1)}$-approximation for \dsn under the same assumption of Gap-ETH.
At the heart of our proof of \autoref{thm:DSN-inapprox} is the following
lemma which provides a gap-preserving FPT reduction from \mcsi to \dsn.

\begin{lem} \label{lem:dsn-red}
There exists a polynomial time reduction that, given an instance $\Gamma = (G, H, V_1 \cup \cdots \cup V_{\ell})$ of MPSI where the supergraph $H$ is a complete graph, produces an instance of \dsn with a graph $G'$ and $k$ demand pairs, such
that
\begin{itemize}
\item (completeness) if $\val(\Gamma) = 1$, there is
a network $N \subseteq G'$ of cost $1$ that satisfies all demands,
\item (soundness) for any $\gamma > 0$ (possibly depending on $k$), if
$\val(\Gamma) < \gamma$, then every network $N \subseteq G'$ that satisfies all
demands has cost more than~$1/\sqrt{4\gamma}$, and
\item (parameter dependency) $k = \ell^2 - \ell$.
\end{itemize}
\end{lem}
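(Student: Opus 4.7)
I will build $G'$ as a layered graph in which each vertex $u \in V_G$ is replaced by four copies $u_1, u_2, u_3, u_4$, and only the two ``selection'' edges $u_1 \to u_2$ and $u_3 \to u_4$ carry positive weight $\alpha$; every other edge has weight $0$. In addition I add a source $s_i$ and a target $t_i$ for each $i \in [\ell]$, together with the routing edges $s_i \to u_1$ for $u \in V_i$, $u_2 \to v_3$ for each $(u,v) \in E_G$, and $v_4 \to t_j$ for $v \in V_j$. The demand set is $\mc{D}' = \{(s_i, t_j) : i \neq j\}$, giving exactly $k = \ell(\ell-1) = \ell^2 - \ell$ demand pairs. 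The crucial structural property of $G'$ is that every directed $s_i \to t_j$ path has length exactly $5$ and corresponds to a unique choice of $u \in V_i$ and $v \in V_j$ with $(u, v) \in E_G$: this holds because $u_1$, $v_3$, and $v_4$ each have a single outgoing edge and $t_j$ is a sink, so no longer ``walk'' through intermediate classes of $G$ is possible.

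For the completeness case, given an assignment $\phi$ with $\val(\phi) = 1$, I include every weight-$0$ routing edge together with the two selection edges at $\phi(i)$ for each $i \in [\ell]$. This uses exactly $2\ell$ selection edges, so setting $\alpha := 1/(2\ell)$ makes the total cost equal to~$1$; each demand $(s_i, t_j)$ is routed via the length-$5$ path through $\phi(i)$ and $\phi(j)$, which exists in $G'$ because $(\phi(i), \phi(j)) \in E_G$.

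For the soundness case, fix $N \subseteq G'$ of cost $C$ satisfying all demands and define $S^{\mathrm{out}}_i = \{u \in V_i : (u_1, u_2) \in N\}$, $S^{\mathrm{in}}_i = \{v \in V_i : (v_3, v_4) \in N\}$, $S_i = S^{\mathrm{out}}_i \cup S^{\mathrm{in}}_i$, and $|S| = \sum_i |S_i|$. Then $C \geq |S|/(2\ell)$, and the forced path structure guarantees, for each demand $(i, j)$, some $u \in S^{\mathrm{out}}_i \subseteq S_i$ and $v \in S^{\mathrm{in}}_j \subseteq S_j$ with $(u, v) \in E_G$. Picking $\phi(i)$ uniformly at random in $S_i$, each superedge $\{i, j\}$ is covered with probability at least $1/(|S_i|\,|S_j|)$. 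Applying Cauchy--Schwarz twice---first $\sum_{i<j} 1/(|S_i|\,|S_j|) \geq \binom{\ell}{2}^{2}/\sum_{i<j} |S_i|\,|S_j|$, and then $\sum_i |S_i|^2 \geq |S|^2/\ell$ (which gives $\sum_{i<j} |S_i|\,|S_j| \leq |S|^2 (\ell-1)/(2\ell)$)---yields $\mathbb{E}[\val(\phi)] \geq \ell^2/|S|^2$. Since $\val(\Gamma) \geq \mathbb{E}[\val(\phi)]$, the hypothesis $\val(\Gamma) < \gamma$ forces $|S| > \ell/\sqrt{\gamma}$, and hence $C > 1/\sqrt{4\gamma}$.

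The main obstacle is the ``walk-prevention'' aspect: without the four-copy split, a minimal DSN solution could use a connected set $S$ containing just one vertex per class and satisfy every demand at cost~$1$, while witnessing only $O(1/\ell)$ fraction of the superedges, destroying the soundness bound. The four-copy construction, by making $v_4$ reachable only from $v_3$ and leading only to the sink $t_{\lambda(v)}$, pins every $s_i \to t_j$ path to length exactly~$5$ and to a single edge of $E_G$. A secondary technical point is the use of the sharper inequality $\sum_i |S_i|^2 \geq |S|^2/\ell$ (via Jensen/Cauchy--Schwarz) rather than the trivial $\sum_{i<j} |S_i|\,|S_j| \leq |S|^2/2$, which is what yields the stated constant $1/\sqrt{4\gamma}$ rather than a weaker bound of $1/\sqrt{8\gamma}$ (attained by the crude bound when $\ell = 2$).
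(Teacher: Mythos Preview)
Your proof is correct and follows essentially the same approach as the paper: a layered graph with sources $s_i$, sinks $t_j$, and weight placed on ``selection'' edges so that any feasible network reveals sets $S_i$ from which a random assignment covers $\ell^2/|S|^2$ fraction of superedges via Cauchy--Schwarz/H\"older. The only cosmetic difference is that the paper uses \emph{two} copies of each vertex (placing the weight $1/(2\ell)$ directly on the edges $s_i\to(v,1)$ and $(v,2)\to t_i$) rather than four---the bipartite layering already forces every $s_i\to t_j$ path to have length exactly $3$, so your extra subdivision for ``walk-prevention'' is harmless but unnecessary.
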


The proof of \autoref{thm:DSN-inapprox} follows immediately from \autoref{lem:dsn-red} and \autoref{crl:inapprox-colored-dks}:
\begin{proof}[Proof of \autoref{thm:DSN-inapprox}]
We prove by using the contrapositive. Suppose that, for some function $g(k)=o(1)$ and for some function $f(k)$ independent of $n$, there exists an $f(k)\cdot n^{o(1)}$ time $k^{g(k)}$-approximation algorithm for \dsn. Let us call this algorithm $\mathbb{A}$.

We now design an algorithm $\mathbb{B}$ that can distinguish between the two cases of \autoref{crl:inapprox-colored-dks} with $h(\ell) =4g(\ell^{2}-\ell) + \frac{4}{\log_{2} \ell}$. The algorithm $\mathbb{B}$ works as  follows: given an instance $(G,\ell,V_1\cup V_2 \cup \ldots \cup V_{\ell})$ of \mcsi where the supergraph $H$ is the complete graph on $\ell$ nodes, $\mathbb{B}$ uses the reduction from \autoref{lem:dsn-red} to create a \dsn instance on the graph $G'$ with $k=\ell^{2}-\ell$ demands. $\mathbb{B}$ then runs $\mathbb{A}$ on this instance; if $\mathbb{A}$ returns a solution $N$ of cost at most $k^{g(k)}=(\ell^{2}-\ell)^{g(\ell^{2}-\ell)}$, then $\mathbb{B}$ returns YES. Otherwise, $\mathbb{B}$ returns NO.

To see that algorithm $\mathbb{B}$ can indeed distinguish between the YES and NO cases, first observe that,in the YES case, \autoref{lem:dsn-red} guarantees that the optimal solution is of cost at most $1$. Since $\mathbb{A}$ is an $k^{g(k)}$-approximation algorithm, it returns a solution of cost at most $ k^{g(k)} = (\ell^{2}-\ell)^{g(\ell^{2}-\ell)}$, meaning that $\mathbb{B}$ outputs YES. On the other hand, if $(G,\ell,V_1\cup V_2 \cup \ldots \cup V_{\ell})$ is a NO instance, then the soundness property of \autoref{lem:dsn-red} guarantees that the optimal solution in $G'$ has cost more than $\frac{1}{\sqrt{4 \ell^{-h(\ell)}}}$. In this case, $\mathbb{B}$ also outputs NO since we have
\begin{align*}
\frac{1}{\sqrt{4 \ell^{-h(\ell)}}} &= \frac{\sqrt{\ell^{h(\ell)}}}{2} \\
&= \frac{1}{2}\cdot \ell^{2g(\ell^{2}-\ell) + \frac{2}{\log_{2} \ell}} \tag*{\big(since $h(\ell) =4g(\ell^{2}-\ell) + \frac{4}{\log_{2} \ell}$\big)} \\
&= \frac{1}{2}\cdot (\ell^2)^{g(\ell^{2}-\ell)} \cdot \ell^{\frac{2}{\log_{2} \ell}}\\
&> \frac{1}{2}\cdot (\ell^2-\ell)^{g(\ell^{2}-\ell)} \cdot 4  \tag*{\big(since $\ell^2 > (\ell^2-\ell)$ for each $\ell\geq 1$ and $\ell^{\frac{2}{\log_{2} \ell}}=4$\big)}\\
&> (\ell^2-\ell)^{g(\ell^{2}-\ell)}
\end{align*}

Finally, observe that the running time of $\mathbb{B}$ is bounded by the running time of $\mathbb{A}$ plus the $n^{O(1)}$ time needed for the reduction of \autoref{lem:dsn-red}. Since $k=\ell^2 - \ell$ and the running time of $\mathbb{A}$ is $f(k)\cdot n^{O(1)}$, it follows that the running time of $B$ can be expressed as $f'(\ell)\cdot n^{O(1)}$ for some function $f'$. Moreover, since $g(k) =o(1)$ it also follows that $h(\ell) =o(1)$. Hence, from \autoref{crl:inapprox-colored-dks}, randomized Gap-ETH breaks. This concludes the proof of \autoref{thm:DSN-inapprox}.
\end{proof}

In~\cite{DM18}, an $\ell^{1 - o(1)}$ factor inapproximability result for \mcsi is proved, which is an improvement over the $\ell^{o(1)}$ factor hardness in \autoref{crl:inapprox-colored-dks}. The authors of~\cite{DM18} then use this improved hardness together with our reduction in \autoref{lem:dsn-red} to arrive at their $k^{1/4 - o(1)}$ factor hardness for \dsn. Since \autoref{lem:dsn-red} is used even in \cite{DM18} but does not appear in the published version of~\cite{DM18}, we have kept its proof in our paper.

\begin{proof}[Proof of \autoref{lem:dsn-red}]
The reduction is similar to that of \citet{dodis1999design}. In
particular, given $\Gamma = (G, H, V_1 \cup \cdots \cup V_\ell)$ where $H$ is the complete graph, the \dsn instance is generated as follows.
\begin{itemize}
\item The vertex set $V'$ is $(V_G \times [2]) \cup \{s_1, \dots, s_\ell\} \cup
\{t_1, \dots, t_\ell\}$ (i.e. two copies of $V$ together with $\ell$ new vertices
designated as sources and $\ell$ new vertices designated as sinks).
\item There are three types of edges in $E'$. First, for every $i \in [\ell]$,
there is an edge from $s_i$ to each vertex in $V_i \times \{1\}$. Moreover, for
every $i \in [\ell]$, there is an edge from each vertex in $V_i \times \{2\}$
to~$t_i$. Finally, there is an edge from $(u, 1)$ to $(v, 2)$ and from $(v, 1)$
to $(u, 2)$ for every edge $uv$ in the original graph $G$.
In other words, $E' = \{(s_i, (v, 1)) \mid i \in [k], v \in V_i\} \cup \{((v,
2), t_i) \mid i \in [k], v \in V_i\} \cup \{((u, 1), (v, 2)), ((v, 1), (u, 2))
\mid uv \in E_G\}$.
\item The edges of the first two types have weight $1/(2\ell)$, whereas the edges of
the last type have weight zero.
\item Finally, the demands are simply $(s_i, t_j)$ for every $i, j \in [\ell]$ such
that $i \ne j$.
\end{itemize}

Clearly, the number of demand pairs $k$ is $\ell^2 - \ell$ as desired. We now move on to show the completeness and soundness properties of the reduction.

{\bf (Completeness)} If $\val(\Gamma) = 1$, then there exists $(v_1, \dots, v_\ell) \in V_1 \times \cdots V_\ell$ that induces a clique. Thus, we can pick edges in the set $\{(s_i, (v_i,
1)) \mid i \in [\ell]\} \cup \{((v_i, 2), t_i) \mid i \in [\ell]\} \cup \{((v_i, 1),
(v_j, 2)) \mid i, j \in [\ell], i \ne j\}$. Clearly, the cost of this network is exactly one and it satisfies all the demand pairs.

{\bf (Soundness)} We will prove this by contrapositive. Suppose that there
exists a network $N\subseteq G'$ of cost $\rho \leq 1 / \sqrt{4\gamma}$.
For each $i \in [\ell]$, let $S_i \subseteq V_G$ denote the set of all vertices $v$
such that at least one of $(s_i, (v, 1))$ or $((v, 2), t_i)$ is included in
$N$. Observe that, from how our graph $G'$ is constructed, for every $i \ne j
\in [k]$, the $(s_i, t_j)$ demand implies that there exist $u \in S_i$ and $v
\in S_j$ such that $uv \in E_G$.
Let $S = \bigcup_{i \in [\ell]} S_i$.
Observe also that, since $N$ has cost $\rho$, $|S| \leq 2\ell \cdot \rho \leq \ell/\sqrt{\gamma}$.

Let $\phi: V_H \to V_G$ be a random assignment where each $\phi(i)$ is chosen
independently uniformly at random from $S_i$. For every $i \ne j \in [\ell]$,
since there exist $u \in S_i$ and $v \in S_j$ such that $uv \in E_G$, the
probability that the superedge $ij \in E_H$ is covered is at least the
probability that $\phi(i) = u$ and $\phi(j) = v$, which is equal to
$\frac{1}{|S_i| |S_j|}$. We now want a lower bound on the expected number of
superedges covered by~$\phi$. For this, we use the following inequality, which
follows from a special case of H\"{o}lder's inequality for 3
variables\footnote{$(\sum_{r=1}^{n} a_r^{3})(\sum_{r=1}^{n}
b_r^{3})(\sum_{r=1}^{n} c_r^{3})\geq (\sum_{r=1}^{n} a_r b_r c_r)^{3}$}

\begin{align*}
\Big(\sum_{1 \leq i \ne j \leq \ell} \frac{1}{|S_i||S_j|}\Big) \cdot \Big(\sum_{1 \leq i \ne j \leq \ell} |S_i| \Big)\cdot \Big(\sum_{1 \leq i \ne j \leq \ell} |S_j|\Big)
&\geq \Big(\sum_{1 \leq i \ne j \leq \ell}
\Big(\frac{1}{|S_i||S_j|}\Big)^{1/3}\cdot |S_i|^{1/3}\cdot |S_j|^{1/3}
\Big)^{3} \\
&= \Big(\sum_{1 \leq i \ne j \leq \ell} 1^{1/3} \Big)^{3} \\
&= (\ell(\ell-1))^{3} \tag{2}
\end{align*}

Hence, we have that the expected number of superedges covered
by $\phi$ is at least
\begin{align*}
\sum_{ij \in E_H} \frac{1}{|S_i||S_j|}
&= \frac{1}{2} \sum_{1 \leq i \ne j \leq \ell} \frac{1}{|S_i||S_j|} \\
&\geq \frac{1}{2} \cdot \frac{\left(\ell(\ell - 1)\right)^3}{\left(\sum_{1 \leq i \ne j \leq \ell} |S_i|\right)\left(\sum_{1 \leq i \ne j \leq \ell} |S_j|\right)} \quad(\text{From Equation (2)})\\
&\geq \binom{\ell}{2} \cdot \frac{\ell^2}{|S|^2} \quad(\text{Since $\sum_{1 \leq i \ne j \leq \ell} |S_i| \leq (\ell-1)\cdot |S|$}) \\
&\geq \binom{\ell}{2} \gamma,
\end{align*}
where the
last inequality follows from $|S| \leq \ell/\sqrt{\gamma}$. Hence, there exists an assignment of $\Gamma$ with value at least $\gamma$, which implies that $\val(\Gamma) \geq \gamma$. This concludes the proof of \autoref{lem:dsn-red}.
\end{proof}

\section{Open Questions}
\label{sec:questions}

While our work has advanced our understanding of the computational complexity of 
\scss and \dsn, there are still several interesting open questions left. We list 
some of them below:

\begin{itemize}
\item Can we get better approximation algorithms for \bidsn (without any 
restriction on the optimum) than simply getting twice the best ratio known for 
the undirected \pname{Steiner Forest} problem? This is an interesting question 
for both the parameterized and polynomial time setting.

\item We showed that for \bidsn there is both a parameterized $2$-approximation 
algorithm and a polynomial-sized $(2+\eps)$-approximate kernel for any 
$\eps>0$. However the latter is just a simple consequence of the PSAKS for 
\bidsnP. Is there a polynomial-sized $c$-approximate kernel with $c\leq 2$ for 
\bidsn? Note that this relates to the previous question as well.

\item We proved that the parameterized $2$-approximation algorithm for \scss is 
best possible, since no $(2-\eps)$-approximation can be computed in 
$f(k)\polyn$ time for any function $f$, under Gap-ETH. This implies that there 
is a $2$-approximate kernel (of large size), while no $(2-\eps)$-approximate 
kernel exists under the same assumption. However, can we obtain a 
\emph{polynomial-sized} $2$-approximate kernel for \scss? Or maybe just a 
polynomial-sized $c$-approximate kernel for some constant $c\geq 2$?

\item Can we prove any runtime lower bound under some reasonable complexity 
assumption (e.g., ETH or Gap-ETH) to compute a $2$-approximation for \scss 
using the number of terminals as a parameter? In other words, could there be a 
significantly faster $2$-approximation algorithm than the one given 
in~\cite{DBLP:conf/iwpec/ChitnisHK13}?

\item We gave a $4^{k^2+O(k)}\polyn$ time FPT algorithm for \biscss and a lower 
bound of $2^{o(k)}\polyn$. Can we obtain an FPT algorithm for \biscss with 
runtime $2^{O(k)}\polyn$?

\item What is the status of \bidsn on planar input graphs parameterized by $k$: 
FPT or W[1]\hy{}hard? Our hardness reduction in \autoref{thm:lb-scheme-biDSN} 
produces graphs that are not planar even though their optima are.

\item Can the parameterized approximation scheme for \bidsnP be generalized to 
minor-closed classes of graphs? In particular, the KPR Theorem used to prove 
\autoref{lem:full-comp} is applicable to such classes. What prevents us to 
generalize here are the vertex degree transformation of \autoref{app:degrees}, 
since applying these to some graph excluding a fixed minor can result in a 
graph containing this minor.

\item \autoref{lem:full-comp} inherently introduces a double exponential term in 
$O(1/\eps)$ to the kernel size for \bidsnP, and in the runtime of the 
approximation scheme for \bidsnP. As argued in \autoref{sec:scheme} it is known 
that the bound in \autoref{lem:full-comp} cannot be improved. Is there a 
different technique that yields a parameterized approximation scheme and/or a 
PSAKS for \bidsnP, which has better dependence on $1/\eps$? Or alternatively, is 
there some reasonable complexity assumption that can exclude such an 
improvement?
\end{itemize} 

In addition to the above questions regarding the \dsn and \scss problems, we 
also believe that studying the complexity of other problems on bidirected graphs 
is worthwhile. Many problems are substantially harder on directed graphs than 
on undirected graphs, and thus it is natural to ask about the complexity in 
bidirected graphs. For example for the \textsc{Multicut} problem an 
edge-weighted graph is given together with a list of terminal pairs, and the aim 
is to find a minimum weight cut so that no terminal pair remains connected. 
The seminal work of \citet{marx2014fixed} shows that this problem is FPT on 
undirected graphs when the parameter is the solution size, but on directed 
graphs the problem is W[1]-hard~\cite{marx2014fixed,pilipczuk2018directed}. In 
the latter case, also no $O(2^{\log^{1-\eps} n})$\hy{}approximation is possible 
in polynomial time, and there are indications that not even an 
$O(n^\delta)$-approximation is possible~\cite{chuzhoy2009polynomial} for some 
constant~$\delta>0$. It would be interesting to see whether considering 
bidirected graphs makes the problem such as \textsc{Multicut} easier than on 
directed graphs (in terms of approximation and/or parameterization).

\paragraph*{Acknowledgements.} We would like to thank Sasha Sami for pointing 
out a missing argument in the proof of \autoref{lem:bi-SCSS-struct}.

\bibliography{papers}
\bibliographystyle{plainnat}

\end{document}